\documentclass[12pt]{article}
\usepackage[square]{natbib}
\usepackage{tabularx,array,booktabs} 
\newcommand{\keywords}[1]{\textbf{Keywords:} #1}
\usepackage{algorithm,color}
\usepackage{algorithmic}
\usepackage{subcaption} 
\usepackage{graphicx}
\usepackage{epstopdf}
\usepackage{amsmath}
\usepackage{amsthm}
\usepackage{url}
\usepackage{amssymb}
\usepackage{multirow}
\usepackage{bm}
\usepackage{booktabs}
\usepackage{placeins}
\usepackage{mathrsfs}
\usepackage{fullpage}
\usepackage{setspace}
\usepackage{needspace}
\usepackage[english]{babel}
\usepackage{bibunits}
\usepackage[utf8]{inputenc}

\newtheorem{assumption}{Assumption}
\newtheorem{theorem}{Theorem}

\newtheorem{lemma}{Lemma}

\newcommand{\diag}{\operatorname{diag}}
\newcommand{\expit}{\operatorname{expit}}
\newcommand{\tr}{\operatorname{Tr}}

\newcommand{\logit}{\operatorname{logit}}
\newcommand{\blinf}{\mathrm{bl},\infty}

\begin{document}
	
\title{Linear regression with probabilistic networks for network-linked data}

\author{
	Jingnan Zhang$^\dag$, Shiwen Ye$^\ddag$, Danyang Huang$^\S$ \\ [10pt]
	$^\dag$ Faculty of Business for Science \& Technology, School of Management \\University of Science and Technology of China\\
    $^\ddag$ Department of Statistics and Data Science\\
	The Chinese University of Hong Kong\\
	$^\S$ School of Statistics, Renmin University of China\\
}

\date{}
\maketitle
\onehalfspacing

\begin{abstract}
Network-linked data combine a network with covariates and responses observed for its nodes. Common regression approaches either model dependence among neighboring responses using the observed network or capture similarities associated with network structure without explicitly modeling response dependence. We propose a probabilistic network-linked data analysis framework that incorporates both features while accounting for randomness in network formation. The model uses normalized connection probabilities to weight peer responses and latent positions to capture network cohesion. A two-step likelihood procedure first estimates the network structure and then fits the response model. We establish theoretical guarantees for network recovery in sparse networks with heterogeneous node effects, and prove consistency and joint asymptotic normality of the regression estimators subject to suitable conditions. The analysis also clarifies when estimating the network preserves the classical convergence rate for spatial autoregressive models. Simulation studies demonstrate the estimation performance of the method and assess its inference accuracy. An analysis of diplomatic network data illustrates its practical application.
\end{abstract}

\medskip

\keywords{Latent space model, network-linked data, network regression, peer effect}
\doublespacing

\begin{bibunit}[apalike]
\section{Introduction}

Network data record interactions among units and arise in social, economic and biological applications. Much of statistical network analysis focuses on tasks such as community detection \citep{lei2015consistency}, link prediction \citep{cho2018link} and node classification \citep{bhagat2011node}. In many settings, the network is also observed together with node-level covariates and responses. Following \citet{li2019prediction} and \citet{le2022linear}, we refer to such data as network-linked data. Combining relational structure with unit-level attributes provides a richer basis for prediction and inference, especially when outcomes are related across connected units \citep{michell1996peer,basse2018model}.

The analysis of network-linked data can be traced back to econometrics, where connected units may exhibit statistically dependent outcomes by sharing similar behaviors or characteristics \citep{manski1993identification,jackson2007relating,blume2015linear,yang2017social}. Let $\bm{A}=(a_{ij})_{i,j\in[n]}\in\{0,1\}^{n\times n}$ be the adjacency matrix of an undirected network with $n$ nodes, where $a_{ij}=a_{ji}=1$ indicates the presence of an edge between nodes $i$ and $j$, and $a_{ij}=a_{ji}=0$ otherwise, with $a_{ii}=0$. For each node $i\in[n]$, we also observe a covariate vector $\bm{x}_i\in\mathbb{R}^p$ and a response variable $y_i\in\mathbb{R}$. Let $\bm{X}=(\bm{x}_1,\dots,\bm{x}_n)^\top\in\mathbb{R}^{n\times p}$ and $\bm{y}=(y_1,\dots,y_n)^\top\in\mathbb{R}^n$. The primary objective in the analysis of network-linked data is to model the relationship between $\bm{y}$ and $\bm{X}$ while incorporating the network structure encoded by $\bm{A}$. One of the most widely used models for analyzing network-linked data in econometrics is the social interaction model (SIM) \citep{lee2007identification,bramoulle2009identification,lee2010specification, liu2010gmm}, which assumes that
\begin{align}
\label{Model: SIM}
    y_i=\bm{x}_i^\top\bm\beta+\left(\frac{1}{n_i}\sum_{j\in[n]}a_{ij}y_j\right)\gamma+\left(\frac{1}{n_i}\sum_{j\in[n]}a_{ij}\bm{x}_j^\top\right)\bm\eta_{\rm SIM}+\epsilon_i,
\end{align}
where $n_i=\sum_{j\in[n]}a_{ij}$ and $\epsilon_i$ is the noise term with $\mathbb{E}(\epsilon_i)=0$ and $\mathrm{Var}(\epsilon_i)=\sigma^2$; when $n_i=0$, the two neighborhood averages are defined to be zero. Here, the coefficient \(\gamma\) measures peer effect in the model, while \(\bm\eta_{\rm SIM}\) is in $\mathbb R^p$ and captures contextual effects from neighboring covariates. It reduces to a spatial autoregressive model if $\bm\eta_{\rm SIM}=\bm 0$ \citep{getis2009spatial,huang2019least,zhu2020multivariate}. SIM has also been extended to analyze time-varying network-linked data \citep{zhu2017network,chen2023community}. The identification and interpretation of peer effects through networks have been further studied in a broad econometric literature \citep{goldsmith2013social,lee2016identification,bramoulle2020peer}. Despite their usefulness, model \eqref{Model: SIM} and related approaches based on binary adjacency weights have two limitations.

The first limitation is the use of realized binary adjacency entries as peer weights. After degree normalization, all observed neighbors of node $i$ enter the peer-response term with equal weights, regardless of the strength of their connections Such an assumption is often unrealistic in practice \citep{le2022linear}. Numerical studies in \citet{li2019prediction} and \citet{le2022linear} also show that restrictive specifications of social effects can lead to poor predictive performance. The second limitation is that the response model takes the observed network as given, without modeling the uncertainty in edge formation. Statistical network analysis often views the observed adjacency matrix as one realization from an underlying probabilistic model, thereby accounting for uncertainty in network formation. Typical examples include graphon models \citep{gao2015rate,zhang2017estimating,zhao2022dimension}, as well as two of their special cases: the stochastic block model \citep{holland1983stochastic,karrer2011stochastic} and latent space models \citep{hoff2002latent,hoff2007modeling,ma2020universal,zhang2022joint}. The econometric literature on network formation may also treat networks themselves as outcomes of underlying strategic or stochastic formation processes \citep{depaula2020econometric}. Recent works in econometrics have further considered measurement errors and link misclassification in social network models \citep{lewbel2024ignoring,lewbel2025estimating}. These studies indicate that the observed network need not be treated as a fixed and exact object. However, they do not directly address how the latent network-generating structure can be incorporated into a regression model for network-linked responses.

A closely related regression approach is \citet{le2022linear}, who propose a semi-parametric regression model, referred to as the subspace projection (SP) method in their numerical studies. Specifically, they assume
\begin{align}
\label{Model: Semi}
    \mathbb{E}(\bm{y})=\bm{X}\bm\beta+\bm\mu.
\end{align}
Letting $\bm{P}=\mathbb{E}(\bm{A})$, they further assume that $\bm\mu$
belongs to a subspace spanned by a fixed number of leading eigenvectors
of $\bm P$ to account for network uncertainty and capture network
cohesion. Although this model captures low-dimensional network structure,
it no longer explicitly models peer-response dependence, which plays a
central role in social interaction models. Moreover, because $\bm\mu$ is
introduced as a flexible node-level effect, it may absorb both network
cohesion and other sources of node-specific heterogeneity.

% {\color{magenta}Latent network structure can also be incorporated into spatial models with an explicit peer-response term. \citet{paul2024spatial} correct measurement error in estimated latent covariates in a network application, while retaining normalized observed adjacency weights in the peer-response term. Recently, \citet{hayes2026estimating} specify a latent contagion model that uses normalized expected adjacency weights for peer responses and includes latent positions as covariates. Their approach uses adjacency spectral embedding and two-stage least squares estimation under a fixed-rank, positive-semidefinite expected adjacency model. In the random dot product graph specification, connection probabilities are given by inner products of latent positions. In this work, we seek to model network formation under a logistic latent space specification, with individual-specific node effects and latent similarity entering the connection log-odds. We then study peer-effect estimation and inference under this specification, accounting for estimation errors in both nodal latent positions and connection probabilities.}

In this paper, we propose a probabilistic network-linked data analysis (NLDA) framework that addresses the two limitations of the SIM model in \eqref{Model: SIM} while preserving its ability to model peer-response dependence. We assume that the observed network is generated from a latent space model and use the resulting edge-probability matrix $\bm P$, rather than the realized adjacency matrix $\bm A$, to define the peer-response exposure in the regression model. We normalize these probabilities by the expected degree, so that their sum for each node equals one even when the network is sparse. We further incorporate network cohesion through the latent position vectors, which separates the low-dimensional cohesion component from individual-specific degree heterogeneity. We estimate the network structure through the Bernoulli likelihood and then use the fitted probability matrix and latent positions in a Gaussian quasi-likelihood for the responses.

The main contributions are threefold. First, we develop a regression framework that combines dependence among peer responses with latent network cohesion and accounts for randomness in network formation. The network formation follows a logistic latent space model with individual-specific node effects. Second, building on likelihood methods for latent space models \citep{ma2020universal,nldaLi2025Inference}, we establish maximum and mean-square recovery rates for a local network likelihood maximizer under increasing node-effect heterogeneity. These bounds control the fitted probability matrix and latent positions. Third, we establish consistency and joint asymptotic normality of the regression estimators and provide a consistent covariance estimator for inference. With suitable restrictions on network sparsity and heterogeneity, the regression estimators attain the classical convergence rate for spatial autoregressive models with known network weights \citep{lee2004asymptotic,huang2019least}, even though the network structure must be estimated. These results clarify the additional requirements for accurate regression inference beyond those needed for network recovery.
% {\color{magenta}The analysis jointly controls estimation errors in the probability weights and latent positions obtained from the same logistic network likelihood.}

The remainder of this paper is organized as follows. Section 2 presents the probabilistic network-linked data model and the two-stage estimation procedure. Section 3 studies network recovery and regression inference. Section 4 reports simulation studies, Section 5 analyzes diplomatic network-linked data, and Section 6 concludes. Technical proofs are provided in the Supplementary Material.

\paragraph{Notations.} We introduce some notations to be used throughout the paper. For any positive integer $q$, let $[q]=\{1,\ldots,q\}$. We use $\bm0_q$, $\bm1_q$ and $\bm I_q$ for the zero vector, vector of ones and identity matrix, and omit the dimensions when they are clear from the context. For a vector $\bm v$, $\|\bm v\|_2$ and $\|\bm v\|_\infty$ denote its Euclidean and maximum norms. For a matrix $\bm B=(b_{ij})$, $\|\bm B\|_F$ and $\|\bm B\|_2$ denote its Frobenius and operator norms, $\|\bm B\|_{\max}=\max_{i,j}|b_{ij}|$, and $\|\bm B\|_{2,\infty}=\max_i\|\bm B_{i\cdot}\|_2$. We use $\|\bm B\|_\infty=\max_i\sum_j|b_{ij}|$ for the matrix infinity norm. Inner products are Euclidean for vectors and $\langle\bm B,\bm C\rangle_F=\tr(\bm B^\top\bm C)$ for matrices of the same dimensions. We write $\tr(\cdot)$ for the trace and $\diag(v_1,\ldots,v_q)$ for the diagonal matrix with the listed entries. Eigenvalues and singular values are ordered decreasingly and denoted by $\lambda_k(\cdot)$ and $\sigma_k(\cdot)$; $\lambda_{\min}(\cdot)$ is the smallest eigenvalue. For symmetric matrices, $\preceq$ denotes the positive semidefinite order. We write $\operatorname{col}(\bm B)$ for the column space of $\bm B$ and $\bm M_{\bm B}$ for the orthogonal projector onto $\operatorname{col}(\bm B)^\perp$. When $\bm B$ has full column rank, $\bm M_{\bm B}=\bm I-\bm B(\bm B^\top\bm B)^{-1}\bm B^\top$. We use $O,o,O_p,o_p$ in their usual senses, $a_n\asymp b_n$ when the ratio of two positive sequences is bounded above and away from zero, and $\to_p,\to_d$ for convergence in probability and distribution.

\section{Probabilistic network-linked data model}

We adopt the perspective that
\begin{itemize}
    \item[(i)] the binary neighborhood average in \eqref{Model: SIM} assigns the same weight to every observed neighbor and therefore does not distinguish their connection strengths;
    \item[(ii)] the assumption that the network is perfectly observed is also inappropriate.
\end{itemize}
To address these two issues, we propose a probabilistic model for network-linked data that differs from the semi-parametric model underlying the SP method in \eqref{Model: Semi}. Specifically, we assume that the variables $a_{ij}$, $1\le i<j\le n$, are independent Bernoulli variables with probabilities $p_{ij}$ satisfying
\begin{align}
\label{Model: Latent}
\theta_{ij}=\log\frac{p_{ij}}{1-p_{ij}}
=\alpha_i+\alpha_j+\bm z_i^\top\bm z_j.
\end{align}
We set $p_{ji}=p_{ij}$ and $p_{ii}=0$, and write $\expit(u)=(1+e^{-u})^{-1}$ and $\logit(p)=\log\{p/(1-p)\}$. Here, $\alpha_i\in\mathbb R$ captures the individual-specific effect of node $i$, and $\bm z_i\in\mathbb R^r$ captures homogeneity among nodes, with fixed $r\ll n$. Write $\bm\alpha=(\alpha_1,\ldots,\alpha_n)^\top$ and $\bm Z=(\bm z_1,\ldots,\bm z_n)^\top$. Model \eqref{Model: Latent} is the latent space model studied in \citet{hoff2002latent}, \citet{ma2020universal} and \citet{zhang2022joint}.

Next, we model the network-linked data by assuming
\begin{align}
\label{Model: PMNLD}
y_i=\bm x_i^\top\bm\beta+
\left(\frac{\sum_{j\ne i}p_{ij}y_j}{\sum_{j\ne i}p_{ij}}\right)\gamma
+\bm z_i^\top\bm\eta+\epsilon_i,
\end{align}
for each $i\in[n]$, where $\bm\beta\in\mathbb R^p$, $\gamma\in\mathbb R$, $\bm\eta\in\mathbb R^r$, and $\epsilon_i$ is the error term. Let $\bm P=(p_{ij})_{i,j\in[n]}$, $d_i(\bm P)=\sum_{j\ne i}p_{ij}$ and $\bm D(\bm P)=\diag\{d_1(\bm P),\ldots,d_n(\bm P)\}$. In matrix form, model \eqref{Model: PMNLD} is
\begin{align*}
\bm y=\bm X\bm\beta+\gamma\bm W(\bm P)\bm y+\bm Z\bm\eta+\bm\epsilon,
\end{align*}
where $\bm W(\bm P)=\bm D(\bm P)^{-1}\bm P$ and $\bm\epsilon=(\epsilon_1,\ldots,\epsilon_n)^\top$. The errors $\epsilon_i, i\in[n]$ are independent and identically distributed with mean zero and variance $\sigma^2>0$, and are independent of the network observations.

Models \eqref{Model: Latent} and \eqref{Model: PMNLD} have some compelling features. First, the peer-response term in model \eqref{Model: PMNLD} replaces the binary weights $a_{ij}$ in model \eqref{Model: SIM} with the connection probabilities $p_{ij}$ defined in model \eqref{Model: Latent}. This addresses issues (i) and (ii) by allowing connection strengths to vary across node pairs and treating the observed network as a realization of a probabilistic model. Thus, nodes $j$ and $k$ may exert different levels of influence on node $i$, even when $a_{ij}=a_{ik}=1$. For example, in a social network, individuals may be more strongly influenced by close friends than by casual acquaintances. We normalize these probabilities by the expected degree $d_i(\bm P)$, so the peer-response term is a weighted average over all potential peers. Every row of $\bm W(\bm P)$ sums to one, keeping the total peer-response weight on the same scale as network density changes. Second, we replace $\bm\mu$ in model \eqref{Model: Semi} by $\bm Z\bm\eta$. Whereas the SP formulation constrains $\bm\mu$ to a leading eigenspace of $\bm P$, our network model separates latent similarity, represented by $\bm Z$, from individual-specific variation in connection propensity, represented by $\bm\alpha$. The response component $\bm Z\bm\eta$ therefore captures network cohesion through these latent positions.

The two response components play distinct roles. The term $\gamma\bm W(\bm P)\bm y$ represents dependence on the responses of other nodes, while $\bm Z\bm\eta$ captures latent similarity or shared structural position associated with similar responses, even when $\gamma=0$. Including both components allows the model to account for peer-response dependence and network cohesion; omitting either may lead to model misspecification and obscure the distinction between them.

\subsection{Computing algorithm}

For models \eqref{Model: Latent} and \eqref{Model: PMNLD}, we develop a two-step procedure to estimate the network structure and response parameters. Let
$\bm\Theta(\bm\alpha,\bm Z)
=\bm\alpha\bm1_n^\top+\bm1_n\bm\alpha^\top+\bm Z\bm Z^\top$.
In the first step, we estimate $(\bm\alpha,\bm Z)$ from $\bm A$ and
obtain $\widehat{\bm P}$ from the fitted off-diagonal logits.
In the second step, we substitute $\widehat{\bm Z}$ and
$\widehat{\bm P}$ into model \eqref{Model: PMNLD} to estimate the
response parameters using the observed covariates and responses
$(\bm X,\bm y)$. Before proceeding, there is an identifiability issue to be addressed in model \eqref{Model: Latent}. A widely considered condition is
$\bm Z^\top\bm1_n=\bm0$, which has also been used in the literature
\citep{ma2020universal,zhang2022joint}. We impose this condition while
leaving $\bm\alpha$ unrestricted.

Let $\bm P(\bm\alpha,\bm Z)$ be the probability matrix with zero diagonal.
The network log-likelihood is
\begin{align*}
L_A(\bm\alpha,\bm Z)
=\sum_{1\le i<j\le n}
\left[a_{ij}\theta_{ij}
-\log\{1+\exp(\theta_{ij})\}\right],
\end{align*}
where $\theta_{ij}=\alpha_i+\alpha_j+\bm z_i^\top\bm z_j$.
We use the spectral initialization described in Algorithm~\ref{alg:step1}.
With $\bm J_n=\bm I_n-\bm1_n\bm1_n^\top/n$ and
$\bm P^{(t)}=\bm P(\bm\alpha^{(t)},\bm Z^{(t)})$, the likelihood updates are
\begin{align}
\label{eq: practical network update}
 \bm\alpha^+=\bm\alpha^{(t)}+\frac{\xi}{n}
   (\bm A-\bm P^{(t)})\bm1_n,\
 \bm Z^+=\bm J_n\left\{\bm Z^{(t)}
   +\frac{2\xi}{\max(\|\bm Z^{(t)}\|_2^2,1)}
   (\bm A-\bm P^{(t)})\bm Z^{(t)}\right\}.
\end{align}
Write $\bm\Theta^{(t)}=\bm\Theta(\bm\alpha^{(t)},\bm Z^{(t)})$. We initialize $\xi$ at $\xi_0$ and select it by likelihood backtracking. Algorithm~\ref{alg:step1} summarizes the procedure.

\begin{algorithm}[!thb]
\caption{Network likelihood estimation}
\label{alg:step1}
\begin{algorithmic}[1]
\STATE Input: $\bm A$, $r$, and tuning parameters $c_{\tau,0}$, $\varepsilon_{\rm init}$, $\xi_0$, $\varepsilon_{\rm stop}$ and $T$.
\STATE Compute $\widehat\rho=2\{n(n-1)\}^{-1}\sum_{i<j}a_{ij}$ and set $\lambda_{\rm sp}=c_{\tau,0}\sqrt{\max(n\widehat\rho,\log n)}$.
\STATE From $\bm A=\sum_k\lambda_k\bm u_k\bm u_k^\top$, set $\bm P^{\rm sp}=\sum_{|\lambda_k|\ge\lambda_{\rm sp}}\lambda_k\bm u_k\bm u_k^\top$.
\STATE Set $\Theta^{\rm init}_{ij}=\logit[\min\{1-\varepsilon_{\rm init},\max(\varepsilon_{\rm init},P^{\rm sp}_{ij})\}]$ for $i\ne j$, and $\Theta^{\rm init}_{ii}=0$.
\STATE Set $\bm\alpha^{(0)}=\arg\min_{\bm\alpha}\sum_{i<j}(\Theta^{\rm init}_{ij}-\alpha_i-\alpha_j)^2$.
\STATE Set $R^{\rm init}_{ij}=\Theta^{\rm init}_{ij}-\alpha_i^{(0)}-\alpha_j^{(0)}$ for $i\ne j$, and $R^{\rm init}_{ii}=0$.
\STATE Let $d_1\ge\cdots\ge d_r$ be the largest $r$ eigenvalues of $\bm J_n\bm R^{\rm init}\bm J_n$, with corresponding orthonormal eigenvectors $\bm v_1,\ldots,\bm v_r$.
\STATE Set $\bm Z^{(0)}=\bm J_n\bigl(\sqrt{\max(d_1,0)}\,\bm v_1,\ldots,\sqrt{\max(d_r,0)}\,\bm v_r\bigr)$ and $\xi=\xi_0$.
\FOR{$t=0,\ldots,T-1$}
\STATE Use \eqref{eq: practical network update} with backtracking to obtain $(\bm\alpha^{(t+1)},\bm Z^{(t+1)})$; stop if no step is accepted.
\STATE Stop if $\|\bm\Theta^{(t+1)}-\bm\Theta^{(t)}\|_F/(\|\bm\Theta^{(t)}\|_F+1)<\varepsilon_{\rm stop}$.
%\STATE Set $\xi=\min(1.05\xi,10)$ for the next iteration.
\ENDFOR
\STATE Set $(\widehat{\bm\alpha},\widehat{\bm Z})$ to the iterate with the largest likelihood.
\STATE Set $\widehat{\bm P}=\bm P(\widehat{\bm\alpha},\widehat{\bm Z})$ and $\widehat{\bm W}=\bm W(\widehat{\bm P})$.
\STATE Return $\widehat{\bm\alpha}$, $\widehat{\bm Z}$, $\widehat{\bm P}$ and $\widehat{\bm W}$.
\end{algorithmic}
\end{algorithm}

%We use $\varepsilon_{\rm init}=10^{-4}$ and $\varepsilon_{\rm stop}=10^{-5}$.
%Backtracking accepts the first finite proposal satisfying
%$L_A(\bm\alpha^+,\bm Z^+)\ge L_A(\bm\alpha^{(t)},\bm Z^{(t)})-10^{-10}$
%among at most eight halvings of $\xi$. After acceptance, the trial scale is
%multiplied by $1.05$ and capped at $10$, as specified in Algorithm~\ref{alg:step1}.

After obtaining $\widehat{\bm P}$ and $\widehat{\bm Z}$, we
estimate $\bm\beta$, $\gamma$, $\bm\eta$ and $\sigma^2$ in model
\eqref{Model: PMNLD}. Write
$\bm\zeta=(\bm\beta^\top,\bm\eta^\top,\gamma,\tau)^\top$, where
$\tau=\sigma^{-2}$. For fixed network parameters, the normalized Gaussian
quasi-log-likelihood is
\begin{align}
\label{eq: response quasi likelihood}
 L_y(\bm\zeta;\bm P,\bm Z)
 &=\frac1n\log\det\bm S_\gamma(\bm P)+\frac12\log\tau-\frac{\tau}{2n}
 \|\bm S_\gamma(\bm P)\bm y-\bm X\bm\beta-\bm Z\bm\eta\|_2^2,
\end{align}
where $\bm S_\gamma(\bm P)=\bm I_n-\gamma\bm W(\bm P)$.
Let $\Gamma_\gamma=[-1+\delta_\gamma,1-\delta_\gamma]$ for a fixed
$\delta_\gamma\in(0,1)$. The matrix $\bm W(\bm P)$ is similar to the symmetric
matrix $\bm D(\bm P)^{-1/2}\bm P\bm D(\bm P)^{-1/2}$, so its eigenvalues are real.
Since $\bm W(\bm P)$ has nonnegative entries and each row sums to one,
these eigenvalues lie in $[-1,1]$. Thus, for $\gamma\in\Gamma_\gamma$,
\[
 \lambda_{\min}\{\bm S_\gamma(\bm P)\}
 \ge1-|\gamma|\ge\delta_\gamma>0,
\]
and hence $\det\bm S_\gamma(\bm P)>0$. Instead of jointly updating $\bm\zeta$, we profile out $\bm\beta$, $\bm\eta$ and
$\tau$ and reduce the optimization to a one-dimensional problem in $\gamma$.
Put $\bm X_+=(\bm X,\bm Z)$. For each
$\gamma\in\Gamma_\gamma$, jointly regressing $\bm S_\gamma(\bm P)\bm y$ on
$\bm X_+$ gives
$\operatorname{RSS}_n(\gamma;\bm P,\bm Z)
=\|\bm M_{\bm X_+}\bm S_\gamma(\bm P)\bm y\|_2^2$.
Maximizing $L_y(\bm\zeta;\bm P,\bm Z)$ with respect to
$(\bm\beta,\bm\eta,\tau)$ for fixed $\gamma$ yields
\begin{align}
\label{eq: profile beta tau}
 \begin{pmatrix}\bm\beta_\gamma\\\bm\eta_\gamma\end{pmatrix}
 =(\bm X_+^\top\bm X_+)^{-1}\bm X_+^\top\bm S_\gamma(\bm P)\bm y,\quad
 \tau_\gamma=\frac{n}{\operatorname{RSS}_n(\gamma;\bm P,\bm Z)}.
\end{align}
Substituting these estimates into \eqref{eq: response quasi likelihood}, we obtain the profile quasi-log-likelihood
\begin{align}
\label{eq: response profile criterion}
 L_{y,\mathrm{prof}}(\gamma;\bm P,\bm Z)
 =\frac1n\log\det\bm S_\gamma(\bm P)
 -\frac12\log\left\{\frac{\operatorname{RSS}_n(\gamma;\bm P,\bm Z)}n\right\}
 -\frac12.
\end{align}

Plugging $\widehat{\bm P}$ and $\widehat{\bm Z}$ into \eqref{eq: response profile criterion}, we maximize $L_{y,\mathrm{prof}}(\gamma;\widehat{\bm P},\widehat{\bm Z})$ over a deterministic grid $\Gamma_{\gamma,n}\subset\Gamma_\gamma$ containing both endpoints and having maximum gap at most $n^{-1}$. After obtaining $\widehat\gamma$, we evaluate \eqref{eq: profile beta tau} at $(\widehat\gamma,\widehat{\bm P},\widehat{\bm Z})$ to obtain $\widehat{\bm\beta}$, $\widehat{\bm\eta}$ and $\widehat\tau$. The details are summarized in Algorithm~\ref{alg:step2}. Two cases lead to estimation failure in Algorithm~\ref{alg:step2}: (i) $\operatorname{rank}(\bm X,\widehat{\bm Z})<p+r$, so $(\bm X,\widehat{\bm Z})^\top(\bm X,\widehat{\bm Z})$ is singular and the coefficient estimates in \eqref{eq: profile beta tau} are undefined; or (ii) $\operatorname{RSS}_n(\gamma;\widehat{\bm P},\widehat{\bm Z})=0$ at some $\gamma\in\Gamma_{\gamma,n}$, so the estimate of $\tau$ in \eqref{eq: profile beta tau} is undefined. In either case, Algorithm~\ref{alg:step2} reports an estimation failure. For theoretical purposes, we assign $\widehat{\bm\zeta}=(\bm0_{p+r}^{\top},0,1)^\top$ on these events so that the estimator is defined on all samples. Fortunately, under the conditions specified in the following section, Lemma~\ref{lem: response profile uniform} in the Supplementary Material shows that the probability of either failure tends to zero.

\begin{algorithm}[!thb]
\caption{Estimation of the response parameters}
\label{alg:step2}
\begin{algorithmic}[1]
\STATE Input: $\bm y$, $\bm X$, $\widehat{\bm P}$, $\widehat{\bm Z}$ and $\Gamma_{\gamma,n}$.
\STATE Set $\bm X_+=(\bm X,\widehat{\bm Z})$; if $\operatorname{rank}(\bm X_+)<p+r$, report estimation failure and stop.
\FOR{each $\gamma\in\Gamma_{\gamma,n}$}
\STATE Compute $\operatorname{RSS}_n(\gamma;\widehat{\bm P},\widehat{\bm Z})$; if it is zero, report estimation failure and stop.
\STATE Compute $L_{y,\mathrm{prof}}(\gamma;\widehat{\bm P},\widehat{\bm Z})$ from \eqref{eq: response profile criterion}.
\ENDFOR
\STATE Choose $\widehat\gamma$ to maximize $L_{y,\mathrm{prof}}(\gamma;\widehat{\bm P},\widehat{\bm Z})$ over $\Gamma_{\gamma,n}$, choosing the smallest maximizer in a tie.
\STATE Evaluate \eqref{eq: profile beta tau} at $\widehat\gamma$ to obtain $\widehat{\bm\beta}$, $\widehat{\bm\eta}$ and $\widehat\tau$.
\STATE Return $\widehat{\bm\zeta}=(\widehat{\bm\beta}^\top,\widehat{\bm\eta}^\top,\widehat\gamma,\widehat\tau)^\top$.
\end{algorithmic}
\end{algorithm}

\section{Theory}
\label{sec: theory}

In this section, we investigate the asymptotic properties of the proposed estimators. We first establish error bounds for the estimated network structure and then study consistency and asymptotic normality of the regression estimators. Let $(\bm\alpha^*,\bm Z^*)$ be the true network parameters, where
$\bm Z^{*\top}\bm1_n=\bm0$. Write
$\theta_{ij}^*=\alpha_i^*+\alpha_j^*+\bm z_i^{*\top}\bm z_j^*$,
$\bm P^*=(p_{ij}^*)=\bm P(\bm\alpha^*,\bm Z^*)$ and $\bm W^*=\bm W(\bm P^*)$. We measure
sparsity by $d_n=n\rho_n$, where $\rho_n=2\{n(n-1)\}^{-1}\sum_{1\le i<j\le n}p_{ij}^*$ is the average edge probability. 

Because $(\bm Z,\bm\eta)$ and $(\bm Z\bm O,\bm O^\top\bm\eta)$ represent
the same models for any orthogonal $\bm O$, we align estimated latent
positions and their coefficients when comparing them with the truth.
For any estimates $\widehat{\bm Z}^{\rm raw}$ and $\widehat{\bm\eta}^{\rm raw}$,
let $\widehat{\bm O}\in\mathbb R^{r\times r}$ minimize
$\|\widehat{\bm Z}^{\rm raw}\bm O-\bm Z^*\|_F$ subject to $\bm O^\top\bm O=\bm I_r$,
and define $\widehat{\bm Z}=\widehat{\bm Z}^{\rm raw}\widehat{\bm O}$ and
$\widehat{\bm\eta}=\widehat{\bm O}^\top\widehat{\bm\eta}^{\rm raw}$.
The estimates $\widehat{\bm\alpha}$, $\widehat{\bm\beta}$, $\widehat\gamma$ and
$\widehat\tau$ are unchanged by this transformation. Set
$\widehat{\bm P}=\bm P(\widehat{\bm\alpha},\widehat{\bm Z})$ and
$\widehat{\bm W}=\bm W(\widehat{\bm P})$, and write
$\widehat{\bm\zeta}=(\widehat{\bm\beta}^\top,\widehat{\bm\eta}^\top,\widehat\gamma,\widehat\tau)^\top$
for the aligned response estimate.

\subsection{Network recovery}

We first study the asymptotic properties of the network estimator. The dimensions $p$ and $r\ge1$ are fixed. We impose the following assumption on the network parameters.

\begin{assumption}\label{assump: network}
For fixed constants $m_0>0$, $R_Z>m_0$ and $c_Z>0$,
\begin{align}
\label{eq: hetero primitive bounds}
 \max_i|\alpha_i^*-\tfrac12\logit(\rho_n)|\le b_n-m_0,\
 \|\bm Z^*\|_{2,\infty}\le R_Z-m_0,\
 \lambda_{\min}(\bm Z^{*\top}\bm Z^*)\ge c_Zn,
\end{align}
where $0<\rho_n<1$ and $b_n\ge\max(1,m_0)$ is a deterministic sequence.
\end{assumption}

Assumption~\ref{assump: network} accommodates both network sparsity and node heterogeneity via the node effects $\alpha_i^*, i\in[n]$. When $\rho_n\to0$, $\logit(\rho_n)\to-\infty$. Since the latent positions $\bm{z}_i^*,i\in[n]$ are bounded, we have $\min_i \alpha_i^*\to -\infty$. Note that $b_n$ can increase with $n$, which permits growing deviations of $\alpha_i^*$ from the center $\logit(\rho_n)/2$. Thus, the node effects may diverge at different rates as $n$ increases, which results in node heterogeneity. Next, define
\begin{equation}
\label{eq: hetero scales}
\begin{aligned}
 w_{-,n}&=\min_{|u-\logit(\rho_n)|\le2b_n+2+R_Z^2}\expit(u)\{1-\expit(u)\},\\
 w_{+,n}&=\max_{|u-\logit(\rho_n)|\le2b_n+2+R_Z^2}\expit(u)\{1-\expit(u)\},
\end{aligned}
\end{equation}
where $w_{-,n}$ and $w_{+,n}$ bound the Bernoulli information over the specified logit interval, since $\expit(u)\{1-\expit(u)\}$ is the Fisher information for the logit parameter $u$ of a Bernoulli variable. Let $s_n=nw_{-,n}$ and $K_n=w_{+,n}/w_{-,n}$. For an integer $k\ge1$, define
\begin{align}
\label{eq: hetero rates definitions}
 \delta_{n,k}
 =\left\{K_n^{2k-1}+\sqrt n(K_n/s_n)^{k/2}\right\}
 \sqrt{K_n\log n/s_n}.
\end{align}
The quantity $\delta_{n,k}$ combines the network information scale $s_n$ and the heterogeneity ratio $K_n$ and plays a key role in the asymptotic analysis of the network estimator. We further impose the following growth condition.

\begin{assumption}\label{assump: network growth}
For some fixed integer $k\ge1$,
\begin{align}
\label{eq: hetero growth}
 K_n^2\delta_{n,k}^4s_n\longrightarrow0.
\end{align}
\end{assumption}

Assumption~\ref{assump: network growth} ensures sufficient network information for the recovery bounds below. In particular, Lemma~\ref{lem: network consequences} in the Supplementary Material shows that it implies $s_n\to\infty$ and $\delta_{n,k}\to0$. We then have the following result.

\begin{theorem}\label{thm: first stage}
Suppose Assumptions~\ref{assump: network} and~\ref{assump: network growth} hold. Then, with probability tending to one, there exist aligned local maximizers $(\widehat{\bm\alpha},\widehat{\bm Z})$ of $L_A$ whose nodewise errors $\Delta_i=\{(\widehat\alpha_i-\alpha_i^*)^2+\|\widehat{\bm z}_i-\bm z_i^*\|_2^2\}^{1/2}$ satisfy
\begin{align}
\label{eq: first stage recovery}
 \max_i\Delta_i=O_p(\delta_{n,k}),\quad
 \frac1n\sum_i\Delta_i^2=O_p(s_n^{-1}).
\end{align}
Moreover,
\begin{equation}
\label{eq: propagation rates}
\begin{aligned}
 \|\widehat{\bm P}-\bm P^*\|_{\max}
 =O_p(K_ns_n\delta_{n,k}/n)&,\quad
 \|\widehat{\bm P}-\bm P^*\|_F
 =O_p(K_ns_n^{1/2}),\\
 \|\bm M_{\widehat{\bm Z}}-\bm M_{\bm Z^*}\|_F
 =O_p(s_n^{-1/2})&,\quad
 \|\widehat{\bm W}-\bm W^*\|_F
 =O_p(K_ns_n^{-1/2}).
\end{aligned}
\end{equation}
\end{theorem}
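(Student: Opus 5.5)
The proof has two parts. First we establish existence and the recovery bounds \eqref{eq: first stage recovery} for a local maximizer of $L_A$. The four bounds in \eqref{eq: propagation rates} then follow by a perturbation argument.

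\emph{Step 1: restricted maximizer and mean-square rate.} Let $\mathcal B$ be the parameter set with $\max_i|\alpha_i-\tfrac12\logit(\rho_n)|\le b_n$, $\|\bm Z\|_{2,\infty}\le R_Z$ and $\bm Z^\top\bm 1_n=\bm 0$. On $\mathcal B$, every off-diagonal logit satisfies $|\theta_{ij}-\logit(\rho_n)|\le 2b_n+R_Z^2$. Hence the Bernoulli curvature $\expit(\theta_{ij})\{1-\expit(\theta_{ij})\}$ lies in $[w_{-,n},w_{+,n}]$, and the buffer $m_0$ in Assumption~\ref{assump: network} places the truth strictly inside $\mathcal B$. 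Let $(\widehat{\bm\alpha},\widehat{\bm Z})$ maximize $L_A$ over $\mathcal B$. A second-order expansion of $L_A$ around the truth gives
\[
\tfrac12 w_{-,n}\|\widehat{\bm\Theta}-\bm\Theta^*\|_{F,\rm off}^2\le\langle \bm A-\bm P^*,\widehat{\bm\Theta}-\bm\Theta^*\rangle .
\]
Here $\widehat{\bm\Theta}-\bm\Theta^*$ has rank at most $r+2$. Bounding the right side by $\|\bm A-\bm P^*\|_2\sqrt{r+2}\,\|\widehat{\bm\Theta}-\bm\Theta^*\|_F$, and using the standard sparse-graph bound $\|\bm A-\bm P^*\|_2=O_p(\sqrt{nw_{+,n}})$, yields $\|\widehat{\bm\Theta}-\bm\Theta^*\|_F^2=O_p(K_n n/w_{-,n})$.

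\emph{Step 2: from $\bm\Theta$ to nodewise parameters.} We convert this into parameter errors, using $\bm Z^\top\bm1=\bm0$, $\lambda_{\min}(\bm Z^{*\top}\bm Z^*)\ge c_Zn$ and the Procrustes alignment $\widehat{\bm O}$. The standard factorization perturbation argument (as in \citet{ma2020universal}) separates the $\bm\alpha\bm1^\top+\bm1\bm\alpha^\top$ part from the centered $\bm Z\bm Z^\top$ part. It gives $\sum_i\Delta_i^2\lesssim n^{-1}\|\widehat{\bm\Theta}-\bm\Theta^*\|_F^2$. To obtain the sharp rate $s_n^{-1}$ without the extra $K_n$ factor, we replace the crude spectral bound by a one-step Newton/score argument. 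With the information matrix $\bm H$ whose eigenvalues are of order $nw_{-,n}$, we write $\widehat{\bm\vartheta}-\bm\vartheta^*=\bm H^{-1}\nabla L_A(\bm\vartheta^*)+\text{remainder}$. The score has variance of order $\sum_j p_{ij}^*(1-p_{ij}^*)$ in each coordinate, which gives mean-square error of order $1/s_n$.

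\emph{Step 3: the maximum-norm bound (main obstacle).} This is the hardest step. We would use a leave-one-out / iterated refinement argument in the spirit of \citet{nldaLi2025Inference}. For node $i$, the local score equation is $\sum_j(a_{ij}-\widehat p_{ij})(1,\widehat{\bm z}_j^\top)^\top=\bm 0$. Linearizing it expresses $\Delta_i$ as a Bernstein-controlled term $O_p(\sqrt{K_n\log n/s_n})$ plus a bias term. The bias is a weighted average of the other nodes' errors $\Delta_j$, weighted by curvatures whose ratio to the node-$i$ curvature is at most $K_n$. Start from the crude bound $\max_i\Delta_i\le\sqrt n\,(\sum\Delta_j^2/n)^{1/2}$ and iterate the inequality $k$ times. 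Each iteration shrinks the $\sqrt n$ term by a factor $(K_n/s_n)^{1/2}$ and inflates the leading term by $K_n^2$. This produces exactly $\delta_{n,k}$. Assumption~\ref{assump: network growth} ensures that $\delta_{n,k}\to0$, so the maximizer lies in the interior of $\mathcal B$ and is therefore a genuine local maximizer. It also ensures that the quadratic Taylor remainders, of order $K_n^2\delta_{n,k}^4 s_n$, are negligible.

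\emph{Step 4: propagation.} By the mean value theorem, $|\widehat p_{ij}-p^*_{ij}|\le w_{+,n}|\widehat\theta_{ij}-\theta^*_{ij}|\lesssim w_{+,n}(\Delta_i+\Delta_j)$. Since $w_{+,n}=K_ns_n/n$, this gives the $\max$ bound. Summing the squares gives $\|\widehat{\bm P}-\bm P^*\|_F^2\lesssim w_{+,n}^2 n\sum_i\Delta_i^2=O_p(K_n^2s_n)$. For the projector, Wedin/Davis--Kahan with $\sigma_r(\bm Z^*)\gtrsim\sqrt n$ gives $\|\bm M_{\widehat{\bm Z}}-\bm M_{\bm Z^*}\|_F\lesssim\|\widehat{\bm Z}-\bm Z^*\|_F/\sqrt n=O_p(s_n^{-1/2})$. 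For $\bm W$, we use
\[
\widehat{\bm W}-\bm W^*=\bm D(\widehat{\bm P})^{-1}(\widehat{\bm P}-\bm P^*)+\{\bm D(\widehat{\bm P})^{-1}-\bm D(\bm P^*)^{-1}\}\bm P^*.
\]
Here $d_i\asymp nw_{-,n}=s_n$ up to $K_n$ factors, uniformly by the max bound. The first term is $O_p(K_ns_n^{1/2}/s_n)$. The second is bounded by $\|\bm D^{-1}\bm P^*\|_\infty$ times relative degree errors, whose $\ell_2$ norm is at most $\|\widehat{\bm P}-\bm P^*\|_F\sqrt n/s_n$ after Cauchy--Schwarz and row-norm control. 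This yields $O_p(K_ns_n^{-1/2})$.
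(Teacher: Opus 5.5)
\textbf{Existence and the max-norm bound.} The main gap is in Steps 1--3, and it affects the existence claim itself. You take $(\widehat{\bm\alpha},\widehat{\bm Z})$ to maximize $L_A$ over the box $\mathcal B$, but the nodewise score equations you linearize hold only at nodes where no box constraint is active. You then deduce that no constraint is active from the max-norm bound that this same linearization is supposed to produce, which is circular.

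The iteration also has no small starting point. The crude bound $\sqrt n\,(n^{-1}\sum_j\Delta_j^2)^{1/2}=O_p(\sqrt{n/s_n})$ is not small, since $s_n\le n/4$, so the expansion around $w_{ij}^*$ and its remainders are uncontrolled at the first step.

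The paper avoids this by constructing the root a priori. It applies Brouwer's theorem to $T(\bm d)=\mathsf F_A\{\bm\nu(\bm d),\bm\nu^*\}^{-1}\bm g_A$ on the ball $\{\bm d\in\mathcal T_n:\|\bm d\|_{\blinf}\le C_1\delta_{n,k}\}$, whose points correspond to parameters in $\mathcal N_n$. The ingredient you are missing is the uniform bound $\|\mathsf F_A^{-1}\|_{\blinf}\le Cs_n^{-1}\{K_n^{2k-1}+\sqrt n(K_n/s_n)^{k/2}\}$ over endpoints in $\mathcal N_n$. Once the fixed point exists, $\mathsf F_A\succeq cs_n\bm I$ on $\mathcal N_n$ makes it a local maximizer.

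You also place the iteration in the wrong term. Curvature-weighted averages of the other nodes' errors are absorbed in one step by $\|\mathsf F_0^{-1}\|_{\blinf}\le CK_n/s_n$. Bounding them by $K_n\max_j\Delta_j$, as you propose, gives no contraction. The $k$-fold expansion instead acts on $\bm M=\mathsf F_0^{-1}\bm N_A$, the Hessian noise $\sum_j(a_{ij}-p_{ij}^*)\bm\Delta_{z,j}$, which satisfies $\|\bm M\|_{\blinf}\le CK_n^2$ but $\|\bm M\|_2\le C\sqrt{K_n/s_n}$.

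Finally, your Step 2 must come after Step 3. The sharp rate $s_n^{-1}$ needs the information identity $\operatorname{Cov}_A(\mathsf F_{0,*}^{-1}\bm g_A)=\bm\Pi_{\mathcal T_n}\mathsf F_{0,*}^{-1}\bm\Pi_{\mathcal T_n}$. Without it, score variance $\le Ct_n$ divided by squared curvature $\ge cs_n^2$ still leaves $K_n/s_n$. The Newton remainder is controlled through $\|\mathsf F_0(\widehat{\bm\nu},\bm\nu^*)-\mathsf F_{0,*}\|_2\le Ct_n\|\bm\delta\|_{\blinf}$, which already requires the max-norm rate.

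\textbf{The weight-matrix bound.} Your argument for $\widehat{\bm W}-\bm W^*$ does not give the stated rate. The second term is $\diag\{(d_i^*-\widehat d_i)/\widehat d_i\}\bm W^*$. Your estimate $\|\{(\widehat d_i-d_i^*)/\widehat d_i\}_i\|_2\lesssim\sqrt n\|\widehat{\bm P}-\bm P^*\|_F/s_n=O_p(K_n\sqrt{n/s_n})$, multiplied by $\|\bm W^*\|_\infty=1$, does not even vanish. Using the row bound $\|\bm W^*_{i\cdot}\|_2\le\sqrt{8K_n/n}$ instead only gives $O_p(K_n^{3/2}s_n^{-1/2})$.

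The paper uses the fact that the $\alpha_i$-scores vanish at the root, so $\widehat d_i=\sum_{j\ne i}a_{ij}$ exactly. The degree error is then a centered Bernoulli sum, with $\mathbb E_A\{p_{+,n}\sum_i(\sum_{j\ne i}a_{ij}-d_i^*)^2/(d_i^*)^3\}\le CK_n/s_n$, so this term is $O_p(\sqrt{K_n/s_n})$. Alternatively, relative errors $|\widehat p_{ij}/p_{ij}^*-1|\le C|\Delta_{\theta,ij}|$ combined with Jensen's inequality in the weights $W_{ij}^*$ also give $O_p(K_ns_n^{-1/2})$.

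Your first $\bm W$ term, both $\widehat{\bm P}$ bounds and the projector bound are fine.
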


Theorem~\ref{thm: first stage} gives maximum and mean-square error bounds for the aligned network parameters and transfers these bounds to the probability matrix, latent subspace and normalized weight matrix. In particular, suppose $b_n=O(1)$ and $\rho_n\le1-c_\rho$ for a fixed $c_\rho\in(0,1)$, so that $K_n=O(1)$ and $s_n\asymp d_n$. For dense networks with $\rho_n\asymp1$, taking $k=1$ gives $\max_i\Delta_i=O_p\{\sqrt{\log n/n}\}$ by the first bound in \eqref{eq: first stage recovery}. This is the fastest rate provided by Theorem~\ref{thm: first stage}, which is sharper than the uniform bound $O_p(n^{-1/2+\varepsilon})$ for any fixed $\varepsilon>0$ in \citet{nldaLi2025Inference}. Regarding sparsity, retain the preceding bounded-heterogeneity conditions and suppose $d_n\ge c_dn^{\varepsilon_d}$ for fixed $c_d>0$ and any fixed $\varepsilon_d\in(0,1)$. Taking a fixed integer $k\ge1/\varepsilon_d$ yields $\delta_{n,k}=O\{\sqrt{\log n/d_n}\}$ and ensures \eqref{eq: hetero growth}, since $\log^2n/d_n\to0$. Thus uniform parameter consistency holds when $\rho_n\ge c_dn^{-1+\varepsilon_d}$ for any fixed $\varepsilon_d\in(0,1)$, matching the sparsity regime in \citet{nldaLi2025Inference}. Beyond this bounded-heterogeneity setting, Theorem~\ref{thm: first stage} allows $b_n$ to increase with $n$, subject to Assumption~\ref{assump: network growth}, thereby accommodating growing node-effect heterogeneity. In comparison, \citet{nldaLi2025Inference} assume uniformly bounded node effects after separating a common sparsity offset. 

Our theory concerns local likelihood estimators. Establishing convergence guarantees for Algorithm~\ref{alg:step1} with both network sparsity and increasing node-effect heterogeneity would require additional conditions and a separate analysis of initialization and optimization. We leave this investigation for future work. The simulation results provide empirical support for the finite-sample performance of the proposed two-step procedure.

\subsection{Response inference}

We next study the asymptotic properties of the regression estimator obtained by plugging the estimated network structure into the response quasi-likelihood. Let $\bm\zeta^*=(\bm\beta^{*\top},\bm\eta^{*\top},\gamma^*,\tau^*)^\top$ be the true response parameter and $\widehat{\bm\zeta}=(\widehat{\bm\beta}^\top,\widehat{\bm\eta}^\top,\widehat\gamma,\widehat\tau)^\top$ the aligned output of Algorithm~\ref{alg:step2}, where $\tau^*=1/\sigma^{*2}$. Write $\bm S^*=\bm I_n-\gamma^*\bm W^*$ and $\bm\mu_y^*=\bm S^{*-1}(\bm X\bm\beta^*+\bm Z^*\bm\eta^*)$. We use $\mathbb E_\epsilon$, $\operatorname{Var}_\epsilon$ and $\operatorname{Cov}_\epsilon$ for moments over the response errors. More assumptions are required.

\begin{assumption}\label{assump: response}
There exist positive constants $C_X$ and $c_X$, such that $\|\bm X\|_{2,\infty}\le C_X$ and $\lambda_{\min}\{n^{-1}(\bm X,\bm Z^*)^\top(\bm X,\bm Z^*)\}\ge c_X$. The true response parameter $\bm\zeta^*$ belongs to a fixed compact subset of $\mathbb R^{p+r}\times(-1+\delta_\gamma,1-\delta_\gamma)\times(0,\infty)$. For some fixed $\kappa_\gamma>0$,
\begin{align}
\label{eq: gamma identification}
 \liminf_{n\to\infty}\frac1n
 \|\bm M_{(\bm X,\bm Z^*)}\bm W^*\bm\mu_y^*\|_2^2
 \ge\kappa_\gamma.
\end{align}
\end{assumption}

\begin{assumption}\label{assump: response errors}
The errors $\epsilon_1,\ldots,\epsilon_n$
are independent and identically distributed, independent of the network observations,
and, for a fixed $C_\epsilon>0$, satisfy
\[
 \mathbb E_\epsilon\epsilon_i=0,\ 
 \mathbb E_\epsilon\epsilon_i^2=\sigma^{*2},\ 
 \mathbb E_\epsilon\exp\{\epsilon_i^2/(C_\epsilon^2\sigma^{*2})\}\le2.
\]
The error distribution may depend on $n$.
\end{assumption}

Assumption~\ref{assump: response} bounds the covariates and prevents asymptotic collinearity among the columns of $(\bm X,\bm Z^*)$, allowing the covariate and latent-position coefficients to be distinguished. The compactness condition keeps $\gamma^*$ away from $\pm1$ and the error variance bounded above and away from zero. Condition~\eqref{eq: gamma identification} indicates that the information about $\gamma^*$ cannot be absorbed by the coefficients of $\bm X$ and $\bm Z^*$.  The exponential-moment condition in Assumption~\ref{assump: response errors} is equivalent to a uniform sub-Gaussian bound on the standardized errors $\epsilon_i/\sigma^*$, which implies bounded standardized moments of every fixed order.

Let $\bm q_y(\bm\zeta;\bm P,\bm Z)=\nabla_{\bm\zeta}L_y(\bm\zeta;\bm P,\bm Z)$
be the normalized response score of \eqref{eq: response quasi likelihood}.
Define the expected negative Jacobian and scaled score covariance at the truth by
\begin{align}
\label{eq: response information definitions}
 \mathsf J_{y,n}=-\mathbb E_\epsilon\{\nabla_{\bm\zeta}\bm q_y(\bm\zeta^*;\bm P^*,\bm Z^*)\},\quad 
 \bm\Xi_n=n\operatorname{Var}_\epsilon\{\bm q_y(\bm\zeta^*;\bm P^*,\bm Z^*)\}.
\end{align}
When $\mathsf J_{y,n}$ is invertible, write
\begin{align}
\label{eq: joint response covariance}
 \bm V_n=\mathsf J_{y,n}^{-1}\bm\Xi_n\mathsf J_{y,n}^{-\top}.
\end{align}
For joint inference, we estimate the covariance matrix of $\widehat{\bm\zeta}$ by
\begin{align}
\label{eq: joint feasible covariance}
 \widehat{\operatorname{Var}}(\widehat{\bm\zeta})
 =n^{-1}\widehat{\bm V}_n,\quad
 \widehat{\bm V}_n=\widehat{\mathsf J}_{y,n}^{-1}
 \widehat{\bm\Xi}_n\widehat{\mathsf J}_{y,n}^{-\top}.
\end{align}
Here $\widehat{\mathsf J}_{y,n}$ and $\widehat{\bm\Xi}_n$ estimate $\mathsf J_{y,n}$ and $\bm\Xi_n$, respectively, and their sandwich combination $\widehat{\bm V}_n$ estimates $\bm V_n$. The detailed construction is given in Section I of the Supplementary Material, with an explicit formula for $\widehat{\bm\Xi}_n$ in Lemma~\ref{lem: fitted score covariance}. For joint inference, let $\bm C_n=(\bm X,\bm Z^*,\bm W^*\bm\mu_y^*)$ and $h_n=n^{-1}\bm1_n^\top(\bm I_n-\bm M_{\bm C_n})\bm1_n\in[0,1]$. Let $\mu_3=\mathbb E_\epsilon(\epsilon_i^3)$ and $\mu_4=\mathbb E_\epsilon(\epsilon_i^4)$ denote the third and fourth error moments, which may depend on $n$. The additional conditions for joint asymptotic normality are collected below.
\begin{assumption}\label{assump: joint inference}
The error moments $\mu_3,\mu_4$ and the design matrix $\bm C_n$ satisfy
\begin{align}
\label{eq: response nondegeneracy}
 \liminf_{n\to\infty}
 \left(\mu_4-\sigma^{*4}-\frac{\mu_3^2}{\sigma^{*2}}h_n\right)>0.
\end{align}
The network information also satisfies
\begin{align}
\label{eq: inference growth}
 \frac{\sqrt nK_n^2}{s_n}\longrightarrow0.
\end{align}
\end{assumption}
Condition~\eqref{eq: response nondegeneracy} prevents the joint covariance matrix $\bm V_n$ from approaching singularity, thereby ensuring nondegenerate joint inference. Lemma~\ref{lem: response conditioning} of the Supplementary Material shows that this condition keeps the smallest eigenvalue of $\bm V_n$ bounded away from zero. It holds automatically for Gaussian errors: $\mu_3=0$ and $\mu_4=3\sigma^{*4}$, so the expression in \eqref{eq: response nondegeneracy} equals $2\sigma^{*4}$, which is bounded away from zero by Assumption~\ref{assump: response}. Condition~\eqref{eq: inference growth} imposes a further restriction on the relative growth of $s_n$ and $K_n$. Then we have the following result.

\begin{theorem}\label{thm: plugin inference}
Suppose Assumptions~\ref{assump: network}--\ref{assump: response errors} hold. Then the aligned response output satisfies
\begin{align}
\label{eq: response rate all degrees}
 \|\widehat{\bm\zeta}-\bm\zeta^*\|_2
 =O_p(n^{-1/2}+K_n^2/s_n).
\end{align}
If Assumption~\ref{assump: joint inference} also holds, then
\begin{align}
\label{eq: plugin CLT}
 \widehat{\bm V}_n^{-1/2}\sqrt n
 (\widehat{\bm\zeta}-\bm\zeta^*)
 \to_d N(\bm0,\bm I_{p+r+2}).
\end{align}
\end{theorem}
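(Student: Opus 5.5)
The plan is to treat $\widehat{\bm\zeta}$ as a profile quasi-likelihood estimator with a plug-in nuisance $(\widehat{\bm P},\widehat{\bm Z})$. The argument compares it with the infeasible oracle estimator $\widetilde{\bm\zeta}$, obtained by running Algorithm~\ref{alg:step2} with $(\bm P^*,\bm Z^*)$. We work on the event of Theorem~\ref{thm: first stage}, whose probability tends to one, and take its rates in \eqref{eq: propagation rates} as given.

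\textbf{Step 1 (oracle analysis).} With known weights, the model is a spatial autoregression with regressors $(\bm X,\bm Z^*)$. The standard quasi-MLE arguments of \citet{lee2004asymptotic} apply. Because $\|\bm W^*\|_\infty=1$ and $|\gamma|\le1-\delta_\gamma$, the norms $\|\bm S_\gamma^{-1}\|_\infty$ and $\|\bm S_\gamma^{-1}\|_1$ are bounded uniformly on $\Gamma_\gamma$; the column-sum bound needs a little care, and I would pass through the symmetrized form $\bm D^{1/2}\bm W\bm D^{-1/2}$, using that the degree ratio is bounded by $K_n$. We then carry out three steps:
\begin{itemize}
\item show that $L_{y,\mathrm{prof}}(\gamma;\bm P^*,\bm Z^*)$ converges uniformly to a deterministic limit that is uniquely maximized at $\gamma^*$, using \eqref{eq: gamma identification} and the sub-Gaussian quadratic-form bounds (Hanson--Wright) from Assumption~\ref{assump: response errors};
\item control the grid error, which is $O(n^{-1})$ because the grid gap is at most $n^{-1}$ and the criterion is Lipschitz;
\item expand the score and obtain $\sqrt n(\widetilde{\bm\zeta}-\bm\zeta^*)=\mathsf J_{y,n}^{-1}\sqrt n\,\bm q_y(\bm\zeta^*;\bm P^*,\bm Z^*)+o_p(1)$.
\end{itemize}
The score is a linear--quadratic form $\bm a^\top\bm\epsilon+\bm\epsilon^\top\bm B\bm\epsilon-\sigma^2\tr\bm B$, so the CLT for linear--quadratic forms (Kelejian--Prucha) gives normality with covariance $\bm\Xi_n$. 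Condition \eqref{eq: response nondegeneracy}, through Lemma~\ref{lem: response conditioning}, keeps $\bm V_n$ well conditioned.

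\textbf{Step 2 (perturbation from network estimation).} We bound, uniformly in $\gamma\in\Gamma_\gamma$, the difference between the profile criteria and between the profile coefficient formulas \eqref{eq: profile beta tau} at $(\widehat{\bm P},\widehat{\bm Z})$ and at $(\bm P^*,\bm Z^*)$. Three ingredients control this difference:
\begin{itemize}
\item the resolvent identity $\widehat{\bm S}_\gamma-\bm S^*_\gamma=-\gamma(\widehat{\bm W}-\bm W^*)$;
\item the bounds $\|\widehat{\bm W}-\bm W^*\|_F=O_p(K_ns_n^{-1/2})$ and $\|\bm M_{\widehat{\bm Z}}-\bm M_{\bm Z^*}\|_F=O_p(s_n^{-1/2})$;
\item for the log-determinant term, $n^{-1}|\log\det\widehat{\bm S}_\gamma-\log\det\bm S^*_\gamma|\lesssim n^{-1}|\tr(\bm S^{*-1}_\gamma\Delta\bm W)|+n^{-1}\|\Delta\bm W\|_F^2$.
\end{itemize}

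The naive first-order contributions, of size $n^{-1/2}\|\Delta\bm W\|_F\cdot O_p(1)$, are too crude. Writing $\Delta\bm W=\widehat{\bm W}-\bm W^*$, the key refinement is to split each such term into two parts. The first part is linear in $\bm\epsilon$, such as $\bm\mu^\top\Delta\bm W^\top\bm\epsilon$. Because $\bm\epsilon$ is independent of $\bm A$, and hence of $\Delta\bm W$, these terms are mean zero conditional on the network and of order $\|\Delta\bm W\|_F=O_p(K_ns_n^{-1/2})$; after dividing by $n$ they are $o_p(n^{-1/2})$. The second part is the deterministic bias, such as $n^{-1}\tr(\bm S^{*-1}\Delta\bm W)$ and $n^{-1}\bm\mu^\top\Delta\bm W^\top\bm M\bm W^*\bm\mu$. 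For these I would use the first-order expansion of $\widehat{\bm\alpha}-\bm\alpha^*$ and $\widehat{\bm Z}-\bm Z^*$ from the proof of Theorem~\ref{thm: first stage}: the leading linear term in $\bm A-\bm P^*$ averages out, and what remains is controlled by $\max_i\Delta_i$ and $n^{-1}\sum_i\Delta_i^2=O_p(s_n^{-1})$, times the $K_n$ factors coming from the normalization by $d_i(\bm P)$. The aim is to show that the bias in the score is $O_p(K_n^2/s_n)$. The consistency argument then goes through with an error $o_p(1)$, and a mean-value expansion gives \eqref{eq: response rate all degrees}. Under \eqref{eq: inference growth}, $\sqrt nK_n^2/s_n\to0$, so the feasible and oracle estimators agree to $o_p(n^{-1/2})$ and \eqref{eq: plugin CLT} follows from Step 1. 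The Jacobian perturbation needs only the cruder Frobenius bounds.

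\textbf{Step 3 (covariance estimator).} We show $\widehat{\bm V}_n-\bm V_n=o_p(1)$. This uses the consistency of $\widehat{\bm\zeta}$ and of the sample moments $\widehat\mu_3$ and $\widehat\mu_4$ from the residuals. It also uses the perturbation bounds of Step 2 applied to the explicit sandwich formula of Lemma~\ref{lem: fitted score covariance}. Since $\lambda_{\min}(\bm V_n)$ is bounded away from zero, Slutsky's lemma finishes the proof.

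\textbf{Main obstacle.} The main obstacle is the second-order bias control in Step 2. It means showing that the plug-in bias in the score is $O_p(K_n^2/s_n)$ rather than $O_p(K_ns_n^{-1/2}n^{-1/2}\cdot\sqrt n)$. The difficulty is the term $n^{-1}\tr\{\bm S^{*-1}(\widehat{\bm W}-\bm W^*)\}$ together with the interaction between $\widehat{\bm Z}$ and $\widehat{\bm W}$, which both come from the same network likelihood. My first attempt would be to expand $\widehat{\bm W}-\bm W^*$ to first order in $(\widehat{\bm\alpha}-\bm\alpha^*,\widehat{\bm Z}-\bm Z^*)$, express that linear term through the network score $(\bm A-\bm P^*)$, and bound its traces by conditional-mean-zero concentration. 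The quadratic remainder would then be absorbed through $n^{-1}\sum_i\Delta_i^2$.
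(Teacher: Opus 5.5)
Your architecture parallels the paper's: expand the response score in the network parameters, control the second-order part of $\widehat{\bm W}-\bm W^*$, cancel the first-order network term through the network score, use crude Frobenius bounds for the Jacobian, and finish with a sandwich estimator. The gap is in the step you flag as the main obstacle, and the tools you name for it do not reach $O_p(K_n^2/s_n)$.

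The first-order term is the linear functional $D_\nu q_{y,a}(\bm\zeta^*;\bm\nu^*)[\widehat{\bm h}]=\bm g_a^\top\psi(\widehat{\bm h})$. Here $\bm g_a$ is the projected gradient of $q_{y,a}$ in the network coordinates, and $\|\bm g_a\|_2=O_p(K_n/\sqrt n)$. Suppose you substitute the linearization $\psi(\widehat{\bm h})=\mathsf F_{0,*}^{-1}\bm g_A+(\bm\delta-\bm\delta^{(1)})$ from the proof of Theorem~\ref{thm: first stage}. The remainder there is only controlled as $\|\bm\delta-\bm\delta^{(1)}\|_2=O_p(\sqrt nK_n\delta_{n,k}/\sqrt{s_n})$, so Cauchy--Schwarz gives $O_p(K_n^2\delta_{n,k}/\sqrt{s_n})$.

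The same loss appears in the Taylor remainder of $\expit$, which is of order $w_{+,n}\Delta_{\theta,ij}^2$ dyadwise. From $\max_i\Delta_i=O_p(\delta_{n,k})$ and $n^{-1}\sum_i\Delta_i^2=O_p(s_n^{-1})$ alone you only get $n^{-1}\sum_i\Delta_i^4=O_p(\delta_{n,k}^2/s_n)$. This inflates both the expansion of $\widehat{\bm W}$ and the functional remainder by the factor $\delta_{n,k}\sqrt{s_n}\ge\sqrt{K_n\log n}\to\infty$. So this route gives a weaker rate than \eqref{eq: response rate all degrees}, and a CLT only under a condition strictly stronger than \eqref{eq: inference growth}.

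The paper closes the gap with two ingredients.

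\emph{A Riesz-representer argument.} Set $\psi(\bm v_a)=\mathsf F_*^{-1}\bm g_a$, so that $D_\nu q_{y,a}(\bm\zeta^*;\bm\nu^*)[\bm h]=\psi(\bm v_a)^\top\mathsf F_*\psi(\bm h)$ on $\mathcal T_n$. Then evaluate the \emph{exact} stationarity condition $D_\nu L_A(\widehat{\bm\nu})[\bm v_a]=0$ along this single direction, which yields
\[
 D_\nu q_{y,a}(\bm\zeta^*;\bm\nu^*)[\widehat{\bm h}]
 =\psi(\bm v_a)^\top\bm g_A
 +\tr\{\bm V_{Z,a}^\top(\bm A-\bm P^*)\bm\Delta_Z\}
 -\sum_{i<j}\bigl\{q_{a,ij}R^P_{ij}+r^v_{a,ij}(\widehat p_{ij}-p_{ij}^*)\bigr\}.
\]
Here $\bm V_{Z,a}$ is the latent block of $\bm v_a$, $q_{a,ij}=\dot\theta_{ij}^*[\bm v_a]$, $R^P_{ij}=\widehat p_{ij}-p_{ij}^*-w_{ij}^*\dot\theta_{ij}^*[\widehat{\bm h}]$, and $r^v_{a,ij}$ is bilinear in $\bm v_a$ and $\bm\Delta_Z$. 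The terms are controlled as follows.
\begin{itemize}
\item The leading term is mean zero given $\bm y$, with variance $\psi(\bm v_a)^\top\mathsf F_*\psi(\bm v_a)=O_p\{K_n^2/(ns_n)\}$.
\item The cross term is $O_p(K_n^{3/2}/s_n)$, via $\|\bm A-\bm P^*\|_2=O_p(\sqrt{t_n})$.
\item The dyadic sums are $O_p(K_n^2/s_n)$, because the root-mean-square size $O_p\{K_n/(ns_n)\}$ of $q_a$ multiplies that of $R^P$, namely $O_p(w_{+,n}/s_n)$.
\end{itemize}

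\emph{A fourth-moment rate.} The bound on $R^P$, and also $\|\widehat{\bm W}-\bm W^*-D_\nu\bm W(\bm\nu^*)[\widehat{\bm h}]\|_F=O_p(K_n/s_n)$, require $n^{-1}\sum_i\|\bm\delta_i\|_2^4=O_p(s_n^{-2})$ from Lemma~\ref{lem: local likelihood root}. This is exactly where the form of \eqref{eq: hetero growth} enters, and it is not among the conclusions of Theorem~\ref{thm: first stage} that you take as given.

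Conversely, the trace term you single out is harmless. Both $\bm W^*$ and $\widehat{\bm W}$ are row stochastic with entries at most $CK_n/n$, so traces such as $\tr\{\bm S^{*-1}(\widehat{\bm W}-\bm W^*)\}$ are $O(K_n)$ and contribute $O(K_n/n)=o(n^{-1/2})$. For the same reason, the quadratic part of the $\gamma$-score has variance $O(\|\bm Q\|_F^2)=O(K_n)$ and is negligible after scaling. A Lyapunov CLT for independent summands therefore suffices in place of Kelejian--Prucha. It also avoids Lee's uniformly bounded column-sum conditions, which here hold only up to a factor $K_n$.
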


Theorem~\ref{thm: plugin inference} establishes the convergence rate and joint asymptotic normality of the aligned regression estimator. Specifically, the rate in \eqref{eq: response rate all degrees} combines response variation of order
$n^{-1/2}$ and network-estimation error of order $K_n^2/s_n$.
Condition~\eqref{eq: inference growth} makes the latter negligible at the root-$n$
scale. Consequently, the aligned regression estimator attains the classical root-$n$ convergence rate established for spatial autoregressive models with known weights, including the quasi-maximum likelihood estimator of \citet{lee2004asymptotic} and the least squares estimator of \citet{huang2019least}, subject to their respective regularity conditions. Our result retains this rate even though both the weight matrix $\bm W^*$ and the latent positions $\bm Z^*$ must be estimated.

Our sufficient conditions for attaining this rate impose stronger restrictions on network sparsity and node-effect heterogeneity than those for consistency alone. The bound in \eqref{eq: response rate all degrees} gives root-$n$ convergence when $K_n^2/s_n=O(n^{-1/2})$, while \eqref{eq: inference growth} requires this term to be $o(n^{-1/2})$ for the joint normal approximation in \eqref{eq: plugin CLT}. In the bounded-heterogeneity setting described after Theorem~\ref{thm: first stage}, $K_n=O(1)$ and $s_n\asymp d_n$, so \eqref{eq: inference growth} becomes $d_n/\sqrt n\to\infty$, or equivalently $\sqrt n\rho_n\to\infty$. Thus, this sufficient condition allows sparse networks but requires a higher density than that needed for network consistency. Increasing node-effect heterogeneity can reduce $s_n$ and increase $K_n$, further tightening the condition.

\section{Simulation}
\label{sec: simulation}

In this section, we examine the numerical performance of the proposed NLDA method via extensive simulated experiments. We first compare the estimation performance of the proposed NLDA method with some popular existing methods, including the SIM of \cite{bramoulle2009identification}, the RNC of \cite{li2019prediction} and the SP method of \cite{le2022linear}. Next, we evaluate the validity of the inference framework of the proposed NLDA method.

%The estimates in the simulations and data analysis are computed with
%Algorithms~\ref{alg:step1} and~\ref{alg:step2}. For Algorithm~\ref{alg:step1},
%we use $(c_{\tau,0},\xi_0,T)=(2.0,0.75,60)$ in the estimation simulations
%and $(2.5,1.0,500)$ in the inference simulations and the full real-data fit.
%In the simulations, we use $\Gamma_\gamma=[-0.95,0.95]$ with $2n$ equally spaced
%subintervals and the precision profile
%$\tau_\gamma=n/\operatorname{RSS}_n(\gamma)$.

\subsection{Estimation performance}
\label{subsec:sim-estimation}
% In this subsection, we evaluate the practical performance of our method in different settings of network effects and compare it with other three benchmark methods. They are the SIM of \cite{lee2007identification}, the RNC of \cite{li2019prediction}, and the SP of \cite{le2022linear}.

Let $\bm\mu_y^*$ denote the true mean response under the data-generating model and $\widehat{\bm\mu}_y$ its estimate from each method. For the proposed model \eqref{Model: PMNLD}, the true mean is $\bm\mu_y^*=\bm S^{*-1}(\bm X\bm\beta^*+\bm Z^*\bm\eta^*)$, and the NLDA estimate is $\widehat{\bm\mu}_y=\widehat{\bm S}^{-1}(\bm X\widehat{\bm\beta}+\widehat{\bm Z}\widehat{\bm\eta})$, where $\widehat{\bm S}=\bm I_n-\widehat\gamma\widehat{\bm W}$. Following \citet{le2022linear}, we compare the methods using the relative mean squared error 
\[
\mathrm{RelMSE}(\widehat{\bm\mu}_y,\bm\mu_y^*)
=
\frac{\|\widehat{\bm\mu}_y-\bm\mu_y^*\|_2^2}{\|\bm\mu_y^*\|_2^2}.
\]
We consider three synthetic examples to evaluate the performance of all methods in different aspects.

\textbf{Example 1.} With $p=8$ and $r=3$, we first generate $(\bm B_Z)_{ij}\overset{\text{iid}}{\sim}\mathcal N(0,1)$ for $i\in [n]$ and $j\in [r]$, compute the left and right singular vector matrices $\bm U_Z$ and $\bm V_Z$ in the thin singular value decomposition of $\bm J_n\bm B_Z$, and set $\bm Z^*=\sqrt{n/r}\bm U_Z\bm V_Z^\top$. Next, we generate $\widetilde X_{ij}\overset{\text{iid}}{\sim}\mathcal N(0,1)$ for $i\in [n]$ and $j\in [p]$, and set $\bm X=\bm J_n\widetilde{\bm X}$. We then draw $\alpha_i^*\overset{\text{iid}}{\sim}\mathcal N(\mu,1)$ for each $i\in[n]$. Let $\bm P^*$ be determined by \eqref{Model: Latent}, and we sample $\bm A$ from $\bm P^*$. Finally, we draw $\bm\beta^*\sim\mathcal N_p(\bm 0,0.8^2p^{-1}\bm I_p)$, $\bm\eta^*\sim\mathcal N_r(\bm 0,r^{-1}\bm I_r)$, and $\bm\epsilon\sim\mathcal N_n(\bm 0,\bm I_n)$ independently, and generate $\bm y$ from \eqref{Model: PMNLD} with $\gamma^*=0.8$. We vary $\mu\in\{-2.5,-2.0,-1.5\}$ to generate networks with densities ranging approximately from 0.02 to 0.1.

\textbf{Example 2.} The data-generating scheme is the same as in Example 1, except that we vary the latent structure strength $s_z\in\{1,1.5,2\}$ and generate $\bm Z^*=s_z\sqrt{n/r}\bm U_Z\bm V_Z^\top$. We then choose $\mu$ such that the expected network density $\{n(n-1)\}^{-1}\sum_{i\ne j} p_{ij}^* = 0.05$. All other parameters are the same as in Example 1. This yields sparse networks with different strengths of latent low-rank structure of the network data.

\textbf{Example 3.} We generate the network and covariates as in Example 1 with $\mu=-2.0$, but generate the response according to the SIM in \eqref{Model: SIM}, where $\bm\beta^*$ and $\bm\epsilon$ are generated as in Example 1 and $\bm\eta_{\rm SIM}^*\sim\mathcal N_p(\bm 0,p^{-1}\bm I_p)$.
We vary $\gamma^*\in\{0.2,0.4,0.6\}$ to evaluate the robustness of the proposed NLDA method under model misspecification.

We consider $n\in\{1000,2000,4000\}$ in all examples. The averaged $\mathrm{RelMSE}(\widehat{\bm\mu}_y,\bm\mu_y^*)$ and the corresponding standard errors over 50 independent replications are reported in Tables \ref{table:example_1_relmse}-\ref{table:example_3_relmse}.

\begin{table}[!htb]
\centering
\small
\caption{Averaged relative mean squared errors $\mathrm{RelMSE}(\widehat{\bm\mu}_y,\bm\mu_y^*)$ and their standard errors of all competing methods over 50 independent replications in Example 1. The best performer in each scenario is boldfaced.}
\begin{tabularx}{\textwidth}{c c >{\centering\arraybackslash}X >{\centering\arraybackslash}X >{\centering\arraybackslash}X >{\centering\arraybackslash}X}
\toprule
$\mu$ & $n$  & NLDA & SP & RNC & SIM \\
\midrule
\multirow{3}{*}{$\mu=-2.5$} 
  & 1000  & \textbf{0.2003} (0.0032) & 0.2258 (0.0034) & 0.3525 (0.0104) & 0.2974 (0.0046) \\
  & 2000  & \textbf{0.1205} (0.0007) & 0.1811 (0.0011) & 0.3504 (0.0022) & 0.3048 (0.0011) \\
  & 4000  & \textbf{0.0645} (0.0003) & 0.1440 (0.0007) & 0.2492 (0.0013) & 0.3133 (0.0040) \\
\midrule
\multirow{3}{*}{$\mu=-2.0$} 
  & 1000  & \textbf{0.1335} (0.0030) & 0.1870 (0.0036) & 0.3549 (0.0060) & 0.3184 (0.0055) \\
  & 2000  & \textbf{0.0708} (0.0015) & 0.1355 (0.0011) & 0.2545 (0.0018) & 0.3048 (0.0043) \\
  & 4000  & \textbf{0.0337} (0.0003) & 0.1053 (0.0007) & 0.1942 (0.0012) & 0.2814 (0.0022) \\
\midrule
\multirow{3}{*}{$\mu=-1.5$} 
  & 1000  & \textbf{0.0608} (0.0005) & 0.1064 (0.0027) & 0.2371 (0.0040) & 0.2953 (0.0079) \\
  & 2000  & \textbf{0.0341} (0.0003) & 0.0875 (0.0012) & 0.2015 (0.0015) & 0.3113 (0.0066) \\
  & 4000  & \textbf{0.0169} (0.0001) & 0.0723 (0.0011) & 0.1899 (0.0016) & 0.3170 (0.0044) \\
\bottomrule
\end{tabularx}
\label{table:example_1_relmse}
\end{table}

\begin{table}[!htb]
\centering
\small
\caption{Averaged relative mean squared errors $\mathrm{RelMSE}(\widehat{\bm\mu}_y,\bm\mu_y^*)$ and their standard errors of all competing methods over 50 independent replications in Example 2. The best performer in each scenario is boldfaced.}
\begin{tabularx}{\textwidth}{c c >{\centering\arraybackslash}X >{\centering\arraybackslash}X >{\centering\arraybackslash}X >{\centering\arraybackslash}X}
\toprule
$s_z$ & $n$  & NLDA & SP & RNC & SIM \\
\midrule
\multirow{3}{*}{$s_z=1.0$} 
  & 1000  & \textbf{0.1025} (0.0010) & 0.1496 (0.0050) & 0.2928 (0.0072) & 0.2946 (0.0068) \\
  & 2000  & \textbf{0.0539} (0.0004) & 0.1135 (0.0010) & 0.2263 (0.0016) & 0.2838 (0.0023) \\
  & 4000  & \textbf{0.0314} (0.0005) & 0.1040 (0.0009) & 0.1892 (0.0016) & 0.2751 (0.0018) \\
\midrule
\multirow{3}{*}{$s_z=1.5$} 
  & 1000  & \textbf{0.0673} (0.0008) & 0.2072 (0.0016) & 0.2046 (0.0027) & 0.6714 (0.0134) \\
  & 2000  & \textbf{0.0337} (0.0003) & 0.1888 (0.0007) & 0.1734 (0.0012) & 0.6082 (0.0032) \\
  & 4000  & \textbf{0.0168} (0.0001) & 0.1682 (0.0003) & 0.2165 (0.0009) & 0.6752 (0.0072) \\
\midrule
\multirow{3}{*}{$s_z=2.0$} 
  & 1000  & \textbf{0.0457} (0.0004) & 0.2281 (0.0007) & 0.1026 (0.0013) & 0.7389 (0.0048) \\
  & 2000  & \textbf{0.0234} (0.0002) & 0.2205 (0.0003) & 0.1131 (0.0008) & 0.6775 (0.0025) \\
  & 4000  & \textbf{0.0125} (0.0001) & 0.2041 (0.0002) & 0.1786 (0.0009) & 0.7704 (0.0013) \\
\bottomrule
\end{tabularx}
\label{table:example_2_relmse}
\end{table}

\begin{table}[!htb]
\centering
\small
\caption{Averaged relative mean squared errors $\mathrm{RelMSE}(\widehat{\bm\mu}_y,\bm\mu_y^*)$ and their standard errors of all competing methods over 50 independent replications in Example 3. The best performer in each scenario is boldfaced.}
\begin{tabularx}{\textwidth}{c c >{\centering\arraybackslash}X >{\centering\arraybackslash}X >{\centering\arraybackslash}X >{\centering\arraybackslash}X}
\toprule
$\gamma^*$ & $n$  & NLDA & SP & RNC & SIM \\
\midrule
\multirow{3}{*}{$\gamma^*=0.2$} 
  & 1000  & 0.0406 (0.0006) & 0.0418 (0.0007) & 0.1345 (0.0117) & \textbf{0.0208} (0.0079) \\
  & 2000  & 0.0195 (0.0003) & 0.0198 (0.0003) & 0.1364 (0.0142) & \textbf{0.0187} (0.0057) \\
  & 4000  & 0.0088 (0.0001) & 0.0092 (0.0002) & 0.1138 (0.0101) & \textbf{0.0055} (0.0015) \\
\midrule
\multirow{3}{*}{$\gamma^*=0.4$} 
  & 1000  & 0.0407 (0.0005) & 0.0425 (0.0007) & 0.3022 (0.0163) & \textbf{0.0207} (0.0035) \\
  & 2000  & 0.0181 (0.0003) & 0.0190 (0.0003) & 0.2429 (0.0061) & \textbf{0.0127} (0.0024) \\
  & 4000  & 0.0124 (0.0013) & 0.0096 (0.0002) & 0.1609 (0.0026) & \textbf{0.0079} (0.0020) \\
\midrule
\multirow{3}{*}{$\gamma^*=0.6$} 
  & 1000  & \textbf{0.0591} (0.0010) & 0.0634 (0.0010) & 0.4020 (0.0054) & 0.0972 (0.0273) \\
  & 2000  & 0.0208 (0.0003) & 0.0235 (0.0006) & 0.2847 (0.0016) & \textbf{0.0093} (0.0016) \\
  & 4000  & 0.0108 (0.0001) & 0.0119 (0.0002) & 0.1767 (0.0007) & \textbf{0.0081} (0.0014) \\
\bottomrule
\end{tabularx}
\label{table:example_3_relmse}
\end{table}

Tables~\ref{table:example_1_relmse} and~\ref{table:example_2_relmse} show that NLDA achieves the smallest average RelMSE in every setting of Examples 1 and 2. Its RelMSE also decreases as $n$ increases across all three examples; in the correctly specified Examples 1 and 2, this trend is qualitatively consistent with the convergence result in Theorem~\ref{thm: plugin inference}. In Example 1, the estimation error of NLDA decreases substantially as the network becomes denser. SP also improves with network density, while the improvement of RNC is less consistent. SIM remains substantially less accurate than NLDA throughout. In Example 2, increasing $s_z$ reduces the estimation error of NLDA and widens its advantage over SP and SIM. RNC also benefits from stronger latent structure but remains less accurate than NLDA. These results indicate that NLDA can use the latent structure to improve estimation. In Example 3, Table~\ref{table:example_3_relmse} shows that SIM achieves the smallest average RelMSE in most settings, as expected when the data are generated from SIM. Among the misspecified methods, NLDA generally has smaller average RelMSE than SP and RNC. Thus, NLDA remains competitive among the misspecified methods.

\subsection{Inference performance}

We construct 95\% confidence intervals for $\bm\zeta^*$ using the fitted response coefficients and the joint covariance estimator in \eqref{eq: joint feasible covariance}. In each replication, we use the known $\bm Z^*$ to align the raw estimator of $\bm\eta$ and transform its estimated covariance accordingly, as described in Section~\ref{sec: theory}. Coverage for $\bm\eta$ is reported in the aligned coordinates studied in Theorem~\ref{thm: plugin inference}; inference for $\bm\beta$, $\gamma$ and $\tau$ is unchanged by this alignment. The data-generating scheme is the same as in Example~1, except that we vary $\mu\in\{-1.0,-0.75,-0.5\}$. We consider $n\in\{1000,2000\}$. Bias and standard deviation (SD) are computed from the estimation errors. The averaged coverage probability, bias and SD over all $p+r+2=13$ coordinates and 500 independent replications are reported in Table~\ref{table:zeta_inference_mu}.

\begin{table}[!htb]
\centering
\small
\caption{Empirical coverage and estimation error for $\bm\zeta^*$ over 500 independent replications, with latent coordinates aligned to the truth. Entries in the error column are $|\mathrm{Bias}|(\mathrm{SD})$.}
\setlength{\tabcolsep}{3.5pt}
\begin{tabular}{c c c c}
\toprule
 & & \multicolumn{2}{c}{NLDA} \\
\cmidrule(lr){3-4}
$\mu$ & $n$ & Coverage & $|\mathrm{Bias}|(\mathrm{SD})$ \\
\midrule
\multirow{2}{*}{$\mu=-1.0$} & 1000 & 0.9242 & 0.0321 (0.0829) \\
  & 2000 & 0.9343 & 0.0314 (0.0720) \\
\midrule
\multirow{2}{*}{$\mu=-0.75$} & 1000 & 0.9371 & 0.0295 (0.0800) \\
  & 2000 & 0.9403 & 0.0256 (0.0639) \\
\midrule
\multirow{2}{*}{$\mu=-0.5$} & 1000 & 0.9369 & 0.0264 (0.0759) \\
  & 2000 & 0.9482 & 0.0197 (0.0561) \\
\bottomrule
\end{tabular}
\label{table:zeta_inference_mu}
\end{table}

Table~\ref{table:zeta_inference_mu} shows empirical coverage ranging from 92.4\% to 94.8\%. Although coverage remains below the nominal 95\% level, it generally approaches this level as either $n$ or $\mu$ increases. This pattern is consistent with more accurate network estimation: larger $n$ provides more network observations, while larger $\mu$ yields denser networks in the settings considered. Improved network estimation reduces its contribution to the regression estimation error, helping the normal approximation become more accurate. For each value of $\mu$, the absolute bias and SD also decrease as $n$ increases.

\section{Diplomatic network-linked data}

We now apply the proposed method to a diplomatic network-linked dataset constructed from the Correlates of War Diplomatic Exchange data \citep{bayer2006diplomatic} and the World Bank World Development Indicators. The World Bank data are publicly available at \url{https://databank.worldbank.org/source/world-development-indicators}. We use the diplomatic network and country characteristics in 2005. For two distinct countries \(i\) and \(j\), we set \(a_{ij}=a_{ji}=1\) if the original binary exchange indicator records diplomatic representation between the two countries, and zero otherwise, with \(a_{ii}=0\). Restricting attention to countries with complete response and covariate information, we obtain a connected network with adjacency matrix $\bm A$, containing 187 countries and 4,507 edges, with density being 0.2592. Moreover, we construct the response vector $\bm y$ and covariate matrix $\bm X$ as in Table~\ref{tab:diplomacy_yx}. In practice, we apply the natural logarithm to all variables. Thus, $y_i$ is the natural logarithm of life expectancy at birth in years, and $\bm x_i$ includes an intercept together with the centered and scaled versions of log GDP per capita, log population and log urban population percentage, each with empirical mean zero and second moment one.

\begin{table}[!htb]
\centering
\small
\caption{Node-level response and covariates in the diplomatic network analysis.}
\label{tab:diplomacy_yx}
\begin{tabularx}{\textwidth}{c l >{\raggedright\arraybackslash}X}
\toprule
Role & Quantity & Construction \\
\midrule
Response & Life expectancy & Life expectancy at birth in years in 2005. \\
Covariate & GDP per capita & GDP per capita in 2005 (constant 2015 US\$). \\
Covariate & Population & Total population in 2005. \\
Covariate & Urban population percentage & Share of population living in urban areas in 2005. \\
\bottomrule
\end{tabularx}
\end{table}

We select the latent dimension $r=3$ by held-out link prediction \citep{zhang2022joint}. We then assess how well the fitted network model captures regional connectivity in the diplomatic graph. Figure~\ref{fig:diplomacy_network_fit} compares observed edge frequencies with average fitted edge probabilities among five groups of countries defined by continent. The similar patterns in the two panels suggest that the fitted model captures the main regional connectivity structure.

\begin{figure}[!htb]
\centering
\includegraphics[width=0.85\textwidth]{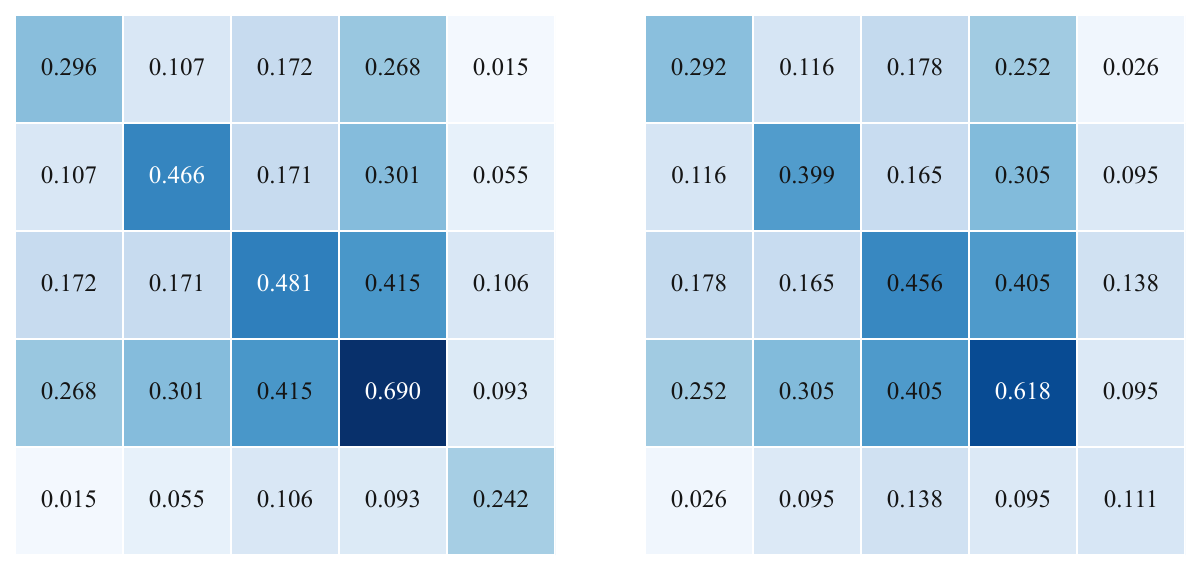}
\caption{Observed edge frequencies (left) and mean fitted edge probabilities (right), grouped by continent: Africa, Americas, Asia, Europe and Oceania, in this order.}
\label{fig:diplomacy_network_fit}
\end{figure}

We compare the predictive performance of NLDA, SP, RNC and SIM over 50 random 80:20 splits of the countries into training and test sets. For NLDA, we fit the network model \eqref{Model: Latent} using the whole network. For each split, we fit the response model \eqref{Model: PMNLD} using only the training covariates and responses together with the fitted network model. For prediction, we use the fitted models and jointly solve the response equations for the test countries. Let $\ell_i=\exp(y_i)$ and $\widehat\ell_i=\exp(\widehat y_i)$ denote the observed and predicted life expectancies in years. We finally evaluate the prediction accuracy on the testing set by relative mean squared error and out-of-sample coefficient of determination:
\[
%\mathrm{rMAE}=\frac{\sum_{i\in\mathcal T}|\widehat\ell_i-\ell_i|}{\sum_{i\in\mathcal T}\ell_i},\mbox{ }
\mathrm{rMSE}=\frac{\sum_{i\in\mathcal T}(\widehat\ell_i-\ell_i)^2}{\sum_{i\in\mathcal T}\ell_i^2},\mbox{ }
R^2=1-\frac{\sum_{i\in\mathcal T}(\widehat\ell_i-\ell_i)^2}{\sum_{i\in\mathcal T}(\ell_i-\bar\ell_{\mathrm{tr}})^2},
\]
where $\mathcal T$ denotes the test set and $\bar\ell_{\mathrm{tr}}$ is the mean life expectancy in the training set. Table~\ref{table:diplomacy_prediction} shows that NLDA achieves the best average predictive performance among the four methods on both measures: the smallest rMSE and the largest $R^2$. Paired comparisons across the 50 splits support these improvements over each competing method, with all one-sided Wilcoxon $p$-values below 0.001.

\begin{table}[!htb]
\centering
\small
\caption{Averaged prediction measures and their standard errors over 50 random 80:20 train-test splits in the diplomatic network analysis. The best performer is boldfaced.}
\label{table:diplomacy_prediction}
\begin{tabular}{l c c c c}
\toprule
Metric & NLDA & SP & RNC & SIM \\
\midrule
%$\mathrm{rMAE}$ & \textbf{0.0492} (0.0010) & 0.0543 (0.0010) & 0.0562 (0.0012) & 0.0509 (0.0013) \\
$\mathrm{rMSE}$ & \textbf{0.0043} (0.0002) & 0.0051 (0.0002) & 0.0058 (0.0003) & 0.0048 (0.0002) \\
$R^2$ & \textbf{0.7729} (0.0083) & 0.7293 (0.0096) & 0.6973 (0.0117) & 0.7500 (0.0104) \\
\bottomrule
\end{tabular}
\end{table}

Next, we examine inference based on the NLDA fit to all 187 countries. Table~\ref{table:diplomacy_main_results} reports estimates of the observed covariate coefficients and the network-effect coefficient $\gamma$. The significantly positive estimate of $\gamma$ suggests a positive association in life expectancy between countries linked by diplomatic ties, after accounting for the observed covariates and latent positions. This association may reflect exchanges of health knowledge and resources through international cooperation. The estimated coefficients of GDP per capita, population and urban population percentage are also positive, with GDP per capita having the largest coefficient. The joint Wald test of $\bm\eta=\bm0$, whose statistic is invariant to the fitted latent orientation, gives a $p$-value of 0.000475, indicating a significant contribution of the latent positions.

\begin{table}[!htb]
\centering
\small
\caption{NLDA estimates for the diplomatic network with $r=3$.}
\label{table:diplomacy_main_results}
\begin{tabular}{l c c c c}
\toprule
Variable & Estimate & SE & $p$-value & 95\% CI \\
\midrule
GDP per capita & 0.083 & 0.008 & $<0.001$ & $[0.067,\,0.099]$ \\
Population & 0.016 & 0.007 & 0.023 & $[0.002,\,0.030]$ \\
Urban population percentage & 0.019 & 0.008 & 0.015 & $[0.004,\,0.035]$ \\
Network effect $\gamma$ & 0.767 & 0.212 & $<0.001$ & $[0.352,\,1.182]$ \\
\bottomrule
\end{tabular}
\end{table}

%----------------------------------------------------------------------------------------

\section{Summary}

In this paper, we develop a probabilistic regression framework for network-linked data. The NLDA model replaces binary adjacency weights by degree-normalized connection probabilities in the peer-response term and includes latent positions as a network-cohesion component. The theory establishes recovery rates for a local network likelihood maximizer under increasing node-effect heterogeneity and consistency of the resulting plug-in regression estimator. Under the sub-Gaussian error and nondegeneracy conditions, $\sqrt nK_n^2/s_n\to0$ ensures joint asymptotic normality of the aligned regression estimator, with a consistently estimable covariance. When node-effect variation is bounded and the average density stays bounded away from one, this condition reduces to $d_n/\sqrt n\to\infty$. The diplomatic network analysis illustrates the application to network-linked data.

\section*{Acknowledgment}

JZ's research is supported in part by the National Key R\&D Program of China, grant 2024YFA1012200, and the National Natural Science Foundation of China, grant 12301388. DH's research is supported in part by the National Natural Science Foundation of China, grant 72471230; the MOE Project of Key Research Institute of Humanities and Social Sciences, grant 22JJD110001; and Big Data and Responsible Artificial Intelligence for National Governance, Renmin University of China. The first two authors contribute equally to this work.

\putbib[NLDA]
\end{bibunit}

\clearpage
\baselineskip=24pt
\setcounter{page}{1}
\setcounter{equation}{0}
\setcounter{lemma}{0}
\setcounter{theorem}{0}
\setcounter{proposition}{0}
\setcounter{corollary}{0}
\setcounter{assumption}{0}
\renewcommand{\thelemma}{A\arabic{lemma}}
\renewcommand{\thetheorem}{A\arabic{theorem}}
\renewcommand{\theproposition}{A\arabic{proposition}}
\renewcommand{\thecorollary}{A\arabic{corollary}}
\renewcommand{\theassumption}{A\arabic{assumption}}
\renewcommand{\theequation}{A\arabic{equation}}
\begin{center}
{\Large\bf Supplementary Materials for ``Linear regression with probabilistic networks for network-linked data"} \\
\medskip
% {\Large\bf ``"}  \\
% \bigskip
\vspace{0.2in} % comment out this line when unblind the author line
\end{center}
% \bigskip

% \spacingset{1.7} 

\begin{bibunit}[apalike]

This supplement is organized as follows. Section I introduces some notations and definitions required for the following proofs of theorems and lemmas. Section II presents the proofs of Theorems~\ref{thm: first stage} and~\ref{thm: plugin inference}. Section III contains the auxiliary lemmas and their proofs.

\section*{Section I. Notation}
Constants $c,C>0$ may vary between inequalities and depend only on the fixed
quantities in the assumptions and the fixed integer $k$.

\paragraph{Network parameters and derivatives.}
Write $\bm\nu=(\bm\alpha,\bm Z)\in\mathbb R^n\times\mathbb R^{n\times r}$ and
$\bm\nu^*=(\bm\alpha^*,\bm Z^*)$ for the network parameter pairs, and set
$\widehat{\bm\nu}=(\widehat{\bm\alpha},\widehat{\bm Z})$.
We measure their differences in node coordinates, $\psi(\bm\nu) = (\alpha_1,\bm z_1^\top,\ldots,\alpha_n,\bm z_n^\top)^\top \in\mathbb R^{n(r+1)}$,
where the same linear map applies to a perturbation $\bm h=(\bm h_\alpha,\bm E_Z)$. For
functions of the network parameters, write $\bm P(\bm\nu)=\bm P(\bm\alpha,\bm Z)$,
$\bm W(\bm\nu)=\bm W\{\bm P(\bm\nu)\}$ and $L_A(\bm\nu)=L_A(\bm\alpha,\bm Z)$. For $i\ne j$,
write $\theta_{ij}(\bm\nu)$ and $p_{ij}(\bm\nu)$ for the logit and edge probability
in \eqref{Model: Latent}.

For a differentiable scalar-, vector- or matrix-valued function $f$ and
$\bm h=(\bm h_\alpha,\bm E_Z)$, define $D_\nu f(\bm\nu)[\bm h] :=\left.\frac{d}{dt}f(\bm\nu+t\bm h)\right|_{t=0}$.
Here $\bm\nu+t\bm h=(\bm\alpha+t\bm h_\alpha,\bm Z+t\bm E_Z)$;
all other arguments of $f$ are held fixed. The notation $D_P$, $D_W$ and
$D_{(W,Z)}$ similarly specifies differentiation with respect to
$\bm P$, $\bm W$ and $(\bm W,\bm Z)$.
For $i\ne j$, the directional derivative of $\theta_{ij}$ is
\begin{align*}
 \dot\theta_{ij}(\bm\nu)[\bm h]
 :=\left.\frac{d}{dt}\theta_{ij}(\bm\nu+t\bm h)\right|_{t=0}
 =h_{\alpha,i}+h_{\alpha,j}
 +\bm E_{Z,i\cdot}\bm Z_{j\cdot}^\top
 +\bm Z_{i\cdot}\bm E_{Z,j\cdot}^\top.
\end{align*}
At the true parameter, write
$\dot\theta_{ij}^*[\bm h]:=\dot\theta_{ij}(\bm\nu^*)[\bm h]$.

Write $\bm\phi_{ij}(\bm\nu)=\nabla_\psi\theta_{ij}(\bm\nu)$,
$\bm\phi_{ij}^*=\bm\phi_{ij}(\bm\nu^*)$ and
$\bm B_{A,ij}=\nabla_\psi^2\theta_{ij}(\bm\nu)$. The matrix $\bm B_{A,ij}$ is constant,
with $\bm I_r$ in the latent-coordinate positions $(i,j)$ and $(j,i)$ and zeros elsewhere. For
$e=(i,j)$ with $1\le i<j\le n$, a subscript $e$ denotes the
corresponding dyad entry in $a_{ij},\theta_{ij},p_{ij},w_{ij},\bm\phi_{ij}$ or
$\bm B_{A,ij}$. Write $w_{ij}(\bm\nu)=p_{ij}(\bm\nu)\{1-p_{ij}(\bm\nu)\}$ and $w_{ij}^*=w_{ij}(\bm\nu^*)$.

Let $m_n=n(n-1)/2$ and $t_n=nw_{+,n}=K_ns_n$. The probability envelopes are
$p_{\pm,n}=\expit\{\vartheta_n\pm(2b_n+2+R_Z^2)\}$, where $\vartheta_n=\logit(\rho_n)$.
Thus $w_{-,n}$ and $w_{+,n}$ in \eqref{eq: hetero scales}
are the minimum and maximum of $p(1-p)$ over $[p_{-,n},p_{+,n}]$. After the alignment
specified in the main article, write $\bm\Delta_Z=\widehat{\bm Z}-\bm Z^*$,
whose $i$-th row is $\bm\Delta_{z,i}^\top=(\widehat{\bm z}_i-\bm z_i^*)^\top$.

\paragraph{Tangent space and network information.}
Define the subspace of centered, aligned parameter directions by
\begin{align}
\label{eq: network tangent coordinates}
 \mathcal T_n=\{\psi(\bm h):
 \bm E_Z^\top\bm1_n=\bm0,
 \ \bm Z^{*\top}\bm E_Z=\bm E_Z^\top\bm Z^*\}
 \subset\mathbb R^{n(r+1)},
\end{align}
where $\bm h=(\bm h_\alpha,\bm E_Z)$, and let $\bm\Pi_{\mathcal T_n}$ be its Euclidean
projector. The Procrustes alignment in Section~\ref{sec: theory} makes
$\bm Z^{*\top}\widehat{\bm Z}$ symmetric and positive semidefinite. Consequently,
$\bm Z^{*\top}\bm\Delta_Z=\bm\Delta_Z^\top\bm Z^*$.
If $\widehat{\bm Z}^{\rm raw\top}\bm1_n=\bm0$, then
$\bm\Delta_Z^\top\bm1_n=\bm0$ as well. Thus the centered, aligned estimation
error satisfies the constraints in \eqref{eq: network tangent coordinates}.

For a vector $\bm u\in\mathbb R^{n(r+1)}$, let
$\bm u_i\in\mathbb R^{r+1}$ contain the coordinates of node $i$.
For a matrix $\bm C\in\mathbb R^{n(r+1)\times n(r+1)}$, let
$\bm C_{ij}\in\mathbb R^{(r+1)\times(r+1)}$ be its $(i,j)$-th node block.
Define the block norms by
\[
 \|\bm u\|_{\blinf}=\max_i\|\bm u_i\|_2,
 \quad
 \|\bm C\|_{\blinf}=\max_i\sum_j\|\bm C_{ij}\|_2,
\]
where ``bl'' stands for ``block'', and $\|\bm C_{ij}\|_2$ is the spectral norm.
In particular, $\|\psi(\bm h)\|_{\blinf}=\max_i\sqrt{h_{\alpha,i}^2+\|\bm E_{Z,i\cdot}\|_2^2}$.

For $\bm H,\bm G\in\mathbb R^{n(r+1)\times n(r+1)}$ and
$\bm u\in\mathbb R^{n(r+1)}$, these norms satisfy
\begin{align}
\label{eq: node product bounds}
 \|\bm H\bm G\|_{\blinf}
 \le\|\bm H\|_{\blinf}\|\bm G\|_{\blinf},\quad
 \|\bm H\bm u\|_{\blinf}
 \le\|\bm H\|_{\blinf}\|\bm u\|_{\blinf}.
\end{align}
For endpoints satisfying $\psi(\bm\nu_\ell-\bm\nu^*)\in\mathcal T_n$, $\ell=1,2$, define
\begin{align}
\label{eq: extended tangent Hessian}
 \mathsf F_A(\bm\nu_1,\bm\nu_2)
 =-\bm\Pi_{\mathcal T_n}\int_0^1\nabla_\psi^2 L_A
 \{\bm\nu_2+v(\bm\nu_1-\bm\nu_2)\} dv\bm\Pi_{\mathcal T_n}
 +s_n(\bm I-\bm\Pi_{\mathcal T_n}),
\end{align}
and set $\mathsf F_0(\bm\nu_1,\bm\nu_2)=\mathbb E_A\{\mathsf F_A(\bm\nu_1,\bm\nu_2)\}$
and $\mathsf F_{0,*}=\mathsf F_0(\bm\nu^*,\bm\nu^*)$, where $\mathbb E_A$,
$\operatorname{Var}_A$ and $\operatorname{Cov}_A$ take moments over the network
observations, with endpoints held fixed when evaluating $\mathsf F_0$. The added term is
zero on $\mathcal T_n$ and equals $s_n\bm I$ on $\mathcal T_n^\perp$.
Define $\mathsf F_*:\mathcal T_n\to\mathcal T_n$ by $\mathsf F_*\bm u=\mathsf F_{0,*}\bm u$ for $\bm u\in\mathcal T_n$.
Define the projected network score at the truth by
\begin{align}
\label{eq: projected network score definition}
 \bm g_A=\bm\Pi_{\mathcal T_n}\nabla_\psi L_A(\bm\nu^*)
 =\bm\Pi_{\mathcal T_n}\sum_{i<j}(a_{ij}-p_{ij}^*)
       \nabla_\psi\theta_{ij}(\bm\nu^*).
\end{align}
\paragraph{Response score.}
For the augmented design $\bm X_+$ in the main article, write
$\bm X_+^*=(\bm X,\bm Z^*)$ and
$\widehat{\bm X}_+=(\bm X,\widehat{\bm Z})$.
For a transition matrix $\bm W\in\mathbb R^{n\times n}$ and
$\gamma\in\Gamma_\gamma$, define
\[
 \bm Q_\gamma(\bm W)=\bm W(\bm I_n-\gamma\bm W)^{-1},\quad
 \bm Q=\bm Q_{\gamma^*}(\bm W^*)=\bm W^*\bm S^{*-1}.
\]
For a full-column-rank $\bm Z\in\mathbb R^{n\times r}$, write
$\bm Z^\dagger=(\bm Z^\top\bm Z)^{-1}\bm Z^\top$ and
$\bm r_y(\bm\zeta;\bm P,\bm Z)=\bm S_\gamma(\bm P)\bm y-\bm X\bm\beta-\bm Z\bm\eta$.
The normalized response score defined in Section~\ref{sec: theory} has the explicit form
\begin{align}
\label{eq: full response score}
 \bm q_y(\bm\zeta;\bm P,\bm Z)=\frac1n
 \begin{pmatrix}
 \tau\bm X^\top\bm r_y(\bm\zeta;\bm P,\bm Z)\\
 \tau\bm Z^\top\bm r_y(\bm\zeta;\bm P,\bm Z)\\
 -\tr\{\bm W(\bm P)\bm S_\gamma(\bm P)^{-1}\}+\tau(\bm W(\bm P)\bm y)^\top\bm r_y(\bm\zeta;\bm P,\bm Z)\\
 n/(2\tau)-\|\bm r_y(\bm\zeta;\bm P,\bm Z)\|_2^2/2
 \end{pmatrix}.
\end{align}
The coordinate order is $(\bm\beta,\bm\eta,\gamma,\tau)$.
For $\bm\nu=(\bm\alpha,\bm Z)$, write
$\bm q_y(\bm\zeta;\bm\nu)=\bm q_y(\bm\zeta;\bm P(\bm\nu),\bm Z)$.
The same convention applies to $\bm r_y$, $\operatorname{RSS}_n$ and
$L_{y,\mathrm{prof}}$: the argument $\bm\nu$ replaces $(\bm P(\bm\nu),\bm Z)$.
Subscripts $\beta$, $\eta$, $\gamma$ and $\tau$ select the corresponding
components of $\bm q_y$.

\paragraph{Residual moments and fitted covariance.}
Define the fitted negative Jacobian by $\widehat{\mathsf J}_{y,n} =-\nabla_{\bm\zeta}\bm q_y(\widehat{\bm\zeta};\widehat{\bm\nu})$,
with differentiation in $\bm\zeta$ holding
$(\bm y,\widehat{\bm P},\widehat{\bm Z})$ fixed. Define the fitted residuals by
$\widehat{\bm\epsilon}=\widehat{\bm S}\bm y-\bm X\widehat{\bm\beta}-\widehat{\bm Z}\widehat{\bm\eta}$, and put $\overline{\widehat\epsilon}=n^{-1}\sum_i\widehat\epsilon_i$ and $\widehat v=n^{-1}\sum_i(\widehat\epsilon_i-\overline{\widehat\epsilon})^2$. For $\widehat v>0$, define
\begin{align}
\label{eq: empirical error distribution}
 q_i=(\widehat\epsilon_i-\overline{\widehat\epsilon})
 \left(\frac{\widehat\tau^{-1}}{\widehat v}\right)^{1/2},\quad
 \widehat F&=\frac1n\sum_{i=1}^n\delta_{q_i},
\end{align}
where $\delta_q$ is the point mass at $q$.
If $\widehat v=0$, let $\widehat F$ put equal masses at
$\pm\widehat\tau^{-1/2}$. In both cases, $\widehat F$ has mean zero and
variance $\widehat\tau^{-1}$.
For $\widehat v>0$, its third and fourth moments are computed directly as
\[
 \widehat\mu_j=\left(\frac{\widehat\tau^{-1}}{\widehat v}\right)^{j/2}
 \frac1n\sum_i(\widehat\epsilon_i-\overline{\widehat\epsilon})^j,
 \quad j=3,4.
\]
Conditional on the observed data, let $\bm e=(e_1,\ldots,e_n)^\top$ have
independent coordinates with common distribution $\widehat F$.
Write $\mathbb E_{\bm e}$, $\operatorname{Var}_{\bm e}$ and
$\operatorname{Cov}_{\bm e}$ for moments over $\bm e$, with all fitted
quantities held fixed. By construction,
\[
 \mathbb E_{\bm e}e_i=0,\quad
 \mathbb E_{\bm e}e_i^2=\widehat\sigma^2,\quad
 \mathbb E_{\bm e}e_i^j=\widehat\mu_j,\quad j=3,4,
\]
where $\widehat\sigma^2=\widehat\tau^{-1}$.
Define the auxiliary response
$\bm y^{(\bm e)}=\widehat{\bm\mu}_y+\widehat{\bm S}^{-1}\bm e$ and
$\widehat{\bm q}_y^{(\bm e)}$ as the full score in
\eqref{eq: full response score} evaluated at
$(\widehat{\bm\zeta};\widehat{\bm P},\widehat{\bm Z})$ with response
$\bm y^{(\bm e)}$. With $\widehat{\bm\Xi}_n=n\operatorname{Var}_{\bm e}(\widehat{\bm q}_y^{(\bm e)})$, the covariance estimator in \eqref{eq: joint feasible covariance} can be written as
\begin{align*}
 \widehat{\bm V}_n
 =n\widehat{\mathsf J}_{y,n}^{-1}
 \operatorname{Var}_{\bm e}(\widehat{\bm q}_y^{(\bm e)})
 \widehat{\mathsf J}_{y,n}^{-\top}.
\end{align*}
Set $\widehat{\bm V}_n=\bm I_{p+r+2}$ if
$\operatorname{rank}(\bm X,\widehat{\bm Z})<p+r$ or
$\min_{\gamma\in\Gamma_{\gamma,n}}\operatorname{RSS}_n(\gamma;\widehat{\bm P},\widehat{\bm Z})=0$
(the failure conditions in Algorithm~\ref{alg:step2}), or if
$\widehat{\mathsf J}_{y,n}$ is singular.
Set the standardized error vector in \eqref{eq: plugin CLT} to
$\bm0_{p+r+2}$ when $\widehat{\bm V}_n\not\succ0$. Let $\widehat{\bm V}_n^{\rm raw}$ denote the estimator in \eqref{eq: joint feasible covariance} computed before alignment, using $(\widehat{\bm Z}^{\rm raw},\widehat{\bm\eta}^{\rm raw})$. Using the alignment specified in Section~\ref{sec: theory}, the two covariance estimates satisfy
\[
 \widehat{\bm V}_n
 =\diag(\bm I_p,\widehat{\bm O}^{\top},1,1)
 \widehat{\bm V}_n^{\rm raw}
 \diag(\bm I_p,\widehat{\bm O},1,1).
\]

\section*{Section II. Proofs of the main theorems}

\noindent\textbf{Proof of Theorem~\ref{thm: first stage}.}
By Lemma~\ref{lem: local likelihood root}, with probability tending to one
there exists a centered, aligned local maximizer
$\overline{\bm\nu}=(\overline{\bm\alpha},\overline{\bm Z})$ satisfying
$\bm\Pi_{\mathcal T_n}\nabla_\psi L_A(\overline{\bm\nu})=\bm0$.
Choose such a maximizer measurably and set
$\widehat{\bm\nu}=\overline{\bm\nu}$; on the exceptional event, set
$\widehat{\bm\nu}=(\bm0_n,\bm0_{n\times r})$.
All subsequent calculations are on the event of the lemma, whose complement
does not affect the stated probability orders. In these aligned coordinates,
$\bm\delta=\psi(\widehat{\bm\nu}-\bm\nu^*)\in\mathcal T_n$ and
$\Delta_i=\|\bm\delta_i\|_2$. The error bounds in that lemma give $\max_i\Delta_i=O_p(\delta_{n,k})$, $\frac1n\sum_i\Delta_i^2=O_p(s_n^{-1})$.
These bounds prove \eqref{eq: first stage recovery}.
We now use these bounds to prove \eqref{eq: propagation rates}.

The fixed margins in \eqref{eq: hetero primitive bounds} and $\max_i\Delta_i=o_p(1)$
give $\max_i|\widehat\alpha_i-\vartheta_n/2|\le b_n$ and
$\|\widehat{\bm Z}\|_{2,\infty}\le R_Z$ with probability tending to one. To transfer the
parameter rates to the network matrices, expand the fitted predictors as $\Delta_{\theta,ij} =\widehat\alpha_i-\alpha_i^*+\widehat\alpha_j-\alpha_j^* +\bm\Delta_{z,i}^\top\widehat{\bm z}_j +\bm z_i^{*\top}\bm\Delta_{z,j}$,
where $\Delta_{\theta,ij}=\theta_{ij}(\widehat{\bm\nu})-\theta_{ij}^*$.
The bounded true
and fitted latent rows give $|\Delta_{\theta,ij}|\le C(\Delta_i+\Delta_j)$, and
therefore $\max_{i<j}|\Delta_{\theta,ij}|=O_p(\delta_{n,k})$.
Using $(x+y)^2\le2(x^2+y^2)$ gives
\begin{align}
\label{eq: dyad logit moment transfer}
 \frac1{m_n}\sum_{i<j}\Delta_{\theta,ij}^2
 \le\frac C{m_n}\sum_{i<j}(\Delta_i^2+\Delta_j^2)
 =\frac{C(n-1)}{m_n}\sum_i\Delta_i^2
 =O_p(s_n^{-1}).
\end{align}
Here $m_n=n(n-1)/2$, so each nodewise moment appears $n-1$ times.
The bounds on the node effects and latent rows imply
that the true and fitted predictors $\theta_{ij}^*$ and
$\widehat\theta_{ij}=\theta_{ij}(\widehat{\bm\nu})$ both lie in
$\{u:|u-\logit(\rho_n)|\le2b_n+2+R_Z^2\}$.
By the definition of $w_{+,n}$ in \eqref{eq: hetero scales},
$\expit'(u)\le w_{+,n}=K_ns_n/n$ throughout this interval.
The mean value theorem therefore gives
$|\widehat p_{ij}-p_{ij}^*|\le w_{+,n}|\Delta_{\theta,ij}|$, and hence
\begin{align*}
 \|\Delta\bm P\|_{\max}
 &\le w_{+,n}\max_{i<j}|\Delta_{\theta,ij}|
 =O_p(K_ns_n\delta_{n,k}/n),&
 \|\Delta\bm P\|_F^2
 &\le2w_{+,n}^2\sum_{i<j}\Delta_{\theta,ij}^2
 =O_p(K_n^2s_n),
\end{align*}
where $\Delta\bm P=\widehat{\bm P}-\bm P^*$. Also
$\|\widehat{\bm Z}-\bm Z^*\|_F=O_p(\sqrt{n/s_n})=o_p(\sqrt n)$. The Gram bound in
\eqref{eq: hetero primitive bounds} and Weyl's inequality thus give
$\sigma_r(\widehat{\bm Z})\ge c\sqrt n$ with probability tending to one, where
$\sigma_r$ is the smallest singular value. Put
$\widehat{\bm Q}_Z=\widehat{\bm Z}(\widehat{\bm Z}^\top\widehat{\bm Z})^{-1/2}$,
whose columns are orthonormal. Since the two latent spaces have rank $r$,
\begin{align*}
 \|\bm M_{\widehat{\bm Z}}-\bm M_{\bm Z^*}\|_F^2
 &=2\|\bm M_{\bm Z^*}\widehat{\bm Q}_Z\|_F^2\\
 &=2\|\bm M_{\bm Z^*}(\widehat{\bm Z}-\bm Z^*)
 (\widehat{\bm Z}^\top\widehat{\bm Z})^{-1/2}\|_F^2
 \le\frac{2\|\widehat{\bm Z}-\bm Z^*\|_F^2}
 {\sigma_r(\widehat{\bm Z})^2}
 =O_p(s_n^{-1}),
\end{align*}
where the second equality uses $\bm M_{\bm Z^*}\bm Z^*=\bm0$.

For the normalized weights, write $\widehat d_i=d_i(\widehat{\bm P})$
and $d_i^*=d_i(\bm P^*)$.
Let $\bm e_i$ be the $i$-th standard basis vector of $\mathbb R^n$.
Since the constraints defining $\mathcal T_n$ concern only the latent directions,
$\psi(\bm e_i,\bm0)\in\mathcal T_n$.
The projected score at the estimator is zero by
Lemma~\ref{lem: local likelihood root}, so
\[
 0=\psi(\bm e_i,\bm0)^\top\nabla_\psi L_A(\widehat{\bm\nu})
 =\frac{\partial L_A(\widehat{\bm\nu})}{\partial\alpha_i}
 =\sum_{j\ne i}(a_{ij}-\widehat p_{ij}).
\]
Thus $\widehat d_i=\sum_{j\ne i}\widehat p_{ij}
=\sum_{j\ne i}a_{ij}$ exactly.
Lemma~\ref{lem: network consequences} gives $d_i^*\ge(n-1)p_{-,n}\ge cs_n$ and
$p_{+,n}/p_{-,n}\le4K_n$. Bernstein's inequality and a union bound then imply
$\max_i|\widehat d_i-d_i^*|/d_i^* =O_p\{\sqrt{\log n/s_n}+\log n/s_n\}=o_p(1)$. Thus
$\widehat d_i\ge d_i^*/2$ simultaneously with probability tending to one. Write
\begin{align}
\label{eq: main transition identity}
 \widehat{\bm W}-\bm W^*
 =\widehat{\bm D}^{-1}\Delta\bm P
 -\widehat{\bm D}^{-1}\diag(\Delta\bm P\bm1_n)\bm W^*,
\end{align}
where $\widehat{\bm D}=\diag(\widehat d_1,\ldots,\widehat d_n)$. The first term has
Frobenius norm at most $Cs_n^{-1}\|\Delta\bm P\|_F=O_p(K_n/\sqrt {s_n})$. For the second term, using $\widehat d_i\ge d_i^*/2$ and
$\sum_j(p_{ij}^*)^2\le p_{+,n}d_i^*$ gives
\begin{align*}
 \|\widehat{\bm D}^{-1}\diag(\Delta\bm P\bm1_n)\bm W^*\|_F^2
 &=\sum_i\frac{(\widehat d_i-d_i^*)^2}{\widehat d_i^2}
       \frac{\sum_j(p_{ij}^*)^2}{(d_i^*)^2}
 \le4p_{+,n}\sum_i\frac{(\widehat d_i-d_i^*)^2}{(d_i^*)^3}.
\end{align*}
On the event under consideration,
$\widehat d_i-d_i^*=\sum_{j\ne i}(a_{ij}-p_{ij}^*)$.
The corresponding weighted sum of observed degree errors has expectation
\begin{align*}
 p_{+,n}\sum_i\frac{
 \operatorname{Var}_A(\sum_{j\ne i}a_{ij})}{(d_i^*)^3}
 \le p_{+,n}\sum_i\frac1{(d_i^*)^2}
 \le C\frac{p_{+,n}/p_{-,n}}{np_{-,n}}
 \le CK_n/s_n.
\end{align*}
Markov's inequality therefore bounds the second term in
\eqref{eq: main transition identity} by
$O_p(\sqrt{K_n/s_n})$. Combining the two terms gives
$\|\widehat{\bm W}-\bm W^*\|_F=O_p(K_n/\sqrt {s_n})$, completing
\eqref{eq: propagation rates}.
\hfill$\blacksquare$

\noindent\textbf{Proof of Theorem~\ref{thm: plugin inference}.}
Lemma~\ref{lem: response score representation} gives
$\mathbb E_\epsilon\bm q_y(\bm\zeta^*;\bm\nu^*)=\bm0$.
By \eqref{eq: response information definitions} and
Lemma~\ref{lem: response conditioning},
\[
 c\bm I\preceq\mathsf J_{y,n}\preceq C\bm I,\
 \mathbb E_\epsilon\|\bm q_y(\bm\zeta^*;\bm\nu^*)\|_2^2
 =\frac{\tr(\bm\Xi_n)}n\le\frac Cn.
\]
In particular, $\|\mathsf J_{y,n}^{-1}\|_2\le C$, and Markov's inequality gives
$\|\bm q_y(\bm\zeta^*;\bm\nu^*)\|_2=O_p(n^{-1/2})$.

For the network perturbation, put
$\widehat{\bm h}=\widehat{\bm\nu}-\bm\nu^*$, so that
$\psi(\widehat{\bm h})\in\mathcal T_n$ by alignment.
Lemma~\ref{lem: supplementary W expansion} gives
\[
 \|\bm q_y(\bm\zeta^*;\widehat{\bm\nu})
 -\bm q_y(\bm\zeta^*;\bm\nu^*)
 -D_\nu\bm q_y(\bm\zeta^*;\bm\nu^*)[\widehat{\bm h}]\|_2
 =O_p(K_n^2/s_n).
\]
Lemma~\ref{lem: supplementary Riesz energy} also gives
$D_\nu q_{y,a}(\bm\zeta^*;\bm\nu^*)[\widehat{\bm h}]
=O_p\{K_n/\sqrt{ns_n}+K_n^2/s_n\}$ for each $a\in[p+r+2]$.
Combining these bounds and using the fixed dimension $p+r+2$, we obtain
\begin{align}
\label{eq: main response score perturbation}
 \|\bm q_y(\bm\zeta^*;\widehat{\bm\nu})-\bm q_y(\bm\zeta^*;\bm\nu^*)\|_2
 =O_p\{K_n/\sqrt{ns_n}+K_n^2/s_n\}.
\end{align}
Lemma~\ref{lem: network consequences} gives
$s_n/(K_n^8\log^2n)\to\infty$.
Since $K_n\ge1$, this implies $K_n/\sqrt{s_n}\to0$ and hence
$K_n/\sqrt{ns_n}=o(n^{-1/2})$.

Lemma~\ref{lem: response profile uniform} gives
\begin{equation}
\label{eq: main profile consistency and score}
 \|\widehat{\bm\zeta}-\bm\zeta^*\|_2=o_p(1),\
 \|\bm q_y(\widehat{\bm\zeta};\widehat{\bm\nu})\|_2=O_p(n^{-1}).
\end{equation}
Put $\bm\zeta_t=\bm\zeta^*+t(\widehat{\bm\zeta}-\bm\zeta^*)$ and
$\overline{\mathsf J}_n=-\int_0^1\nabla_{\bm\zeta}
\bm q_y(\bm\zeta_t;\widehat{\bm\nu})\,dt$.
For a sufficiently small fixed $\delta>0$, let
$\mathcal B_\delta=\{\bm\zeta:\|\bm\zeta-\bm\zeta^*\|_2\le\delta\}$.
By \eqref{eq: main profile consistency and score}, the segment lies in
$\mathcal B_\delta$ with probability tending to one.
Lemma~\ref{lem: fitted response Jacobian} gives
\begin{align*}
 \|-\nabla_{\bm\zeta}\bm q_y(\bm\zeta^*;\widehat{\bm\nu})
 -\mathsf J_{y,n}\|_2&=o_p(1),\\
 \sup_{\substack{\bm\zeta_1,\bm\zeta_2\in\mathcal B_\delta\\
 \bm\zeta_1\ne\bm\zeta_2}}
 \frac{\|\nabla_{\bm\zeta}\bm q_y(\bm\zeta_1;\widehat{\bm\nu})
 -\nabla_{\bm\zeta}\bm q_y(\bm\zeta_2;\widehat{\bm\nu})\|_2}
 {\|\bm\zeta_1-\bm\zeta_2\|_2}&=O_p(1).
\end{align*}
Consequently,
\begin{align*}
 \|\overline{\mathsf J}_n-\mathsf J_{y,n}\|_2
 &\le\sup_{t\in[0,1]}
 \|-\nabla_{\bm\zeta}\bm q_y(\bm\zeta_t;\widehat{\bm\nu})-\mathsf J_{y,n}\|_2\\
 &\le\|-\nabla_{\bm\zeta}\bm q_y(\bm\zeta^*;\widehat{\bm\nu})-\mathsf J_{y,n}\|_2
 +O_p(1)\|\widehat{\bm\zeta}-\bm\zeta^*\|_2=o_p(1).
\end{align*}
Since $\lambda_{\min}(\mathsf J_{y,n})\ge c$,
$\overline{\mathsf J}_n$ is invertible with probability tending to one,
and $\|\overline{\mathsf J}_n^{-1}\|_2=O_p(1)$. The inverse identity gives
\begin{align*}
 \|\overline{\mathsf J}_n^{-1}-\mathsf J_{y,n}^{-1}\|_2
 =\|\overline{\mathsf J}_n^{-1}
 (\mathsf J_{y,n}-\overline{\mathsf J}_n)\mathsf J_{y,n}^{-1}\|_2\le\|\overline{\mathsf J}_n^{-1}\|_2
 \|\mathsf J_{y,n}-\overline{\mathsf J}_n\|_2
 \|\mathsf J_{y,n}^{-1}\|_2=o_p(1).
\end{align*}
Integrating the derivative of $\bm q_y(\bm\zeta_t;\widehat{\bm\nu})$
along $t\in[0,1]$ gives $\overline{\mathsf J}_n(\widehat{\bm\zeta}-\bm\zeta^*)
 =\bm q_y(\bm\zeta^*;\widehat{\bm\nu})-\bm q_y(\widehat{\bm\zeta};\widehat{\bm\nu})$. Consequently, the preceding bounds and \eqref{eq: main response score perturbation} imply
\begin{align*}
 \|\widehat{\bm\zeta}-\bm\zeta^*\|_2
 &\le\|\overline{\mathsf J}_n^{-1}\|_2
 \big\{\|\bm q_y(\bm\zeta^*;\bm\nu^*)\|_2
 +\|\bm q_y(\bm\zeta^*;\widehat{\bm\nu})-\bm q_y(\bm\zeta^*;\bm\nu^*)\|_2+\|\bm q_y(\widehat{\bm\zeta};\widehat{\bm\nu})\|_2\big\}\\
& =O_p(n^{-1/2}+K_n^2/s_n),
\end{align*}
which proves \eqref{eq: response rate all degrees}.

Now impose Assumption~\ref{assump: joint inference}.
Condition~\eqref{eq: inference growth} and
\eqref{eq: main response score perturbation} give
\[
 \sqrt n\|\bm q_y(\bm\zeta^*;\widehat{\bm\nu})-\bm q_y(\bm\zeta^*;\bm\nu^*)\|_2
 =O_p\{K_n/\sqrt{s_n}+\sqrt nK_n^2/s_n\}=o_p(1).
\]
Multiplying the integrated score identity by $\sqrt n$, using
\eqref{eq: main profile consistency and score} and the inverse bounds above, yields
\begin{align}
\label{eq: main response expansions}
 \sqrt n(\widehat{\bm\zeta}-\bm\zeta^*)
 =\mathsf J_{y,n}^{-1}\sqrt n\bm q_y(\bm\zeta^*;\bm\nu^*)+o_p(1).
\end{align}
Under \eqref{eq: response nondegeneracy},
Lemma~\ref{lem: response conditioning} and
Lemma~\ref{lem: joint covariance} give, respectively,
\begin{align}
\label{eq: main covariance consistency}
 c\bm I_{p+r+2}\preceq\bm V_n\preceq C\bm I_{p+r+2},\quad
 \|\widehat{\bm V}_n-\bm V_n\|_2=o_p(1).
\end{align}

Write $\bm c_{n,i}^\top$ for row $i$ of $\bm C_n$ and
$\bm G_n=n^{-1}\bm C_n^\top\bm C_n$.
Lemma~\ref{lem: response score representation} gives
$\max_i\|\bm c_{n,i}\|_2\le C$, and
Lemma~\ref{lem: response conditioning} gives $\|\bm G_n^{-1}\|_2\le C$.
Define
\[
 \bm U_n=\frac1{\sqrt n}\sum_i
 \begin{pmatrix}\bm G_n^{-1}\bm c_{n,i}\epsilon_i\\
 -\tau^{*2}(\epsilon_i^2-\sigma^{*2})\end{pmatrix}.
\]
Lemma~\ref{lem: response conditioning} also gives
\begin{equation}
\label{eq: main response influence representation}
 \mathsf J_{y,n}^{-1}\sqrt n\bm q_y(\bm\zeta^*;\bm\nu^*)
 =\bm U_n+\bm r_n,\
 \mathbb E_\epsilon\|\bm r_n\|_2^2=o(1).
\end{equation}
For a fixed unit vector $\bm d\in\mathbb R^{p+r+2}$, define
\[
 \xi_{n,i}=\bm d^\top\bm V_n^{-1/2}
 \begin{pmatrix}\bm G_n^{-1}\bm c_{n,i}\epsilon_i\\
 -\tau^{*2}(\epsilon_i^2-\sigma^{*2})\end{pmatrix}.
\]
By Assumption~\ref{assump: response errors}, these variables are independent
and satisfy $\mathbb E_\epsilon\xi_{n,i}=0$.
Assumption~\ref{assump: response errors}, $x^4\le4!e^x$ for $x\ge0$, and the compactness condition
on $\tau^*$ in Assumption~\ref{assump: response} give $\sup_i\mathbb E_\epsilon|\epsilon_i|^8 \le48C_\epsilon^8\sigma^{*8}\le C$.
Together with \eqref{eq: main covariance consistency} and the upper bound
on $\tau^*$ in Assumption~\ref{assump: response}, the bounds on
$\bm c_{n,i}$ and $\bm G_n^{-1}$ give, since
$\|\bm d\|_2=1$,
\begin{align*}
 |\xi_{n,i}|
 \le\|\bm V_n^{-1/2}\|_2
 \big\{\|\bm G_n^{-1}\|_2\|\bm c_{n,i}\|_2|\epsilon_i|
 +\tau^{*2}|\epsilon_i^2-\sigma^{*2}|\big\}\le C\big(|\epsilon_i|+|\epsilon_i^2-\sigma^{*2}|\big).
\end{align*}
Using $(a+b)^4\le8(a^4+b^4)$ for $a,b\ge0$, we obtain
\begin{align*}
 \sup_i\mathbb E_\epsilon|\xi_{n,i}|^4
 \le C\sup_i\big\{\mathbb E_\epsilon|\epsilon_i|^4
 +\mathbb E_\epsilon|\epsilon_i^2-\sigma^{*2}|^4\big\}\le C\sup_i\big\{\mathbb E_\epsilon|\epsilon_i|^4
 +\mathbb E_\epsilon|\epsilon_i|^8+\sigma^{*8}\big\}\le C,
\end{align*}
where $\mathbb E_\epsilon|\epsilon_i|^4
\le(\mathbb E_\epsilon|\epsilon_i|^8)^{1/2}$ and the preceding eighth-moment
bound give the last inequality.

By independence of the errors,
\[
 \operatorname{Var}_\epsilon(\bm U_n)
 =\begin{pmatrix}
 \sigma^{*2}\bm G_n^{-1}
 &-\tau^{*2}\mu_3\bm G_n^{-1}\bar{\bm c}_n\\
 -\tau^{*2}\mu_3\bar{\bm c}_n^\top\bm G_n^{-1}
 &\tau^{*4}(\mu_4-\sigma^{*4})
 \end{pmatrix},
\]
where $\bar{\bm c}_n=n^{-1}\sum_i\bm c_{n,i}$.
The covariance formula in Lemma~\ref{lem: response conditioning} therefore gives
$\|\operatorname{Var}_\epsilon(\bm U_n)-\bm V_n\|_2=o(1)$.
Since $n^{-1/2}\sum_i\xi_{n,i}=\bm d^\top\bm V_n^{-1/2}\bm U_n$,
\begin{align*}
 \left|\frac1n\sum_i\mathbb E_\epsilon\xi_{n,i}^2-1\right|
 &=\left|\bm d^\top\bm V_n^{-1/2}
 \{\operatorname{Var}_\epsilon(\bm U_n)-\bm V_n\}
 \bm V_n^{-1/2}\bm d\right|\\
 &\le\|\bm V_n^{-1/2}\|_2^2
 \|\operatorname{Var}_\epsilon(\bm U_n)-\bm V_n\|_2=o(1).
\end{align*}
It follows that
\[
 \frac{\sum_i\mathbb E_\epsilon|\xi_{n,i}|^4}
 {(\sum_i\mathbb E_\epsilon\xi_{n,i}^2)^2}
 \le\frac{Cn}{(n/2)^2}\longrightarrow0.
\]
Lyapunov's theorem and the Cram\'er--Wold theorem give
$\bm V_n^{-1/2}\bm U_n\to_d N(\bm0,\bm I_{p+r+2})$.
Moreover, \eqref{eq: main response influence representation} implies
\[
 \mathbb E_\epsilon\|\bm V_n^{-1/2}\bm r_n\|_2^2
 \le\|\bm V_n^{-1/2}\|_2^2\mathbb E_\epsilon\|\bm r_n\|_2^2=o(1).
\]
Thus Markov's inequality gives $\|\bm V_n^{-1/2}\bm r_n\|_2=o_p(1)$.
Combining \eqref{eq: main response influence representation} with
\eqref{eq: main response expansions} gives
\begin{align*}
 \bm V_n^{-1/2}\sqrt n(\widehat{\bm\zeta}-\bm\zeta^*)
 =\bm V_n^{-1/2}\mathsf J_{y,n}^{-1}
 \sqrt n\bm q_y(\bm\zeta^*;\bm\nu^*)+o_p(1)
 =\bm V_n^{-1/2}\bm U_n+o_p(1)
 \to_d N(\bm0,\bm I_{p+r+2}).
\end{align*}

Finally, \eqref{eq: main covariance consistency} implies
$\Pr\{c\bm I/2\preceq\widehat{\bm V}_n\preceq2C\bm I\}\to1$.
For positive definite matrices $\bm A,\bm B$, the integral representation
of the inverse square root gives
\[
 \bm A^{-1/2}-\bm B^{-1/2}
 =\frac1\pi\int_0^\infty t^{-1/2}
 (\bm A+t\bm I)^{-1}(\bm B-\bm A)(\bm B+t\bm I)^{-1}\,dt.
\]
On the preceding event, both $\widehat{\bm V}_n$ and $\bm V_n$ have minimum
eigenvalue at least $c/2$. Taking $\bm A=\widehat{\bm V}_n$ and
$\bm B=\bm V_n$ therefore yields
\begin{align*}
 \|\widehat{\bm V}_n^{-1/2}-\bm V_n^{-1/2}\|_2
 \le\frac{\|\widehat{\bm V}_n-\bm V_n\|_2}{\pi}
 \int_0^\infty\frac{t^{-1/2}}{(t+c/2)^2}\,dt=\frac{\|\widehat{\bm V}_n-\bm V_n\|_2}{2(c/2)^{3/2}}=o_p(1),
\end{align*}
where the last step uses \eqref{eq: main covariance consistency}.
Since \eqref{eq: main response expansions} gives
$\sqrt n\|\widehat{\bm\zeta}-\bm\zeta^*\|_2=O_p(1)$,
\begin{align*}
\|\widehat{\bm V}_n^{-1/2}\sqrt n(\widehat{\bm\zeta}-\bm\zeta^*)
 -\bm V_n^{-1/2}\sqrt n(\widehat{\bm\zeta}-\bm\zeta^*)\|_2\le\|\widehat{\bm V}_n^{-1/2}-\bm V_n^{-1/2}\|_2
 \sqrt n\|\widehat{\bm\zeta}-\bm\zeta^*\|_2=o_p(1).
\end{align*}
Slutsky's theorem then proves \eqref{eq: plugin CLT}.
\hfill$\blacksquare$

\section*{Section III. Auxiliary lemmas and proofs}

\begin{lemma}\label{lem: network consequences}
Suppose Assumptions~\ref{assump: network} and~\ref{assump: network growth} hold. Let $\mathcal V_n=\{(\bm\alpha,\bm Z): \max_i|\alpha_i-\vartheta_n/2|\le b_n+1, \ \bm Z^\top\bm1_n=\bm0,\ \|\bm Z\|_{2,\infty}\le R_Z\}$. Then for each $\bm\nu\in\mathcal V_n$ and $i\ne j$, $p_{ij}(\bm\nu)$ belongs to $[p_{-,n},p_{+,n}]$ and
$w_{ij}(\bm\nu)=p_{ij}(\bm\nu)\{1-p_{ij}(\bm\nu)\}$ belongs to $[w_{-,n},w_{+,n}]$. Moreover, we have
\begin{align}
\label{eq: envelope implications}
 p_{+,n}/p_{-,n}\le4K_n,\ 
 \min_i &\ d_i\{\bm P(\bm\nu)\} \ge (n-1)p_{-,n}\ge cs_n,\ 
 \max_{i\ne j}\ W_{ij}\{\bm P(\bm\nu)\}\le CK_n/n,\\
\label{eq: basic growth consequences}
 &s_n/(K_n^8\log^2n)\to\infty,\ K_n^8/n\to0,\ 
 K_n\delta_{n,k}\to0.
\end{align}
With probability tending to one, it also holds true simultaneously that 
\begin{align}
\label{eq: adjacency spectral bound}
 \|\bm A-\bm P^*\|_2\le C\sqrt {t_n}, \
 \max_i\sum_{j\ne i}|a_{ij}-p_{ij}^*|\le Ct_n.
\end{align}
\end{lemma}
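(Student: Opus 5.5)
The proof splits into three parts: deterministic envelope bounds on $\mathcal V_n$, growth consequences of Assumption~\ref{assump: network growth}, and the two high-probability adjacency bounds.

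\emph{Envelopes.} For $\bm\nu\in\mathcal V_n$ and $i\ne j$ we have $\theta_{ij}-\vartheta_n=(\alpha_i-\vartheta_n/2)+(\alpha_j-\vartheta_n/2)+\bm z_i^\top\bm z_j$. Cauchy--Schwarz gives $|\theta_{ij}-\vartheta_n|\le 2(b_n+1)+R_Z^2$. Monotonicity of $\expit$ then places $p_{ij}(\bm\nu)$ in $[p_{-,n},p_{+,n}]$, and the definition \eqref{eq: hetero scales} places $w_{ij}(\bm\nu)$ in $[w_{-,n},w_{+,n}]$.

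For the ratio, write $p=\expit(u)$, so $p(1-p)\le p\le 2p(1-p)$ whenever $p\le1/2$. I split on whether $p_{-,n}\le1/2$.
\begin{itemize}
\item If $p_{-,n}\le1/2$, then $p_{-,n}\le 2w_{-,n}$. Also $p_{+,n}\le 2w_{+,n}$ when $p_{+,n}\le 1/2$, while otherwise $p_{+,n}\le1\le 4\cdot\frac14=4w_{+,n}$, because the maximum of $p(1-p)$ over an interval containing $1/2$ is $1/4$. Hence $p_{+,n}/p_{-,n}\le 4w_{+,n}/w_{-,n}\le 4K_n$ in either sub-case. I will check the constant $4$ rather than $2$ here.
\item If $p_{-,n}>1/2$, then $p_{+,n}/p_{-,n}<2$.
\end{itemize}

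The degree bound $d_i\ge(n-1)p_{-,n}$ is immediate. Since $p_{-,n}\ge w_{-,n}$, it gives $d_i\ge(n-1)w_{-,n}\ge s_n/2$. Finally, $W_{ij}=p_{ij}/d_i\le p_{+,n}/\{(n-1)p_{-,n}\}\le CK_n/n$.

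\emph{Growth consequences.} From \eqref{eq: hetero rates definitions}, $\delta_{n,k}\ge K_n^{2k-1}\sqrt{K_n\log n/s_n}\ge K_n^{3/2}\sqrt{\log n/s_n}$ using $k\ge1$ and $K_n\ge1$. Hence $K_n^2\delta_{n,k}^4s_n\ge K_n^{8}\log^2 n/s_n$, and \eqref{eq: hetero growth} yields $s_n/(K_n^8\log^2n)\to\infty$. Because $s_n\le n/4$, this also gives $K_n^8/n\to0$. Next, $K_n\delta_{n,k}\le (K_n^2\delta_{n,k}^4 s_n)^{1/4}\cdot K_n^{1/2}s_n^{-1/4}$. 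The first factor tends to zero by \eqref{eq: hetero growth}, and the second is bounded because $K_n^2\le s_n$ eventually. So $K_n\delta_{n,k}\to0$.

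\emph{Adjacency bounds.} Set $t_n=nw_{+,n}$. The entries of $\bm A-\bm P^*$ above the diagonal are independent, centered and bounded, with variances $w_{ij}^*\le w_{+,n}$. The maximal row variance is therefore at most $t_n$, and $t_n\ge s_n\gg\log n$ by the growth consequence.

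For the spectral bound I will invoke a sharp sparse random matrix bound, for example Bandeira--van Handel or the Lei--Rinaldo bound. It gives $\|\bm A-\bm P^*\|_2\le C(\sqrt{t_n}+\sqrt{\log n})\le C\sqrt{t_n}$ with high probability. This is the main step requiring care: we need the version controlled by maximal variance $\max_{i,j}w_{ij}^*$ rather than $\max p_{ij}$, since these differ when $p$ is near one. Bandeira--van Handel, $\mathbb E\|\bm A-\bm P^*\|\lesssim\sigma+\sqrt{\log n}$ with $\sigma^2=\max_i\sum_j w_{ij}^*$, handles this directly, and Talagrand-type concentration upgrades it to high probability.

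For the row sums, write $|a_{ij}-p_{ij}^*|=(1-p_{ij}^*)$ if $a_{ij}=1$ and $p_{ij}^*$ otherwise. Its mean is $2w_{ij}^*$, so each row sum has mean at most $2t_n$ and variance at most $t_n$. Bernstein's inequality with a union bound over $i$ gives deviations of order $\sqrt{t_n\log n}+\log n=o(t_n)$, which proves $\max_i\sum_j|a_{ij}-p^*_{ij}|\le Ct_n$.
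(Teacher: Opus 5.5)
Your proposal is correct and follows the paper's proof essentially step for step. It uses the same logit envelope, the same case split giving $p_{+,n}\le4w_{+,n}$, the same derivation of the growth consequences from $\delta_{n,k}\ge K_n^{3/2}\sqrt{\log n/s_n}$, the Bandeira--van Handel variance-based spectral bound, and Bernstein plus a union bound for the absolute row sums.

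One slip to fix is in the ratio bound. The denominator needs $p_{-,n}\ge w_{-,n}$, which is immediate from $w_{-,n}\le p_{-,n}(1-p_{-,n})$ and which you use yourself in the degree bound. Your stated inequality $p_{-,n}\le2w_{-,n}$ points the wrong way and is false in general, for example whenever $1-p_{+,n}<p_{-,n}/2$.
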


\noindent\textbf{Proof of Lemma~\ref{lem: network consequences}.}
For each $\bm\nu\in\mathcal V_n$, the triangle and Cauchy--Schwarz inequalities, together with the defining bounds of $\mathcal V_n$, give
\begin{align*}
 |\theta_{ij}(\bm\nu)-\vartheta_n|
 =\left|(\alpha_i-\vartheta_n/2)+(\alpha_j-\vartheta_n/2)
       +\bm z_i^\top\bm z_j\right|\le2(b_n+1)+R_Z^2.
\end{align*}
The definitions of $p_{\pm,n}$ and $w_{\pm,n}$ therefore give the stated envelopes.
To bound the probability ratio, first suppose $p_{-,n}>1/2$. Since
$p_{+,n}<1$ and $K_n\ge1$, we have $\frac{p_{+,n}}{p_{-,n}}<2\le4K_n$.
Now suppose $p_{-,n}\le1/2$. If $p_{+,n}\le1/2$, then
$w_{+,n}\ge p_{+,n}(1-p_{+,n})\ge p_{+,n}/2$.
If $p_{+,n}>1/2$, the interval $[p_{-,n},p_{+,n}]$ contains $1/2$,
so $w_{+,n}=1/4$ and $p_{+,n}\le1=4w_{+,n}$.
Thus $p_{+,n}\le4w_{+,n}$ in both cases. Also,
$w_{-,n}\le p_{-,n}(1-p_{-,n})\le p_{-,n}$, and hence $\frac{p_{+,n}}{p_{-,n}}\le\frac{4w_{+,n}}{w_{-,n}}=4K_n$.
Summing the lower probability bound gives
$d_i\{\bm P(\bm\nu)\}\ge(n-1)p_{-,n}\ge(n-1)w_{-,n}\ge s_n/2$ for $n\ge2$.
Consequently, for $i\ne j$,
\[
 W_{ij}\{\bm P(\bm\nu)\}
 \le\frac{p_{+,n}}{(n-1)p_{-,n}}
 \le\frac{8K_n}{n}.
\]

We next derive \eqref{eq: basic growth consequences} from \eqref{eq: hetero growth} in  Assumption~\ref{assump: network growth}. Since $K_n\ge1$ and $k\ge1$, definition~\eqref{eq: hetero rates definitions} gives
\[
 \delta_{n,k}\ge K_n^{2k-1}\sqrt{K_n\log n/s_n}
 \ge K_n^{3/2}\sqrt{\log n/s_n}.
\]
Raising this inequality to the fourth power and multiplying by $K_n^2s_n$ yields
\[
 \frac{K_n^8\log^2n}{s_n}
 \le K_n^2\delta_{n,k}^4s_n\longrightarrow0,
\]
which proves $s_n/(K_n^8\log^2n)\to\infty$.
Moreover, $w_{-,n}\le1/4$, so $s_n=nw_{-,n}\le n/4$. Consequently,
\[
 \frac{K_n^8}{n}\le\frac{K_n^8}{4s_n}
 =\frac{1}{4\log^2n}\frac{K_n^8\log^2n}{s_n}\longrightarrow0.
\]
The same bounds imply $K_n/\sqrt{s_n}\le\sqrt{K_n^8/s_n}\to0$. Taking the square root of
\eqref{eq: hetero growth} gives $K_n\delta_{n,k}^2\sqrt{s_n}\to0$, and hence $(K_n\delta_{n,k})^2 =(K_n\delta_{n,k}^2\sqrt{s_n})\frac{K_n}{\sqrt{s_n}} \longrightarrow0$.
To prove \eqref{eq: adjacency spectral bound}, let $\bm B=\bm A-\bm P^*$.
Then $B_{ij}$, $i<j$, are independent,
$B_{ji}=B_{ij}$, $B_{ii}=0$, and $\mathbb E_A B_{ij}=0,\ |B_{ij}|\le1,\ \mathbb E_A B_{ij}^2=p_{ij}^*(1-p_{ij}^*)=w_{ij}^*$.
The true parameters belong to $\mathcal V_n$, so the curvature envelope gives $\max_i\sum_{j\ne i}\mathbb E_A B_{ij}^2 =\max_i\sum_{j\ne i}w_{ij}^* \le(n-1)w_{+,n}\le t_n$.
For some positive constants $C_0$ and $c_0$, it follows from Corollary~3.12 and Remark~3.13 of \citet{nldaBandeira2016Sharp} that
\[
 \mathbb P_A\{\|\bm B\|_2\ge C_0\sqrt{t_n}+u\}
 \le n\exp(-u^2/c_0)
\]
for all $u\ge0$. Taking $u=\sqrt{4c_0\log n}$ and noting that
\[
 \frac{t_n}{\log n}=\frac{K_ns_n}{\log n}\ge\frac{s_n}{\log n}
 =\frac{s_n}{K_n^8\log^2n}\,K_n^8\log n\longrightarrow\infty,
\]
we have $\|\bm A-\bm P^*\|_2\le C\sqrt{t_n}$ with probability at least $1-n^{-3}$
for all sufficiently large $n$.

For the absolute row sums, the Bernoulli distribution gives
\begin{align*}
 \mathbb E_A|B_{ij}|
 &=p_{ij}^*(1-p_{ij}^*)+(1-p_{ij}^*)p_{ij}^*=2w_{ij}^*,\\
 \operatorname{Var}_A(|B_{ij}|)
 &=\mathbb E_A B_{ij}^2-(\mathbb E_A|B_{ij}|)^2
 =w_{ij}^*-4(w_{ij}^*)^2\le w_{ij}^*.
\end{align*}
For each fixed $i$, the variables $Y_{ij}=|B_{ij}|-2w_{ij}^*$, $j\ne i$, are independent and centered. Since $|B_{ij}|\in[0,1]$, we have $|Y_{ij}|=\big| |B_{ij}|-\mathbb{E}_A|B_{ij}|\big|\le1$, and the preceding calculation gives $\sum_{j\ne i}\operatorname{Var}_A(Y_{ij})\le t_n$. Bernstein's inequality therefore yields, for $x>0$,
\[
 \mathbb P_A\left\{\sum_{j\ne i}Y_{ij}\ge x\right\}
 \le\exp\left\{-\frac{x^2}{2(t_n+x/3)}\right\}.
\]
Also $\sum_{j\ne i}\mathbb E_A|B_{ij}|=2\sum_{j\ne i}w_{ij}^*\le2t_n$. Since $t_n/\log n\to\infty$, choosing $x=t_n$ and taking a union bound over the $n$ rows gives
\[
 \mathbb P_A\left\{\max_i\sum_{j\ne i}|B_{ij}|\ge3t_n\right\}
 \le n\exp(-3t_n/8)\longrightarrow0.
\]
Combining the two probability bounds proves
\eqref{eq: adjacency spectral bound} simultaneously.
\hfill$\blacksquare$

\begin{lemma}\label{lem: primitive network Fisher}
Suppose Assumption~\ref{assump: network} holds. Then, for sufficiently large $n$
and every $\psi(\bm h)\in\mathcal T_n$,
\begin{align}
\label{eq: offdiag derivative geometry}
 cn\|\psi(\bm h)\|_2^2
 \le\sum_{i<j}\dot\theta_{ij}^*[\bm h]^2
 \le Cn\|\psi(\bm h)\|_2^2.
\end{align}
The Fisher information on $\mathcal T_n$ satisfies, for
$\psi(\bm h),\psi(\bm g)\in\mathcal T_n$,
\begin{equation}
\label{eq: network Fisher information}
\begin{aligned}
 \langle\psi(\bm h),\mathsf F_*\psi(\bm g)\rangle
 =\sum_{i<j}w_{ij}^*\dot\theta_{ij}^*[\bm h]\dot\theta_{ij}^*[\bm g],\ 
 cs_n\bm I_{\mathcal T_n}
 \preceq\mathsf F_*\preceq Ct_n\bm I_{\mathcal T_n}.
\end{aligned}
\end{equation}
The orthogonal projector onto $\mathcal T_n$ satisfies
\begin{align}
\label{eq: network projector bounds}
 \max_{i,j}\|[\bm I-\bm\Pi_{\mathcal T_n}]_{ij}\|_2
 \le C/n,\
 \|\bm\Pi_{\mathcal T_n}\|_{\blinf}\le C.
\end{align}
\end{lemma}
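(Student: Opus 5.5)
The plan is to reduce everything to the symmetric perturbation matrix
\[
 \bm H=\bm h_\alpha\bm1_n^\top+\bm1_n\bm h_\alpha^\top+\bm E_Z\bm Z^{*\top}+\bm Z^*\bm E_Z^\top .
\]
Its off-diagonal entries are exactly $\dot\theta_{ij}^*[\bm h]$, so
\[
 \sum_{i<j}\dot\theta_{ij}^*[\bm h]^2=\tfrac12\Bigl(\|\bm H\|_F^2-\sum_i H_{ii}^2\Bigr).
\]

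\textbf{Step 1: the geometry bound \eqref{eq: offdiag derivative geometry}.} The first task is to show that the cross terms in $\|\bm H\|_F^2$ vanish. One has $\langle \bm h_\alpha\bm1^\top,\bm E_Z\bm Z^{*\top}\rangle_F=(\bm h_\alpha^\top\bm E_Z)(\bm Z^{*\top}\bm1)=0$ and $\langle \bm h_\alpha\bm1^\top,\bm Z^*\bm E_Z^\top\rangle_F=(\bm h_\alpha^\top\bm Z^*)(\bm E_Z^\top\bm1)=0$. These use $\bm Z^{*\top}\bm1=\bm0$ and the tangent constraint $\bm E_Z^\top\bm1=\bm0$, and the transposed pairings vanish in the same way. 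Hence
\[
 \|\bm H\|_F^2=2n\|\bm h_\alpha\|_2^2+2(\bm1^\top\bm h_\alpha)^2+2\|\bm E_Z\bm Z^{*\top}\|_F^2+2\tr(\bm Z^{*\top}\bm E_Z\bm Z^{*\top}\bm E_Z).
\]
Because $\bm Z^{*\top}\bm E_Z$ is symmetric on $\mathcal T_n$, the last trace equals $\|\bm Z^{*\top}\bm E_Z\|_F^2\ge0$. Also $\|\bm E_Z\bm Z^{*\top}\|_F^2=\tr(\bm E_Z\bm Z^{*\top}\bm Z^*\bm E_Z^\top)\ge c_Zn\|\bm E_Z\|_F^2$ by Assumption~\ref{assump: network}. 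This gives $\|\bm H\|_F^2\ge cn\|\psi(\bm h)\|_2^2$.

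The diagonal terms satisfy $H_{ii}=2h_{\alpha,i}+2\bm E_{Z,i\cdot}\bm z_i^*$. Since $\|\bm Z^*\|_{2,\infty}\le R_Z$, this gives $\sum_iH_{ii}^2\le C\|\psi(\bm h)\|_2^2$, which is negligible against $n\|\psi(\bm h)\|_2^2$ for large $n$. The upper bound follows from $|\dot\theta_{ij}^*[\bm h]|\le C(\|\bm h_i\|+\|\bm h_j\|)$ together with the count used in \eqref{eq: dyad logit moment transfer}.

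\textbf{Step 2: the Fisher information \eqref{eq: network Fisher information}.} Differentiating $L_A$ twice along $\bm h,\bm g$ gives
\[
 -\sum_{i<j}w_{ij}\,\dot\theta_{ij}[\bm h]\,\dot\theta_{ij}[\bm g]+\sum_{i<j}(a_{ij}-p_{ij})\,\psi(\bm h)^\top\bm B_{A,ij}\psi(\bm g).
\]
At $\bm\nu^*$ the second term has zero $\mathbb E_A$-mean. On $\mathcal T_n$ the added term $s_n(\bm I-\bm\Pi_{\mathcal T_n})$ vanishes, which yields the stated identity. The true parameter lies in $\mathcal V_n$, so Lemma~\ref{lem: network consequences} gives $w_{ij}^*\in[w_{-,n},w_{+,n}]$. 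Combining this with Step 1 gives $cs_n\preceq\mathsf F_*\preceq Cnw_{+,n}=Ct_n$.

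\textbf{Step 3: the projector bounds \eqref{eq: network projector bounds}.} The complement $\mathcal T_n^\perp$ has fixed dimension $r+r(r-1)/2$. It is spanned by the following normals, which live only in the latent coordinates:
\begin{itemize}
 \item $\bm u_k=\psi(\bm0,\bm1_n\bm e_k^\top)$ for $k\in[r]$, encoding the centering constraints;
 \item $\bm v_{kl}=\psi(\bm0,\bm Z^*(\bm e_k\bm e_l^\top-\bm e_l\bm e_k^\top))$ for $k<l$, encoding the antisymmetric part of $\bm Z^{*\top}\bm E_Z$.
\end{itemize}
Stack them into $\bm N$, so that $\bm I-\bm\Pi_{\mathcal T_n}=\bm N(\bm N^\top\bm N)^{-1}\bm N^\top$. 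Every node block of $\bm N$ has norm $O(1)$ because $\|\bm Z^*\|_{2,\infty}\le R_Z$. It therefore suffices to show $\lambda_{\min}(\bm N^\top\bm N)\ge cn$.

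\textbf{Main obstacle: conditioning of $\bm N^\top\bm N$.} I expect this to be the hardest step. The $\bm u$ block is $n\bm I_r$, and the cross products $\langle\bm u_k,\bm v_{kl}\rangle$ vanish by $\bm Z^{*\top}\bm1=\bm0$. The $\bm v$ block is the Gram matrix of the map $\bm S\mapsto\bm Z^*\bm S$ on antisymmetric $\bm S$. Its smallest eigenvalue is $\min_{\|\bm S\|_F=1}\tr(\bm S^\top\bm Z^{*\top}\bm Z^*\bm S)\ge c_Zn$, after normalizing each $\bm v_{kl}$ by $\sqrt2$.

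With these bounds, each block satisfies $\|[\bm I-\bm\Pi_{\mathcal T_n}]_{ij}\|_2\le C\cdot(cn)^{-1}\cdot C=C/n$. Summing over the $n$ columns gives $\|\bm I-\bm\Pi_{\mathcal T_n}\|_{\blinf}\le C$, and hence $\|\bm\Pi_{\mathcal T_n}\|_{\blinf}\le1+C$.
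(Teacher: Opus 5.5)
Your proposal is correct and follows the paper's proof essentially step for step. You use the same orthogonality of the node-effect and latent parts of the perturbation matrix, with the diagonal subtracted, for the geometry bound, and the same expected Hessian for the Fisher identity. You also give the same description of $\mathcal T_n^\perp$ as centering directions together with $\{\bm Z^*\bm\Omega:\bm\Omega^\top=-\bm\Omega\}$; the block-diagonal Gram matrix of your stacked normals $\bm N$ is exactly the paper's sum of the two separate projections.

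One small point concerns the weight envelope $w_{-,n}\le w_{ij}^*\le w_{+,n}$. This lemma assumes only Assumption~\ref{assump: network}, so derive the envelope directly from $|\theta_{ij}^*-\logit(\rho_n)|\le 2b_n+R_Z^2$. Lemma~\ref{lem: network consequences} is stated under Assumption~\ref{assump: network growth} as well, although that part of its proof does not use it.
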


\noindent\textbf{Proof of Lemma~\ref{lem: primitive network Fisher}.} For $\psi(\bm h)\in\mathcal T_n$, let $\bm L=\bm h_\alpha\bm1_n^\top+\bm1_n\bm h_\alpha^\top$ and $\bm B=\bm E_Z\bm Z^{*\top}+\bm Z^*\bm E_Z^\top$. Since $\bm Z^{*\top}\bm1_n=\bm E_Z^\top\bm1_n=\bm0$, we have $\bm B\bm1_n=\bm0$ and hence $\langle\bm L,\bm B\rangle_F=\bm h_\alpha^\top\bm B\bm1_n+\bm1_n^\top\bm B\bm h_\alpha=0$.
Expanding the squared Frobenius norms gives
\begin{align*}
 \|\bm L\|_F^2
 =2n\|\bm h_\alpha\|_2^2+2(\bm1_n^\top\bm h_\alpha)^2,\ \|\bm B\|_F^2
 =2\tr(\bm E_Z^\top\bm E_Z\bm Z^{*\top}\bm Z^*) +2\tr\{(\bm E_Z^\top\bm Z^*)^2\}.
\end{align*}
The alignment condition makes $\bm E_Z^\top\bm Z^*$ symmetric, so
$\tr\{(\bm E_Z^\top\bm Z^*)^2\}=\|\bm E_Z^\top\bm Z^*\|_F^2\ge0$.
Also $\bm Z^{*\top}\bm Z^*\succeq c_Zn\bm I_r$ and
$\|\bm Z^*\|_2\le\|\bm Z^*\|_F\le R_Z\sqrt n$. Therefore,
\begin{align*}
 2n\|\bm h_\alpha\|_2^2\le\|\bm L\|_F^2\le4n\|\bm h_\alpha\|_2^2,\ 2c_Zn\|\bm E_Z\|_F^2 \le\|\bm B\|_F^2 \le4\|\bm E_Z\|_F^2\|\bm Z^*\|_2^2
 \le4R_Z^2n\|\bm E_Z\|_F^2.
\end{align*}
Here the upper bounds use $(\bm1_n^\top\bm h_\alpha)^2\le n\|\bm h_\alpha\|_2^2$
and $\|\bm E_Z\bm Z^{*\top}+\bm Z^*\bm E_Z^\top\|_F\le2\|\bm E_Z\|_F\|\bm Z^*\|_2$.
The diagonal contribution satisfies
\[
 \sum_i(L_{ii}+B_{ii})^2
 =4\sum_i(h_{\alpha,i}+\bm E_{Z,i\cdot}\bm Z_{i\cdot}^{*\top})^2
 \le8\|\bm h_\alpha\|_2^2+8R_Z^2\|\bm E_Z\|_F^2.
\]
Since $\bm L+\bm B$ is symmetric and $\langle\bm L,\bm B\rangle_F=0$,
\begin{align*}
 \sum_{i<j}(L_{ij}+B_{ij})^2
 &=\tfrac12\left\{\|\bm L\|_F^2+\|\bm B\|_F^2
                  -\sum_i(L_{ii}+B_{ii})^2\right\}\\
 &\ge(n-4)\|\bm h_\alpha\|_2^2
      +(c_Zn-4R_Z^2)\|\bm E_Z\|_F^2.
\end{align*}
For $n\ge\max\{8,8R_Z^2/c_Z\}$, the lower bound is at least
$\tfrac12\min(1,c_Z)n\|\psi(\bm h)\|_2^2$.
Dropping the diagonal subtraction instead gives the upper bound
$2\max(1,R_Z^2)n\|\psi(\bm h)\|_2^2$.
Since $L_{ij}+B_{ij}=\dot\theta_{ij}^*[\bm h]$, this proves \eqref{eq: offdiag derivative geometry}.

For the Fisher information, differentiating each likelihood term twice gives
\begin{align*}
 -\nabla_\psi^2 L_A(\bm\nu)
 =\sum_{i<j}\left[w_{ij}(\bm\nu)\bm\phi_{ij}(\bm\nu)\bm\phi_{ij}(\bm\nu)^\top
       -\{a_{ij}-p_{ij}(\bm\nu)\}\bm B_{A,ij}\right].
\end{align*}
At $\bm\nu^*$, $\mathbb E_A(a_{ij}-p_{ij}^*)=0$.
The added term in \eqref{eq: extended tangent Hessian} vanishes on
$\mathcal T_n$, and
$\bm\phi_{ij}^{*\top}\psi(\bm h)=\dot\theta_{ij}^*[\bm h]$.
Taking expectation and restricting to $\mathcal T_n$ therefore gives the first result of \eqref{eq: network Fisher information}.
Assumption~\ref{assump: network} gives
$|\theta_{ij}^*-\vartheta_n|\le2b_n+R_Z^2$, so the definitions in
\eqref{eq: hetero scales} imply $w_{-,n}\le w_{ij}^*\le w_{+,n}$.
Setting $\bm g=\bm h$ and applying \eqref{eq: offdiag derivative geometry} gives $cs_n\|\psi(\bm h)\|_2^2 \le\langle\psi(\bm h),\mathsf F_*\psi(\bm h)\rangle \le Ct_n\|\psi(\bm h)\|_2^2$.
This proves the Fisher bounds.

To prove \eqref{eq: network projector bounds}, define in the latent coordinates $\mathcal C=\{\bm1_n\bm a^\top:\bm a\in\mathbb R^r\}$, $\mathcal R=\{\bm Z^*\bm\Omega:\bm\Omega^\top=-\bm\Omega\}$.
A matrix $\bm E_Z$ is orthogonal to $\mathcal C$ if and only if
$\bm E_Z^\top\bm1_n=\bm0$. Since
$\langle\bm E_Z,\bm Z^*\bm\Omega\rangle_F=\tr(\bm E_Z^\top\bm Z^*\bm\Omega)$,
it is orthogonal to every element of $\mathcal R$ if and only if
$\bm E_Z^\top\bm Z^*$ is symmetric. These are exactly the constraints defining
$\mathcal T_n$ in \eqref{eq: network tangent coordinates}.
Moreover, $\bm1_n^\top\bm Z^*=\bm0$ implies $\mathcal C\perp\mathcal R$.
The node-effect coordinates are unrestricted, so
$\bm I-\bm\Pi_{\mathcal T_n}$ is zero on those coordinates and is the sum of
the projectors onto $\mathcal C$ and $\mathcal R$ on the latent coordinates.
Write $\bm E_Z=\bm E_0+\bm E_{\mathcal C}+\bm E_{\mathcal R}$ for its
orthogonal decomposition, where $\bm E_0\in\mathcal C^\perp\cap\mathcal R^\perp$,
$\bm E_{\mathcal C}\in\mathcal C$ and $\bm E_{\mathcal R}\in\mathcal R$. Then
\begin{align*}
 (\bm I-\bm\Pi_{\mathcal T_n})\psi(\bm h_\alpha,\bm E_Z)
 =\psi(\bm h_\alpha,\bm E_Z)-\psi(\bm h_\alpha,\bm E_0)=\psi(\bm0,\bm E_{\mathcal C}+\bm E_{\mathcal R}).
\end{align*}

We next bound $[\bm I-\bm\Pi_{\mathcal T_n}]_{ij}$ directly.
Represent a latent matrix $\bm E$ by
$(\bm E_{1\cdot},\ldots,\bm E_{n\cdot})^\top\in\mathbb R^{nr}$,
so its projection matrices have $r\times r$ node blocks.
Projection onto $\mathcal C$ replaces every row by the mean of all rows,
$\bm E\mapsto\bm1_n\bm1_n^\top\bm E/n$; hence each block is $\bm I_r/n$.

For $r\ge2$, choose a basis $\bm\Omega_1,\ldots,\bm\Omega_q$ of the
skew-symmetric $r\times r$ matrices, where $q=r(r-1)/2$, with
$\langle\bm\Omega_\ell,\bm\Omega_m\rangle_F=\mathbf1\{\ell=m\}$.
Define
\[
 \bm R_i=(\bm\Omega_1^\top\bm z_i^*,\ldots,\bm\Omega_q^\top\bm z_i^*)
 \in\mathbb R^{r\times q},\qquad
 \bm R=\begin{pmatrix}\bm R_1\\ \vdots\\ \bm R_n\end{pmatrix}
 \in\mathbb R^{nr\times q}.
\]
For $\bm\Omega(\bm a)=\sum_\ell a_\ell\bm\Omega_\ell$,
$\bm R\bm a$ is the vector representation of $\bm Z^*\bm\Omega(\bm a)$.
Thus the column space of $\bm R$ represents $\mathcal R$.
Since $\bm Z^*$ has full column rank, $\bm R\bm a=\bm0$ implies
$\bm\Omega(\bm a)=\bm0$ and then $\bm a=\bm0$.
The projection onto this column space is therefore
$\bm R(\bm R^\top\bm R)^{-1}\bm R^\top$.
Combining the two projections in the full node coordinates gives
\[
 [\bm I-\bm\Pi_{\mathcal T_n}]_{ij}
 =\begin{pmatrix}
 0&\bm0_{1\times r}\\
 \bm0_{r\times1}&\bm I_r/n+\bm R_i(\bm R^\top\bm R)^{-1}\bm R_j^\top
 \end{pmatrix}.
\]
The first row and column correspond to the unrestricted node-effect coordinates.
It remains to bound the two factors involving $\bm R$.
By Assumption~\ref{assump: network} and
$\|\bm\Omega(\bm a)\|_F^2=\sum_\ell a_\ell^2=\|\bm a\|_2^2$,
\begin{align*}
 \|\bm R\bm a\|_2^2
 &=\tr\{\bm\Omega(\bm a)^\top\bm Z^{*\top}\bm Z^*\bm\Omega(\bm a)\}
 \ge c_Zn\|\bm a\|_2^2,\\
 \|\bm R_i\|_2^2
 &\le\|\bm R_i\|_F^2
 =\sum_\ell\|\bm\Omega_\ell^\top\bm z_i^*\|_2^2
 \le\sum_\ell\|\bm\Omega_\ell\|_F^2\|\bm z_i^*\|_2^2
 \le qR_Z^2.
\end{align*}
The first inequality gives $\|(\bm R^\top\bm R)^{-1}\|_2\le(c_Zn)^{-1}$.
Consequently,
\[
\|[\bm I-\bm\Pi_{\mathcal T_n}]_{ij}\|_2
 \le\frac1n+\|\bm R_i\|_2\|(\bm R^\top\bm R)^{-1}\|_2\|\bm R_j\|_2 
 \le\frac1n+\frac{qR_Z^2}{c_Zn}\le\frac Cn.
\]
For $r=1$, we have $\mathcal R=\{\bm0\}$, so the same bound holds with the second term omitted.
Finally,
\[
 \|\bm\Pi_{\mathcal T_n}\|_{\blinf}
 \le\|\bm I\|_{\blinf}+\|\bm I-\bm\Pi_{\mathcal T_n}\|_{\blinf}
 \le1+\max_i\sum_j C/n\le C.
\]
The desired results follow immediately. \hfill$\blacksquare$

\begin{lemma}\label{lem: local network curvature}
Suppose Assumptions~\ref{assump: network} and~\ref{assump: network growth} hold. Let
\[
 \mathcal N_n=\left\{\bm\nu\in\mathcal V_n:
 \psi(\bm\nu-\bm\nu^*)\in\mathcal T_n,\quad
 \|\psi(\bm\nu-\bm\nu^*)\|_2\le\varepsilon\sqrt n/K_n\right\},
\]
where $\varepsilon>0$ is a sufficiently small fixed constant.
For $\bm\nu_1,\bm\nu_2\in\mathcal N_n$, write
$\mathsf F_0=\mathsf F_0(\bm\nu_1,\bm\nu_2)$ as defined in
\eqref{eq: extended tangent Hessian}.
Uniformly over these endpoints, we have 
\begin{align}
\label{eq: population structure}
 cs_n\bm I\preceq\mathsf F_0\preceq Ct_n\bm I,\quad
 \max_{i\ne j}\|\mathsf F_{0,ij}\|_2\le Ct_n/n,\quad
 \|\mathsf F_0^{-1}\|_{\blinf}\le CK_n/s_n.
\end{align}
Moreover, for every $\bm\nu\in\mathcal N_n$, it holds true that
\begin{align}
\label{eq: population Hessian variation}
 \|\mathsf F_0(\bm\nu,\bm\nu^*)-\mathsf F_{0,*}\|_2
 \le Ct_n\|\psi(\bm\nu-\bm\nu^*)\|_{\blinf}.
\end{align}
\end{lemma}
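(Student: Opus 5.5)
The plan is to write $\mathsf F_0(\bm\nu_1,\bm\nu_2)$ explicitly and reduce every claim to Lemma~\ref{lem: primitive network Fisher}. Since $\mathbb E_A(a_{ij})=p_{ij}^*$, averaging the Hessian formula from the proof of that lemma along the segment $\bm\nu_v=\bm\nu_2+v(\bm\nu_1-\bm\nu_2)$ gives
\[
 \mathsf F_0=\bm\Pi_{\mathcal T_n}\int_0^1\sum_{i<j}\Big[w_{ij}(\bm\nu_v)\bm\phi_{ij}(\bm\nu_v)\bm\phi_{ij}(\bm\nu_v)^\top-\{p_{ij}^*-p_{ij}(\bm\nu_v)\}\bm B_{A,ij}\Big]dv\,\bm\Pi_{\mathcal T_n}+s_n(\bm I-\bm\Pi_{\mathcal T_n}).
\]
The set $\mathcal V_n$ is convex, so $\bm\nu_v\in\mathcal V_n$ and $\psi(\bm\nu_v-\bm\nu^*)\in\mathcal T_n$. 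Lemma~\ref{lem: network consequences} then gives $w_{ij}(\bm\nu_v)\in[w_{-,n},w_{+,n}]$, and the mean value theorem gives $|p_{ij}^*-p_{ij}(\bm\nu_v)|\le w_{+,n}|\theta_{ij}(\bm\nu_v)-\theta_{ij}^*|$.

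For the two-sided bound on $\mathsf F_0$, I would handle the first-order term through the directional derivatives $\dot\theta_{ij}(\bm\nu_v)[\bm h]$. These differ from $\dot\theta_{ij}^*[\bm h]$ by $\bm E_{Z,i\cdot}(\bm z_j-\bm z_j^*)+(\bm z_i-\bm z_i^*)^\top\bm E_{Z,j\cdot}^\top$. Summing the squares of these differences over $i<j$ gives at most $C\|\bm E_Z\|_F^2\|\bm Z_v-\bm Z^*\|_F^2$. This is a problem, because $\|\bm Z_v-\bm Z^*\|_F$ can be as large as $\varepsilon\sqrt n/K_n$, so the bound is of order $\varepsilon^2 n\|\psi(\bm h)\|_2^2/K_n^2$, which is only comparable to the main term $cn\|\psi(\bm h)\|_2^2$ in \eqref{eq: offdiag derivative geometry}. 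It is enough for small $\varepsilon$, since $(a-b)^2\ge a^2/2-b^2$. Multiplying by $w_{-,n}$ and $w_{+,n}$ then gives the range $[cs_n,Ct_n]$ on $\mathcal T_n$.

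The second-order term is where the neighbourhood radius $\varepsilon\sqrt n/K_n$ really matters. For $\bm h\in\mathcal T_n$ its quadratic form is $2\sum_{i<j}(p_{ij}^*-p_{ij})\bm E_{Z,i\cdot}\bm E_{Z,j\cdot}^\top$, which equals $\langle\bm E_Z,(\bm P^*-\bm P(\bm\nu_v))\bm E_Z\rangle$ up to diagonal terms. I would bound it by $\|\bm P^*-\bm P(\bm\nu_v)\|_2\|\bm E_Z\|_F^2$, and then use
\[
 \|\bm P^*-\bm P(\bm\nu_v)\|_2\le\|\cdot\|_F\le w_{+,n}\Big(\sum_{i\ne j}\Delta_{\theta,ij}^2\Big)^{1/2}\le Cw_{+,n}\sqrt n\,\|\psi(\bm\nu_v-\bm\nu^*)\|_2\le C\varepsilon w_{+,n}n/K_n=C\varepsilon s_n.
\]
For small $\varepsilon$ this is absorbed into $cs_n$. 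I expect this absorption to be the main obstacle, and it is exactly why the radius carries the factor $1/K_n$. On $\mathcal T_n^\perp$ the added term contributes exactly $s_n\bm I\preceq t_n\bm I$. Because the matrix is block diagonal with respect to $\mathcal T_n$ and $\mathcal T_n^\perp$, this gives $cs_n\bm I\preceq\mathsf F_0\preceq Ct_n\bm I$.

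The remaining claims follow from this structure.
\begin{itemize}
\item \emph{Off-diagonal blocks.} Before projection, the $(i,j)$ block for $i\ne j$ is $\int_0^1\{w_{ij}\,\partial_{\psi_i}\theta_{ij}\,\partial_{\psi_j}\theta_{ij}^\top-(p^*_{ij}-p_{ij})\bm B\}\,dv$. Its norm is at most $Cw_{+,n}=Ct_n/n$, since the gradients are bounded and $|p^*_{ij}-p_{ij}|\le p_{+,n}\le4w_{+,n}$. The diagonal blocks are sums of $n$ such terms, of size $Ct_n$. Conjugating by $\bm\Pi_{\mathcal T_n}=\bm I-(\bm I-\bm\Pi_{\mathcal T_n})$, whose correction blocks are $O(1/n)$ by \eqref{eq: network projector bounds}, adds at most $O(t_n/n)$ to each block. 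The added term $s_n(\bm I-\bm\Pi)$ has blocks of size $Cs_n/n$. Together these give $\max_{i\ne j}\|\mathsf F_{0,ij}\|_2\le Ct_n/n$.
\item \emph{Block-$\infty$ bound on the inverse.} I would write $\mathsf F_0=\bm D+\bm E$, with $\bm D$ block diagonal and $\bm E$ the off-diagonal part, and avoid a Neumann series since $\|\bm E\|_{\blinf}\sim t_n$ is not small. Instead, I would apply the resolvent identity twice:
\[
 \mathsf F_0^{-1}=\bm D^{-1}-\bm D^{-1}\bm E\bm D^{-1}+\bm D^{-1}\bm E\,\mathsf F_0^{-1}\bm E\bm D^{-1}.
\]
The diagonal blocks of $\mathsf F_0$ restricted to the relevant directions have eigenvalues at least $cs_n$, so $\|\bm D^{-1}\|_{\blinf}\le C/s_n$. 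Each entry of $\bm E$ is at most $Ct_n/n$, and $\|\mathsf F_0^{-1}\|_2\le C/s_n$ from the spectral bound. Each block of the last term is then at most $(C/s_n)^2(t_n/n)^2\,n\,(C/s_n)$, so that term contributes row sums of order $t_n^2/s_n^3$. The middle term contributes at most $(1/s_n)t_n(1/s_n)=K_n/s_n$. Since $t_n/s_n=K_n$ and $K_n/s_n\to0$ by Lemma~\ref{lem: network consequences}, the total is $O(K_n/s_n)$. A complication is that the diagonal blocks of the projected matrix need not be invertible without the $s_n(\bm I-\bm\Pi)$ correction; this is handled by using that correction.
\item \emph{Variation bound.} Entrywise, $w_{ij}$ and $\bm\phi_{ij}$ change by at most $Cw_{+,n}(\Delta_i+\Delta_j)$ and $C(\Delta_i+\Delta_j)$, and the second-order term by $Cw_{+,n}(\Delta_i+\Delta_j)$. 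The perturbation matrix therefore has block row sums at most $Cnw_{+,n}\|\psi(\bm\nu-\bm\nu^*)\|_{\blinf}$. Since it is symmetric, the Schur test gives the spectral bound \eqref{eq: population Hessian variation}.
\end{itemize}
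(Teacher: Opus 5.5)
Most of your plan tracks the paper's proof: the curvature bound via perturbed directional derivatives, the absorption of the $\{\bm P(\bm\nu_v)-\bm P^*\}\otimes\diag(0,\bm I_r)$ term using the radius $\varepsilon\sqrt n/K_n$, the off-diagonal block bound through the $O(1/n)$ projector corrections, and the Schur-test variation bound. The block-$\infty$ bound on $\mathsf F_0^{-1}$, however, does not follow from your argument.

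\textbf{The gap.} In your two-step resolvent identity, your estimate for the last term gives block-row sums of order $t_n^2/s_n^3$. Since $t_n=K_ns_n$, this equals $K_n^2/s_n$, not $K_n/s_n$. The fact that $K_n/s_n\to0$ does not help, because the excess is a factor $K_n$, and the paper allows $K_n$ to diverge. The loss comes from the second resolvent step. It inserts an extra factor $\bm E\bm D^{-1}$, whose norm under the uniform bounds is of order $t_n/s_n=K_n$, and you then bound that factor separately.

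The exponent matters downstream. In Lemma~\ref{lem: observed network curvature}, the bound $\|\mathsf F_0^{-1}\bm N_A\|_{\blinf}\le CK_n^2$ would degrade to $CK_n^3$. The finite-sum term would then become $K_n^{3k-1}/s_n$, which changes $\delta_{n,k}$.

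\textbf{The fix.} Stop after one step, as the paper does. From $\mathsf F_0^{-1}=\bm D^{-1}-\bm D^{-1}\bm E\mathsf F_0^{-1}$, block row $i$ of the remainder is $\mathsf F_{0,ii}^{-1}\bm E_{i\cdot}\mathsf F_0^{-1}$, where $\bm E_{i\cdot}$ is the $i$th block row of $\bm E$. Its $n-1$ nonzero blocks have spectral norm at most $Ct_n/n$, so $\|\bm E_{i\cdot}\|_F\le Ct_n/\sqrt n$. Cauchy--Schwarz over the $n$ blocks then gives
\[
 \sum_j\|[\bm D^{-1}\bm E\mathsf F_0^{-1}]_{ij}\|_2
 \le\sqrt n\,\|\mathsf F_{0,ii}^{-1}\|_2\,\|\bm E_{i\cdot}\|_F\,\|\mathsf F_0^{-1}\|_2
 \le\sqrt n\cdot\frac{C}{s_n}\cdot\frac{Ct_n}{\sqrt n}\cdot\frac{C}{s_n}=\frac{CK_n}{s_n}.
\]
The Frobenius norm of an off-diagonal block row is a factor $\sqrt n$ smaller than its block-row sum. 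This exactly pays for the $\sqrt n$ lost when the spectral bound $\|\mathsf F_0^{-1}\|_2\le C/s_n$ is converted into a row sum. Equivalently, in your two-step formula, keep $\mathsf F_0^{-1}\bm E\bm D^{-1}=\bm D^{-1}-\mathsf F_0^{-1}$ together as a single factor of spectral norm $C/s_n$.

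\textbf{A minor point.} Lemma~\ref{lem: network consequences} shows $p_{+,n}\le4w_{+,n}$ only when $p_{-,n}\le1/2$. In the dense corner $p_{-,n}>1/2$ it can fail, so to get $|p_{ij}(\bm\nu)-p_{ij}^*|\le4w_{+,n}$ use the two-sided argument: $p\le2w_{+,n}$ below $1/2$, $1-p\le2w_{+,n}$ above $1/2$, and $w_{+,n}=1/4$ if the envelope contains $1/2$.
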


\noindent\textbf{Proof of Lemma~\ref{lem: local network curvature}.}
For $\bm\nu\in\mathcal N_n$, write $\bm H_0(\bm\nu)=-\mathbb E_A\nabla_\psi^2L_A(\bm\nu)$ and
$\bm J=\diag(0,\bm I_r)$. Taking expectation in the likelihood Hessian
calculated in Lemma~\ref{lem: primitive network Fisher} gives
\begin{align}
\label{eq: exact Hessian identity}
 \bm H_0(\bm\nu)
 =\sum_{i<j}w_{ij}(\bm\nu)\bm\phi_{ij}(\bm\nu)\bm\phi_{ij}(\bm\nu)^\top
 +\{\bm P(\bm\nu)-\bm P^*\}\otimes\bm J.
\end{align}
Its node blocks are, with $\bm b_i=(1,\bm z_i^\top)^\top$,
\begin{align}
\label{eq: population Hessian node blocks}
 [\bm H_0(\bm\nu)]_{ii}=\sum_{j\ne i}w_{ij}(\bm\nu)\bm b_j\bm b_j^\top,\quad
 [\bm H_0(\bm\nu)]_{ij}=w_{ij}(\bm\nu)\bm b_j\bm b_i^\top
   +\{p_{ij}(\bm\nu)-p_{ij}^*\}\bm J,\ i\ne j.
\end{align}

We first prove the curvature bound in \eqref{eq: population structure}.
For $\psi(\bm h)\in\mathcal T_n$, the directional derivative formula gives
\begin{align*}
 \left\{\sum_{i<j}
 (\dot\theta_{ij}(\bm\nu)[\bm h]-\dot\theta_{ij}^*[\bm h])^2\right\}^{1/2}
 \le\|\bm E_Z(\bm Z-\bm Z^*)^\top+(\bm Z-\bm Z^*)\bm E_Z^\top\|_F\le2\|\bm Z-\bm Z^*\|_2\|\bm E_Z\|_F.
\end{align*}
By \eqref{eq: offdiag derivative geometry} and
$\|\bm Z-\bm Z^*\|_2\le\varepsilon\sqrt n/K_n\le\varepsilon\sqrt n$,
the triangle inequality therefore yields
\[
 cn\|\psi(\bm h)\|_2^2
 \le\sum_{i<j}\dot\theta_{ij}(\bm\nu)[\bm h]^2
 \le Cn\|\psi(\bm h)\|_2^2
\]
for sufficiently small $\varepsilon$. For $\psi(\bm h)\in\mathcal T_n$, the quadratic
form of the first term in \eqref{eq: exact Hessian identity} is
\[
 \psi(\bm h)^\top
 \left\{\sum_{i<j}w_{ij}(\bm\nu)\bm\phi_{ij}(\bm\nu)\bm\phi_{ij}(\bm\nu)^\top\right\}
 \psi(\bm h)
 =\sum_{i<j}w_{ij}(\bm\nu)\dot\theta_{ij}(\bm\nu)[\bm h]^2,
\]
since $\bm\phi_{ij}(\bm\nu)^\top\psi(\bm h)=\dot\theta_{ij}(\bm\nu)[\bm h]$.
By Lemma~\ref{lem: network consequences}, $w_{-,n}\le w_{ij}(\bm\nu)\le w_{+,n}$.
Combining these weight bounds with the preceding derivative bounds gives
\begin{align*}
 cs_n\|\psi(\bm h)\|_2^2
 &=cnw_{-,n}\|\psi(\bm h)\|_2^2
 \le w_{-,n}\sum_{i<j}\dot\theta_{ij}(\bm\nu)[\bm h]^2\le\sum_{i<j}w_{ij}(\bm\nu)\dot\theta_{ij}(\bm\nu)[\bm h]^2\\
 &\le w_{+,n}\sum_{i<j}\dot\theta_{ij}(\bm\nu)[\bm h]^2\le Cnw_{+,n}\|\psi(\bm h)\|_2^2
 =Ct_n\|\psi(\bm h)\|_2^2.
\end{align*}

For the remaining term, put $\bm a=\bm\alpha-\bm\alpha^*$,
$\bm E=\bm Z-\bm Z^*$ and $d_\nu=\|\psi(\bm\nu-\bm\nu^*)\|_2$.
Writing $\bm\Theta(\bm\nu)=\bm\Theta(\bm\alpha,\bm Z)$, the full logit difference satisfies
\begin{align*}
 \bm\Theta(\bm\nu)-\bm\Theta(\bm\nu^*)
 =\bm a\bm1_n^\top+\bm1_n\bm a^\top
  +\bm E\bm Z^\top+\bm Z^*\bm E^\top.
\end{align*}
Since $\|\bm Z\|_2+\|\bm Z^*\|_2\le2R_Z\sqrt n$, its Frobenius norm is
at most $2\sqrt n\|\bm a\|_2+2R_Z\sqrt n\|\bm E\|_F\le C\sqrt n d_\nu$.
The derivative of $\expit$ is at most $w_{+,n}$ between the endpoint logits.
Applying the mean-value theorem entrywise, with the diagonal of $\bm P$ kept zero, gives
\begin{align*}
 \|\bm P(\bm\nu)-\bm P^*\|_2
 \le\|\bm P(\bm\nu)-\bm P^*\|_F
 \le Cw_{+,n}\sqrt n d_\nu
 \le C\varepsilon t_n/K_n=C\varepsilon s_n.
\end{align*}
This also bounds the spectral norm of the remaining term in
\eqref{eq: exact Hessian identity}, since $\|\bm J\|_2=1$.
Choose $\varepsilon$ small enough to preserve the lower curvature bound.
The set $\mathcal N_n$ is convex, so the bound holds along every segment between
its endpoints. Projecting and integrating as in \eqref{eq: extended tangent Hessian}
gives the required bounds on $\mathcal T_n$; on $\mathcal T_n^\perp$,
the added term is $s_n\bm I$. This proves the full matrix curvature bound.

We next bound the off-diagonal blocks of $\mathsf F_0$.
For any two probabilities in $[p_{-,n},p_{+,n}]$, their difference is at most
$4w_{+,n}$: below $1/2$, $p\le2w_{+,n}$; above $1/2$, $1-p\le2w_{+,n}$;
if the interval contains $1/2$, $w_{+,n}=1/4$.
Since $\|\bm b_i\|_2^2=1+\|\bm z_i\|_2^2\le1+R_Z^2$ and
$w_{ij}(\bm\nu)\le w_{+,n}$, the diagonal formula in
\eqref{eq: population Hessian node blocks} gives
\[
 \|[\bm H_0(\bm\nu)]_{ii}\|_2
 \le\sum_{j\ne i}w_{ij}(\bm\nu)\|\bm b_j\|_2^2
 \le Ct_n.
\]
For $i\ne j$, both $p_{ij}(\bm\nu)$ and $p_{ij}^*$ belong to
$[p_{-,n},p_{+,n}]$, so the preceding probability bound gives
$|p_{ij}(\bm\nu)-p_{ij}^*|\le4w_{+,n}$.
Using the off-diagonal formula and $\|\bm J\|_2=1$, we obtain
\begin{align*}
 \|[\bm H_0(\bm\nu)]_{ij}\|_2
 &\le w_{ij}(\bm\nu)\|\bm b_j\|_2\|\bm b_i\|_2
       +|p_{ij}(\bm\nu)-p_{ij}^*|\|\bm J\|_2\\
 &\le w_{+,n}(1+R_Z^2)+4w_{+,n}
 \le Cw_{+,n}=Ct_n/n.
\end{align*}
Let $\bm\Pi_{\mathcal T_n^\perp}=\bm I-\bm\Pi_{\mathcal T_n}$. By \eqref{eq: network projector bounds},
for all $i,j$,
\[
 \|[\bm\Pi_{\mathcal T_n^\perp}\bm H_0]_{ij}\|_2
 \le\frac Cn\sum_\ell\|[\bm H_0]_{\ell j}\|_2\le Ct_n/n,
 \quad
 \|[\bm\Pi_{\mathcal T_n^\perp}\bm H_0\bm\Pi_{\mathcal T_n^\perp}]_{ij}\|_2
 \le\sum_\ell\frac{Ct_n}{n}\frac Cn\le Ct_n/n.
\]
The bound for $\bm H_0\bm\Pi_{\mathcal T_n^\perp}$ follows by transposing $\bm\Pi_{\mathcal T_n^\perp}\bm H_0$.
For $\bm\nu(v)=\bm\nu_2+v(\bm\nu_1-\bm\nu_2)$, definition
\eqref{eq: extended tangent Hessian} and $\bm\Pi_{\mathcal T_n}=\bm I-\bm\Pi_{\mathcal T_n^\perp}$ give
\begin{align*}
 \mathsf F_{0,ij}
 ={}&\int_0^1\bigl\{
 [\bm H_0\{\bm\nu(v)\}]_{ij}
 -[\bm\Pi_{\mathcal T_n^\perp}\bm H_0\{\bm\nu(v)\}]_{ij}\\
 &\quad-[\bm H_0\{\bm\nu(v)\}\bm\Pi_{\mathcal T_n^\perp}]_{ij}
 +[\bm\Pi_{\mathcal T_n^\perp}\bm H_0\{\bm\nu(v)\}\bm\Pi_{\mathcal T_n^\perp}]_{ij}
 \bigr\}\,dv+s_n[\bm\Pi_{\mathcal T_n^\perp}]_{ij}.
\end{align*}
For $i\ne j$, each of the four blocks in the integrand has norm at most
$Ct_n/n$, uniformly in $v\in[0,1]$. Since $\|[\bm\Pi_{\mathcal T_n^\perp}]_{ij}\|_2\le C/n$ and
$s_n\le t_n$,
\[
 \|\mathsf F_{0,ij}\|_2
 \le\frac{4Ct_n+Cs_n}{n}\le\frac{Ct_n}{n},
\]
which proves the off-diagonal bound in \eqref{eq: population structure}.

To bound the inverse, let $\mathsf R_i$ be block row $i$ of $\mathsf F_0$
with its diagonal block set to zero. Then
$\|\mathsf R_i\|_F\le C\sqrt n(t_n/n)=Ct_n/\sqrt n$,
and block row $i$ of $\mathsf F_0\mathsf F_0^{-1}=\bm I$ gives
\[
 [\mathsf F_0^{-1}]_{i\cdot}
 =\mathsf F_{0,ii}^{-1}[\bm I]_{i\cdot}
 -\mathsf F_{0,ii}^{-1}\mathsf R_i\mathsf F_0^{-1}.
\]
The curvature bound gives
$\|\mathsf F_{0,ii}^{-1}\|_2+\|\mathsf F_0^{-1}\|_2\le C/s_n$.
Using Cauchy--Schwarz to sum the $n$ blocks in the second term, we obtain
\[
 \sum_j\|[\mathsf F_0^{-1}]_{ij}\|_2
 \le\frac C{s_n}
 +\sqrt n\|\mathsf F_{0,ii}^{-1}\mathsf R_i\mathsf F_0^{-1}\|_F
 \le\frac C{s_n}+\frac{Ct_n}{s_n^2}
 \le\frac{CK_n}{s_n}.
\]
Taking the maximum over $i$ completes \eqref{eq: population structure}.

For \eqref{eq: population Hessian variation}, put
$\delta_\nu=\|\psi(\bm\nu-\bm\nu^*)\|_{\blinf}$.
The identity
$\theta_{ij}(\bm\nu)-\theta_{ij}^*
=a_i+a_j+(\bm z_i-\bm z_i^*)^\top\bm z_j
+\bm z_i^{*\top}(\bm z_j-\bm z_j^*)$
gives $\max_{i<j}|\theta_{ij}(\bm\nu)-\theta_{ij}^*|\le C\delta_\nu$.
The first and second derivatives of $\expit$ are bounded in absolute value by $w_{+,n}$ on the logit envelope, so
\[
 \max_{i<j}\{|p_{ij}(\bm\nu)-p_{ij}^*|+|w_{ij}(\bm\nu)-w_{ij}^*|\}
 \le Cw_{+,n}\delta_\nu.
\]
For $\bm b_i^*=(1,\bm z_i^{*\top})^\top$, we have
$\|\bm b_i-\bm b_i^*\|_2\le\delta_\nu$ and
$\|\bm b_i\|_2+\|\bm b_i^*\|_2\le C$.
For any node indices $\ell,m$ and $i\ne j$, expand the weighted product as
\begin{align*}
 w_{ij}(\bm\nu)\bm b_\ell\bm b_m^\top
       -w_{ij}^*\bm b_\ell^*\bm b_m^{*\top}=\{w_{ij}(\bm\nu)-w_{ij}^*\}\bm b_\ell\bm b_m^\top
       +w_{ij}^*(\bm b_\ell-\bm b_\ell^*)\bm b_m^\top
       +w_{ij}^*\bm b_\ell^*(\bm b_m-\bm b_m^*)^\top.
\end{align*}
The preceding bounds show that each term has spectral norm at most
$Cw_{+,n}\delta_\nu$.
For a diagonal block, apply this bound with $\ell=m=j$ to each summand in
\eqref{eq: population Hessian node blocks}:
\begin{align*}
 \|[\bm H_0(\bm\nu)-\bm H_0(\bm\nu^*)]_{ii}\|_2
 \le\sum_{j\ne i}
 \|w_{ij}(\bm\nu)\bm b_j\bm b_j^\top-w_{ij}^*\bm b_j^*\bm b_j^{*\top}\|_2\le C(n-1)w_{+,n}\delta_\nu\le Ct_n\delta_\nu.
\end{align*}
For $i\ne j$, apply the same bound with $\ell=j,m=i$. So
\begin{align*}
 \|[\bm H_0(\bm\nu)-\bm H_0(\bm\nu^*)]_{ij}\|_2
 \le\|w_{ij}(\bm\nu)\bm b_j\bm b_i^\top
           -w_{ij}^*\bm b_j^*\bm b_i^{*\top}\|_2
       +|p_{ij}(\bm\nu)-p_{ij}^*|\|\bm J\|_2\le Cw_{+,n}\delta_\nu,
\end{align*}
where the last inequality uses the preceding probability bound and $\|\bm J\|_2=1$.
Let $\bm B=\bm H_0(\bm\nu)-\bm H_0(\bm\nu^*)$.
For a vector $\bm u$ with node blocks $\bm u_i$, symmetry gives
$\|\bm B_{ij}\|_2=\|\bm B_{ji}\|_2$, and hence
\begin{align*}
 |\bm u^\top\bm B\bm u|
 &\le\sum_{i,j}\|\bm B_{ij}\|_2\|\bm u_i\|_2\|\bm u_j\|_2\le\frac12\sum_{i,j}\|\bm B_{ij}\|_2
       (\|\bm u_i\|_2^2+\|\bm u_j\|_2^2)\\
& =\sum_i\|\bm u_i\|_2^2\sum_j\|\bm B_{ij}\|_2\le\left(\max_i\sum_j\|\bm B_{ij}\|_2\right)\|\bm u\|_2^2.
\end{align*}
Taking the supremum over $\|\bm u\|_2=1$ and using the diagonal and
 off-diagonal bounds above yields
\begin{align*}
 \|\bm H_0(\bm\nu)-\bm H_0(\bm\nu^*)\|_2
 \le\max_i\left\{\|\bm B_{ii}\|_2+\sum_{j\ne i}\|\bm B_{ij}\|_2\right\}\le Ct_n\delta_\nu.
\end{align*}

To obtain \eqref{eq: population Hessian variation}, let
$\bm\nu(v)=\bm\nu^*+v(\bm\nu-\bm\nu^*)$, $0\le v\le1$.
Convexity gives $\bm\nu(v)\in\mathcal N_n$, and
$\delta_{\nu(v)}=v\delta_\nu$.
In the difference of the two matrices defined in \eqref{eq: extended tangent Hessian},
the added terms $s_n(\bm I-\bm\Pi_{\mathcal T_n})$ cancel. Thus
\[
 \mathsf F_0(\bm\nu,\bm\nu^*)-\mathsf F_{0,*}
 =\bm\Pi_{\mathcal T_n}\int_0^1
   [\bm H_0\{\bm\nu(v)\}-\bm H_0(\bm\nu^*)]\,dv\bm\Pi_{\mathcal T_n}.
\]
Since $\|\bm\Pi_{\mathcal T_n}\|_2\le1$, the bound just proved gives
\begin{align*}
 \|\mathsf F_0(\bm\nu,\bm\nu^*)-\mathsf F_{0,*}\|_2
 \le\int_0^1\|\bm H_0\{\bm\nu(v)\}-\bm H_0(\bm\nu^*)\|_2\,dv\le Ct_n\delta_\nu\int_0^1v\,dv
 \le Ct_n\delta_\nu.
\end{align*}
This proves \eqref{eq: population Hessian variation} by noting that $\delta_\nu=\|\psi(\bm\nu-\bm\nu^*)\|_{\blinf}$. 
\hfill$\blacksquare$

\begin{lemma}\label{lem: observed network curvature}
Suppose Assumptions~\ref{assump: network} and~\ref{assump: network growth} hold.
Let $\mathcal N_n$ be as in Lemma~\ref{lem: local network curvature}.
For endpoints $\bm\nu_1,\bm\nu_2\in\mathcal N_n$, abbreviate the matrices
in \eqref{eq: extended tangent Hessian} as $\mathsf F_A$ and $\mathsf F_0$.
The random difference does not depend on the endpoints:
\begin{align}
\label{eq: exact Hessian noise}
 \mathsf F_A=\mathsf F_0+\bm N_A,\qquad
 \bm N_A=-\bm\Pi_{\mathcal T_n}
 \{(\bm A-\bm P^*)\otimes\diag(0,\bm I_r)\}
 \bm\Pi_{\mathcal T_n}.
\end{align}
In particular, $\|\mathsf F_A-\mathsf F_0\|_2\le\|\bm A-\bm P^*\|_2$.
With probability tending to one, uniformly over $\mathcal N_n^2$,
\begin{align}
\label{eq: observed curvature and inverse}
 cs_n\bm I\preceq\mathsf F_A\preceq Ct_n\bm I,\quad
 \|\mathsf F_A^{-1}\|_{\blinf}
 \le\frac{C}{s_n}\left\{K_n^{2k-1}+\sqrt n(K_n/s_n)^{k/2}\right\}.
\end{align}
\end{lemma}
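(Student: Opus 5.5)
The plan has three parts: derive the exact Hessian identity \eqref{eq: exact Hessian noise}, deduce the spectral bounds by perturbing Lemma~\ref{lem: local network curvature}, and obtain the block-row bound on $\mathsf F_A^{-1}$ from a $k$-step resolvent expansion around $\mathsf F_0^{-1}$.

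\emph{Exact identity.} From the Hessian formula in the proof of Lemma~\ref{lem: primitive network Fisher},
\[
-\nabla_\psi^2L_A(\bm\nu)=\sum_{i<j}\bigl[w_{ij}(\bm\nu)\bm\phi_{ij}(\bm\nu)\bm\phi_{ij}(\bm\nu)^\top-\{a_{ij}-p_{ij}(\bm\nu)\}\bm B_{A,ij}\bigr].
\]
The only random quantity is $a_{ij}$, and it enters linearly with the constant matrix $\bm B_{A,ij}$. Subtracting $\mathbb E_A$ therefore gives
\[
-\nabla_\psi^2L_A(\bm\nu)-\bm H_0(\bm\nu)=-\sum_{i<j}(a_{ij}-p_{ij}^*)\bm B_{A,ij}=-(\bm A-\bm P^*)\otimes\diag(0,\bm I_r),
\]
since $\bm B_{A,ij}$ places $\bm I_r$ in the latent blocks $(i,j)$ and $(j,i)$, and $a_{ii}-p_{ii}^*=0$. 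This difference does not depend on $\bm\nu$, so integrating along the segment in \eqref{eq: extended tangent Hessian} leaves it unchanged. The deterministic term $s_n(\bm I-\bm\Pi_{\mathcal T_n})$ cancels in $\mathsf F_A-\mathsf F_0$. This yields \eqref{eq: exact Hessian noise}. Since $\|\bm\Pi_{\mathcal T_n}\|_2\le1$ and $\|\bm M\otimes\diag(0,\bm I_r)\|_2=\|\bm M\|_2$, we get $\|\mathsf F_A-\mathsf F_0\|_2\le\|\bm A-\bm P^*\|_2$.

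\emph{Spectral bounds.} Work on the event \eqref{eq: adjacency spectral bound}, which has probability tending to one. There $\|\bm N_A\|_2\le C\sqrt{t_n}=C\sqrt{K_ns_n}$. By \eqref{eq: basic growth consequences}, $K_n/s_n\to0$, so $\sqrt{K_ns_n}=o(s_n)$. Weyl's inequality applied to \eqref{eq: population structure} then gives $cs_n\bm I\preceq\mathsf F_A\preceq Ct_n\bm I$, uniformly over $\mathcal N_n^2$ because $\bm N_A$ does not depend on the endpoints. In particular $\|\mathsf F_A^{-1}\|_2\le C/s_n$.

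\emph{Block-row bound on the inverse.} I would iterate the resolvent identity $\mathsf F_A^{-1}=\mathsf F_0^{-1}-\mathsf F_0^{-1}\bm N_A\mathsf F_A^{-1}$ exactly $k$ times, which gives
\[
\mathsf F_A^{-1}=\sum_{l=0}^{k-1}(-\mathsf F_0^{-1}\bm N_A)^l\mathsf F_0^{-1}+(-\mathsf F_0^{-1}\bm N_A)^k\mathsf F_A^{-1}.
\]
\begin{itemize}
\item \emph{Finite sum.} Use submultiplicativity \eqref{eq: node product bounds} of $\|\cdot\|_{\blinf}$. From \eqref{eq: population structure}, $\|\mathsf F_0^{-1}\|_{\blinf}\le CK_n/s_n$. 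From \eqref{eq: network projector bounds} and the row-sum bound in \eqref{eq: adjacency spectral bound},
\[
\|\bm N_A\|_{\blinf}\le\|\bm\Pi_{\mathcal T_n}\|_{\blinf}^2\max_i\sum_j|a_{ij}-p_{ij}^*|\le Ct_n=CK_ns_n.
\]
Hence $\|\mathsf F_0^{-1}\bm N_A\|_{\blinf}\le CK_n^2$, and the $l$-th term is at most $C(K_n/s_n)K_n^{2l}$. Summing over $l\le k-1$ gives at most $CK_n^{2k-1}/s_n$.
\item \emph{Remainder.} Switch to spectral norms through the crude bound $\|\bm M\|_{\blinf}\le\sqrt n\max_i\|[\bm M]_{i\cdot}\|_F\le C\sqrt n\|\bm M\|_2$, using Cauchy--Schwarz over the $n$ blocks and the fixed block dimension $r+1$. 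The spectral bounds give
\[
\|(\mathsf F_0^{-1}\bm N_A)^k\mathsf F_A^{-1}\|_2\le(C\sqrt{t_n}/s_n)^k\,C/s_n=Cs_n^{-1}(K_n/s_n)^{k/2}.
\]
Multiplying by $\sqrt n$ produces the second term in \eqref{eq: observed curvature and inverse}.
\end{itemize}

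The main obstacle is this last step. The bound $\|\bm N_A\|_{\blinf}\le Ct_n$ is of the same order as $\mathsf F_0$, so a plain Neumann series does not converge in the block-row norm; each iteration costs a factor $K_n^2$. The device is therefore to truncate at the fixed order $k$ and pay only a $\sqrt n$ factor to pass to spectral norms, where each factor of $\mathsf F_0^{-1}\bm N_A$ is small, of order $\sqrt{K_n/s_n}$. All inputs hold simultaneously on the single high-probability event \eqref{eq: adjacency spectral bound}, and the expansion uses only endpoint-free quantities together with the uniform bounds of Lemma~\ref{lem: local network curvature}. This gives the claimed uniformity over $\mathcal N_n^2$.
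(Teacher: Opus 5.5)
Your proposal is correct and follows the paper's proof essentially step for step. It uses the same $\bm\nu$-free noise identity $-\nabla_\psi^2L_A(\bm\nu)-\bm H_0(\bm\nu)=-(\bm A-\bm P^*)\otimes\diag(0,\bm I_r)$, and the same Weyl perturbation of Lemma~\ref{lem: local network curvature} on the event \eqref{eq: adjacency spectral bound}. It also uses the same $k$-fold iteration of $\mathsf F_A^{-1}=\mathsf F_0^{-1}-\mathsf F_0^{-1}\bm N_A\mathsf F_A^{-1}$, bounding the finite sum in $\|\cdot\|_{\blinf}$ and the remainder through $\|\cdot\|_{\blinf}\le\sqrt{n(r+1)}\,\|\cdot\|_2$.
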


\noindent\textbf{Proof of Lemma~\ref{lem: observed network curvature}.}
Write $\bm J=\diag(0,\bm I_r)$ and
$\bm H_0(\bm\nu)=-\mathbb E_A\nabla_\psi^2L_A(\bm\nu)$ as in
Lemma~\ref{lem: local network curvature}.
In the observed negative Hessian, the residual term is
$\{\bm P(\bm\nu)-\bm A\}\otimes\bm J$; in its expectation, it is
$\{\bm P(\bm\nu)-\bm P^*\}\otimes\bm J$. The other terms agree, so $-\nabla_\psi^2L_A(\bm\nu)-\bm H_0(\bm\nu) =-(\bm A-\bm P^*)\otimes\bm J$.
This difference is independent of $\bm\nu$. In
\eqref{eq: extended tangent Hessian}, the added terms
$s_n(\bm I-\bm\Pi_{\mathcal T_n})$ cancel when subtracting
$\mathsf F_0$ from $\mathsf F_A$. Projecting and integrating the preceding
identity therefore gives \eqref{eq: exact Hessian noise}.
Since $\|\bm\Pi_{\mathcal T_n}\|_2\le1$ and $\|\bm J\|_2=1$, $\|\bm N_A\|_2 \le\|(\bm A-\bm P^*)\otimes\bm J\|_2 =\|\bm A-\bm P^*\|_2$.
We work on the event where both bounds in
\eqref{eq: adjacency spectral bound} hold, whose probability tends to one
by Lemma~\ref{lem: network consequences}.
The $(i,j)$ block of $(\bm A-\bm P^*)\otimes\bm J$ is
$(a_{ij}-p_{ij}^*)\bm J$, and its diagonal blocks are zero.
Using \eqref{eq: node product bounds} and the projector bound in
\eqref{eq: network projector bounds}, we obtain
\begin{align*}
 \|\bm N_A\|_2&\le C\sqrt{t_n},\quad \|\bm N_A\|_{\blinf}\le\|\bm\Pi_{\mathcal T_n}\|_{\blinf}^2
       \max_i\sum_{j\ne i}|a_{ij}-p_{ij}^*|\|\bm J\|_2
 \le Ct_n.
\end{align*}
By \eqref{eq: basic growth consequences} and $t_n=K_ns_n$,
$\sqrt{t_n}/s_n=\sqrt{K_n/s_n}\to0$.
For every unit vector $\bm u$, \eqref{eq: population structure} thus gives
\[
 cs_n-\|\bm N_A\|_2
 \le\bm u^\top\mathsf F_A\bm u
 \le Ct_n+\|\bm N_A\|_2.
\]
For sufficiently large $n$, the left side is at least $cs_n/2$;
the right side is at most $Ct_n$ after enlarging $C$, since $s_n\le t_n$.
This proves the curvature bounds in \eqref{eq: observed curvature and inverse}
and, in particular, $\|\mathsf F_A^{-1}\|_2\le C/s_n$.

To bound $\|\mathsf F_A^{-1}\|_{\blinf}$, put
$\bm M=\mathsf F_0^{-1}\bm N_A$.
The inverse bounds in \eqref{eq: population structure} give
\[
 \|\bm M\|_{\blinf}
 \le\frac{CK_n}{s_n}\,Ct_n\le CK_n^2,
 \qquad
 \|\bm M\|_2
 \le\frac C{s_n}\,C\sqrt{t_n}\le C\sqrt{K_n/s_n}.
\]
Multiplying $\mathsf F_A=\mathsf F_0+\bm N_A$ on the left by
$\mathsf F_0^{-1}$ and on the right by $\mathsf F_A^{-1}$ gives $\mathsf F_A^{-1}=\mathsf F_0^{-1}-\bm M\mathsf F_A^{-1}$.
Substituting this identity into its last term repeatedly, for the fixed
integer $k$ in Assumption~\ref{assump: network growth}, yields
\[
 \mathsf F_A^{-1}
 =\sum_{\ell=0}^{k-1}(-\bm M)^\ell\mathsf F_0^{-1}
   +(-\bm M)^k\mathsf F_A^{-1}.
\]
For the finite sum, \eqref{eq: node product bounds} and $K_n\ge1$ imply
\[
 \left\|\sum_{\ell=0}^{k-1}(-\bm M)^\ell\mathsf F_0^{-1}\right\|_{\blinf}
 \le\sum_{\ell=0}^{k-1}(CK_n^2)^\ell\frac{CK_n}{s_n}
 \le\frac{CK_n^{2k-1}}{s_n},
\]
where $C$ may depend on the fixed $k$.
For the remainder, we use its spectral norm to bound its block norm.
If $\bm C$ has $(r+1)\times(r+1)$ blocks, its $i$th block row
$\bm C_{i\cdot}$ has $r+1$ rows, so
$\|\bm C_{i\cdot}\|_F\le\sqrt{r+1}\|\bm C_{i\cdot}\|_2
\le\sqrt{r+1}\|\bm C\|_2$.
Consequently, Cauchy--Schwarz gives
\[
 \sum_j\|\bm C_{ij}\|_2
 \le\sqrt n\left(\sum_j\|\bm C_{ij}\|_F^2\right)^{1/2}
 =\sqrt n\|\bm C_{i\cdot}\|_F
 \le\sqrt{n(r+1)}\|\bm C\|_2.
\]
Taking the maximum over $i$ and applying this bound to the remainder yields
\[
 \|(-\bm M)^k\mathsf F_A^{-1}\|_{\blinf}
 \le\sqrt{n(r+1)}\|\bm M\|_2^k\|\mathsf F_A^{-1}\|_2
 \le\frac{C\sqrt n}{s_n}(K_n/s_n)^{k/2}.
\]
Combining the finite sum and remainder bounds proves
\[
 \|\mathsf F_A^{-1}\|_{\blinf}
 \le\frac C{s_n}\left\{K_n^{2k-1}+\sqrt n(K_n/s_n)^{k/2}\right\},
\]
as required in \eqref{eq: observed curvature and inverse}.
The event used above depends only on $\bm A$, and all bounds from
Lemma~\ref{lem: local network curvature} hold uniformly over
$\bm\nu_1,\bm\nu_2\in\mathcal N_n$.
The conclusions therefore hold uniformly over these endpoints on the same event.
\hfill$\blacksquare$

\begin{lemma}\label{lem: local likelihood root}
Suppose Assumptions~\ref{assump: network} and~\ref{assump: network growth} hold.  Then, with probability tending to one, there is a
local maximizer $\overline{\bm\nu}=(\overline{\bm\alpha},\overline{\bm Z})$ of $L_A$ over the centered parameter space, lying in the relative interior of $\mathcal N_n$ and satisfying
$\bm\Pi_{\mathcal T_n}\nabla_\psi L_A(\overline{\bm\nu})=\bm0$.
Its aligned error $\bm\delta=\psi(\overline{\bm\nu}-\bm\nu^*)$
satisfies
\begin{align}
\label{eq: exact local root max rate}
 \|\bm\delta\|_{\blinf}=O_p(\delta_{n,k}),\
 n^{-1}\|\bm\delta\|_2^2=O_p(s_n^{-1}),\
 n^{-1}\sum_i\|\bm\delta_i\|_2^4=O_p(s_n^{-2}),
\end{align}
where $\bm\delta_i\in\mathbb R^{r+1}$ is the $i$-th node block of $\bm\delta$.
\end{lemma}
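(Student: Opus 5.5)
The plan is to obtain $\overline{\bm\nu}$ as a fixed point of a Newton-type map in the aligned tangent coordinates, and then to read off the three rates from that fixed-point equation.

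\textbf{Fixed-point formulation.} For $\bm\nu\in\mathcal N_n$ write $\bm u=\psi(\bm\nu-\bm\nu^*)\in\mathcal T_n$. The fundamental theorem of calculus along the segment from $\bm\nu^*$ to $\bm\nu$, together with definition \eqref{eq: extended tangent Hessian}, gives
\[
 \bm\Pi_{\mathcal T_n}\nabla_\psi L_A(\bm\nu)=\bm g_A-\mathsf F_A(\bm\nu,\bm\nu^*)\bm u .
\]
The added term $s_n(\bm I-\bm\Pi_{\mathcal T_n})$ does not act on $\bm u\in\mathcal T_n$. By Lemma~\ref{lem: observed network curvature}, $\mathsf F_A$ is invertible uniformly on $\mathcal N_n^2$ with probability tending to one. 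Since $\mathsf F_A$ maps $\mathcal T_n$ into itself and $\mathcal T_n^\perp$ into itself, the projected score vanishes if and only if $\bm u=\Phi(\bm u):=\mathsf F_A(\bm\nu,\bm\nu^*)^{-1}\bm g_A$.

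I would apply Brouwer's theorem to the continuous map $\Phi$ on the compact convex set
\[
 \mathcal K_n=\{\bm u\in\mathcal T_n:\|\bm u\|_{\blinf}\le M\delta_{n,k},\ \|\bm u\|_2\le M\sqrt{n/s_n}\}.
\]
This set lies in the relative interior of $\mathcal N_n$ for $n$ large. Indeed, $\|\bm u\|_2\le\sqrt n\,M\delta_{n,k}=o(\sqrt n/K_n)$ because $K_n\delta_{n,k}\to0$ by \eqref{eq: basic growth consequences}. The box constraints of $\mathcal V_n$ also hold, thanks to the fixed margin $m_0$ and $\delta_{n,k}\to0$.

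\textbf{Score bounds.} Write $\bm g_A=\bm\Pi_{\mathcal T_n}\sum_{i<j}(a_{ij}-p_{ij}^*)\bm\phi_{ij}^*$.
\begin{itemize}
 \item Block sup norm: the unprojected score's node-$i$ block is $\sum_{j\ne i}(a_{ij}-p_{ij}^*)\bm b_j^*$. Bernstein's inequality with variance proxy $t_n$, a union bound over nodes, and $t_n/\log n\to\infty$ give a bound of $C\sqrt{t_n\log n}$. The projector bound \eqref{eq: network projector bounds} transfers this to $\|\bm g_A\|_{\blinf}\le C\sqrt{t_n\log n}$.
 \item Euclidean norm: $\mathbb E_A\|\bm g_A\|_2^2\le Cnt_n$.
\end{itemize}

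\textbf{Invariance of the box constraint.} With the block-norm inverse bound \eqref{eq: observed curvature and inverse},
\[
 \|\Phi(\bm u)\|_{\blinf}\le \frac{C}{s_n}\{K_n^{2k-1}+\sqrt n(K_n/s_n)^{k/2}\}\sqrt{K_ns_n\log n}=C\delta_{n,k}.
\]
This is exactly why $\delta_{n,k}$ is defined as it is. Hence the $\blinf$ constraint of $\mathcal K_n$ is preserved for large $M$, and this yields the first rate in \eqref{eq: exact local root max rate}.

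\textbf{Sharp $\ell_2$ rate (main obstacle).} The crude bound $\|\mathsf F_A^{-1}\|_2\|\bm g_A\|_2$ only gives $\sqrt{nK_n/s_n}$, which loses a factor $K_n$. I would instead decompose
\[
 \Phi(\bm u)=\mathsf F_*^{-1}\bm g_A+(\mathsf F_A^{-1}-\mathsf F_*^{-1})\bm g_A .
\]
\begin{itemize}
 \item Leading term: on $\mathcal T_n$, $\operatorname{Var}_A(\bm g_A)=\mathsf F_*$ by \eqref{eq: network Fisher information}. Hence $\mathbb E_A\|\mathsf F_*^{-1}\bm g_A\|_2^2=\tr(\mathsf F_*^{-1})\le n(r+1)/(cs_n)$, which is the target $n/s_n$ scale.
 \item Remainder: write $\mathsf F_A^{-1}-\mathsf F_*^{-1}=\mathsf F_A^{-1}(\mathsf F_*-\mathsf F_A)\mathsf F_*^{-1}$. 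Split $\mathsf F_A-\mathsf F_*$ into the noise part $\bm N_A$ and the population variation.
 \begin{itemize}
  \item The variation is bounded by $Ct_n\|\bm u\|_{\blinf}$ via \eqref{eq: population Hessian variation}.
  \item For the noise part, I would bound $\bm N_A\mathsf F_*^{-1}\bm g_A$ directly. It is quadratic in the noise, so a second-moment calculation, or the spectral bound $\|\bm N_A\|_2\le C\sqrt{t_n}$, should suffice.
 \end{itemize}
 Both contributions should be $o(\sqrt{n/s_n})$ after using $K_n\delta_{n,k}\to0$ and $K_n/\sqrt{s_n}\to0$.
\end{itemize}
Because $\mathcal K_n$ must be deterministic given $\bm A$, this step needs care. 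I would first fix the event on which $\|\mathsf F_*^{-1}\bm g_A\|_2\le M\sqrt{n/s_n}/2$ and all the remainder bounds hold uniformly over $\mathcal N_n$. If the noise term resists, a fallback is to iterate the resolvent identity once more, as in Lemma~\ref{lem: observed network curvature}.

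\textbf{Fourth-moment rate.} The bound $n^{-1}\sum_i\|\bm\delta_i\|_2^4=O_p(s_n^{-2})$ follows the same decomposition. For the leading term, compute fourth moments of each block of $\mathsf F_*^{-1}\bm g_A$. Each block is a weighted sum of independent centered Bernoullis with $\ell_2$ weight scale of order $s_n^{-1/2}$: $\mathsf F_*^{-1}$ has block-diagonal entries of order $1/s_n$ and off-diagonal blocks of order $K_n/(ns_n)$ by \eqref{eq: population structure}. Rosenthal's inequality then gives $\mathbb E\|\cdot\|_2^4\le Cs_n^{-2}$ per node. For the remainder, use $\sum_i\|\bm r_i\|_2^4\le\|\bm r\|_{\blinf}^2\|\bm r\|_2^2$, which is of smaller order under the same growth conditions.

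\textbf{Local maximality.} At the fixed point, $\mathsf F_A(\bm\nu_1,\bm\nu_2)\succeq cs_n\bm I$ uniformly on $\mathcal N_n^2$. Hence $L_A$ is strictly concave along tangent directions in a neighborhood of $\overline{\bm\nu}$, so $\overline{\bm\nu}$ is a strict local maximizer on the slice $\{\bm\nu^*+\mathcal T_n\}$. I would extend this to a local maximizer over the centered parameter space using the rotation invariance $\bm Z\mapsto\bm Z\bm O$. The non-centered directions $\mathcal C$ and the skew directions $\mathcal R$ are exactly the orthogonal complement handled by the alignment, and moving along $\mathcal R$ amounts, to first order, to rotating $\bm Z$, which leaves $L_A$ unchanged.
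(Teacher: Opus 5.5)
Your architecture is the paper's. You apply Brouwer to $\bm u\mapsto\mathsf F_A(\bm\nu,\bm\nu^*)^{-1}\bm g_A$ on a block-sup ball of radius of order $\delta_{n,k}$, use curvature for maximality on $\mathcal N_n$, and get the $\ell_2$ and fourth-moment rates from $\mathsf F_{0,*}^{-1}\bm g_A$ plus a remainder controlled by $\|\mathsf F_A-\mathsf F_{0,*}\|_2\le C(t_n\|\bm u\|_{\blinf}+\sqrt{t_n})$. The paper symmetrizes through $\mathsf F_{0,*}^{-1/2}$ rather than using your resolvent identity, which is equivalent. Two steps, however, do not work as written.

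\textbf{The $\ell_2$ constraint in the Brouwer set.} Adding $\|\bm u\|_2\le M\sqrt{n/s_n}$ is counterproductive. For fixed $M$, the event $\|\mathsf F_*^{-1}\bm g_A\|_2\le M\sqrt{n/s_n}/2$ is only guaranteed probability $1-C/M^2$ by Markov. You therefore get a root on an event whose probability does not tend to one, and different $M$ could give different roots. The constraint is also unnecessary, because your $\blinf$ invariance never uses it. Run Brouwer on $\{\bm u\in\mathcal T_n:\|\bm u\|_{\blinf}\le M\delta_{n,k}\}$, which succeeds with probability tending to one. Then apply your resolvent identity at the fixed point $\bm\delta$ to get $\|\bm\delta-\mathsf F_*^{-1}\bm g_A\|_2\le C(K_n\|\bm\delta\|_{\blinf}+\sqrt{K_n/s_n})\|\mathsf F_*^{-1}\bm g_A\|_2=o_p(\sqrt{n/s_n})$. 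This a posteriori argument is exactly what the paper does.

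\textbf{Extension to the centered parameter space.} Passing from the slice $\bm\nu^*+\mathcal T_n$ cannot rest on moves along $\mathcal R$ being rotations ``to first order''. Local maximality is a second-order property, and a first-order identification leaves second-order terms in $L_A$ of the same size as the curvature you rely on. Also, $\mathcal C$ is excluded by the centering constraint, not by alignment. Use the exact Procrustes map instead. The matrix $\bm Z^{*\top}\overline{\bm Z}$ is symmetric with
\[
\lambda_{\min}(\bm Z^{*\top}\overline{\bm Z})\ge c_Zn-\|\bm Z^*\|_2\|\overline{\bm Z}-\bm Z^*\|_F\ge(c_Z-R_Z\varepsilon)n>0.
\]
So the minimizer $\bm O(\bm Z)$ of $\|\bm Z\bm O-\bm Z^*\|_F$ is unique and continuous near $\overline{\bm Z}$, with $\bm O(\overline{\bm Z})=\bm I_r$. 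For centered $\bm\nu$ near $\overline{\bm\nu}$, the point $(\bm\alpha,\bm Z\bm O(\bm Z))$ satisfies the linear constraints of $\mathcal N_n$ exactly. It lies in $\mathcal N_n$ because $\overline{\bm\nu}$ is in the relative interior, and it has the same likelihood, so $L_A(\bm\nu)\le L_A(\overline{\bm\nu})$.

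\textbf{A minor correction.} The display \eqref{eq: population structure} bounds block row sums of $\mathsf F_0^{-1}$ by $CK_n/s_n$; it does not bound individual off-diagonal blocks by $K_n/(ns_n)$. You do not need the latter. The row-sum bound already gives $\max_e|c_{e,q}|\le CK_n/s_n$ for the coefficients in $(\mathsf F_{0,*}^{-1}\bm g_A)_q=\sum_e c_{e,q}(a_e-p_e^*)$. Together with $\sum_e w_e^*c_{e,q}^2\le C/s_n$ and $K_n^2/s_n\to0$, that is all Rosenthal's inequality requires.
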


\noindent\textbf{Proof of Lemma~\ref{lem: local likelihood root}.}
The $i$-th node score at $\bm\nu^*$ is
$\sum_{j\ne i}(a_{ij}-p_{ij}^*)\bm b_j^*$, where
$\bm b_j^*=(1,\bm z_j^{*\top})^\top$.
Each coordinate is a sum of independent centered variables bounded by $C$,
with total variance at most $C\sum_{j\ne i}w_{ij}^*\le Ct_n$.
Bernstein's inequality, a union bound over $n(r+1)$ coordinates, and
\eqref{eq: network projector bounds} give
\[
 \|\bm g_A\|_{\blinf}
 \le C\{\sqrt{t_n\log n}+\log n\}\le C\sqrt{t_n\log n}
\]
with probability tending to one, since $t_n\ge s_n\gg\log n$ by
\eqref{eq: basic growth consequences}.
Work on the event where this bound and the uniform bounds of
Lemma~\ref{lem: observed network curvature} hold.
For a sufficiently large fixed $C_1$, let $\mathcal B_n=\{\bm d\in\mathcal T_n:\|\bm d\|_{\blinf}\le C_1\delta_{n,k}\}$, $\bm\nu(\bm d)=\bm\nu^*+\psi^{-1}(\bm d)$.
For $\bm d\in\mathcal B_n$, each node changes by at most $C_1\delta_{n,k}=o(1)$,
so the fixed margins in \eqref{eq: hetero primitive bounds} in Assumption \ref{assump: network} preserve all
parameter inequalities. Moreover,
$\|\bm d\|_2\le C_1\sqrt n\delta_{n,k}=o(\sqrt n/K_n)$.
Thus $\bm\nu(\bm d)\in\mathcal N_n$. For
$T(\bm d)=\mathsf F_A\{\bm\nu(\bm d),\bm\nu^*\}^{-1}\bm g_A$,
\eqref{eq: observed curvature and inverse} and $t_n=K_ns_n$ give
\[
 \|T(\bm d)\|_{\blinf}
 \le\frac C{s_n}\{K_n^{2k-1}+\sqrt n(K_n/s_n)^{k/2}\}
       \sqrt{t_n\log n}
 =C\delta_{n,k}\le C_1\delta_{n,k},
\]
where the equality uses \eqref{eq: hetero rates definitions}.
Continuity and positive definiteness of $\mathsf F_A$ make $T$ continuous.
Since $\bm g_A\in\mathcal T_n$ and $\mathsf F_A^{-1}$ preserves $\mathcal T_n$,
$T$ maps the compact convex ball $\mathcal B_n$ into itself.
Brouwer's theorem gives $\overline{\bm d}\in\mathcal B_n$ such that
$\overline{\bm d}=T(\overline{\bm d})$.
Define the estimator and its aligned error by $\overline{\bm\nu}=\bm\nu(\overline{\bm d}) =\bm\nu^*+\psi^{-1}(\overline{\bm d})$, $\bm\delta=\psi(\overline{\bm\nu}-\bm\nu^*)=\overline{\bm d}$.
Thus the fixed point is exactly the estimation error, and
$\bm\delta=T(\bm\delta)$.
Let $\bm\nu_v=\bm\nu^*+v(\overline{\bm\nu}-\bm\nu^*)$, $0\le v\le1$.
Integrating the derivative of the score along this segment gives
\[
 \nabla_\psi L_A(\overline{\bm\nu})-\nabla_\psi L_A(\bm\nu^*)
 =\int_0^1\nabla_\psi^2L_A(\bm\nu_v)\bm\delta\,dv.
\]
Since $\bm\delta\in\mathcal T_n$, we have
$\bm\Pi_{\mathcal T_n}\bm\delta=\bm\delta$ and
$(\bm I-\bm\Pi_{\mathcal T_n})\bm\delta=\bm0$.
Projecting the identity and using \eqref{eq: extended tangent Hessian}
and \eqref{eq: projected network score definition} gives
\[
 \bm\Pi_{\mathcal T_n}\nabla_\psi L_A(\overline{\bm\nu})
 =\bm g_A-\mathsf F_A(\overline{\bm\nu},\bm\nu^*)\bm\delta=\bm0,
\]
where the last equality follows from the fixed-point equation
$\bm\delta=\mathsf F_A(\overline{\bm\nu},\bm\nu^*)^{-1}\bm g_A$.
Thus $\overline{\bm\nu}$ has zero directional derivative along every
$\bm h$ with $\psi(\bm h)\in\mathcal T_n$, satisfies the parameter inequalities strictly,
and has $\|\bm\delta\|_{\blinf}=O_p(\delta_{n,k})$.
To show that it is a maximizer on $\mathcal N_n$, take any
$\bm\nu\in\mathcal N_n$. By the definition of this set, $\bm u=\psi(\bm\nu-\overline{\bm\nu}) =\psi(\bm\nu-\bm\nu^*)-\psi(\overline{\bm\nu}-\bm\nu^*) \in\mathcal T_n$.
The segment $\bm\nu_t=\overline{\bm\nu}+t(\bm\nu-\overline{\bm\nu})$,
$0\le t\le1$, stays in $\mathcal N_n$ by convexity.
Since $\bm\Pi_{\mathcal T_n}\bm u=\bm u$ and
$(\bm I-\bm\Pi_{\mathcal T_n})\bm u=\bm0$,
\eqref{eq: extended tangent Hessian} at equal endpoints and
\eqref{eq: observed curvature and inverse} give
\[
 \bm u^\top\nabla_\psi^2L_A(\bm\nu_t)\bm u
 =-\bm u^\top\mathsf F_A(\bm\nu_t,\bm\nu_t)\bm u
 \le-cs_n\|\bm u\|_2^2.
\]
Moreover, $\bm u^\top\nabla_\psi L_A(\overline{\bm\nu})
=\bm u^\top\bm\Pi_{\mathcal T_n}\nabla_\psi L_A(\overline{\bm\nu})=0$.
Taylor's formula therefore yields
\begin{align*}
 L_A(\bm\nu)-L_A(\overline{\bm\nu})
 &=\bm u^\top\nabla_\psi L_A(\overline{\bm\nu})
   +\int_0^1(1-t)\bm u^\top\nabla_\psi^2L_A(\bm\nu_t)\bm u\,dt\\
 &\le-cs_n\|\bm u\|_2^2\int_0^1(1-t)\,dt
 =-\frac{cs_n}{2}\|\bm u\|_2^2\le0.
\end{align*}
Hence $\overline{\bm\nu}$ is a maximizer of $L_A$ on $\mathcal N_n$.

To obtain a local maximizer over the centered parameter space, note that
$\bm Z^{*\top}\overline{\bm Z}$ is symmetric and, for sufficiently small
$\varepsilon$,
\[
 \begin{aligned}
 \lambda_{\min}(\bm Z^{*\top}\overline{\bm Z})
 &\ge c_Zn-\|\bm Z^*\|_2\|\overline{\bm Z}-\bm Z^*\|_F\\
 &\ge c_Zn-R_Z\varepsilon n/K_n
 \ge(c_Z-R_Z\varepsilon)n>0.
 \end{aligned}
\]
Thus the orthogonal Procrustes minimizer $\bm O(\bm Z)$ of
$\|\bm Z\bm O-\bm Z^*\|_F$ is unique and continuous near
$\overline{\bm Z}$, with $\bm O(\overline{\bm Z})=\bm I_r$.
For any centered $\bm\nu=(\bm\alpha,\bm Z)$ sufficiently close to
$\overline{\bm\nu}$, its aligned version
$(\bm\alpha,\bm Z\bm O(\bm Z))$ is still close to $\overline{\bm\nu}$
and satisfies the defining linear constraints of $\mathcal N_n$.
Since $\overline{\bm\nu}$ lies in its relative interior, this aligned
version belongs to $\mathcal N_n$.
Orthogonal alignment preserves centering, row norms and all latent
inner products, so it leaves $L_A$ unchanged.
The maximization on $\mathcal N_n$ therefore implies that
$\overline{\bm\nu}$ is a local maximizer over the centered parameter space.

To prove the second- and fourth-moment rates, first consider the linear term
$\bm\delta^{(1)}=\mathsf F_{0,*}^{-1}\bm g_A$, with node blocks $\bm\delta_i^{(1)}\in\mathbb R^{r+1}$.
For an edge $e=(i,j)$, $i<j$, write
$a_e=a_{ij}$, $p_e^*=p_{ij}^*$, $w_e^*=w_{ij}^*$ and
$\bm\phi_e^*=\bm\phi_{ij}(\bm\nu^*)$.
For coordinate $q\in[n(r+1)]$, let $c_{e,q}$ be coordinate $q$ of
$\mathsf F_{0,*}^{-1}\bm\Pi_{\mathcal T_n}\bm\phi_e^*$, so $(\bm\delta^{(1)})_q=\sum_e c_{e,q}(a_e-p_e^*)$.
The block bounds for the inverse and projector in Lemmas~\ref{lem: primitive network Fisher} and \ref{lem: local network curvature}, together with
$\|\bm\phi_e^*\|_{\blinf}\le C$, give
$\max_e|c_{e,q}|\le CK_n/s_n$.
Edge independence and $\operatorname{Var}_A(a_e)=w_e^*$ give the
information identity
\[
 \operatorname{Cov}_A\{\nabla_\psi L_A(\bm\nu^*)\}
 =\sum_e w_e^*\bm\phi_e^*\bm\phi_e^{*\top}
 =-\mathbb E_A\nabla_\psi^2L_A(\bm\nu^*),
\]
where the last equality follows by taking expectation in the Hessian:
the residual terms proportional to $a_e-p_e^*$ have mean zero.
Using \eqref{eq: extended tangent Hessian} at the true endpoints and
$\bm\Pi_{\mathcal T_n}(\bm I-\bm\Pi_{\mathcal T_n})=\bm0$, we obtain
\[
 \operatorname{Cov}_A(\bm g_A)
 =\bm\Pi_{\mathcal T_n}
   \left(\sum_e w_e^*\bm\phi_e^*\bm\phi_e^{*\top}\right)
   \bm\Pi_{\mathcal T_n}
 =\bm\Pi_{\mathcal T_n}\mathsf F_{0,*}\bm\Pi_{\mathcal T_n}.
\]
The definition of $\mathsf F_{0,*}$ also gives
$\mathsf F_{0,*}\bm\Pi_{\mathcal T_n}
=\bm\Pi_{\mathcal T_n}\mathsf F_{0,*}$, so its inverse commutes with the
projection. Since $\bm\delta^{(1)}=\mathsf F_{0,*}^{-1}\bm g_A$,
\[
 \operatorname{Cov}_A(\bm\delta^{(1)})
 =\mathsf F_{0,*}^{-1}\operatorname{Cov}_A(\bm g_A)\mathsf F_{0,*}^{-1}
 =\bm\Pi_{\mathcal T_n}\mathsf F_{0,*}^{-1}\bm\Pi_{\mathcal T_n}.
\]
For the $q$th standard basis vector $\bm e_q\in\mathbb R^{n(r+1)}$,
independence and this covariance formula give
\begin{align*}
 \sum_e w_e^*c_{e,q}^2
 =\operatorname{Var}_A\{(\bm\delta^{(1)})_q\}
 =\bm e_q^\top\bm\Pi_{\mathcal T_n}\mathsf F_{0,*}^{-1}
       \bm\Pi_{\mathcal T_n}\bm e_q\le\|\mathsf F_{0,*}^{-1}\|_2
       \|\bm\Pi_{\mathcal T_n}\bm e_q\|_2^2
 \le C/s_n,
\end{align*}
where \eqref{eq: population structure} bounds the inverse spectral norm
by $C/s_n$, and $\|\bm\Pi_{\mathcal T_n}\bm e_q\|_2\le1$.
Expanding the fourth power of the centered independent sum yields
\begin{align*}
 \mathbb E_A\{(\bm\delta^{(1)})_q^4\}
 &=\sum_e c_{e,q}^4\mathbb E_A(a_e-p_e^*)^4
   +6\sum_{e<f}c_{e,q}^2c_{f,q}^2w_e^*w_f^*\\
 &\le(\max_e|c_{e,q}|^2)\sum_e w_e^*c_{e,q}^2
       +3\left(\sum_e w_e^*c_{e,q}^2\right)^2
 \le C(K_n^2s_n^{-3}+s_n^{-2})\le Cs_n^{-2},
\end{align*}
where $e<f$ counts each pair of distinct edges once,
$\mathbb E_A(a_e-p_e^*)^4\le w_e^*$, and $K_n^2/s_n\to0$ by
\eqref{eq: basic growth consequences}.
Since $\mathbb E_A(a_e-p_e^*)=0$, the coordinate representation gives
$\mathbb E_A(\bm\delta^{(1)})_q=0$. Hence its second moment equals its
variance, which is at most $C/s_n$, and
\[
 \mathbb E_A\!\left[\frac1n\|\bm\delta^{(1)}\|_2^2\right]
 =\frac1n\sum_{q=1}^{n(r+1)}\mathbb E_A\{(\bm\delta^{(1)})_q^2\}
 \le\frac{C(r+1)}{s_n}\le\frac C{s_n}.
\]
Writing $\delta_{i,a}^{(1)}$ for coordinate $a$ of node block
$\bm\delta_i^{(1)}$, Cauchy--Schwarz gives
$\|\bm\delta_i^{(1)}\|_2^4
\le(r+1)\sum_{a=1}^{r+1}(\delta_{i,a}^{(1)})^4$. Therefore,
\[
 \mathbb E_A\!\left[\frac1n\sum_i\|\bm\delta_i^{(1)}\|_2^4\right]
 \le\frac{r+1}{n}\sum_i\sum_{a=1}^{r+1}
       \mathbb E_A(\delta_{i,a}^{(1)})^4
 \le\frac C{s_n^2},
\]
where $r$ is fixed in both bounds. Applying Markov's inequality gives
\[
 n^{-1}\|\bm\delta^{(1)}\|_2^2=O_p(s_n^{-1}),\qquad
 n^{-1}\sum_i\|\bm\delta_i^{(1)}\|_2^4=O_p(s_n^{-2}).
\]
Bernstein's inequality and a union bound also give
\[
 \|\bm\delta^{(1)}\|_{\blinf}
 =O_p\left(\sqrt{\log n/s_n}+K_n\log n/s_n\right)
 =O_p(\sqrt{\log n/s_n}),
\]
since $K_n^2\log n/s_n\to0$.

To control $\bm\delta-\bm\delta^{(1)}$, define
\[
 \bm\Delta_F=\mathsf F_{0,*}^{-1/2}
 \{\mathsf F_A(\overline{\bm\nu},\bm\nu^*)-\mathsf F_{0,*}\}
 \mathsf F_{0,*}^{-1/2},\qquad
 \bm g_A^\circ=\mathsf F_{0,*}^{-1/2}\bm g_A.
\]
The fixed-point construction above gives
$\|\bm\delta\|_{\blinf}=O_p(\delta_{n,k})$.
Substituting this bound into \eqref{eq: population Hessian variation} yields
\[
 \|\mathsf F_0(\overline{\bm\nu},\bm\nu^*)-\mathsf F_{0,*}\|_2
 \le Ct_n\|\bm\delta\|_{\blinf}=O_p(t_n\delta_{n,k}).
\]
For the random difference, \eqref{eq: adjacency spectral bound} and \eqref{eq: exact Hessian noise} give
\[
 \|\mathsf F_A(\overline{\bm\nu},\bm\nu^*)
       -\mathsf F_0(\overline{\bm\nu},\bm\nu^*)\|_2
 \le\|\bm A-\bm P^*\|_2=O_p(\sqrt{t_n}).
\]
Combining these two bounds gives
$\|\mathsf F_A(\overline{\bm\nu},\bm\nu^*)-\mathsf F_{0,*}\|_2
=O_p(t_n\delta_{n,k}+\sqrt{t_n})$.
The curvature bound gives $\|\mathsf F_{0,*}^{-1/2}\|_2\le C/\sqrt{s_n}$, hence
\[
 \|\bm\Delta_F\|_2
 =O_p\{K_n\delta_{n,k}+\sqrt{K_n/s_n}\}
 =O_p(K_n\delta_{n,k})=o_p(1).
\]
Here \eqref{eq: hetero rates definitions} implies
$\delta_{n,k}\ge\sqrt{K_n\log n/s_n}$, and
$K_n\delta_{n,k}\to0$ follows from \eqref{eq: basic growth consequences}.
The covariance formula above gives
$\operatorname{Cov}_A(\bm g_A^\circ)=\bm\Pi_{\mathcal T_n}$, so
$\mathbb E_A\|\bm g_A^\circ\|_2^2
=\operatorname{tr}(\bm\Pi_{\mathcal T_n})\le n(r+1)$ and
$\|\bm g_A^\circ\|_2=O_p(\sqrt n)$.
Factoring the integrated Hessian in the score identity gives
\begin{align*}
 \bm\delta=\mathsf F_{0,*}^{-1/2}(\bm I+\bm\Delta_F)^{-1}\bm g_A^\circ,\quad
 \bm\delta-\bm\delta^{(1)}=\mathsf F_{0,*}^{-1/2}\{(\bm I+\bm\Delta_F)^{-1}-\bm I\}\bm g_A^\circ.
\end{align*}
On the event $\|\bm\Delta_F\|_2\le1/2$, whose probability tends to one,
$\|(\bm I+\bm\Delta_F)^{-1}-\bm I\|_2
=\|(\bm I+\bm\Delta_F)^{-1}\bm\Delta_F\|_2\le2\|\bm\Delta_F\|_2$.
Consequently,
\[
 \frac1n\|\bm\delta-\bm\delta^{(1)}\|_2^2
 \le\frac C{s_n}\|\bm\Delta_F\|_2^2
          \frac{\|\bm g_A^\circ\|_2^2}{n}
 =O_p(K_n^2\delta_{n,k}^2/s_n)=o_p(s_n^{-1}).
\]
The inequality $\|\bm\delta\|_2^2\le
2\|\bm\delta^{(1)}\|_2^2+2\|\bm\delta-\bm\delta^{(1)}\|_2^2$
therefore gives $n^{-1}\|\bm\delta\|_2^2=O_p(s_n^{-1})$.
Since $\delta_{n,k}\ge\sqrt{\log n/s_n}$, the maximum bounds for
$\bm\delta$ and $\bm\delta^{(1)}$ imply
$\|\bm\delta-\bm\delta^{(1)}\|_{\blinf}=O_p(\delta_{n,k})$.
Thus Assumption~\ref{assump: network growth} gives
\begin{align*}
 \frac1n\sum_i\|\bm\delta_i-\bm\delta_i^{(1)}\|_2^4
 \le\|\bm\delta-\bm\delta^{(1)}\|_{\blinf}^2
        \frac1n\|\bm\delta-\bm\delta^{(1)}\|_2^2
 =O_p(K_n^2\delta_{n,k}^4/s_n)=o_p(s_n^{-2}).
\end{align*}
Finally,
\[
 \frac1n\sum_i\|\bm\delta_i\|_2^4
 \le\frac8n\sum_i\|\bm\delta_i^{(1)}\|_2^4
    +\frac8n\sum_i\|\bm\delta_i-\bm\delta_i^{(1)}\|_2^4
 =O_p(s_n^{-2}),
\]
which completes all three bounds in \eqref{eq: exact local root max rate}.
\hfill$\blacksquare$

\begin{lemma}\label{lem: latent matrix calculus}
Let $\bm Z=\bm Z_n$ and $\bm E=\bm E_n$ be sequences of real
$n\times r$ matrices, where $r\ge1$ is fixed. Suppose
$c\sqrt n\le\sigma_r(\bm Z)\le\|\bm Z\|_2\le C\sqrt n$
and $\|\bm E\|_F=o(\sqrt n)$. Then, for all sufficiently large $n$,
\begin{align}
\label{eq: latent matrix perturbations}
 \|\bm M_{\bm Z+\bm E}-\bm M_{\bm Z}\|_F
 \le C\|\bm E\|_F/\sqrt n.
\end{align}
\end{lemma}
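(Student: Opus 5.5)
The plan is to reproduce the argument used for $\|\bm M_{\widehat{\bm Z}}-\bm M_{\bm Z^*}\|_F$ in the proof of Theorem~\ref{thm: first stage}, now stated for generic sequences. Put $\widetilde{\bm Z}=\bm Z+\bm E$. The first step is a singular-value bound for $\widetilde{\bm Z}$. Weyl's inequality gives $\sigma_r(\widetilde{\bm Z})\ge\sigma_r(\bm Z)-\|\bm E\|_2\ge c\sqrt n-\|\bm E\|_F$. Since $\|\bm E\|_F=o(\sqrt n)$, this yields $\sigma_r(\widetilde{\bm Z})\ge c\sqrt n/2$ for all sufficiently large $n$. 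In particular $\widetilde{\bm Z}$ has full column rank $r$, so both $\bm M_{\bm Z}$ and $\bm M_{\widetilde{\bm Z}}$ are orthogonal projectors onto complements of $r$-dimensional subspaces.

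The second step uses the standard identity for two rank-$r$ orthogonal projectors $\bm\Pi_1=\bm I-\bm M_{\widetilde{\bm Z}}$ and $\bm\Pi_2=\bm I-\bm M_{\bm Z}$. Expanding $\|\bm\Pi_1-\bm\Pi_2\|_F^2=2r-2\tr(\bm\Pi_1\bm\Pi_2)$, and using $\tr(\bm\Pi_1)=r$ together with $\tr\{\bm\Pi_1(\bm I-\bm\Pi_2)\}=\|\bm M_{\bm Z}\bm\Pi_1\|_F^2$, gives
\[
\|\bm M_{\widetilde{\bm Z}}-\bm M_{\bm Z}\|_F^2=2\|\bm M_{\bm Z}\widetilde{\bm Q}\|_F^2,
\qquad \widetilde{\bm Q}=\widetilde{\bm Z}(\widetilde{\bm Z}^\top\widetilde{\bm Z})^{-1/2},
\]
where $\widetilde{\bm Q}$ has orthonormal columns spanning $\operatorname{col}(\widetilde{\bm Z})$.

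The third step exploits $\bm M_{\bm Z}\bm Z=\bm0$, which gives $\bm M_{\bm Z}\widetilde{\bm Z}=\bm M_{\bm Z}\bm E$. Hence
\[
\|\bm M_{\bm Z}\widetilde{\bm Q}\|_F\le\|\bm M_{\bm Z}\|_2\|\bm E\|_F\|(\widetilde{\bm Z}^\top\widetilde{\bm Z})^{-1/2}\|_2\le\|\bm E\|_F/\sigma_r(\widetilde{\bm Z})\le2\|\bm E\|_F/(c\sqrt n).
\]
Combining this with the second step gives \eqref{eq: latent matrix perturbations} with $C=2\sqrt2/c$.

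No step is a real obstacle. The only point needing care is that the rank of $\widetilde{\bm Z}$ is preserved, and the first step handles this through the hypothesis $\|\bm E\|_F=o(\sqrt n)$. The upper bound $\|\bm Z\|_2\le C\sqrt n$ is not needed for this argument.
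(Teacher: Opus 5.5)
Your argument is correct, but it is not the route the paper takes for this lemma.

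The paper interpolates along $\bm Z_v=\bm Z+v\bm E$, $v\in[0,1]$:
\begin{itemize}
\item Weyl's inequality keeps $\sigma_r(\bm Z_v)\ge(c/2)\sqrt n$ uniformly on the segment.
\item The product and inverse rules give $d\bm M_{\bm Z_v}/dv=-\bm M_{\bm Z_v}\bm E\bm Z_v^\dagger-(\bm Z_v^\dagger)^\top\bm E^\top\bm M_{\bm Z_v}$.
\item This derivative is bounded in Frobenius norm by $2\|\bm E\|_F\|\bm Z_v^\dagger\|_2\le C\|\bm E\|_F/\sqrt n$, and integrating in $v$ finishes the proof.
\end{itemize}

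You replace the calculus with two ingredients: the exact equal-rank identity $\|\bm M_{\bm Z+\bm E}-\bm M_{\bm Z}\|_F^2=2\|\bm M_{\bm Z}\widetilde{\bm Q}\|_F^2$, and the annihilation $\bm M_{\bm Z}\widetilde{\bm Z}=\bm M_{\bm Z}\bm E$. This is precisely the argument the paper itself uses for $\|\bm M_{\widehat{\bm Z}}-\bm M_{\bm Z^*}\|_F$ in the proof of Theorem~\ref{thm: first stage}.

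Your version is shorter and purely algebraic. It needs the singular-value lower bound only at the endpoint $\bm Z+\bm E$ rather than along the whole segment. It also gives the explicit constant $2\sqrt2/c$, whereas the paper's bound works out to $4/c$.

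The paper's route costs a derivative computation, but it yields an explicit first-order formula for the perturbation of the projector as a by-product. That would matter if one needed a linearization of $\bm M_{\bm Z}$ rather than just a Lipschitz-type bound; for the inequality as stated, that extra structure is not used.

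Your remark that $\|\bm Z\|_2\le C\sqrt n$ is superfluous is right. It is equally superfluous in the paper's proof, since $\|\bm Z_v^\dagger\|_2=1/\sigma_r(\bm Z_v)$.
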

\noindent\textbf{Proof of Lemma~\ref{lem: latent matrix calculus}.}
Set $\bm Z_v=\bm Z+v\bm E$ and $\bm G_v=\bm Z_v^\top\bm Z_v$
for $v\in[0,1]$. Weyl's inequality and $\|\bm E\|_2\le\|\bm E\|_F=o(\sqrt n)$ give
\[
 \sigma_r(\bm Z_v)\ge\sigma_r(\bm Z)-\|\bm E\|_F\ge(c/2)\sqrt n
\]
uniformly in $v$ for sufficiently large $n$. Thus $\bm Z_v$ has full
column rank and $\bm G_v$ is invertible throughout the segment. In particular,
$\bm Z_v^\dagger=\bm G_v^{-1}\bm Z_v^\top$ and $\bm M_{\bm Z_v}=\bm I_n-\bm Z_v\bm Z_v^\dagger$. We differentiate these matrix functions with respect to the scalar parameter $v$, holding $\bm Z$ and $\bm E$ fixed. The product and inverse differentiation rules give
\[
 \frac{d\bm G_v}{dv}=\bm E^\top\bm Z_v+\bm Z_v^\top\bm E,\quad
 \frac{d\bm G_v^{-1}}{dv}
 =-\bm G_v^{-1}(\bm E^\top\bm Z_v+\bm Z_v^\top\bm E)\bm G_v^{-1}.
\]
Substituting into the derivative of $\bm Z_v^\dagger=\bm G_v^{-1}\bm Z_v^\top$ gives
\begin{align*}
 \frac{d\bm Z_v^\dagger}{dv}
 &=\frac{d\bm G_v^{-1}}{dv}\bm Z_v^\top+\bm G_v^{-1}\bm E^\top=-\bm G_v^{-1}(\bm E^\top\bm Z_v+\bm Z_v^\top\bm E)
       \bm Z_v^\dagger+\bm G_v^{-1}\bm E^\top\\
 &=-\bm Z_v^\dagger\bm E\bm Z_v^\dagger
       +\bm G_v^{-1}\bm E^\top(\bm I_n-\bm Z_v\bm Z_v^\dagger)=-\bm Z_v^\dagger\bm E\bm Z_v^\dagger
       +\bm G_v^{-1}\bm E^\top\bm M_{\bm Z_v}.
\end{align*}
Consequently, using $\bm Z_v\bm G_v^{-1}=(\bm Z_v^\dagger)^\top$,
\begin{align*}
 \frac{d\bm M_{\bm Z_v}}{dv}
 &=-\bm E\bm Z_v^\dagger-\bm Z_v\frac{d\bm Z_v^\dagger}{dv}=-\bm E\bm Z_v^\dagger+\bm Z_v\bm Z_v^\dagger\bm E\bm Z_v^\dagger
       -\bm Z_v\bm G_v^{-1}\bm E^\top\bm M_{\bm Z_v}\\
 &=-\bm M_{\bm Z_v}\bm E\bm Z_v^\dagger
       -(\bm Z_v^\dagger)^\top\bm E^\top\bm M_{\bm Z_v}.
\end{align*}
The singular-value bound and the projection property imply, uniformly in $v$, $\|\bm Z_v^\dagger\|_2\le Cn^{-1/2}$, $\|\bm M_{\bm Z_v}\|_2\le1$.
Using $\|\bm A\bm B\bm C\|_F\le\|\bm A\|_2\|\bm B\|_F\|\bm C\|_2$
in the projector derivative formula, we obtain
\begin{align*}
 \left\|\frac{d\bm M_{\bm Z_v}}{dv}\right\|_F
 &\le2\|\bm M_{\bm Z_v}\|_2\|\bm E\|_F\|\bm Z_v^\dagger\|_2
 \le C\|\bm E\|_F/\sqrt n.
\end{align*}
Finally, integrating from $v=0$ to $v=1$ gives
\begin{align*}
 \|\bm M_{\bm Z+\bm E}-\bm M_{\bm Z}\|_F
 &\le\int_0^1\left\|\frac{d\bm M_{\bm Z_v}}{dv}\right\|_F\,dv
 \le C\|\bm E\|_F/\sqrt n,
\end{align*}
which proves \eqref{eq: latent matrix perturbations}.
\hfill$\blacksquare$

\begin{lemma}\label{lem: response score representation}
Suppose Assumptions~\ref{assump: network}--\ref{assump: response errors} hold. Then
$\mathbb E_\epsilon\bm q_y(\bm\zeta^*;\bm\nu^*)=\bm0$.
With $\bm C_n$ defined in Section~\ref{sec: theory},
\begin{align}
\label{eq: reduced score representation}
 \sqrt n\bm q_y(\bm\zeta^*;\bm\nu^*)
 =\frac1{\sqrt n}\begin{pmatrix}
 \tau^*\bm C_n^\top\bm\epsilon\\
 -\tfrac12\sum_i(\epsilon_i^2-\sigma^{*2})
 \end{pmatrix}+\bm r_{q,n}, \quad
 \mathbb E_\epsilon\|\bm r_{q,n}\|_2^2\le CK_n/n.
\end{align}
The entries of $\bm C_n$ are bounded. For any deterministic $\bm u\in\mathbb R^n$ and symmetric
$\bm B\in\mathbb R^{n\times n}$,
\begin{align}
\label{eq: linear quadratic variance bound}
 \operatorname{Var}_\epsilon(\bm u^\top\bm\epsilon+
 \bm\epsilon^\top\bm B\bm\epsilon)
 \le C(\|\bm u\|_2^2+\|\bm B\|_F^2).
\end{align}
\end{lemma}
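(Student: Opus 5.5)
The proof will evaluate the score \eqref{eq: full response score} exactly at the truth and isolate a single quadratic form in $\bm\epsilon$. At $(\bm\zeta^*;\bm\nu^*)$ we have $\bm r_y=\bm S^*\bm y-\bm X\bm\beta^*-\bm Z^*\bm\eta^*=\bm\epsilon$ and $\bm y=\bm\mu_y^*+\bm S^{*-1}\bm\epsilon$. Hence $\bm W^*\bm y=\bm W^*\bm\mu_y^*+\bm Q\bm\epsilon$ with $\bm Q=\bm W^*\bm S^{*-1}$. Substituting, the coordinates of $n\bm q_y(\bm\zeta^*;\bm\nu^*)$ are:
\begin{itemize}
\item the $\bm\beta$ block $\tau^*\bm X^\top\bm\epsilon$;
\item the $\bm\eta$ block $\tau^*\bm Z^{*\top}\bm\epsilon$;
\item the $\gamma$ coordinate $\tau^*(\bm W^*\bm\mu_y^*)^\top\bm\epsilon+\{\tau^*\bm\epsilon^\top\bm Q\bm\epsilon-\tr(\bm Q)\}$;
\item the $\tau$ coordinate $-\tfrac12\sum_i(\epsilon_i^2-\sigma^{*2})$.
\end{itemize}
Since $\mathbb E_\epsilon\bm\epsilon^\top\bm Q\bm\epsilon=\sigma^{*2}\tr(\bm Q)$ and $\tau^*\sigma^{*2}=1$, every coordinate has mean zero, which proves $\mathbb E_\epsilon\bm q_y(\bm\zeta^*;\bm\nu^*)=\bm0$.

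The linear pieces assemble to $\tau^*\bm C_n^\top\bm\epsilon$, with $\bm C_n=(\bm X,\bm Z^*,\bm W^*\bm\mu_y^*)$. This leaves
\[
\bm r_{q,n}=n^{-1/2}\bigl\{\tau^*\bm\epsilon^\top\bm Q\bm\epsilon-\tr(\bm Q)\bigr\}\bm e_{p+r+1},
\]
supported only in the $\gamma$ coordinate.

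For boundedness of $\bm C_n$, the entries of $\bm X$ and $\bm Z^*$ are bounded by Assumptions~\ref{assump: response} and~\ref{assump: network}. Since $\bm W^*$ is row stochastic, $\|\bm W^*\|_\infty=1$, and so $\|\bm S^{*-1}\|_\infty\le\sum_k|\gamma^*|^k\le\delta_\gamma^{-1}$. This gives $\|\bm W^*\bm\mu_y^*\|_\infty\le\|\bm\mu_y^*\|_\infty\le\delta_\gamma^{-1}\|\bm X\bm\beta^*+\bm Z^*\bm\eta^*\|_\infty\le C$, using compactness of $\bm\zeta^*$.

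The variance bound \eqref{eq: linear quadratic variance bound} is the standard computation. Write $\operatorname{Var}(\bm u^\top\bm\epsilon+\bm\epsilon^\top\bm B\bm\epsilon)\le2\sigma^{*2}\|\bm u\|_2^2+2\operatorname{Var}(\bm\epsilon^\top\bm B\bm\epsilon)$, and use
\[
\operatorname{Var}(\bm\epsilon^\top\bm B\bm\epsilon)=(\mu_4-3\sigma^{*4})\sum_ib_{ii}^2+2\sigma^{*4}\|\bm B\|_F^2\le C\|\bm B\|_F^2 .
\]
Here $\mu_4\le C$ follows from the sub-Gaussian condition in Assumption~\ref{assump: response errors} together with compactness of $\tau^*$.

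The main step is the remainder bound, which reduces to showing $\|\bm Q\|_F^2\le CK_n$. Applying \eqref{eq: linear quadratic variance bound} with $\bm B=(\bm Q+\bm Q^\top)/2$, whose Frobenius norm is at most $\|\bm Q\|_F$, gives $\mathbb E_\epsilon\|\bm r_{q,n}\|_2^2\le C\|\bm Q\|_F^2/n$. A spectral bound on $\bm S^{*-1}$ via the symmetrization $\bm D^{-1/2}\bm P\bm D^{-1/2}$ would lose an extra factor $\sqrt{K_n}$, so I will avoid it. Instead, expand $\bm Q=\sum_{k\ge0}\gamma^{*k}(\bm W^*)^{k+1}$ and use row stochasticity. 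Each row of $(\bm W^*)^{k+1}$ is a row of $\bm W^*$ multiplied by the stochastic matrix $(\bm W^*)^k$, so its entries are convex combinations of entries of $\bm W^*$. By \eqref{eq: envelope implications}, each entry is at most $CK_n/n$, and the diagonal of $\bm W^*$ is zero. Each row sums to one, so its squared Euclidean norm is at most $CK_n/n$, and hence $\|(\bm W^*)^{k+1}\|_F\le\sqrt{CK_n}$. Summing the geometric series gives $\|\bm Q\|_F\le\delta_\gamma^{-1}\sqrt{CK_n}$, and therefore $\mathbb E_\epsilon\|\bm r_{q,n}\|_2^2\le CK_n/n$, which completes \eqref{eq: reduced score representation}.
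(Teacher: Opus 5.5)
Your proposal is correct and follows the paper's proof essentially step for step. It uses the same substitution at the truth, which leaves only the centered quadratic form $\tau^*\{\bm\epsilon^\top\bm Q\bm\epsilon-\sigma^{*2}\tr(\bm Q)\}$ in the $\gamma$ coordinate, the same fourth-moment variance formula, and the same geometric-series expansion of $\bm Q$ combined with the entrywise bound $CK_n/n$ on powers of $\bm W^*$ to get $\|\bm Q\|_F^2\le CK_n$. The only slip is in your justification of that entrywise bound. Entries of $(\bm W^*)^{k+1}$ are convex combinations of entries of $\bm W^*$ when you factor it as $(\bm W^*)^k\bm W^*$ (row $i$ of $(\bm W^*)^k$ times $\bm W^*$), not as a row of $\bm W^*$ times $(\bm W^*)^k$ as you wrote; the bound still holds with that factorization or by induction, which is how the paper argues.
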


\noindent\textbf{Proof of Lemma~\ref{lem: response score representation}.}
We first prove bounds for any nonnegative row-stochastic matrix $\bm T$
with $\max_{i,j}T_{ij}\le CK_n/n$. If
$\max_{k,\ell}(\bm T^j)_{k\ell}\le CK_n/n$, then $(\bm T^{j+1})_{i\ell} =\sum_k T_{ik}(\bm T^j)_{k\ell} \le\frac{CK_n}{n}\sum_kT_{ik}=\frac{CK_n}{n}$.
Starting from $j=1$, induction gives the same entry bound for every positive
power of $\bm T$. Therefore,
\begin{equation}
\begin{aligned}
\label{eq: stochastic powers resolvents}
 \|\bm T^j\|_2&\le\|\bm T^j\|_F\le C\sqrt{K_n},\quad j\ge1,\\
 \sup_{\gamma\in\Gamma_\gamma}\|\bm T(\bm I_n-\gamma\bm T)^{-1}\|_F
 &\le C\sqrt{K_n},\quad
 \sup_{\gamma\in\Gamma_\gamma}\|(\bm I_n-\gamma\bm T)^{-1}\|_2
 \le C\sqrt{K_n},\\
 \sup_{\gamma\in\Gamma_\gamma}\|(\bm I_n-\gamma\bm T)^{-1}\|_{\infty}
 &\le C,\quad
 \sup_{\gamma\in\Gamma_\gamma}\|(\bm I_n-\gamma\bm T)^{-1}\|_F
 \le\sqrt n+C\sqrt{K_n}.
\end{aligned}
\end{equation}
For the first line, use
$\sum_{i,\ell}\{(\bm T^j)_{i\ell}\}^2\le CK_n$.
For the remaining bounds, $\|\bm T^j\|_\infty=1$ and
$|\gamma|\le1-\delta_\gamma$ give the convergent geometric expansions
\[
 (\bm I_n-\gamma\bm T)^{-1}
 =\bm I_n+\sum_{j=1}^\infty\gamma^j\bm T^j,\quad
 \bm T(\bm I_n-\gamma\bm T)^{-1}
 =\sum_{j=0}^\infty\gamma^j\bm T^{j+1}.
\]
The triangle inequality and the first line of
\eqref{eq: stochastic powers resolvents} yield
\begin{align*}
 \|\bm T(\bm I_n-\gamma\bm T)^{-1}\|_F
 &\le C\sqrt{K_n}\sum_{j=0}^\infty|\gamma|^j\le C\sqrt{K_n},\\
 \|(\bm I_n-\gamma\bm T)^{-1}-\bm I_n\|_F
 &\le C\sqrt{K_n}\sum_{j=1}^\infty|\gamma|^j\le C\sqrt{K_n}.
\end{align*}
The second bound gives the stated spectral and Frobenius bounds after
adding $\|\bm I_n\|_2=1$ and $\|\bm I_n\|_F=\sqrt n$, respectively,
using $K_n\ge1$. Finally,
$\|(\bm I_n-\gamma\bm T)^{-1}\|_\infty
\le\sum_{j=0}^\infty|\gamma|^j\le\delta_\gamma^{-1}\le C$.

At the true parameters, $\bm r_y(\bm\zeta^*;\bm\nu^*)=\bm\epsilon$, $\bm y=\bm\mu_y^*+\bm S^{*-1}\bm\epsilon$, $\bm W^*\bm y=\bm W^*\bm\mu_y^*+\bm Q\bm\epsilon$.
Substituting these identities into \eqref{eq: full response score} and using
$\tau^*\sigma^{*2}=1$ gives
\[
 \sqrt n\bm q_y(\bm\zeta^*;\bm\nu^*)
 =\frac1{\sqrt n}\begin{pmatrix}
 \tau^*\bm X^\top\bm\epsilon\\
 \tau^*\bm Z^{*\top}\bm\epsilon\\
 \tau^*(\bm W^*\bm\mu_y^*)^\top\bm\epsilon
 +\tau^*\{\bm\epsilon^\top\bm Q\bm\epsilon-\sigma^{*2}\tr\bm Q\}\\
 -\tfrac12\sum_i(\epsilon_i^2-\sigma^{*2})
 \end{pmatrix}.
\]
The linear terms in the first three coordinates form
$\tau^*\bm C_n^\top\bm\epsilon$. Thus the decomposition in
\eqref{eq: reduced score representation} holds with
\[
 \bm r_{q,n}
 =\frac{\tau^*}{\sqrt n}
 \{\bm\epsilon^\top\bm Q\bm\epsilon-\sigma^{*2}\tr\bm Q\}
 \bm e_\gamma,
\]
where $\bm e_\gamma\in\mathbb R^{p+r+2}$ is the standard basis vector
corresponding to $\gamma$ in $\bm\zeta$.
Since $\mathbb E_\epsilon\bm\epsilon=\bm0$ and
$\mathbb E_\epsilon(\bm\epsilon\bm\epsilon^\top)=\sigma^{*2}\bm I_n$,
\[
 \mathbb E_\epsilon(\bm\epsilon^\top\bm Q\bm\epsilon)
 =\tr\{\bm Q\,\mathbb E_\epsilon(\bm\epsilon\bm\epsilon^\top)\}
 =\sigma^{*2}\tr\bm Q,\quad
 \mathbb E_\epsilon(\epsilon_i^2-\sigma^{*2})=0.
\]
Every coordinate in the displayed score therefore has expectation zero,
so $\mathbb E_\epsilon\bm q_y(\bm\zeta^*;\bm\nu^*)=\bm0$.

To bound the entries of $\bm C_n$, use
$\bm\mu_y^*=\bm S^{*-1}(\bm X\bm\beta^*+\bm Z^*\bm\eta^*)$.
The inverse infinity-norm bound in \eqref{eq: stochastic powers resolvents}
and row stochasticity of $\bm W^*$ give
\begin{align*}
 \|\bm\mu_y^*\|_\infty
 &\le\delta_\gamma^{-1}
 \bigl(\|\bm X\|_{2,\infty}\|\bm\beta^*\|_2
       +\|\bm Z^*\|_{2,\infty}\|\bm\eta^*\|_2\bigr)\le C,\\
 \max_i\|(\bm C_n)_{i\cdot}\|_2^2
 &\le\|\bm X\|_{2,\infty}^2+\|\bm Z^*\|_{2,\infty}^2
       +\|\bm W^*\|_\infty^2\|\bm\mu_y^*\|_\infty^2\le C,
\end{align*}
where $\|\bm W^*\|_\infty=1$ and the design rows, latent rows and true
response coefficients are bounded by the assumptions.

For symmetric $\bm B$, decompose its centered quadratic form as
\[
 \bm\epsilon^\top\bm B\bm\epsilon-\sigma^{*2}\tr(\bm B)
 =\sum_i B_{ii}(\epsilon_i^2-\sigma^{*2})
 +2\sum_{i<j}B_{ij}\epsilon_i\epsilon_j.
\]
Independence and zero means make distinct off-diagonal terms uncorrelated and
make every off-diagonal term uncorrelated with the diagonal terms and with any
linear form. For symmetric $\bm B,\bm D$ and $\bm u\in\mathbb R^n$, it follows that
\begin{align}
\label{eq: linear quadratic covariance identities}
 \operatorname{Cov}_\epsilon(\bm u^\top\bm\epsilon,
 \bm\epsilon^\top\bm B\bm\epsilon)
 &=\mu_3\sum_i u_iB_{ii},\notag\\
 \operatorname{Cov}_\epsilon(\bm\epsilon^\top\bm B\bm\epsilon,
 \bm\epsilon^\top\bm D\bm\epsilon)
 &=(\mu_4-\sigma^{*4})\sum_i B_{ii}D_{ii}
 +4\sigma^{*4}\sum_{i<j}B_{ij}D_{ij}\notag\\
 &=2\sigma^{*4}\tr(\bm B\bm D)
 +(\mu_4-3\sigma^{*4})\sum_i B_{ii}D_{ii}.
\end{align}
Combining these identities with
$\operatorname{Cov}_\epsilon(\bm u^\top\bm\epsilon,\bm v^\top\bm\epsilon)
=\sigma^{*2}\bm u^\top\bm v$ gives the covariance of any linear--quadratic vector.
Assumption~\ref{assump: response} gives $\tau^*\ge c>0$, hence
$\sigma^{*2}\le c^{-1}$. Using $x^2\le2e^x$ for $x\ge0$ and the
exponential-moment bound in Assumption~\ref{assump: response errors}, followed
by Cauchy--Schwarz, gives
\begin{align*}
 \mu_4
 &\le2C_\epsilon^4\sigma^{*4}
 \mathbb E_\epsilon\exp\{\epsilon_i^2/(C_\epsilon^2\sigma^{*2})\}
 \le4C_\epsilon^4\sigma^{*4}\le C,\\
 |\mu_3|
 &\le\mathbb E_\epsilon|\epsilon_i|^3
 \le(\mathbb E_\epsilon\epsilon_i^2)^{1/2}
     (\mathbb E_\epsilon\epsilon_i^4)^{1/2}
 =\sigma^*\sqrt{\mu_4}\le C.
\end{align*} Since $\sum_i B_{ii}^2\le\|\bm B\|_F^2$ and
$|\sum_i u_iB_{ii}|\le\|\bm u\|_2\|\bm B\|_F$, the same identities prove
\eqref{eq: linear quadratic variance bound}. Applying \eqref{eq: linear quadratic variance bound} to the symmetric part of $\bm Q$ and using \eqref{eq: stochastic powers resolvents} proves the remainder bound in \eqref{eq: reduced score representation}.
\hfill$\blacksquare$

\begin{lemma}\label{lem: fitted score covariance}
Fix the observed data and the fitted parameters
$(\widehat{\bm\zeta},\widehat{\bm P},\widehat{\bm Z})$.
Use the conditional error vector $\bm e$, its moments
$\widehat\sigma^2,\widehat\mu_3,\widehat\mu_4$, and the auxiliary response
$\bm y^{(\bm e)}$ and score $\widehat{\bm q}_y^{(\bm e)}$ defined in Section I.
For each parameter coordinate $a$, its constant, linear and quadratic coefficients in $n\widehat q_{y,a}^{(\bm e)} =\widehat c_{n,a}+\widehat{\bm\ell}_{n,a}^{\top}\bm e +\bm e^\top\widehat{\bm B}_{n,a}\bm e$
are as follows, with $j\in[p]$ and $\ell\in[r]$:
\begin{equation}
\label{eq: fitted response score coefficients}
\begin{aligned}
& \widehat c_{n,\beta_j}=0,\quad
\widehat c_{n,\eta_\ell}=0,\quad
\widehat c_{n,\tau}=\frac{n}{2\widehat\tau},\quad
\widehat c_{n,\gamma}=-\tr(\widehat{\bm W}\widehat{\bm S}^{-1}),\\
 &\widehat{\bm\ell}_{n,\beta_j}=\widehat\tau\bm X_{\cdot j},\quad
 \widehat{\bm\ell}_{n,\eta_\ell}=\widehat\tau\widehat{\bm Z}_{\cdot\ell},\quad
 \widehat{\bm\ell}_{n,\tau}=\bm0,\quad
 \widehat{\bm\ell}_{n,\gamma}=\widehat\tau\widehat{\bm W}\widehat{\bm\mu}_y,\\
 &\widehat{\bm B}_{n,\beta_j}=\bm0,\quad
 \widehat{\bm B}_{n,\eta_\ell}=\bm0,\quad
 \widehat{\bm B}_{n,\tau}=-\bm I_n/2,\quad
 \widehat{\bm B}_{n,\gamma}=\frac{\widehat\tau}{2}
 \{\widehat{\bm S}^{-\top}\widehat{\bm W}^\top
 +\widehat{\bm W}\widehat{\bm S}^{-1}\}.
\end{aligned}
\end{equation}
The matrix $\widehat{\bm\Xi}_n=n\operatorname{Var}_{\bm e}
(\widehat{\bm q}_y^{(\bm e)})$ therefore has the explicit entries
\begin{equation}\label{eq: feasible score covariance}
\begin{aligned}
(\widehat{\bm\Xi}_n)_{ab}=\frac1n\Big[&
 \widehat\sigma^2\widehat{\bm\ell}_{n,a}^\top\widehat{\bm\ell}_{n,b}
 +2\widehat\sigma^4\tr(\widehat{\bm B}_{n,a}\widehat{\bm B}_{n,b})+\widehat\mu_3\sum_i\{\widehat\ell_{n,a,i}(\widehat{\bm B}_{n,b})_{ii}
 +\widehat\ell_{n,b,i}(\widehat{\bm B}_{n,a})_{ii}\}\\
 &+(\widehat\mu_4-3\widehat\sigma^4)
 \sum_i(\widehat{\bm B}_{n,a})_{ii}(\widehat{\bm B}_{n,b})_{ii}\Big].
\end{aligned}
\end{equation}
\end{lemma}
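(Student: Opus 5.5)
The plan is to substitute $\bm y^{(\bm e)}$ directly into the explicit score \eqref{eq: full response score}, read off the coefficients, and then apply the linear--quadratic covariance identities \eqref{eq: linear quadratic covariance identities} with the moments of $\bm e$ replacing those of $\bm\epsilon$.

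\textbf{Residual and peer term.} All fitted quantities are held fixed, and $\widehat{\bm\mu}_y=\widehat{\bm S}^{-1}(\bm X\widehat{\bm\beta}+\widehat{\bm Z}\widehat{\bm\eta})$. Hence the residual at $(\widehat{\bm\zeta};\widehat{\bm P},\widehat{\bm Z})$ with response $\bm y^{(\bm e)}$ is
\[
 \widehat{\bm S}\bm y^{(\bm e)}-\bm X\widehat{\bm\beta}-\widehat{\bm Z}\widehat{\bm\eta}
 =\widehat{\bm S}\widehat{\bm\mu}_y+\bm e-\widehat{\bm S}\widehat{\bm\mu}_y=\bm e.
\]
In the same way, $\widehat{\bm W}\bm y^{(\bm e)}=\widehat{\bm W}\widehat{\bm\mu}_y+\widehat{\bm W}\widehat{\bm S}^{-1}\bm e$.

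\textbf{Reading off the coefficients.} Multiply \eqref{eq: full response score} by $n$ and substitute the two identities.
\begin{itemize}
 \item The $\beta_j$ coordinate is $\widehat\tau\bm X_{\cdot j}^\top\bm e$, so it is purely linear.
 \item The $\eta_\ell$ coordinate is $\widehat\tau\widehat{\bm Z}_{\cdot\ell}^\top\bm e$, also purely linear.
 \item The $\tau$ coordinate is $n/(2\widehat\tau)-\bm e^\top\bm e/2$. This gives the constant $n/(2\widehat\tau)$ and the quadratic matrix $-\bm I_n/2$.
 \item The $\gamma$ coordinate is $-\tr(\widehat{\bm W}\widehat{\bm S}^{-1})+\widehat\tau(\widehat{\bm W}\widehat{\bm\mu}_y)^\top\bm e+\widehat\tau\,\bm e^\top\widehat{\bm S}^{-\top}\widehat{\bm W}^\top\bm e$. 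Since a quadratic form depends only on the symmetric part of its matrix, replacing the matrix by its symmetrization yields the stated $\widehat{\bm B}_{n,\gamma}$.
\end{itemize}
Together these give every entry of \eqref{eq: fitted response score coefficients}. Each $\widehat{\bm B}_{n,a}$ is then symmetric, which is what the covariance identities need.

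\textbf{Covariance formula.} The constants do not contribute, so
\[
 n^2\operatorname{Cov}_{\bm e}(\widehat q_{y,a}^{(\bm e)},\widehat q_{y,b}^{(\bm e)})
 =\operatorname{Cov}_{\bm e}\bigl(\widehat{\bm\ell}_{n,a}^\top\bm e+\bm e^\top\widehat{\bm B}_{n,a}\bm e,\ \widehat{\bm\ell}_{n,b}^\top\bm e+\bm e^\top\widehat{\bm B}_{n,b}\bm e\bigr).
\]
Expand this bilinearly into four pieces:
\begin{itemize}
 \item the linear--linear term $\widehat\sigma^2\widehat{\bm\ell}_{n,a}^\top\widehat{\bm\ell}_{n,b}$;
 \item the two cross terms, each given by the first identity in \eqref{eq: linear quadratic covariance identities}, of the form $\widehat\mu_3\sum_i\widehat\ell_{n,a,i}(\widehat{\bm B}_{n,b})_{ii}$ and its mirror;
 \item the quadratic--quadratic term $2\widehat\sigma^4\tr(\widehat{\bm B}_{n,a}\widehat{\bm B}_{n,b})+(\widehat\mu_4-3\widehat\sigma^4)\sum_i(\widehat{\bm B}_{n,a})_{ii}(\widehat{\bm B}_{n,b})_{ii}$, from the second identity.
\end{itemize}
Those identities rely only on independence, a zero mean, and the second, third and fourth moments. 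Section~I shows $\bm e$ has i.i.d.\ coordinates with mean zero and moments $\widehat\sigma^2,\widehat\mu_3,\widehat\mu_4$; this holds in both the case $\widehat v>0$ and the degenerate two-point case (there $\widehat\mu_3=0$ and $\widehat\mu_4=\widehat\sigma^4$). Multiplying the resulting covariance by $n\cdot n^{-2}=n^{-1}$ gives \eqref{eq: feasible score covariance}.

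\textbf{Main point of care.} The step needing the most attention is the symmetrization of the $\gamma$ quadratic form. The identities in \eqref{eq: linear quadratic covariance identities} were derived for symmetric matrices, and $\widehat{\bm W}\widehat{\bm S}^{-1}$ is not symmetric. Using the symmetrized $\widehat{\bm B}_{n,\gamma}$ makes the trace term and the diagonal entries correct. Everything else is bookkeeping.
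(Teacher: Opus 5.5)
Your proposal is correct and follows essentially the same route as the paper. Substituting $\bm y^{(\bm e)}$ reduces the residual to $\bm e$; you then read off the coefficients, using the symmetric part of the $\gamma$ quadratic form, and apply the linear--quadratic covariance identities with $(\widehat\sigma^2,\widehat\mu_3,\widehat\mu_4)$ in place of the true error moments (your check of the degenerate two-point case is a harmless extra).
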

\noindent\textbf{Proof of Lemma~\ref{lem: fitted score covariance}.}
Since $\widehat{\bm S}\widehat{\bm\mu}_y
=\bm X\widehat{\bm\beta}+\widehat{\bm Z}\widehat{\bm\eta}$, we have $\bm r_{y^{(\bm e)}}(\widehat{\bm\zeta};\widehat{\bm\nu})=\bm e$ and $\widehat{\bm W}\bm y^{(\bm e)} =\widehat{\bm W}\widehat{\bm\mu}_y +\widehat{\bm W}\widehat{\bm S}^{-1}\bm e$. Substituting into \eqref{eq: full response score} gives
\[
 n\widehat{\bm q}_y^{(\bm e)}
 =\begin{pmatrix}
 \widehat\tau\bm X^\top\bm e\\
 \widehat\tau\widehat{\bm Z}^\top\bm e\\
 -\tr(\widehat{\bm W}\widehat{\bm S}^{-1})
 +\widehat\tau(\widehat{\bm W}\widehat{\bm\mu}_y)^\top\bm e
 +\widehat\tau\bm e^\top\widehat{\bm W}\widehat{\bm S}^{-1}\bm e\\
 n/(2\widehat\tau)-\bm e^\top\bm e/2
 \end{pmatrix}.
\]
The symmetric part of the gamma quadratic coefficient gives
\eqref{eq: fitted response score coefficients}.
The constants $\widehat c_{n,a}$ do not contribute to covariance, so
\[
 (\widehat{\bm\Xi}_n)_{ab}
 =\frac1n\operatorname{Cov}_{\bm e}\left(
 \widehat{\bm\ell}_{n,a}^{\top}\bm e+\bm e^\top\widehat{\bm B}_{n,a}\bm e,
 \widehat{\bm\ell}_{n,b}^{\top}\bm e+\bm e^\top\widehat{\bm B}_{n,b}\bm e
 \right).
\]
The $e_i$ are independent with mean zero and variance $\widehat\sigma^2$.
Applying \eqref{eq: linear quadratic covariance identities} with
$(\sigma^{*2},\mu_3,\mu_4)$ replaced by
$(\widehat\sigma^2,\widehat\mu_3,\widehat\mu_4)$ and using
$\operatorname{Cov}_{\bm e}(\bm u^\top\bm e,\bm v^\top\bm e)
=\widehat\sigma^2\bm u^\top\bm v$ gives
\eqref{eq: feasible score covariance}.
\hfill$\blacksquare$

\begin{lemma}\label{lem: response operator bounds}
Suppose Assumptions~\ref{assump: network}--\ref{assump: response errors} hold.
Let $\bm W_t=(1-t)\bm W^*+t\widehat{\bm W}$ for $t\in[0,1]$.
With probability tending to one,
\begin{equation}\label{eq: response operator bounds}
\begin{aligned}
 \sup_{t\in[0,1]}\|\bm W_t\|_F+
 \sup_{\substack{t\in[0,1]\\\gamma\in\Gamma_\gamma}}
 \big\{\|(\bm I_n-\gamma\bm W_t)^{-1}\|_2
 +\|\bm W_t(\bm I_n-\gamma\bm W_t)^{-1}\|_F\big\}&\le C\sqrt{K_n},\\
 \sup_{\substack{t\in[0,1]\\\gamma\in\Gamma_\gamma}}
 \|(\bm I_n-\gamma\bm W_t)^{-1}\|_\infty&\le C.
\end{aligned}
\end{equation}
Moreover, $\|\bm\mu_y^*\|_\infty\le C$ and $n^{-1/2}\{\|\bm y\|_2+
\sup_{t\in[0,1]}\|\bm W_t\bm y\|_2\}=O_p(1)$.
\end{lemma}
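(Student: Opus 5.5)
The key observation is that every $\bm W_t$ is again a nonnegative row-stochastic matrix with entries of order $K_n/n$. Once that is checked, the bounds in \eqref{eq: stochastic powers resolvents} apply with $\bm T=\bm W_t$. By Lemma~\ref{lem: network consequences}, $\bm W^*$ is nonnegative, row stochastic and satisfies $\max_{i\ne j}W^*_{ij}\le CK_n/n$, because $\bm\nu^*\in\mathcal V_n$. For $\widehat{\bm W}$, I will work on the event of Theorem~\ref{thm: first stage} on which $\max_i\Delta_i=o_p(1)$. The fixed margins in Assumption~\ref{assump: network} then give $\max_i|\widehat\alpha_i-\vartheta_n/2|\le b_n+1$ and $\|\widehat{\bm Z}\|_{2,\infty}\le R_Z$ with probability tending to one. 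Centering of $\widehat{\bm Z}$ follows from the construction in Lemma~\ref{lem: local likelihood root} (the iterates lie in $\mathcal N_n\subset\mathcal V_n$), so $\widehat{\bm\nu}\in\mathcal V_n$. Lemma~\ref{lem: network consequences} then gives $\max_{i\ne j}\widehat W_{ij}\le CK_n/n$, and $\widehat{\bm W}$ is row stochastic by construction. Convex combinations preserve nonnegativity, row stochasticity and the entrywise bound, so the claim holds for $\bm W_t$ uniformly in $t\in[0,1]$.

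Applying \eqref{eq: stochastic powers resolvents} with $\bm T=\bm W_t$ then gives, uniformly in $t$ and $\gamma\in\Gamma_\gamma$:
\begin{itemize}
\item $\|\bm W_t\|_F\le C\sqrt{K_n}$ (the case $j=1$);
\item $\|(\bm I_n-\gamma\bm W_t)^{-1}\|_2\le C\sqrt{K_n}$;
\item $\|\bm W_t(\bm I_n-\gamma\bm W_t)^{-1}\|_F\le C\sqrt{K_n}$;
\item $\|(\bm I_n-\gamma\bm W_t)^{-1}\|_\infty\le\delta_\gamma^{-1}$.
\end{itemize}
The constant in \eqref{eq: stochastic powers resolvents} depends only on the constant in the entrywise bound, so uniformity is automatic. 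The bound $\|\bm\mu_y^*\|_\infty\le C$ was already obtained in the proof of Lemma~\ref{lem: response score representation}.

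For the response, I will write $\bm y=\bm\mu_y^*+\bm S^{*-1}\bm\epsilon$. The first term has norm at most $\sqrt n\|\bm\mu_y^*\|_\infty\le C\sqrt n$. For the second, $\mathbb E_\epsilon\|\bm S^{*-1}\bm\epsilon\|_2^2=\sigma^{*2}\|\bm S^{*-1}\|_F^2\le C(\sqrt n+C\sqrt{K_n})^2\le Cn$, using $K_n^8/n\to0$ from Lemma~\ref{lem: network consequences}; Markov's inequality gives $\|\bm y\|_2=O_p(\sqrt n)$. Finally, $\|\bm W_t\bm y\|_2\le\|\bm W_t\|_2\|\bm y\|_2$. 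This step is the main point to watch, because the crude bound $\|\bm W_t\|_2\le C\sqrt{K_n}$ loses a factor. Instead I will use that for a nonnegative matrix with unit row sums and column sums at most $n\cdot CK_n/n=CK_n$, Schur's test gives $\|\bm W_t\|_2\le\sqrt{\|\bm W_t\|_\infty\|\bm W_t\|_1}\le C\sqrt{K_n}$. That still carries a $\sqrt{K_n}$, so it is not enough by itself.

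To get an $O_p(1)$ bound on $n^{-1/2}\|\bm W_t\bm y\|_2$, I will split $\bm y=\bm\mu_y^*+\bm S^{*-1}\bm\epsilon$ and bound the two pieces separately:
\begin{itemize}
\item Row stochasticity gives $\|\bm W_t\bm\mu_y^*\|_\infty\le\|\bm\mu_y^*\|_\infty\le C$, hence $\|\bm W_t\bm\mu_y^*\|_2\le C\sqrt n$.
\item For the noise part, $\|\bm W_t\bm S^{*-1}\bm\epsilon\|_2\le(1-t)\|\bm Q\bm\epsilon\|_2+t\|\widehat{\bm W}\bm S^{*-1}\bm\epsilon\|_2$. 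Since $\bm\epsilon$ is independent of $\bm A$, conditionally on $\bm A$ we have $\mathbb E_\epsilon\|\widehat{\bm W}\bm S^{*-1}\bm\epsilon\|_2^2=\sigma^{*2}\|\widehat{\bm W}\bm S^{*-1}\|_F^2\le\sigma^{*2}\|\widehat{\bm W}\|_F^2\|\bm S^{*-1}\|_2^2\le CK_n^2$. The same bound holds for $\bm Q$.
\end{itemize}
Because $K_n^2=o(n)$, Markov's inequality yields $\sup_t\|\bm W_t\bm S^{*-1}\bm\epsilon\|_2=o_p(\sqrt n)$. The supremum over $t$ is handled by linearity in $t$, and combining the two pieces completes the lemma.
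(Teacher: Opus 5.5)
Your proof is correct. The matrix bounds are obtained as in the paper. You place $\widehat{\bm\nu}$ in $\mathcal V_n$ with probability tending to one, note that the convex combination $\bm W_t$ stays nonnegative, row stochastic and entrywise $O(K_n/n)$, and invoke \eqref{eq: stochastic powers resolvents}.

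The routes differ only for $\sup_t\|\bm W_t\bm y\|_2$. The paper writes $\bm W_t\bm y=\bm W^*\bm y+t(\widehat{\bm W}-\bm W^*)\bm y$ and uses $\|\widehat{\bm W}-\bm W^*\|_2\le\|\widehat{\bm W}-\bm W^*\|_F=O_p(K_n/\sqrt{s_n})=o_p(1)$ from Theorem~\ref{thm: first stage}. It therefore needs only the second moments of $\bm y$ and $\bm W^*\bm y$, with $\bm W^*$ deterministic; that step involves no conditioning on the network and no independence argument. You never use the accuracy of $\widehat{\bm W}$. Instead you condition on $\bm A$, exploit the independence of $\bm\epsilon$ from the network, and use only the structural bounds $\|\widehat{\bm W}\|_F^2\le CK_n$ and $\|\bm S^{*-1}\|_2^2\le CK_n$. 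This gives $\|\widehat{\bm W}\bm S^{*-1}\bm\epsilon\|_2=O_p(K_n)=o_p(\sqrt n)$, and $K_n^2=o(n)$ is indeed implied by $K_n^8/n\to0$.

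Your argument therefore holds for any row-stochastic weight estimate built from $\bm A$ alone with entries $O(K_n/n)$, however inaccurate. The paper's is shorter once Theorem~\ref{thm: first stage} is available.

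The Schur's test digression is never used and can be dropped.
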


\noindent\textbf{Proof of Lemma~\ref{lem: response operator bounds}.}
By \eqref{eq: envelope implications} and Theorem~\ref{thm: first stage},
with probability tending to one, both $\bm W^*$ and $\widehat{\bm W}$
are nonnegative and row stochastic, with entries at most $CK_n/n$.
On this event, for every $t\in[0,1]$,
$\bm W_t\bm1_n=\bm1_n$ and
$0\le (\bm W_t)_{ij}\le CK_n/n$.
Applying \eqref{eq: stochastic powers resolvents} to $\bm W_t$ gives
all the matrix bounds; its application to $\bm W^*$ is deterministic.
Also,
\[
 \|\bm\mu_y^*\|_\infty
 \le\|\bm S^{*-1}\|_\infty
 \big(\|\bm X\|_{2,\infty}\|\bm\beta^*\|_2
 +\|\bm Z^*\|_{2,\infty}\|\bm\eta^*\|_2\big)\le C.
\]
Since $\bm y=\bm\mu_y^*+\bm S^{*-1}\bm\epsilon$, we have
\[
 \mathbb E_\epsilon\|\bm S^{*-1}\bm\epsilon\|_2^2
 =\sigma^{*2}\|\bm S^{*-1}\|_F^2\le C(n+K_n)\le Cn,\quad
 \mathbb E_\epsilon\|\bm W^*\bm S^{*-1}\bm\epsilon\|_2^2\le CK_n.
\]
Since $\bm W^*$ is nonnegative and row stochastic,
\[
 \|\bm\mu_y^*\|_2+\|\bm W^*\bm\mu_y^*\|_2
 \le2\sqrt n\|\bm\mu_y^*\|_\infty\le C\sqrt n.
\]
Using $\mathbb E_\epsilon\bm\epsilon=\bm0$ in the decompositions of
$\bm y$ and $\bm W^*\bm y$ therefore gives
\[
 \mathbb E_\epsilon\big(\|\bm y\|_2^2+\|\bm W^*\bm y\|_2^2\big)
 =\|\bm\mu_y^*\|_2^2+\|\bm W^*\bm\mu_y^*\|_2^2
 +\sigma^{*2}\big(\|\bm S^{*-1}\|_F^2
 +\|\bm W^*\bm S^{*-1}\|_F^2\big)\le Cn.
\]
Markov's inequality now yields
$\|\bm y\|_2+\|\bm W^*\bm y\|_2=O_p(\sqrt n)$.
Theorem~\ref{thm: first stage} gives
$\|\widehat{\bm W}-\bm W^*\|_2=O_p(K_n/\sqrt{s_n})=o_p(1)$, so
\[
 \sup_{t\in[0,1]}\|\bm W_t\bm y\|_2
 \le\|\bm W^*\bm y\|_2+
 \|\widehat{\bm W}-\bm W^*\|_2\|\bm y\|_2=O_p(\sqrt n).
\]
This completes the proof of Lemma \ref{lem: response operator bounds}.
\hfill$\blacksquare$

\begin{lemma}\label{lem: response conditioning}
Suppose Assumptions~\ref{assump: network}--\ref{assump: response errors} hold.
Uniformly for large $n$, $c\bm I\preceq\mathsf J_{y,n}\preceq C\bm I$, $\bm0\preceq\bm\Xi_n\preceq C\bm I$.
Moreover,
\begin{align*}
 \bm V_n
 &=\begin{pmatrix}
 \sigma^{*2}\bm G_n^{-1}
 &-\tau^{*2}\mu_3\bm G_n^{-1}\bar{\bm c}_n\\
 -\tau^{*2}\mu_3\bar{\bm c}_n^\top\bm G_n^{-1}
 &\tau^{*4}(\mu_4-\sigma^{*4})
 \end{pmatrix}+o(1),
\end{align*}
where $\bm G_n=n^{-1}\bm C_n^\top\bm C_n$ and
$\bar{\bm c}_n=n^{-1}\bm C_n^\top\bm1_n$, with $\bm C_n$ defined in
Section~\ref{sec: theory}; the $o(1)$ term is measured in operator norm. Here $c\bm I\preceq\bm G_n\preceq C\bm I$.
Writing $\bm c_{n,i}^\top$ for row $i$ of $\bm C_n$, define
\[
 \bm U_n=\frac1{\sqrt n}\sum_i
 \begin{pmatrix}\bm G_n^{-1}\bm c_{n,i}\epsilon_i\\
 -\tau^{*2}(\epsilon_i^2-\sigma^{*2})\end{pmatrix}.
\]
The response score satisfies
\[
 \mathsf J_{y,n}^{-1}\sqrt n\bm q_y(\bm\zeta^*;\bm\nu^*)
 =\bm U_n+\bm r_n,\
 \mathbb E_\epsilon\|\bm r_n\|_2^2=o(1).
\]
If
\eqref{eq: response nondegeneracy} also holds, then
$c\bm I\preceq\bm V_n\preceq C\bm I$ uniformly for large $n$.

\end{lemma}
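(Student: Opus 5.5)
The plan is to compute $\mathsf J_{y,n}$ and $\bm\Xi_n$ explicitly at $(\bm\zeta^*;\bm\nu^*)$. Reduce both to the Gram matrix $\bm G_n$ plus terms of order $K_n/n$, and then invert.

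\textbf{Step 1: the Jacobian.} Differentiating \eqref{eq: full response score} and taking $\mathbb E_\epsilon$ with $\bm r_y=\bm\epsilon$ and $\bm W^*\bm y=\bm W^*\bm\mu_y^*+\bm Q\bm\epsilon$ gives the following entries.
\begin{itemize}
\item The $(\bm\beta,\bm\eta,\gamma)$ block is $\tau^*\bm G_n+\tau^*\sigma^{*2}n^{-1}\diag(\bm 0_{p+r},\tr(\bm Q^\top\bm Q)+\tr(\bm Q^2))$.
\item The $\tau$-row entries are $n^{-1}\mathbb E(\bm X_+^{*\top}\bm\epsilon)=\bm0$ for $(\bm\beta,\bm\eta)$ and $n^{-1}\sigma^{*2}\tr\bm Q$ for $\gamma$.
\item The $\tau\tau$ entry is $1/(2\tau^{*2})$.
\end{itemize}
By \eqref{eq: stochastic powers resolvents} applied to $\bm T=\bm W^*$ (entries $\le CK_n/n$ by \eqref{eq: envelope implications}), $|\tr\bm Q|\le\|\bm Q\|_F\sqrt{\cdot}$ needs care. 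The bound $|\tr\bm Q|\le\sum_j|\gamma^*|^{j}|\tr\bm W^{*(j+1)}|\le C K_n$ follows from the entrywise power bound, since diagonal entries are at most $CK_n/n$. Also $\tr(\bm Q^\top\bm Q)\le CK_n$. Hence $\mathsf J_{y,n}=\mathsf J^{0}_n+O(K_n/n)$, where $\mathsf J^0_n=\diag(\tau^*\bm G_n,1/(2\tau^{*2}))$, and $K_n/n\to0$ by \eqref{eq: basic growth consequences}.

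\textbf{Step 2: bounds on $\bm G_n$.} The upper bound $\bm G_n\preceq C\bm I$ follows from the bounded rows of $\bm C_n$ (Lemma~\ref{lem: response score representation}). For the lower bound, use the partitioned form: $\bm C_n=(\bm X_+^*,\bm w)$ with $\bm w=\bm W^*\bm\mu_y^*$. Then $\lambda_{\min}(\bm G_n)\ge c\min\{\lambda_{\min}(n^{-1}\bm X_+^{*\top}\bm X_+^*),\,n^{-1}\|\bm M_{\bm X_+^*}\bm w\|_2^2\}$, up to constants depending on $\|\bm G_n\|$. Assumption~\ref{assump: response} and \eqref{eq: gamma identification} give $c_X$ and $\kappa_\gamma/2$ for large $n$. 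This is the Schur-complement argument and is the only place identification enters. Combining with Step 1 gives $c\bm I\preceq\mathsf J_{y,n}\preceq C\bm I$.

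\textbf{Step 3: the score covariance.} Apply \eqref{eq: reduced score representation}: $\sqrt n\bm q_y=\bm L_n+\bm r_{q,n}$, where $\bm L_n$ is the leading linear--quadratic vector and $\mathbb E\|\bm r_{q,n}\|^2\le CK_n/n$. From \eqref{eq: linear quadratic covariance identities},
\[
\operatorname{Var}(\bm L_n)=\begin{pmatrix}\tau^{*}\bm G_n&-\tfrac{\tau^*}{2}\mu_3\bar{\bm c}_n\\-\tfrac{\tau^*}2\mu_3\bar{\bm c}_n^\top&\tfrac14(\mu_4-\sigma^{*4})\end{pmatrix},
\]
which is bounded by the moment bounds in the proof of Lemma~\ref{lem: response score representation}. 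Cauchy--Schwarz then gives $\|\bm\Xi_n-\operatorname{Var}(\bm L_n)\|_2=O(\sqrt{K_n/n})=o(1)$, so $\bm\Xi_n\preceq C\bm I$.

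\textbf{Step 4: $\bm V_n$, the influence representation, and conditioning.} Write $\mathsf J_{y,n}^{-1}=(\mathsf J^0_n)^{-1}+O(K_n/n)$. Substituting gives $\bm V_n=(\mathsf J^0_n)^{-1}\operatorname{Var}(\bm L_n)(\mathsf J^0_n)^{-1}+o(1)$, which equals the displayed matrix; the $\tau$ factor is $2\tau^{*2}\cdot(-\tfrac12)=-\tau^{*2}$. The same substitution gives $(\mathsf J^0_n)^{-1}\bm L_n=\bm U_n$. Hence $\bm r_n=(\mathsf J_{y,n}^{-1}-(\mathsf J^0_n)^{-1})\bm L_n+\mathsf J_{y,n}^{-1}\bm r_{q,n}$, and $\mathbb E\|\bm r_n\|^2=O(K_n^2/n^2+K_n/n)=o(1)$.

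For the lower bound on $\bm V_n$ under \eqref{eq: response nondegeneracy}, the main obstacle is that the limiting matrix is a covariance with correlated linear and quadratic parts. Take the Schur complement of the $\bm G_n$ block:
\[
\tau^{*4}\Bigl(\mu_4-\sigma^{*4}-\mu_3^2\tau^{*}\,\bar{\bm c}_n^\top\bm G_n^{-1}\bar{\bm c}_n\Bigr).
\]
Observe that $\bar{\bm c}_n^\top\bm G_n^{-1}\bar{\bm c}_n=n^{-1}\bm1^\top\bm C_n(\bm C_n^\top\bm C_n)^{-1}\bm C_n^\top\bm1=h_n$. The complement is therefore $\tau^{*4}(\mu_4-\sigma^{*4}-\mu_3^2h_n/\sigma^{*2})$, which is bounded below by \eqref{eq: response nondegeneracy}. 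Together with $\sigma^{*2}\bm G_n^{-1}\succeq c\bm I$ and bounded entries, this yields $c\bm I\preceq\bm V_n\preceq C\bm I$. The $o(1)$ perturbation is absorbed by Weyl's inequality.
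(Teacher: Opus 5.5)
Your proposal is correct and follows essentially the same route as the paper: the expected Jacobian is $\diag\{\tau^*\bm G_n,1/(2\tau^{*2})\}$ plus trace corrections of order $K_n/n$; $\bm G_n$ is bounded below through the Schur complement $n^{-1}\|\bm M_{\bm X_+^*}\bm W^*\bm\mu_y^*\|_2^2$; the reduced score representation yields $\bm U_n$ and the leading covariance; and the lower bound on $\bm V_n$ comes from identifying its Schur complement with $\tau^{*4}(\mu_4-\sigma^{*4}-\mu_3^2h_n/\sigma^{*2})$. The only slip is the sign of the $(\gamma,\tau)$ entry of $\mathsf J_{y,n}$, which should be $-\sigma^{*2}\tr(\bm Q)/n$; this is harmless because only its magnitude enters. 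Your other departures from the paper are cosmetic: you use Cauchy--Schwarz for $\bm\Xi_n$ where the paper uses a trace bound, and a direct Schur-complement eigenvalue bound where the paper uses a determinant ratio.
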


\noindent\textbf{Proof of Lemma~\ref{lem: response conditioning}.}
Put $\bm B_y=(\bm X,\bm Z^*,\bm W^*\bm y)$ and
$t_\gamma=n^{-1}\tr\{\bm W^*\bm S_\gamma(\bm P^*)^{-1}\}$.
Differentiating \eqref{eq: full response score} gives the symmetric matrix
\begin{align}
\label{eq: full response Hessian}
 -\nabla_{\bm\zeta}\bm q_y(\bm\zeta;\bm\nu^*)
 =\begin{pmatrix}
 \tau\bm B_y^\top\bm B_y/n+t_\gamma'\widetilde{\bm e}_\gamma\widetilde{\bm e}_\gamma^\top
 &-\bm B_y^\top\bm r_y(\bm\zeta;\bm\nu^*)/n\\
 -\bm r_y(\bm\zeta;\bm\nu^*)^\top\bm B_y/n&1/(2\tau^2)
 \end{pmatrix},
\end{align}
where $\widetilde{\bm e}_\gamma=(\bm0_{p+r}^\top,1)^\top\in\mathbb R^{p+r+1}$
is the coordinate vector for $\gamma$ in
$(\bm\beta^\top,\bm\eta^\top,\gamma)^\top$.
Since $\bm B_y=\bm C_n+\bm Q\bm\epsilon\widetilde{\bm e}_\gamma^\top$ at the truth, the expected negative Jacobian is
\begin{align}
\label{eq: full response information}
 \mathsf J_{y,n}=\mathsf J_{y,n}^{(0)}+\bm R_{J,n},\quad
 \mathsf J_{y,n}^{(0)}=\diag\{\tau^*\bm G_n,1/(2\tau^{*2})\}.
\end{align}
The only nonzero corrections are
\[
 R_{J,n;\gamma\gamma}=n^{-1}\{\tr(\bm Q^2)+\|\bm Q\|_F^2\},\quad
 R_{J,n;\gamma\tau}=R_{J,n;\tau\gamma}
 =-\sigma^{*2}\tr(\bm Q)/n.
\]
By Lemma~\ref{lem: response operator bounds}, the first correction is $O(K_n/n)$ because
$|\tr(\bm Q^2)|\le\|\bm Q\|_F^2\le CK_n$.
For the second, the power entry bound gives
$|\tr\bm Q|\le\sum_{j\ge0}|\gamma^*|^j|\tr((\bm W^*)^{j+1})|\le CK_n$ by \eqref{eq: stochastic powers resolvents}.
Thus $\|\bm R_{J,n}\|_2=O(K_n/n)=o(1)$.
Put $\bm u=\bm W^*\bm\mu_y^*$. Then
\[
 \bm G_n=\frac1n\begin{pmatrix}
 \bm X_+^{*\top}\bm X_+^*&\bm X_+^{*\top}\bm u\\
 \bm u^\top\bm X_+^*&\bm u^\top\bm u
 \end{pmatrix},\quad
 \frac{\bm u^\top\bm u}{n}
 -\frac{\bm u^\top\bm X_+^*}{n}
 \left(\frac{\bm X_+^{*\top}\bm X_+^*}{n}\right)^{-1}
 \frac{\bm X_+^{*\top}\bm u}{n}
 =\frac{\|\bm M_{\bm X_+^*}\bm u\|_2^2}{n}.
\]
Assumption~\ref{assump: response} gives
$\lambda_{\min}(\bm X_+^{*\top}\bm X_+^*/n)\ge c_X$, and
\eqref{eq: gamma identification} bounds the displayed Schur complement
below by $\kappa_\gamma/2$ for all sufficiently large $n$. Hence
\[
 \det(\bm G_n)
 =\det(\bm X_+^{*\top}\bm X_+^*/n)\frac{\|\bm M_{\bm X_+^*}\bm u\|_2^2}{n}
 \ge c_X^{p+r}\kappa_\gamma/2>0.
\]
Since $\bm W^*$ is nonnegative and row stochastic,
$|u_i|\le\|\bm\mu_y^*\|_\infty\le C$ by
Lemma~\ref{lem: response operator bounds}. The bounded rows of $\bm X$ and
$\bm Z^*$ therefore give
\[
 \lambda_{\max}(\bm G_n)\le\tr(\bm G_n)
 =\frac1n\sum_i\big(\|\bm x_i\|_2^2+\|\bm z_i^*\|_2^2+u_i^2\big)\le C,
 \quad
 \lambda_{\min}(\bm G_n)
 \ge\frac{\det(\bm G_n)}{\lambda_{\max}(\bm G_n)^{p+r}}\ge c.
\]
Here $\bm G_n$ is a Gram matrix of fixed dimension $p+r+1$.
By Assumption~\ref{assump: response}, $\tau^*$ is bounded above and away from zero.
Thus \eqref{eq: full response information} gives
$c\bm I\preceq\mathsf J_{y,n}^{(0)}\preceq C\bm I$.
Since $\|\bm R_{J,n}\|_2=o(1)$,
\[
 \lambda_{\min}(\mathsf J_{y,n})
 \ge\lambda_{\min}(\mathsf J_{y,n}^{(0)})-\|\bm R_{J,n}\|_2\ge c/2,
 \qquad
 \lambda_{\max}(\mathsf J_{y,n})
 \le\lambda_{\max}(\mathsf J_{y,n}^{(0)})+\|\bm R_{J,n}\|_2\le2C
\]
for all sufficiently large $n$. Finally, Lemma~\ref{lem: response score representation} gives $\mathbb E_\epsilon\bm q_y(\bm\zeta^*;\bm\nu^*)=\bm0$. Using \eqref{eq: reduced score representation}, independence of the errors, and $\tau^*\sigma^{*2}=1$, we bound $\tr(\bm\Xi_n)=\mathbb E_\epsilon\|\sqrt n\bm q_y(\bm\zeta^*;\bm\nu^*)\|_2^2$ by
\begin{align*}
 \tr(\bm\Xi_n)&\le\frac{2\tau^{*2}}n\mathbb E_\epsilon\|\bm C_n^\top\bm\epsilon\|_2^2
 +\frac1{2n}\mathbb E_\epsilon\left\{\sum_i(\epsilon_i^2-\sigma^{*2})\right\}^2
 +2\mathbb E_\epsilon\|\bm r_{q,n}\|_2^2\\
 &\le2\tau^*\tr(\bm G_n)+\frac{\mu_4-\sigma^{*4}}2+\frac{CK_n}n\le C.
\end{align*}
The last bound uses $\mu_4\le C$, as established in the proof of
Lemma~\ref{lem: response score representation}.
Since $\bm\Xi_n$ is a covariance matrix,
$\bm0\preceq\bm\Xi_n\preceq\tr(\bm\Xi_n)\bm I\preceq C\bm I$.

Since $(\mathsf J_{y,n}^{(0)})^{-1}
=\diag\{\tau^{*-1}\bm G_n^{-1},2\tau^{*2}\}$,
multiplication of \eqref{eq: reduced score representation} gives
\[
 (\mathsf J_{y,n}^{(0)})^{-1}\sqrt n\bm q_y(\bm\zeta^*;\bm\nu^*)
 =\bm U_n+(\mathsf J_{y,n}^{(0)})^{-1}\bm r_{q,n},\quad
 \bm U_n=\frac1{\sqrt n}\sum_i
 \begin{pmatrix}\bm G_n^{-1}\bm c_{n,i}\epsilon_i\\
 -\tau^{*2}(\epsilon_i^2-\sigma^{*2})\end{pmatrix},
\]
where $\bm c_{n,i}^\top$ is row $i$ of $\bm C_n$.
To replace $(\mathsf J_{y,n}^{(0)})^{-1}$ by $\mathsf J_{y,n}^{-1}$, use
\[
 \mathsf J_{y,n}^{-1}-(\mathsf J_{y,n}^{(0)})^{-1}
 =-\mathsf J_{y,n}^{-1}\bm R_{J,n}(\mathsf J_{y,n}^{(0)})^{-1},\quad
 \|\mathsf J_{y,n}^{-1}-(\mathsf J_{y,n}^{(0)})^{-1}\|_2\le CK_n/n.
\]
Thus, defining
\[
 \bm r_n=(\mathsf J_{y,n}^{(0)})^{-1}\bm r_{q,n}
 +\{\mathsf J_{y,n}^{-1}-(\mathsf J_{y,n}^{(0)})^{-1}\}
 \sqrt n\bm q_y(\bm\zeta^*;\bm\nu^*),
\]
we obtain
\begin{equation}\label{eq: response influence representation}
 \mathsf J_{y,n}^{-1}\sqrt n\bm q_y(\bm\zeta^*;\bm\nu^*)
 =\bm U_n+\bm r_n,\quad
 \mathbb E_\epsilon\|\bm r_n\|_2^2=o(1).
\end{equation}
Indeed, \eqref{eq: reduced score representation} and the bound
$\mathbb E_\epsilon\|\sqrt n\bm q_y(\bm\zeta^*;\bm\nu^*)\|_2^2
=\tr(\bm\Xi_n)\le C$ give
\begin{align*}
 \mathbb E_\epsilon\|\bm r_n\|_2^2
 &\le2\|(\mathsf J_{y,n}^{(0)})^{-1}\|_2^2
 \mathbb E_\epsilon\|\bm r_{q,n}\|_2^2
 +2\|\mathsf J_{y,n}^{-1}-(\mathsf J_{y,n}^{(0)})^{-1}\|_2^2
 \mathbb E_\epsilon\|\sqrt n\bm q_y(\bm\zeta^*;\bm\nu^*)\|_2^2\\
 &\le C\{K_n/n+(K_n/n)^2\}=o(1).
\end{align*}

Write $\bm U_n=(\bm U_{n,1}^\top,U_{n,2})^\top$.
The summands defining $\bm U_n$ are independent and centered.
Since $n^{-1}\sum_i\bm c_{n,i}\bm c_{n,i}^\top=\bm G_n$ and
$n^{-1}\sum_i\bm c_{n,i}=\bar{\bm c}_n$, their covariance blocks are
\begin{align*}
 \operatorname{Var}_\epsilon(\bm U_{n,1})
 &=\frac{\sigma^{*2}}n\bm G_n^{-1}
 \left(\sum_i\bm c_{n,i}\bm c_{n,i}^\top\right)\bm G_n^{-1}
 =\sigma^{*2}\bm G_n^{-1},\\
 \operatorname{Var}_\epsilon(U_{n,2})
 &=\frac{\tau^{*4}}n\sum_i\mathbb E_\epsilon(\epsilon_i^2-\sigma^{*2})^2
 =\tau^{*4}(\mu_4-\sigma^{*4}),\\
 \operatorname{Cov}_\epsilon(\bm U_{n,1},U_{n,2})
 &=-\frac{\tau^{*2}}n\sum_i\bm G_n^{-1}\bm c_{n,i}
 \mathbb E_\epsilon\{\epsilon_i(\epsilon_i^2-\sigma^{*2})\}
 =-\tau^{*2}\mu_3\bm G_n^{-1}\bar{\bm c}_n.
\end{align*}
These bounds give $\mathbb E_\epsilon\|\bm U_n\|_2^2\le C$.
Both $\bm U_n$ and $\bm r_n$ are centered, so
\eqref{eq: response influence representation} implies
\begin{align*}
 \bm V_n-\operatorname{Var}_\epsilon(\bm U_n)
 &=\mathbb E_\epsilon(\bm U_n\bm r_n^\top+
 \bm r_n\bm U_n^\top+\bm r_n\bm r_n^\top),\\
 \|\bm V_n-\operatorname{Var}_\epsilon(\bm U_n)\|_2
 &\le2\{\mathbb E_\epsilon\|\bm U_n\|_2^2
 \mathbb E_\epsilon\|\bm r_n\|_2^2\}^{1/2}
 +\mathbb E_\epsilon\|\bm r_n\|_2^2=o(1).
\end{align*}
This proves the stated formula for $\bm V_n$.

Let $\bm V_n^{(0)}=\operatorname{Var}_\epsilon(\bm U_n)$.
The Schur complement of its upper-left block $\sigma^{*2}\bm G_n^{-1}$ is
\[
 \tau^{*4}\left\{\mu_4-\sigma^{*4}
 -\frac{\mu_3^2}{\sigma^{*2}}
 \bar{\bm c}_n^\top\bm G_n^{-1}\bar{\bm c}_n\right\}
 =\tau^{*4}\left\{\mu_4-\sigma^{*4}
 -\frac{\mu_3^2}{\sigma^{*2}}h_n\right\},
\]
because
$\bar{\bm c}_n^\top\bm G_n^{-1}\bar{\bm c}_n
=n^{-1}\bm1_n^\top(\bm I_n-\bm M_{\bm C_n})\bm1_n=h_n$.
Condition~\eqref{eq: response nondegeneracy} and the positive lower bound on
$\tau^*$ bound this Schur complement below by a positive constant for all
sufficiently large $n$. Since
$c\bm I\preceq\sigma^{*2}\bm G_n^{-1}\preceq C\bm I$, the block determinant formula gives
\[
 \det(\bm V_n^{(0)})
 =\det(\sigma^{*2}\bm G_n^{-1})\,
 \tau^{*4}\left\{\mu_4-\sigma^{*4}
 -\frac{\mu_3^2}{\sigma^{*2}}h_n\right\}\ge c.
\]
Moreover, $\lambda_{\max}(\bm V_n^{(0)})
\le\tr(\bm V_n^{(0)})=\mathbb E_\epsilon\|\bm U_n\|_2^2\le C$.
As $\bm V_n^{(0)}$ is a covariance matrix of fixed dimension $p+r+2$, $\lambda_{\min}(\bm V_n^{(0)}) \ge\frac{\det(\bm V_n^{(0)})} {\lambda_{\max}(\bm V_n^{(0)})^{p+r+1}}\ge c$.
The bound $\|\bm V_n-\bm V_n^{(0)}\|_2=o(1)$ proved above then yields
\[
 \lambda_{\min}(\bm V_n)
 \ge\lambda_{\min}(\bm V_n^{(0)})-\|\bm V_n-\bm V_n^{(0)}\|_2\ge c/2,
 \quad
 \lambda_{\max}(\bm V_n)
 \le\lambda_{\max}(\bm V_n^{(0)})+\|\bm V_n-\bm V_n^{(0)}\|_2\le2C
\]
for all sufficiently large $n$.
This completes the proof.
\hfill$\blacksquare$

\begin{lemma}\label{lem: response profile uniform}
Suppose Assumptions~\ref{assump: network}--\ref{assump: response errors} hold.
There is a fixed $c>0$ such that the event
\[
 E_n=\left\{\operatorname{rank}(\bm X,\widehat{\bm Z})=p+r,\quad
 \inf_{\gamma\in\Gamma_\gamma} n^{-1}\operatorname{RSS}_n(\gamma;\widehat{\bm\nu})\ge c\right\}
\]
satisfies $\Pr(E_n)\to1$.
The output of Algorithm~\ref{alg:step2} satisfies
\begin{align}
\label{eq: supplementary profile estimator rate}
 \|\widehat{\bm\zeta}-\bm\zeta^*\|_2=O_p(K_n/\sqrt{s_n})=o_p(1).
\end{align}
Moreover,
\begin{align}
\label{eq: supplementary full grid score}
 \|\bm q_y(\widehat{\bm\zeta};\widehat{\bm\nu})\|_2=O_p(n^{-1}).
\end{align}
\end{lemma}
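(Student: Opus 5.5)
The proof has three parts in order: the event $E_n$, consistency of $\widehat\gamma$ with the rate in \eqref{eq: supplementary profile estimator rate}, and the score bound \eqref{eq: supplementary full grid score}. Everything rests on the propagation rates of Theorem~\ref{thm: first stage} and the operator bounds of Lemma~\ref{lem: response operator bounds}.

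\emph{The event $E_n$.} Theorem~\ref{thm: first stage} gives $\|\widehat{\bm Z}-\bm Z^*\|_F=O_p(\sqrt{n/s_n})=o_p(\sqrt n)$. Hence
\[
n^{-1}\widehat{\bm X}_+^\top\widehat{\bm X}_+-n^{-1}\bm X_+^{*\top}\bm X_+^*=o_p(1),
\]
and Weyl's inequality with Assumption~\ref{assump: response} gives $\lambda_{\min}\ge c_X/2$, so the rank condition holds. For the RSS I first work at the truth. Since $\bm S_\gamma(\bm P^*)\bm y=\bm\epsilon+\bm X\bm\beta^*+\bm Z^*\bm\eta^*+(\gamma^*-\gamma)\bm W^*\bm y$ and $\bm M_{\bm X_+^*}$ kills the middle terms,
\[
\operatorname{RSS}_n(\gamma;\bm\nu^*)=\|\bm M\bm\epsilon+(\gamma^*-\gamma)\bm M\bm W^*\bm y\|_2^2,\qquad \bm M=\bm M_{\bm X_+^*}.
\]
Minimizing over the scalar $\gamma^*-\gamma$ gives the lower bound
\[
\|\bm M\bm\epsilon\|_2^2-\frac{(\bm\epsilon^\top\bm M\bm W^*\bm y)^2}{\|\bm M\bm W^*\bm y\|_2^2}.
\]
I will show the following.
\begin{itemize}
\item $n^{-1}\|\bm M\bm\epsilon\|_2^2\to_p\sigma^{*2}$.
\item $n^{-1}\|\bm M\bm W^*\bm y\|_2^2\ge\kappa_\gamma/2$ with probability tending to one. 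This uses \eqref{eq: gamma identification}, $\bm W^*\bm y=\bm W^*\bm\mu_y^*+\bm Q\bm\epsilon$ and $\|\bm Q\|_F^2\le CK_n=o(n)$.
\item $n^{-1}\bm\epsilon^\top\bm M\bm W^*\bm y=o_p(1)$. The linear part has variance $O(1/n)$. The quadratic part has mean $\sigma^{*2}\tr(\bm M\bm Q)/n=O(K_n/n)$ and variance $O(K_n/n^2)$ by \eqref{eq: linear quadratic variance bound}.
\end{itemize}
Together these give $\inf_\gamma n^{-1}\operatorname{RSS}_n(\gamma;\bm\nu^*)\ge\sigma^{*2}/2$.

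\emph{Transfer to fitted quantities.} The map $\gamma\mapsto\bm M_{\widehat{\bm X}_+}\bm S_\gamma(\widehat{\bm P})\bm y$ is affine in $\gamma$. Its deviation from the true-parameter version is controlled, uniformly over $\gamma$, by
\[
\|\widehat{\bm W}-\bm W^*\|_2\|\bm y\|_2+\|\bm M_{\widehat{\bm Z}\,\text{aug}}-\bm M\|_F\bigl(\|\bm y\|_2+\|\bm W^*\bm y\|_2\bigr)=O_p\bigl(\sqrt n\,K_n/\sqrt{s_n}\bigr).
\]
Here I use Lemma~\ref{lem: latent matrix calculus}, applied to $(\bm X,\bm Z^*)$ perturbed by $(\bm 0,\widehat{\bm Z}-\bm Z^*)$, together with Lemma~\ref{lem: response operator bounds}. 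It follows that
\[
\sup_{\gamma\in\Gamma_\gamma} n^{-1}\bigl|\operatorname{RSS}_n(\gamma;\widehat{\bm\nu})-\operatorname{RSS}_n(\gamma;\bm\nu^*)\bigr|=O_p(K_n/\sqrt{s_n})=o_p(1),
\]
and this proves $\Pr(E_n)\to1$.

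\emph{Consistency and rate.} On $E_n$ the logarithm is Lipschitz on the relevant range, so the RSS part of the profile criterion \eqref{eq: response profile criterion} moves by $O_p(K_n/\sqrt{s_n})$ uniformly in $\gamma$. For the log-determinant I use the path $\bm W_t$:
\[
\frac{d}{dt}\,n^{-1}\log\det(\bm I-\gamma\bm W_t)=-\gamma\, n^{-1}\tr\{(\bm I-\gamma\bm W_t)^{-1}(\widehat{\bm W}-\bm W^*)\}.
\]
This is at most $n^{-1}(\sqrt n+C\sqrt{K_n})\,K_n/\sqrt{s_n}=o(K_n/\sqrt{s_n})$ by Lemma~\ref{lem: response operator bounds} and Theorem~\ref{thm: first stage}. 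The same argument applied to $\gamma$-derivatives shows that the first and second derivatives of the fitted and true profiles differ by $O_p(K_n/\sqrt{s_n})$ uniformly.

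The identification step, which I expect to be the main obstacle, is to show that the true profile $L_{y,\mathrm{prof}}(\cdot;\bm\nu^*)$ has a unique well-separated maximizer at $\gamma^*$ with second derivative at most $-c$ nearby. I would follow the argument of \citet{lee2004asymptotic}. Replace $n^{-1}\operatorname{RSS}_n$ by its mean
\[
\sigma^{*2}n^{-1}\tr\bigl(\bm M\bm S_\gamma\bm S^{*-1}\bm S^{*-\top}\bm S_\gamma^\top\bigr)+(\gamma^*-\gamma)^2n^{-1}\|\bm M\bm W^*\bm\mu_y^*\|_2^2.
\]
Apply Jensen/AM--GM to the eigenvalues of $\bm S_\gamma\bm S^{*-1}$ for the log-det part, and use \eqref{eq: gamma identification} to obtain a quadratic gap of size $\ge c(\gamma-\gamma^*)^2$. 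The fluctuations are uniformly $o_p(1)$ by \eqref{eq: linear quadratic variance bound}. Consistency of $\widehat\gamma$ then follows. With local strong concavity, the fitted profile derivative vanishes at some continuous maximizer $\widetilde\gamma$. Comparing with the true-profile derivative at $\gamma^*$, which is $O_p(n^{-1/2})$, gives $|\widetilde\gamma-\gamma^*|=O_p(n^{-1/2}+K_n/\sqrt{s_n})=O_p(K_n/\sqrt{s_n})$. The grid maximizer satisfies $|\widehat\gamma-\widetilde\gamma|\le Cn^{-1/2}$ by concavity, since the criterion loses at most $O(n^{-2})$ on the grid. The closed forms \eqref{eq: profile beta tau} are Lipschitz in $(\gamma,\widehat{\bm W},\widehat{\bm Z})$ on $E_n$, which gives the same rate for $(\widehat{\bm\beta},\widehat{\bm\eta},\widehat\tau)$.

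\emph{Score at the estimate.} The $\bm\beta$ and $\bm\eta$ score components vanish exactly by the normal equations, and the $\tau$ component vanishes exactly because $\widehat\tau=n/\operatorname{RSS}_n$. By the envelope property, the $\gamma$ component equals $\partial_\gamma L_{y,\mathrm{prof}}(\widehat\gamma;\widehat{\bm\nu})$. Since $\partial_\gamma L_{y,\mathrm{prof}}(\widetilde\gamma)=0$ and the second derivative is $O_p(1)$,
\[
|\partial_\gamma L_{y,\mathrm{prof}}(\widehat\gamma;\widehat{\bm\nu})|\le O_p(1)\,|\widehat\gamma-\widetilde\gamma|.
\]
Local strong concavity places the grid maximizer within one grid gap of $\widetilde\gamma$: the neighbouring grid points bracket $\widetilde\gamma$, and the grid maximizer cannot be farther than that bracket. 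This gives $|\widehat\gamma-\widetilde\gamma|\le n^{-1}$, so the derivative is $O_p(n^{-1})$, which is \eqref{eq: supplementary full grid score}.
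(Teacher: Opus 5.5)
Your proof is correct. Its first and last parts agree with the paper. The event $E_n$ follows from the Weyl/rank argument together with the uniform transfer bound $\sup_\gamma n^{-1}|\operatorname{RSS}_n(\gamma;\widehat{\bm\nu})-\operatorname{RSS}_n(\gamma;\bm\nu^*)|=O_p(K_n/\sqrt{s_n})$. Where you minimize over the scalar $\gamma^*-\gamma$, the paper writes the exact quadratic $n^{-1}\operatorname{RSS}_n=a_n-2hb_n+h^2c_n$; the two are equivalent. The score bound follows from the normal equations, $\widehat\tau=n/\operatorname{RSS}_n$, and the envelope identity.

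The identification and rate step is where you take a genuinely different, heavier route. The paper never needs a Lee-type argument. For a nonnegative row-stochastic $\bm T$ with entries at most $CK_n/n$, every power keeps that entry bound, so $|\tr(\bm T^j)|\le CK_n$. The series $\log\det(\bm I_n-\gamma\bm T)=-\sum_j\gamma^j\tr(\bm T^j)/j$ then shows that $n^{-1}\log\det(\bm I_n-\gamma\bm T)$ and its first three $\gamma$-derivatives are $O(K_n/n)$ uniformly. This holds for $\bm W^*$ and $\widehat{\bm W}$ alike, so the Jacobian term is simply negligible.

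Combined with the quadratic structure of the RSS, this places the fitted profile within $O_p(K_n/\sqrt{s_n})$ of the deterministic function $\ell_n^0(\gamma)=-\tfrac12\log\{\sigma^{*2}+j_n(\gamma-\gamma^*)^2\}-\tfrac12$, together with two derivatives. The separated maximum and local strong concavity of $\ell_n^0$ are immediate from \eqref{eq: gamma identification}. With $\widehat\varphi_n=\partial_\gamma L_{y,\mathrm{prof}}(\cdot;\widehat{\bm\nu})$, the rate comes from $c_0|\widehat\gamma-\gamma^*|\le|\widehat\varphi_n(\widehat\gamma)|+\sup_\gamma|(\ell_n^0)'-\widehat\varphi_n|$. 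The bound $|\widehat\varphi_n(\widehat\gamma)|=O_p(n^{-1})$ comes from Taylor expansions toward the two adjacent grid points. No continuous maximizer $\widetilde\gamma$, true-profile score at $\gamma^*$, or unimodality bracketing is needed.

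Your Jensen/AM--GM route, comparing fitted to true random profile and then to a population profile, is valid. It would also survive in regimes where the Jacobian term is not negligible, such as bounded-degree networks. Here, though, it forces you to establish a uniform bound on the population second derivative near $\gamma^*$ and uniform control of second-derivative fluctuations, which you only assert. Carrying out the AM--GM computation gives $n^{-1}\tr\{(\bm I_n+(\gamma^*-\gamma)\bm Q)(\bm I_n+(\gamma^*-\gamma)\bm Q)^\top\}=1+O(K_n/n)$, because $|\tr\bm Q|+\|\bm Q\|_F^2\le CK_n$. So your argument collapses to $\ell_n^0$ anyway.

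Two small corrections. The AM--GM step should use singular values, or eigenvalues together with Schur's inequality. An $O(n^{-2})$ loss on the grid with curvature $c$ gives $|\widehat\gamma-\widetilde\gamma|=O_p(n^{-1})$, not merely $Cn^{-1/2}$, which is consistent with your later bracketing argument.
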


\noindent\textbf{Proof of Lemma~\ref{lem: response profile uniform}.} Put $\bm N=\bm M_{\bm X_+^*}$, $\bm v=\bm N\bm W^*\bm y$, $j_n=n^{-1}\|\bm N\bm W^*\bm\mu_y^*\|_2^2$. Condition~\eqref{eq: gamma identification} in Assumption~\ref{assump: response} gives $j_n\ge c$ for all sufficiently large $n$, while Lemma~\ref{lem: response operator bounds} gives $j_n\le n^{-1}\|\bm W^*\bm\mu_y^*\|_2^2\le\|\bm\mu_y^*\|_\infty^2\le C$.
 For $h=\gamma-\gamma^*$, the identity $\bm N\bm S_\gamma(\bm P^*)\bm y=\bm N\bm\epsilon-h\bm v$ gives the exact quadratic
\begin{align}
\label{eq: quadratic profile identity}
 n^{-1}\operatorname{RSS}_n(\gamma;\bm\nu^*)
 =a_n-2h b_n^{\rm R}+h^2c_n^{\rm R},\
 a_n=\bm\epsilon^\top\bm N\bm\epsilon/n,\
 b_n^{\rm R}=\bm\epsilon^\top\bm v/n,\
 c_n^{\rm R}=\|\bm v\|_2^2/n.
\end{align}
Write $\bm v=\bm v_0+\bm N\bm Q\bm\epsilon$, with
$\bm v_0=\bm N\bm W^*\bm\mu_y^*$.
Since $\|\bm v_0\|_2\le C\sqrt n$, $\|\bm Q\|_F^2\le CK_n$ and
$\operatorname{rank}(\bm I_n-\bm N)=p+r$, the variance bound
\eqref{eq: linear quadratic variance bound} gives
\begin{align}
\label{eq: quadratic profile coefficients}
 a_n-\sigma^{*2}=O_p(n^{-1/2}),\quad
 b_n^{\rm R}=O_p(\sqrt{K_n/n}),\quad
 c_n^{\rm R}-j_n=O_p(\sqrt{K_n/n}).
\end{align}
Here $\mathbb E a_n=\sigma^{*2}\{1-(p+r)/n\}$.
For the second bound, expand
$n b_n^{\rm R}=\bm\epsilon^\top\bm v_0+\bm\epsilon^\top\bm N\bm Q\bm\epsilon$;
the quadratic mean is $O(K_n)$ because
$|\tr(\bm N\bm Q)|\le|\tr\bm Q|+\sqrt{p+r}\|\bm Q\|_F\le CK_n$.
For the third, expand
$n(c_n^{\rm R}-j_n)=2\bm v_0^\top\bm N\bm Q\bm\epsilon
+\bm\epsilon^\top\bm Q^\top\bm N\bm Q\bm\epsilon$.
The linear standard deviation is at most $C\sqrt{nK_n}$, and the
quadratic mean and standard deviation are $O(K_n)$.
Since $K_n/n=o(1)$, these yield the displayed orders.

Let $\widehat{\bm N}=\bm M_{\widehat{\bm X}_+}$.
The latent estimation bound in Theorem~\ref{thm: first stage} and the Gram matrix lower bound in Assumption~\ref{assump: response} imply
\begin{align}
\label{eq: joint design perturbation}
 \lambda_{\min}(\widehat{\bm X}_+^\top\widehat{\bm X}_+/n)\ge c,\quad
 \|\widehat{\bm X}_+^{\dagger}\|_2\le Cn^{-1/2},\quad
 \|\widehat{\bm N}-\bm N\|_F=O_p(s_n^{-1/2})
\end{align}
with probability tending to one. Indeed the normalized Gram difference is
$O_p(s_n^{-1/2})$, so Weyl's inequality gives the first two bounds;
Lemma~\ref{lem: latent matrix calculus}, applied to $\bm X_+^*$, gives the third.
Set $\widehat{\bm u}=\widehat{\bm N}(\bm I_n-\gamma^*\widehat{\bm W})\bm y$
and $\widehat{\bm v}=\widehat{\bm N}\widehat{\bm W}\bm y$.
The fitted RSS is exactly $\|\widehat{\bm u}-h\widehat{\bm v}\|_2^2$.
Since $\bm N\bm S^*\bm y=\bm N\bm\epsilon$ and
$\bm v=\bm N\bm W^*\bm y$, subtraction gives
\begin{align*}
 \widehat{\bm u}-\bm N\bm\epsilon
 &=(\widehat{\bm N}-\bm N)\bm S^*\bm y
 -\gamma^*\widehat{\bm N}(\widehat{\bm W}-\bm W^*)\bm y,\\
 \widehat{\bm v}-\bm v
 &=(\widehat{\bm N}-\bm N)\bm W^*\bm y
 +\widehat{\bm N}(\widehat{\bm W}-\bm W^*)\bm y.
\end{align*}
Lemma~\ref{lem: response operator bounds} gives
$\|\bm y\|_2+\|\bm W^*\bm y\|_2=O_p(\sqrt n)$, and hence
$\|\bm S^*\bm y\|_2\le\|\bm y\|_2+|\gamma^*|\|\bm W^*\bm y\|_2
=O_p(\sqrt n)$.
Using $\|\widehat{\bm N}\|_2\le1$, we obtain
\begin{align*}
 \frac{\|\widehat{\bm u}-\bm N\bm\epsilon\|_2+
 \|\widehat{\bm v}-\bm v\|_2}{\sqrt n}
 &\le\|\widehat{\bm N}-\bm N\|_F
 \frac{\|\bm S^*\bm y\|_2+\|\bm W^*\bm y\|_2}{\sqrt n}
 +(1+|\gamma^*|)\|\widehat{\bm W}-\bm W^*\|_F
 \frac{\|\bm y\|_2}{\sqrt n}\\
 &=O_p(s_n^{-1/2})+O_p(K_n/\sqrt{s_n})
 =O_p(K_n/\sqrt{s_n}),
\end{align*}
where the last line uses \eqref{eq: joint design perturbation},
$\|\widehat{\bm W}-\bm W^*\|_F=O_p(K_n/\sqrt{s_n})$
from Theorem~\ref{thm: first stage}, and $K_n\ge1$.
Put $e_n=K_n/\sqrt{s_n}=o(1)$.
We now bound the errors in the three coefficients of
$n^{-1}\|\widehat{\bm u}-h\widehat{\bm v}\|_2^2$.
By \eqref{eq: quadratic profile coefficients} and Lemma~\ref{lem: response operator bounds}, respectively,
\[
 \frac{\|\bm N\bm\epsilon\|_2^2}{n}
 =a_n=\sigma^{*2}+O_p(n^{-1/2})=O_p(1),\quad
 \|\bm v\|_2=\|\bm N\bm W^*\bm y\|_2
 \le\|\bm W^*\bm y\|_2=O_p(\sqrt n).
\]
The triangle inequality and the preceding difference bounds therefore give
\begin{align*}
 \|\widehat{\bm u}\|_2+\|\widehat{\bm v}\|_2
 \le\|\bm N\bm\epsilon\|_2+\|\bm v\|_2
 +\|\widehat{\bm u}-\bm N\bm\epsilon\|_2+\|\widehat{\bm v}-\bm v\|_2
 =O_p(\sqrt n)+O_p(\sqrt n\,e_n)=O_p(\sqrt n),
\end{align*}
since $e_n=o(1)$.
Using the definitions in \eqref{eq: quadratic profile identity}, we obtain
\begin{align*}
 \left|\frac{\|\widehat{\bm u}\|_2^2}{n}-a_n\right|
 &\le\frac{\|\widehat{\bm u}-\bm N\bm\epsilon\|_2
 (\|\widehat{\bm u}\|_2+\|\bm N\bm\epsilon\|_2)}n=O_p(e_n),\\
 \left|\frac{\widehat{\bm u}^\top\widehat{\bm v}}n-b_n^{\rm R}\right|
 &\le\frac{\|\widehat{\bm u}-\bm N\bm\epsilon\|_2\|\widehat{\bm v}\|_2
 +\|\bm N\bm\epsilon\|_2\|\widehat{\bm v}-\bm v\|_2}n=O_p(e_n),\\
 \left|\frac{\|\widehat{\bm v}\|_2^2}n-c_n^{\rm R}\right|
 &\le\frac{\|\widehat{\bm v}-\bm v\|_2
 (\|\widehat{\bm v}\|_2+\|\bm v\|_2)}n=O_p(e_n).
\end{align*}
Since $s_n\le n/4$ and $K_n\ge1$, we have $\sqrt{K_n/n}\le Ce_n$.
Combining these bounds with \eqref{eq: quadratic profile coefficients} gives
\[
 \frac{\|\widehat{\bm u}\|_2^2}n-\sigma^{*2}=O_p(e_n),\quad
 \frac{\widehat{\bm u}^\top\widehat{\bm v}}n=O_p(e_n),\quad
 \frac{\|\widehat{\bm v}\|_2^2}n-j_n=O_p(e_n).
\]
Consequently, with $h=\gamma-\gamma^*$,
\[
 \frac{\operatorname{RSS}_n(\gamma;\widehat{\bm\nu})}n
 -\sigma^{*2}-j_nh^2
 =\left(\frac{\|\widehat{\bm u}\|_2^2}n-\sigma^{*2}\right)
 -2h\frac{\widehat{\bm u}^\top\widehat{\bm v}}n
 +h^2\left(\frac{\|\widehat{\bm v}\|_2^2}n-j_n\right).
\]
The right side is a quadratic polynomial in $h$ with all three coefficients
$O_p(e_n)$. Since $|h|\le2$ on $\Gamma_\gamma$ and $dh/d\gamma=1$,
its value and its first two derivatives are uniformly $O_p(e_n)$, proving
\begin{align}
\label{eq: quadratic fitted profile control}
 \max_{0\le j\le2}\sup_\gamma
 \left|\partial_\gamma^j
 \{n^{-1}\operatorname{RSS}_n(\gamma;\widehat{\bm\nu})
 -\sigma^{*2}-j_n(\gamma-\gamma^*)^2\}\right|=O_p(e_n).
\end{align}
In particular, the fitted RSS is uniformly bounded below by $cn$.

To control the log-determinant part of the profile criterion, let $\bm T$
be a nonnegative row-stochastic matrix with entries bounded by $CK_n/n$.
The power entry bound used in \eqref{eq: stochastic powers resolvents} gives $|\tr(\bm T^j)|\le CK_n$ for every $j\ge1$.
Since $\bm T$ is row stochastic, $\bm I_n-\gamma\bm T$ is invertible for
$|\gamma|<1$. Its determinant is continuous in $\gamma$ and equals one at zero,
so it is positive on $\Gamma_\gamma$. We use the expansion $\log\det(\bm I_n-\gamma\bm T) =-\sum_{j=1}^{\infty}\frac{\gamma^j}{j}\tr(\bm T^j)$.
On $|\gamma|\le1-\delta_\gamma$, this series and its first three derivative
series converge uniformly, allowing term-by-term differentiation.
The sums of the absolute coefficients are bounded by constants depending only on
$\delta_\gamma$, giving
\begin{align}
\label{eq: normalized logdet bound}
 \max_{0\le j\le3}\sup_{\gamma\in\Gamma_\gamma}
 \left|\partial_\gamma^j\{n^{-1}\log\det(\bm I_n-\gamma\bm T)\}\right|
 \le CK_n/n.
\end{align}
By the entry bound established in the proof of Lemma~\ref{lem: response operator bounds},
with probability tending to one, \eqref{eq: normalized logdet bound} holds simultaneously for
$\bm T=(1-t)\bm W^*+t\widehat{\bm W}$, $t\in[0,1]$.

Put $f_n(\gamma)=n^{-1}\operatorname{RSS}_n(\gamma;\widehat{\bm\nu})$ and
$g_n(\gamma)=\sigma^{*2}+j_n(\gamma-\gamma^*)^2$.
Define
\[
 \ell_n^0(\gamma)=-\tfrac12\log g_n(\gamma)-\tfrac12
 =-\tfrac12\log\{\sigma^{*2}+j_n(\gamma-\gamma^*)^2\}-\tfrac12.
\]
Since $g_n'=2j_n(\gamma-\gamma^*)$ and $g_n''=2j_n$, the bounds
$j_n\le C$ and $|\gamma-\gamma^*|\le2$ give
$\sup_{\gamma\in\Gamma_\gamma}(|g_n'|+|g_n''|)\le C$.
Equation~\eqref{eq: quadratic fitted profile control} then gives
$\sup_{\gamma\in\Gamma_\gamma}(|f_n'|+|f_n''|)=O_p(1)$.
Also $\inf_\gamma g_n\ge\sigma^{*2}\ge c>0$, and
$\inf_\gamma f_n\ge c/2$ with probability tending to one.
The mean-value theorem for $\log$ and the identities $(\log f_n)'=f_n'/f_n$, $(\log f_n)''=f_n''/f_n-(f_n'/f_n)^2$
therefore imply, by \eqref{eq: quadratic fitted profile control},
$\max_{0\le j\le2}\sup_{\gamma\in\Gamma_\gamma}
|\partial_\gamma^j(\log f_n-\log g_n)|=O_p(e_n)$.
Finally,
\[
 L_{y,\mathrm{prof}}(\gamma;\widehat{\bm\nu})-\ell_n^0(\gamma)
 =n^{-1}\log\det(\bm I_n-\gamma\widehat{\bm W})
 -\tfrac12\{\log f_n(\gamma)-\log g_n(\gamma)\}.
\]
Combining this identity with \eqref{eq: normalized logdet bound} and
$K_n/n\le e_n$ gives
\begin{align}
\label{eq: supplementary fitted profile uniform}
 \max_{0\le j\le2}\sup_\gamma
 |\partial_\gamma^j\{L_{y,\mathrm{prof}}(\gamma;\widehat{\bm\nu})
 -\ell_n^0(\gamma)\}|=O_p(e_n).
\end{align}
Equations~\eqref{eq: joint design perturbation} and
\eqref{eq: quadratic fitted profile control} give $\Pr(E_n^c)\to0$.
To prove $\widehat\gamma-\gamma^*=o_p(1)$, define on $E_n$
$\Delta_n=\sup_{\gamma\in\Gamma_\gamma}
|L_{y,\mathrm{prof}}(\gamma;\widehat{\bm\nu})-\ell_n^0(\gamma)|$,
and set $\Delta_n=0$ on $E_n^c$.
Equation~\eqref{eq: supplementary fitted profile uniform} gives
$\Delta_n=O_p(e_n)=o_p(1)$.
The maximum grid gap is at most $n^{-1}$, so choose
$\gamma_n^\circ\in\Gamma_{\gamma,n}$ with
$|\gamma_n^\circ-\gamma^*|\le n^{-1}$.
By $\log(1+x)\le x$ for $x\ge0$,
\[
 0\le\ell_n^0(\gamma^*)-\ell_n^0(\gamma_n^\circ)
 =\frac12\log\left\{1+\frac{j_n}{\sigma^{*2}}
 (\gamma_n^\circ-\gamma^*)^2\right\}
 \le\frac{j_n}{2\sigma^{*2}}|\gamma_n^\circ-\gamma^*|^2\le C/n^2.
\]
On $E_n$, $\widehat\gamma$ maximizes $L_{y,\mathrm{prof}}(\gamma;\widehat{\bm\nu})$
on the grid, so
\begin{align*}
 \ell_n^0(\widehat\gamma)
 \ge L_{y,\mathrm{prof}}(\widehat\gamma;\widehat{\bm\nu})-\Delta_n
 \ge L_{y,\mathrm{prof}}(\gamma_n^\circ;\widehat{\bm\nu})-\Delta_n\ge\ell_n^0(\gamma_n^\circ)-2\Delta_n
 \ge\ell_n^0(\gamma^*)-C/n^2-2\Delta_n.
\end{align*}
For any fixed $\varepsilon>0$, the bounds $j_n\ge c$ and
$\sigma^{*2}\le C$ give, whenever $|\gamma-\gamma^*|\ge\varepsilon$,
\[
 \ell_n^0(\gamma^*)-\ell_n^0(\gamma)
 =\frac12\log\left\{1+\frac{j_n(\gamma-\gamma^*)^2}{\sigma^{*2}}\right\}
 \ge\frac12\log(1+c\varepsilon^2/C)=:c_\varepsilon>0.
\]
Consequently, on $E_n$, $|\widehat\gamma-\gamma^*|\ge\varepsilon$ implies
$2\Delta_n+C/n^2\ge c_\varepsilon$. Hence
\[
 \Pr\{|\widehat\gamma-\gamma^*|\ge\varepsilon\}
 \le\Pr\big(E_n\cap\{2\Delta_n+C/n^2\ge c_\varepsilon\}\big)
 +\Pr(E_n^c)\longrightarrow0.
\]
Thus $\widehat\gamma-\gamma^*=o_p(1)$.
Assumption~\ref{assump: response} places $\gamma^*$ a fixed positive distance
from the endpoints of $\Gamma_\gamma$. Thus, with probability tending to one,
$E_n$ holds and $\widehat\gamma$ has adjacent grid points
$\widehat\gamma-d_-$ and $\widehat\gamma+d_+$, where
$0<d_-,d_+\le n^{-1}$.
Define $\widehat\varphi_n(\gamma)=\partial_\gamma
L_{y,\mathrm{prof}}(\gamma;\widehat{\bm\nu})$.
On this event, the grid maximum property and Taylor's theorem give
\begin{align*}
 0&\ge L_{y,\mathrm{prof}}(\widehat\gamma+d_+;\widehat{\bm\nu})
 -L_{y,\mathrm{prof}}(\widehat\gamma;\widehat{\bm\nu})
 =d_+\widehat\varphi_n(\widehat\gamma)+\tfrac12d_+^2\widehat\varphi_n'(\xi_+),\\
 0&\ge L_{y,\mathrm{prof}}(\widehat\gamma-d_-;\widehat{\bm\nu})
 -L_{y,\mathrm{prof}}(\widehat\gamma;\widehat{\bm\nu})
 =-d_-\widehat\varphi_n(\widehat\gamma)+\tfrac12d_-^2\widehat\varphi_n'(\xi_-),
\end{align*}
where $\xi_+\in(\widehat\gamma,\widehat\gamma+d_+)$ and
$\xi_-\in(\widehat\gamma-d_-,\widehat\gamma)$.
The derivative bounds in \eqref{eq: supplementary fitted profile uniform}
give $\sup_{\gamma\in\Gamma_\gamma}|\widehat\varphi_n'(\gamma)|=O_p(1)$.
Dividing the two inequalities by $d_+$ and $d_-$, respectively, therefore yields
\[
 -\frac{d_-}{2}\sup_{\gamma\in\Gamma_\gamma}|\widehat\varphi_n'(\gamma)|
 \le\widehat\varphi_n(\widehat\gamma)
 \le\frac{d_+}{2}\sup_{\gamma\in\Gamma_\gamma}|\widehat\varphi_n'(\gamma)|, \quad |\widehat\varphi_n(\widehat\gamma)|=O_p(n^{-1}).
\]

Next, direct differentiation gives
\[
 (\ell_n^0)'(\gamma^*)=0,\qquad
 (\ell_n^0)''(\gamma)
 =-\frac{j_n\{\sigma^{*2}-j_n(\gamma-\gamma^*)^2\}}
 {\{\sigma^{*2}+j_n(\gamma-\gamma^*)^2\}^2}.
\]
Since $j_n$ and $\sigma^{*2}$ are bounded above and away from zero,
there are fixed $\delta_0,c_0>0$ such that
$(\ell_n^0)''(\gamma)\le-c_0$ whenever
$|\gamma-\gamma^*|\le\delta_0$.
By consistency, $|\widehat\gamma-\gamma^*|\le\delta_0$ with probability tending to one.
The mean-value theorem and \eqref{eq: supplementary fitted profile uniform} then give
\begin{align*}
 c_0|\widehat\gamma-\gamma^*|
 &\le| (\ell_n^0)'(\widehat\gamma)-(\ell_n^0)'(\gamma^*)|\\
 &\le|\widehat\varphi_n(\widehat\gamma)|+
 \sup_{\gamma\in\Gamma_\gamma}|(\ell_n^0)'(\gamma)-\widehat\varphi_n(\gamma)|
 =O_p(n^{-1})+O_p(e_n)=O_p(e_n),
\end{align*}
where $n^{-1}\le e_n$ follows from $s_n\le n/4$ and $K_n\ge1$.
Hence $\widehat\gamma-\gamma^*=O_p(e_n)$.

For $\bm a=(\bm\beta^\top,\bm\eta^\top)^\top$, on $E_n$ the joint
least-squares solution satisfies
$\widehat{\bm a}=\widehat{\bm X}_+^{\dagger}
(\bm I_n-\widehat\gamma\widehat{\bm W})\bm y$ and
$\widehat{\bm X}_+^{\dagger}\widehat{\bm X}_+=\bm I_{p+r}$.
The response model gives
$(\bm I_n-\gamma^*\bm W^*)\bm y=\bm X_+^*\bm a^*+\bm\epsilon$.
Since $(\widehat{\bm X}_+-\bm X_+^*)\bm a^*
=(\widehat{\bm Z}-\bm Z^*)\bm\eta^*$, we obtain
\begin{align*}
 (\bm I_n-\widehat\gamma\widehat{\bm W})\bm y
 -\widehat{\bm X}_+\bm a^*
 &=\bm\epsilon-(\widehat{\bm Z}-\bm Z^*)\bm\eta^*
 -(\widehat\gamma\widehat{\bm W}-\gamma^*\bm W^*)\bm y\\
 &=\bm\epsilon-(\widehat{\bm Z}-\bm Z^*)\bm\eta^*
 -\gamma^*(\widehat{\bm W}-\bm W^*)\bm y
 -(\widehat\gamma-\gamma^*)\widehat{\bm W}\bm y.
\end{align*}
Multiplying by $\widehat{\bm X}_+^{\dagger}$ therefore gives
\[
 \widehat{\bm a}-\bm a^*=
 \widehat{\bm X}_+^{\dagger}\{\bm\epsilon
 -(\widehat{\bm Z}-\bm Z^*)\bm\eta^*
 -\gamma^*(\widehat{\bm W}-\bm W^*)\bm y
 -(\widehat\gamma-\gamma^*)\widehat{\bm W}\bm y\}.
\]
Conditional on the network, independence of $\bm\epsilon$ and
$\widehat{\bm X}_+$ gives, on the event in
\eqref{eq: joint design perturbation},
\[
 \mathbb E_{\epsilon}\|\widehat{\bm X}_+^{\dagger}\bm\epsilon\|_2^2
 =\sigma^{*2}\tr\{(\widehat{\bm X}_+^{\top}\widehat{\bm X}_+)^{-1}\}
 \le C/n.
\]
Thus $\|\widehat{\bm X}_+^{\dagger}\bm\epsilon\|_2=O_p(n^{-1/2})$.
Using Theorem~\ref{thm: first stage},
Lemma~\ref{lem: response operator bounds},
\eqref{eq: joint design perturbation}, and
$\widehat\gamma-\gamma^*=O_p(e_n)$, the other three terms satisfy
\begin{align*}
 \|\widehat{\bm X}_+^{\dagger}(\widehat{\bm Z}-\bm Z^*)\bm\eta^*\|_2
 &\le\|\widehat{\bm X}_+^{\dagger}\|_2
 \|\widehat{\bm Z}-\bm Z^*\|_F\|\bm\eta^*\|_2
 =O_p(s_n^{-1/2}),\\
 |\gamma^*|\|\widehat{\bm X}_+^{\dagger}(\widehat{\bm W}-\bm W^*)\bm y\|_2
 &\le |\gamma^*|\|\widehat{\bm X}_+^{\dagger}\|_2
 \|\widehat{\bm W}-\bm W^*\|_F\|\bm y\|_2=O_p(e_n),\\
 |\widehat\gamma-\gamma^*|\|\widehat{\bm X}_+^{\dagger}\widehat{\bm W}\bm y\|_2
 &\le |\widehat\gamma-\gamma^*|\|\widehat{\bm X}_+^{\dagger}\|_2
 \|\widehat{\bm W}\bm y\|_2=O_p(e_n).
\end{align*}
Since $n^{-1/2}+s_n^{-1/2}=O(e_n)$, this proves
$\|\widehat{\bm a}-\bm a^*\|_2=O_p(e_n)$.
Equation~\eqref{eq: quadratic fitted profile control}
and $\widehat\tau=n/\operatorname{RSS}_n(\widehat\gamma;\widehat{\bm\nu})$
give $\widehat\tau-\tau^*=O_p(e_n)$.
Combining the bounds for $\widehat{\bm a}-\bm a^*$,
$\widehat\gamma-\gamma^*$ and $\widehat\tau-\tau^*$ proves
\eqref{eq: supplementary profile estimator rate}, since $e_n=K_n/\sqrt{s_n}=o(1)$.
At the algorithm output, define the fitted residual $\widehat{\bm\epsilon} =(\bm I_n-\widehat\gamma\widehat{\bm W})\bm y -\bm X\widehat{\bm\beta}-\widehat{\bm Z}\widehat{\bm\eta}$.
The joint least-squares equations and
$\widehat\tau=n/\operatorname{RSS}_n(\widehat\gamma;\widehat{\bm\nu})
=n/\|\widehat{\bm\epsilon}\|_2^2$ give $\bm X^\top\widehat{\bm\epsilon}=\bm0$, $\widehat{\bm Z}^{\top}\widehat{\bm\epsilon}=\bm0$, $\|\widehat{\bm\epsilon}\|_2^2=n/\widehat\tau$.
Thus the $\bm\beta$-, $\bm\eta$- and $\tau$-components of the full score vanish.
To identify its gamma coordinate, differentiate the scalar profile criterion
along $(\widehat{\bm\beta}_\gamma,\widehat{\bm\eta}_\gamma,\gamma,\widehat\tau_\gamma)$.
Put $\widehat{\bm\zeta}_\gamma=(\widehat{\bm\beta}_\gamma^\top,\widehat{\bm\eta}_\gamma^\top,\gamma,\widehat\tau_\gamma)^\top$.
The chain rule gives
\begin{align*}
 \widehat\varphi_n(\gamma)
 =q_{y,\gamma}(\widehat{\bm\zeta}_\gamma;\widehat{\bm\nu})
 +\bm q_{y,\beta}(\widehat{\bm\zeta}_\gamma;\widehat{\bm\nu})^{\top}\frac{d\widehat{\bm\beta}_\gamma}{d\gamma}+\bm q_{y,\eta}(\widehat{\bm\zeta}_\gamma;\widehat{\bm\nu})^{\top}\frac{d\widehat{\bm\eta}_\gamma}{d\gamma}
 +q_{y,\tau}(\widehat{\bm\zeta}_\gamma;\widehat{\bm\nu})\frac{d\widehat\tau_\gamma}{d\gamma}
 =q_{y,\gamma}(\widehat{\bm\zeta}_\gamma;\widehat{\bm\nu}),
\end{align*}
where the other
three terms vanish by their first-order equations. Since
$\widehat{\bm\zeta}_{\widehat\gamma}=\widehat{\bm\zeta}$, we obtain
$q_{y,\gamma}(\widehat{\bm\zeta};\widehat{\bm\nu})
=\widehat\varphi_n(\widehat\gamma)=O_p(n^{-1})$, proving
\eqref{eq: supplementary full grid score}.

\hfill$\blacksquare$

\begin{lemma}\label{lem: fitted response Jacobian}
Suppose Assumptions~\ref{assump: network}--\ref{assump: response errors} hold.
For any random $\widetilde{\bm\zeta}$ satisfying $\|\widetilde{\bm\zeta}-\bm\zeta^*\|_2=o_p(1)$,
\begin{align}
\label{eq: supplementary fitted Jacobian consistency}
 \|-\nabla_{\bm\zeta}\bm q_y(\widetilde{\bm\zeta};\widehat{\bm\nu})-\mathsf J_{y,n}\|_2 =o_p(1).
\end{align}
For some fixed sufficiently small $\delta>0$, let
$\mathcal B_\delta=\{\bm\zeta\in\mathbb R^{p+r+2}:
\|\bm\zeta-\bm\zeta^*\|_2\le\delta\}$. Then
\begin{align*}
 \sup_{\substack{\bm\zeta_1,\bm\zeta_2\in\mathcal B_\delta,
 \bm\zeta_1\ne\bm\zeta_2}}
 \frac{\|\nabla_{\bm\zeta}\bm q_y(\bm\zeta_1;\widehat{\bm\nu})
 -\nabla_{\bm\zeta}\bm q_y(\bm\zeta_2;\widehat{\bm\nu})\|_2}
 {\|\bm\zeta_1-\bm\zeta_2\|_2}
 =O_p(1).
\end{align*}
\end{lemma}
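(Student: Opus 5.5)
The plan is to work with the explicit Jacobian and compare it entrywise with the block form \eqref{eq: full response information} of $\mathsf J_{y,n}$. Differentiating \eqref{eq: full response score} with $(\bm y,\widehat{\bm P},\widehat{\bm Z})$ held fixed gives the analogue of \eqref{eq: full response Hessian}:
\[
 -\nabla_{\bm\zeta}\bm q_y(\bm\zeta;\widehat{\bm\nu})
 =\begin{pmatrix}
 \tau\widehat{\bm B}_y^\top\widehat{\bm B}_y/n+\widehat t_\gamma'\widetilde{\bm e}_\gamma\widetilde{\bm e}_\gamma^\top
 &-\widehat{\bm B}_y^\top\bm r_y(\bm\zeta;\widehat{\bm\nu})/n\\
 -\bm r_y(\bm\zeta;\widehat{\bm\nu})^\top\widehat{\bm B}_y/n&1/(2\tau^2)
 \end{pmatrix}.
\]
Here $\widehat{\bm B}_y=(\bm X,\widehat{\bm Z},\widehat{\bm W}\bm y)$ and $\widehat t_\gamma=n^{-1}\tr\{\widehat{\bm W}(\bm I_n-\gamma\widehat{\bm W})^{-1}\}$, so that $\widehat t_\gamma'=-\partial_\gamma^2\{n^{-1}\log\det(\bm I_n-\gamma\widehat{\bm W})\}$. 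We work on the event of Lemma~\ref{lem: response operator bounds}. On that event \eqref{eq: normalized logdet bound} applied to $\bm T=\widehat{\bm W}$ gives $\sup_{\gamma\in\Gamma_\gamma}(|\widehat t_\gamma'|+|\widehat t_\gamma''|)\le CK_n/n=o(1)$. This disposes of the log-determinant contribution to both claims.

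For \eqref{eq: supplementary fitted Jacobian consistency}, first evaluate at $\bm\zeta^*$. For the Gram block, write $\widehat{\bm B}_y=\bm B_y+(\bm0,\widehat{\bm Z}-\bm Z^*,(\widehat{\bm W}-\bm W^*)\bm y)$. Theorem~\ref{thm: first stage} gives $\|\widehat{\bm Z}-\bm Z^*\|_F=O_p(\sqrt{n/s_n})$ and $\|(\widehat{\bm W}-\bm W^*)\bm y\|_2\le\|\widehat{\bm W}-\bm W^*\|_F\|\bm y\|_2=O_p(\sqrt n K_n/\sqrt{s_n})$. Together with $\|\bm B_y\|_2=O_p(\sqrt n)$ from Lemma~\ref{lem: response operator bounds}, this gives $\|\widehat{\bm B}_y^\top\widehat{\bm B}_y-\bm B_y^\top\bm B_y\|_2/n=O_p(K_n/\sqrt{s_n})=o_p(1)$. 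Next, $\bm B_y=\bm C_n+\bm Q\bm\epsilon\widetilde{\bm e}_\gamma^\top$. The cross terms $\bm C_n^\top\bm Q\bm\epsilon/n$ and the centered part of $\bm\epsilon^\top\bm Q^\top\bm Q\bm\epsilon/n$ are $O_p(\sqrt{K_n}/n)$ by \eqref{eq: linear quadratic variance bound}, and the mean $\sigma^{*2}\|\bm Q\|_F^2/n=O(K_n/n)$. Hence $n^{-1}\bm B_y^\top\bm B_y=\bm G_n+o_p(1)$.

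For the off-diagonal block at $\bm\zeta^*$, write $\bm r_y(\bm\zeta^*;\widehat{\bm\nu})=\bm\epsilon-(\widehat{\bm Z}-\bm Z^*)\bm\eta^*-\gamma^*(\widehat{\bm W}-\bm W^*)\bm y$. The last two pieces have norm $o_p(\sqrt n)$, so they contribute $o_p(1)$ after multiplying by $\widehat{\bm B}_y^\top/n$. For $\bm B_y^\top\bm\epsilon/n$, the linear part $\bm C_n^\top\bm\epsilon/n=O_p(n^{-1/2})$, and the quadratic part $\bm\epsilon^\top\bm Q^\top\bm\epsilon/n$ has mean $\sigma^{*2}\tr(\bm Q)/n=O(K_n/n)$ and standard deviation $O(\sqrt{K_n}/n)$. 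Combined with \eqref{eq: full response information} and $\|\bm R_{J,n}\|_2=o(1)$, this shows $-\nabla_{\bm\zeta}\bm q_y(\bm\zeta^*;\widehat{\bm\nu})-\mathsf J_{y,n}=o_p(1)$. The passage from $\bm\zeta^*$ to $\widetilde{\bm\zeta}$ then follows from the Lipschitz bound below, since $\|\widetilde{\bm\zeta}-\bm\zeta^*\|_2=o_p(1)$.

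For the Lipschitz claim, choose $\delta$ small enough that every $\bm\zeta\in\mathcal B_\delta$ has $\gamma\in\Gamma_\gamma$ and $\tau\ge c>0$; this is possible because $\bm\zeta^*$ lies in a fixed compact set interior to the parameter region. The entry $1/(2\tau^2)$ is then Lipschitz with a constant depending only on $c$. The residual $\bm r_y(\bm\zeta;\widehat{\bm\nu})=\bm y-\gamma\widehat{\bm W}\bm y-\bm X\bm\beta-\widehat{\bm Z}\bm\eta$ is affine in $\bm\zeta$, with slope operator $\widehat{\bm B}_y$, so its increments are controlled by $\|\widehat{\bm B}_y\|_2\|\bm\zeta_1-\bm\zeta_2\|_2$. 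Consequently $\widehat{\bm B}_y^\top\bm r_y/n$ has Lipschitz constant $\|\widehat{\bm B}_y\|_2^2/n=O_p(1)$. The block $\tau\widehat{\bm B}_y^\top\widehat{\bm B}_y/n$ is Lipschitz with the same constant, and $\widehat t_\gamma'$ has a uniformly bounded derivative by the bound in the first paragraph. Summing over the fixed number of blocks gives the $O_p(1)$ Lipschitz bound.

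The main obstacle is the replacement of $\bm W^*$ by $\widehat{\bm W}$ inside the quadratic terms $\bm y^\top\widehat{\bm W}^\top\widehat{\bm W}\bm y/n$ and $\bm y^\top\widehat{\bm W}^\top\bm\epsilon/n$. Since $\widehat{\bm W}$ depends on $\bm A$, which is independent of $\bm\epsilon$, I would condition on the network. Then I would use only the crude operator bound $\|\widehat{\bm W}-\bm W^*\|_2\le\|\widehat{\bm W}-\bm W^*\|_F=o_p(1)$ together with $\|\bm y\|_2=O_p(\sqrt n)$, rather than a finer expansion. This crude bound suffices because consistency, unlike the CLT, only needs $o_p(1)$ errors.
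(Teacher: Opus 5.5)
Your proposal is correct and follows essentially the same route as the paper. Both arguments use the explicit Jacobian with an affine residual and the same three ingredients: replacing $(\bm W^*,\bm Z^*)$ by the fitted inputs via the Frobenius rates of Theorem~\ref{thm: first stage} together with $\|\bm y\|_2=O_p(\sqrt n)$, linear--quadratic variance bounds at the truth, and the normalized log-determinant derivative bound for the trace term, followed by the same Lipschitz argument to pass from $\bm\zeta^*$ to $\widetilde{\bm\zeta}$. One small rate slip does not affect the conclusion: the cross term $\bm C_n^\top\bm Q\bm\epsilon/n$ has coefficient norm $\|\bm Q^\top\bm C_n\|_F\le C\sqrt{nK_n}$, so it is $O_p(\sqrt{K_n/n})$ rather than $O_p(\sqrt{K_n}/n)$, and the centered part of $\bm\epsilon^\top\bm Q^\top\bm Q\bm\epsilon/n$ is $O_p(K_n/n)$; both are still $o_p(1)$.
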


\noindent\textbf{Proof of Lemma~\ref{lem: fitted response Jacobian}.}
Put $e_n=K_n/\sqrt{s_n}=o(1)$ and write
$\bm H_n(\bm\zeta;\bm\nu)=-\nabla_{\bm\zeta}\bm q_y(\bm\zeta;\bm\nu)$.
We first control $\bm H_n(\bm\zeta^*;\bm\nu^*)-\mathsf J_{y,n}$.
Since $\bm B_y=\bm C_n+\bm Q\bm\epsilon\widetilde{\bm e}_\gamma^\top$ and
$\bm r_y(\bm\zeta^*;\bm\nu^*)=\bm\epsilon$, the random blocks in
\eqref{eq: full response Hessian} satisfy
\begin{align*}
 \bm B_y^\top\bm B_y-\mathbb E_\epsilon(\bm B_y^\top\bm B_y)
 &=\bm C_n^\top\bm Q\bm\epsilon\widetilde{\bm e}_\gamma^\top
 +\widetilde{\bm e}_\gamma\bm\epsilon^\top\bm Q^\top\bm C_n
 +\widetilde{\bm e}_\gamma\widetilde{\bm e}_\gamma^\top\{\bm\epsilon^\top\bm Q^\top\bm Q\bm\epsilon
 -\sigma^{*2}\|\bm Q\|_F^2\},\\
 \bm B_y^\top\bm\epsilon-\mathbb E_\epsilon(\bm B_y^\top\bm\epsilon)
 &=\bm C_n^\top\bm\epsilon
 +\widetilde{\bm e}_\gamma\{\bm\epsilon^\top\bm Q^\top\bm\epsilon
 -\sigma^{*2}\tr(\bm Q)\}.
\end{align*}
Lemma~\ref{lem: response conditioning} gives $\|\bm C_n\|_2\le C\sqrt n$,
and Lemma~\ref{lem: response operator bounds} gives
$\|\bm Q\|_2\le\|\bm Q\|_F\le C\sqrt{K_n}$.
Thus the linear coefficient norms are at most $C\sqrt{nK_n}$ and
$C\sqrt n$, respectively, while
$\|\bm Q^\top\bm Q\|_F\le\|\bm Q\|_2\|\bm Q\|_F\le CK_n$.
Applying \eqref{eq: linear quadratic variance bound} to these terms
(with $(\bm Q+\bm Q^\top)/2$ for the last quadratic form), and dividing by $n$, gives
\[
 \|\bm H_n(\bm\zeta^*;\bm\nu^*)-\mathsf J_{y,n}\|_2
 =O_p\!\left(\sqrt{K_n/n}+K_n/n\right)
 =O_p\!\left(\sqrt{K_n/n}\right),
\]
since the matrix dimension is fixed and $K_n/n=o(1)$.

Next, replace the true inputs by the fitted inputs. Define
$\widehat{\bm B}=(\bm X,\widehat{\bm Z},\widehat{\bm W}\bm y)$ and
$\widehat{\bm r}_*=\bm r_y(\bm\zeta^*;\widehat{\bm\nu})$.
Then
\[
 \widehat{\bm B}-\bm B_y
 =(\bm0,\widehat{\bm Z}-\bm Z^*,(\widehat{\bm W}-\bm W^*)\bm y),\quad
 \widehat{\bm r}_*-\bm\epsilon
 =-(\widehat{\bm Z}-\bm Z^*)\bm\eta^*
 -\gamma^*(\widehat{\bm W}-\bm W^*)\bm y.
\]
Theorem~\ref{thm: first stage} and Lemma~\ref{lem: response operator bounds}
therefore imply
\[
 n^{-1/2}\{\|\widehat{\bm B}-\bm B_y\|_2
 +\|\widehat{\bm r}_*-\bm\epsilon\|_2\}=O_p(e_n),\quad
 n^{-1/2}\{\|\bm B_y\|_2+\|\widehat{\bm B}\|_2
 +\|\bm\epsilon\|_2+\|\widehat{\bm r}_*\|_2\}=O_p(1).
\]
Consequently, the two product blocks obey
\begin{align*}
 n^{-1}\|\widehat{\bm B}^\top\widehat{\bm B}-\bm B_y^\top\bm B_y\|_2
 &\le n^{-1}\|\widehat{\bm B}-\bm B_y\|_2
 (\|\widehat{\bm B}\|_2+\|\bm B_y\|_2)=O_p(e_n),\\
 n^{-1}\|\widehat{\bm B}^\top\widehat{\bm r}_*-\bm B_y^\top\bm\epsilon\|_2
 &\le n^{-1}\{\|\widehat{\bm B}-\bm B_y\|_2\|\widehat{\bm r}_*\|_2
 +\|\bm B_y\|_2\|\widehat{\bm r}_*-\bm\epsilon\|_2\}=O_p(e_n).
\end{align*}
It remains to control the term
$t_\gamma'\widetilde{\bm e}_\gamma\widetilde{\bm e}_\gamma^\top$
in \eqref{eq: full response Hessian}, where
\[
 t_\gamma'
 =\frac1n\tr\{[\bm W^*(\bm I_n-\gamma\bm W^*)^{-1}]^2\}
 =-\frac{\partial^2}{\partial\gamma^2}
 \left\{\frac1n\log\det(\bm I_n-\gamma\bm W^*)\right\}.
\]
The fitted term has the same expression with $\bm W^*$ replaced by
$\widehat{\bm W}$. By \eqref{eq: normalized logdet bound}, both scalar
coefficients are $O_p(K_n/n)$ uniformly in $\gamma$, so their difference
has the same order.
Since $K_n/n\le e_n$, the fitted version of
\eqref{eq: full response Hessian} now yields
\[
 \|\bm H_n(\bm\zeta^*;\widehat{\bm\nu})
 -\bm H_n(\bm\zeta^*;\bm\nu^*)\|_2=O_p(e_n).
\]

It remains to control variation in $\bm\zeta$.
Write $\bm\theta=(\bm\beta^\top,\bm\eta^\top,\gamma)^\top$ and define
\[
 \widehat t_\gamma=n^{-1}\tr\{\widehat{\bm W}
 (\bm I_n-\gamma\widehat{\bm W})^{-1}\}.
\]
Choose the fixed $\delta>0$ sufficiently small that $\tau$ is bounded
away from zero and $\gamma\in\Gamma_\gamma$ on $\mathcal B_\delta$,
as permitted by Assumption~\ref{assump: response}.
The fitted residual satisfies
\[
 \bm r_y(\bm\zeta;\widehat{\bm\nu})
 =\widehat{\bm r}_*-\widehat{\bm B}(\bm\theta-\bm\theta^*),\quad
 \sup_{\bm\zeta\in\mathcal B_\delta}
 n^{-1/2}\|\bm r_y(\bm\zeta;\widehat{\bm\nu})\|_2=O_p(1).
\]
For $\bm\zeta_1,\bm\zeta_2\in\mathcal B_\delta$, the residual difference is
$-\widehat{\bm B}(\bm\theta_1-\bm\theta_2)$.
The third-derivative bound in \eqref{eq: normalized logdet bound} also gives
$\sup_\gamma|\widehat t_\gamma''|=O_p(K_n/n)$.
Applying the mean-value theorem to $\widehat t_\gamma'$ and $1/(2\tau^2)$,
and bounding the other blocks in \eqref{eq: full response Hessian}, yields
\begin{align*}
 \|\bm H_n(\bm\zeta_1;\widehat{\bm\nu})
 -\bm H_n(\bm\zeta_2;\widehat{\bm\nu})\|_2
 &\le\left\{2n^{-1}\|\widehat{\bm B}\|_2^2
 +\sup_{\gamma\in\Gamma_\gamma}|\widehat t_\gamma''|+C\right\}
 \|\bm\zeta_1-\bm\zeta_2\|_2\\
 &=O_p(1)\|\bm\zeta_1-\bm\zeta_2\|_2,
\end{align*}
where the random bound is uniform over both points. This is the asserted
Lipschitz bound. Finally, $\widetilde{\bm\zeta}\in\mathcal B_\delta$ with
probability tending to one, and
\begin{align*}
 \bm H_n(\widetilde{\bm\zeta};\widehat{\bm\nu})-\mathsf J_{y,n}
 ={}&\{\bm H_n(\widetilde{\bm\zeta};\widehat{\bm\nu})
 -\bm H_n(\bm\zeta^*;\widehat{\bm\nu})\}\\
 &+\{\bm H_n(\bm\zeta^*;\widehat{\bm\nu})
 -\bm H_n(\bm\zeta^*;\bm\nu^*)\}
 +\{\bm H_n(\bm\zeta^*;\bm\nu^*)-\mathsf J_{y,n}\}.
\end{align*}
The three terms have norms
$O_p(1)\|\widetilde{\bm\zeta}-\bm\zeta^*\|_2=o_p(1)$,
$O_p(e_n)=o_p(1)$, and $O_p(\sqrt{K_n/n})=o_p(1)$, respectively.
This proves \eqref{eq: supplementary fitted Jacobian consistency} and
completes the proof.
\hfill$\blacksquare$

\begin{lemma}\label{lem: supplementary W expansion}
Suppose Assumptions~\ref{assump: network} and~\ref{assump: network growth} hold. Let $\widehat{\bm h}=\widehat{\bm\nu}-\bm\nu^*$
with $\psi(\widehat{\bm h})\in\mathcal T_n$. Then
\begin{align}
\label{eq: supplementary W first expansion}
 \|\widehat{\bm W}-\bm W^*
 -D_\nu\bm W(\bm\nu^*)[\widehat{\bm h}]\|_F
 &=O_p(K_n/s_n).
\end{align}
For arbitrary
tangent directions $\bm h=(\bm h_\alpha,\bm E_Z)$ at the truth,
\begin{align}
\label{eq: derivative operator bounds}
 \|D_\nu\bm W(\bm\nu^*)[\bm h]\|_F
 \le CK_n n^{-1/2}\|\psi(\bm h)\|_2.
\end{align}
If Assumptions~\ref{assump: response}--\ref{assump: response errors} also hold, then
\begin{align}
\label{eq: supplementary response composition true}
 \bm q_y(\bm\zeta^*;\widehat{\bm\nu})
 -\bm q_y(\bm\zeta^*;\bm\nu^*)
 -D_\nu\bm q_y(\bm\zeta^*;\bm\nu^*)[\widehat{\bm h}]
 =O_p(K_n^2/s_n).
\end{align}
For every score coordinate, the network derivative also satisfies
\begin{align}
\label{eq: response network derivative bound}
 \max_{a\in[p+r+2]}
 \sup_{\substack{\psi(\bm h)\in\mathcal T_n\\\|\psi(\bm h)\|_2=1}}
 |D_\nu q_{y,a}(\bm\zeta^*;\bm\nu^*)[\bm h]|
 =O_p(K_n/\sqrt n).
\end{align}
\end{lemma}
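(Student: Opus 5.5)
We prove the four claims in order, viewing $\bm W(\bm\nu)=\bm D(\bm P(\bm\nu))^{-1}\bm P(\bm\nu)$ as a composition of the entrywise map $\bm\nu\mapsto\bm P(\bm\nu)$ with the row normalization $\bm P\mapsto\bm W(\bm P)$. For \eqref{eq: derivative operator bounds}, differentiate:
\[
D_\nu\bm W(\bm\nu^*)[\bm h]=\bm D^{*-1}\dot{\bm P}-\bm D^{*-1}\diag(\dot{\bm P}\bm1_n)\bm W^*,
\qquad \dot p_{ij}=w_{ij}^*\dot\theta_{ij}^*[\bm h].
\]
By \eqref{eq: offdiag derivative geometry} and $w_{ij}^*\le w_{+,n}$, we get $\|\dot{\bm P}\|_F\le Cw_{+,n}\sqrt n\|\psi(\bm h)\|_2$. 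The first-line form of \eqref{eq: offdiag derivative geometry} holds for arbitrary $\bm h$, since the upper bound does not use the tangent constraints. With $d_i^*\ge cs_n$ from \eqref{eq: envelope implications}, the first term is $O(w_{+,n}\sqrt n/s_n)\|\psi(\bm h)\|_2=O(K_n n^{-1/2})\|\psi(\bm h)\|_2$. For the second term, bound $\sum_j(W^*_{ij})^2\le CK_n/n$ and $\|\dot{\bm P}\bm1_n\|_2\le\sqrt n\|\dot{\bm P}\|_F$. This gives
\[
\|\bm D^{*-1}\diag(\dot{\bm P}\bm1_n)\bm W^*\|_F\le \frac{C\sqrt{K_n/n}}{s_n}\,\sqrt n\, w_{+,n}\sqrt n\,\|\psi(\bm h)\|_2,
\]
which is again $O(K_n^{3/2}n^{-1/2})$. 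If this is too crude, we sharpen it using $\|\dot{\bm P}\bm1_n\|_2\le Cw_{+,n}n\|\psi(\bm h)\|_2$ entrywise via Cauchy--Schwarz on each row, recovering $CK_nn^{-1/2}$.

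For \eqref{eq: supplementary W first expansion}, take a second-order Taylor expansion along $\bm\nu_v=\bm\nu^*+v\widehat{\bm h}$. The remainder has two parts. The first is the entrywise $\expit''$ term, bounded by $w_{+,n}\Delta_{\theta,ij}^2$ plus the quadratic latent term $\bm\Delta_{z,i}^\top\bm\Delta_{z,j}$. The second is the normalization term, quadratic in $\Delta\bm P$, of the form $\bm D^{-2}\diag(\Delta\bm P\bm1)^2\bm W$ and $\bm D^{-2}\diag(\Delta\bm P\bm 1)\Delta\bm P$. For the first part we use the fourth-moment bound $n^{-1}\sum_i\Delta_i^4=O_p(s_n^{-2})$ from Lemma~\ref{lem: local likelihood root}. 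This gives $\|w_{+,n}\Delta_\theta^{\circ2}\|_F\le w_{+,n}(n\sum_i\Delta_i^4)^{1/2}=O_p(w_{+,n}n/s_n)$, and dividing by $d_i\asymp s_n$ yields $O_p(K_n/s_n)$. For the degree-normalization terms, the score equation gives $\widehat d_i-d_i^*=\sum_j(a_{ij}-p^*_{ij})$ exactly, as in the proof of Theorem~\ref{thm: first stage}. Since $\mathbb E\sum_i(\widehat d_i-d_i^*)^4/d_i^{*4}=O(n/s_n^2)$, the quadratic terms are $O_p(K_n^{1/2}/s_n)$.

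For \eqref{eq: supplementary response composition true}, view $\bm q_y(\bm\zeta^*;\cdot)$ through \eqref{eq: full response score} as a smooth function of $(\bm W,\bm Z)$. Its dependence on $\bm Z$ is linear in $\bm Z^\top\bm r_y$ and $\bm r_y$. Its dependence on $\bm W$ runs through $\bm W\bm y$ and $\tr\{\bm W(\bm I-\gamma^*\bm W)^{-1}\}$. The composition error then splits into three pieces. The first is the $D_W$ derivative applied to the remainder in \eqref{eq: supplementary W first expansion}; this is bounded by $n^{-1}\|\bm y\|_2\cdot\sqrt n\cdot O_p(K_n/s_n)$ using Lemma~\ref{lem: response operator bounds}, giving $O_p(K_n/s_n)$. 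The second is the second-order terms in $(\widehat{\bm W}-\bm W^*,\widehat{\bm Z}-\bm Z^*)$, such as $n^{-1}\bm\Delta_Z^\top(\widehat{\bm W}-\bm W^*)\bm y$ and $n^{-1}\|\bm\Delta_Z\bm\eta^*\|^2$. By Cauchy--Schwarz with the Frobenius rates of Theorem~\ref{thm: first stage}, these are $O_p(n^{-1}\sqrt{n/s_n}\,K_ns_n^{-1/2}\sqrt n)=O_p(K_n/s_n)$. The third is the trace term, whose second-order remainder in $\bm W$ needs
\[
n^{-1}\|\widehat{\bm W}-\bm W^*\|_F^2\,\sup_t\|(\bm I-\gamma^*\bm W_t)^{-1}\|_2^2,
\]
which by \eqref{eq: response operator bounds} is $O_p(K_n^2/s_n\cdot K_n/n)$. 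Combining gives $O_p(K_n^2/s_n)$.

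\textbf{The main obstacle is \eqref{eq: response network derivative bound}:} it requires a supremum over unit tangent directions, so no averaging over $\bm h$ is available. We would write each $D_\nu q_{y,a}[\bm h]$ as a linear functional $\langle\bm G_a,\psi(\bm h)\rangle$ and bound $\|\bm G_a\|_2$. The $\bm Z$-part contributes $n^{-1}\|\bm E_Z^\top\bm\epsilon\|\le n^{-1}\|\bm\epsilon\|_2\|\bm E_Z\|_F=O_p(n^{-1/2})$. The $\bm W$-parts contribute $n^{-1}\|D\bm W[\bm h]\|_F\|\bm y\|_2=O_p(K_nn^{-1/2})$ by \eqref{eq: derivative operator bounds}. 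The trace-derivative part is $n^{-1}|\tr\{D\bm W[\bm h]\,\bm S^{*-1}\bm S^{*-1}\}|\le n^{-1}\|D\bm W[\bm h]\|_F\|\bm S^{*-1}\|_F^2$-type bounds. This last part is the hardest: with $\|\bm S^{*-1}\|_F\asymp\sqrt n$ it gives $O(K_n n^{-1/2})$ only if we use $\|\bm S^{*-2}\|_F\le\sqrt n\,C$, which we expect to follow from $\|\bm S^{*-1}\|_2\le C\sqrt{K_n}$. If that is too lossy, we instead exploit that $\tr\{\dot{\bm W}\}=\sum_i\dot W_{ii}=0$ and expand $\bm S^{*-1}=\bm I+\sum_j\gamma^{*j}\bm W^{*j}$, then bound $\tr(\dot{\bm W}\bm W^{*j})$ via $\|\bm W^{*j}\|_F\le C\sqrt{K_n}$.
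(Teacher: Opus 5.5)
Your architecture is essentially the paper's: compose $\bm\nu\mapsto\bm P\mapsto\bm W$, control the entrywise second-order remainder with $n^{-1}\sum_i\Delta_i^4=O_p(s_n^{-2})$, expand $\bm q_y$ to second order in $(\bm W,\bm Z)$, and use the chain rule for the last display. The problem is your treatment of the degree normalization, which loses a factor $\sqrt{K_n}$. Since $K_n$ may diverge, this breaks the stated rates.

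\textbf{First two displays.} In \eqref{eq: derivative operator bounds} you bound $\bm D^{*-1}\diag(\dot{\bm P}\bm1_n)\bm W^*$ via the unweighted inequality $\|\dot{\bm P}\bm1_n\|_2\le\sqrt n\|\dot{\bm P}\|_F$, which gives $CK_n^{3/2}n^{-1/2}\|\psi(\bm h)\|_2$. Your proposed sharpening, $\|\dot{\bm P}\bm1_n\|_2\le Cw_{+,n}n\|\psi(\bm h)\|_2$, is the same bound ($\sqrt n\cdot Cw_{+,n}\sqrt n\|\psi(\bm h)\|_2$), so it returns the same $K_n^{3/2}$. The loss recurs in \eqref{eq: supplementary W first expansion}. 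For the entrywise remainder $\bm R_P=\widehat{\bm P}-\bm P^*-\dot{\bm P}$ you only treat $\bm D^{*-1}\bm R_P$. But $D_P\bm W(\bm P^*)[\bm R_P]$ also contains $-\bm D^{*-1}\diag(\bm R_P\bm1_n)\bm W^*$, which your route bounds only by $\sqrt{CK_n/n}\,(cs_n)^{-1}\sqrt n\,\|\bm R_P\|_F=O_p(K_n^{3/2}/s_n)$.

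\textbf{Consequences.} As written, you get the first two displays only with an extra $\sqrt{K_n}$, and the last display only as $O_p(K_n^{3/2}/\sqrt n)$. The third display survives, since $K_n^{3/2}/s_n\le K_n^2/s_n$. You have also misplaced the difficulty in the last display. The supremum over $\bm h$ is harmless: every term is an $\bm h$-free $O_p(1)$ factor times $\|D_\nu\bm W(\bm\nu^*)[\bm h]\|_F+n^{-1/2}\|\bm E_Z\|_F$. The trace term follows at once from $|\tr(\bm S^{*-1}\bm E\bm S^{*-1})|\le\|\bm S^{*-1}\|_F^2\|\bm E\|_F\le Cn\|\bm E\|_F$. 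The only real issue there is the factor $K_n$ versus $K_n^{3/2}$ inherited from \eqref{eq: derivative operator bounds}.

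\textbf{The missing idea.} Measure probability perturbations relative to $p^*_{ij}$, and average with the row-stochastic weights $W^*_{i\cdot}$ rather than uniformly. Put $g_{ij}=(1-p^*_{ij})\dot\theta^*_{ij}[\bm h]$. Then exactly $(\dot{\bm P}\bm1_n)_i/d_i^*=\sum_jW^*_{ij}g_{ij}$ and $D_\nu\bm W(\bm\nu^*)[\bm h]_{ij}=W^*_{ij}(g_{ij}-\sum_\ell W^*_{i\ell}g_{i\ell})$. Jensen's inequality gives $\sum_j\{D_\nu\bm W(\bm\nu^*)[\bm h]_{ij}\}^2\le\max_jW^*_{ij}\sum_jW^*_{ij}g_{ij}^2\le(CK_n/n)^2\sum_jg_{ij}^2$. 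Summing over $i$ and using the upper bound in \eqref{eq: offdiag derivative geometry} yields $CK_n^2n^{-1}\|\psi(\bm h)\|_2^2$.

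For the expansion, use the exact identity $\widehat W_{ij}-W^*_{ij}=W^*_{ij}(r^P_{ij}-\bar r^P_i)/(1+\bar r^P_i)$, where $r^P_{ij}=(\widehat p_{ij}-p^*_{ij})/p^*_{ij}$ and $\bar r^P_i=\sum_jW^*_{ij}r^P_{ij}$. Because $|\expit''|\le\expit$, the relative Taylor remainder satisfies $|r^P_{ij}-(1-p^*_{ij})\dot\theta^*_{ij}[\widehat{\bm h}]|\le C(\Delta_{\theta,ij}^2+|\bm\Delta_{z,i}^\top\bm\Delta_{z,j}|)$. This carries no factor $w_{+,n}/p_{-,n}\le K_n$, which you would pay by bounding $|R_{P,ij}|\le Cw_{+,n}(\cdot)$ and then dividing by $p^*_{ij}$. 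The same weighted row bound, together with your fourth-moment estimate, then gives $O_p(K_n/s_n)$ for both the centered relative remainder and the denominator correction $\bar r^P_i\,D_\nu\bm W(\bm\nu^*)[\widehat{\bm h}]_{ij}$. Your separate treatment of the quadratic normalization terms through the observed-degree identity is then unnecessary, though not wrong.

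A minor point: the second $\bm W$-derivative of the trace term contains three resolvents, so its factor is $CK_n^{3/2}/n$ rather than $CK_n/n$. This is still negligible.
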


\noindent\textbf{Proof of Lemma~\ref{lem: supplementary W expansion}.}
The derivative with respect to $\bm P$ in a hollow symmetric direction $\bm E_P$ is
\begin{align*}
 D_P\bm W(\bm P)[\bm E_P]
 =\bm D(\bm P)^{-1}\bm E_P
 -\bm D(\bm P)^{-1}\diag(\bm E_P\bm1_n)\bm W(\bm P).
\end{align*}
All dyad arrays in this proof are defined for $i\ne j$ and extended by zero on the
diagonal. Put $r^P_{ij}=(\widehat p_{ij}-p_{ij}^*)/p_{ij}^*$. For an array $g=(g_{ij})$
with $g_{ii}=0$, define
\begin{align*}
 \bar r_i^P&=\sum_jW_{ij}^*r^P_{ij},&
 \mathscr N_W(g)_{ij}&=W_{ij}^*\left(g_{ij}-
 \sum_\ell W_{i\ell}^*g_{i\ell}\right).
\end{align*}
The degree-normalization identity is
\begin{align}
\label{eq: exact weighted normalization}
 \widehat W_{ij}-W_{ij}^*
 =\mathscr N_W(r^P)_{ij}/(1+\bar r_i^P).
\end{align}
Since $|\Delta_{\theta,ij}|\le C(\Delta_i+\Delta_j)$ and
$\max_i\Delta_i=o_p(1)$, the identity
$\log(\widehat p_{ij}/p_{ij}^*)
=\log\expit(\widehat\theta_{ij})-\log\expit(\theta_{ij}^*)$
and $|d\log\expit(x)/dx|\le1$ give
$|r^P_{ij}|\le C|\Delta_{\theta,ij}|$.
Row stochasticity then gives $\max_i|\bar r_i^P|=o_p(1)$.
Write $\ell_{ij}^\theta=\dot\theta_{ij}^*[\widehat{\bm h}],\ r_{ij}^\theta=\bm\Delta_{z,i}^\top\bm\Delta_{z,j},\ \ell_{ij}^P=(1-p_{ij}^*)\ell_{ij}^\theta$.
Then $\Delta_{\theta,ij}=\ell_{ij}^\theta+r_{ij}^\theta$.
For $u=\theta_{ij}^*+t\Delta_{\theta,ij}$ with $t\in[0,1]$,
the mean-value theorem applied to $\log\expit$ gives
\[
 \left|\log\frac{\expit(u)}{p_{ij}^*}\right|
 \le|u-\theta_{ij}^*|\le|\Delta_{\theta,ij}|.
\]
Because $\max_{i\ne j}|\Delta_{\theta,ij}|=o_p(1)$, with probability tending to one,
\[
 |\expit''(u)|\le\expit(u)
 \le p_{ij}^*e^{|\Delta_{\theta,ij}|}\le Cp_{ij}^*
\]
uniformly over $i\ne j$ and $t\in[0,1]$.
Taylor expansion of $\widehat p_{ij}=\expit(\widehat\theta_{ij})$
about $\theta_{ij}^*$, using
$\expit'(\theta_{ij}^*)=p_{ij}^*(1-p_{ij}^*)$, gives
\[
 \widehat p_{ij}-p_{ij}^*
 =p_{ij}^*(1-p_{ij}^*)\Delta_{\theta,ij}
 +\tfrac12\expit''(\xi_{ij})\Delta_{\theta,ij}^2,
\]
where $\xi_{ij}$ lies between $\theta_{ij}^*$ and $\widehat\theta_{ij}$.
Dividing by $p_{ij}^*$ and subtracting
$\ell_{ij}^P=(1-p_{ij}^*)\ell_{ij}^\theta$ therefore yields
\[
r_{ij}^P-\ell_{ij}^P
 =(1-p_{ij}^*)r_{ij}^\theta
 +\frac{\expit''(\xi_{ij})}{2p_{ij}^*}\Delta_{\theta,ij}^2,\quad 
 |r_{ij}^P-\ell_{ij}^P|
 \le |r_{ij}^\theta|+C\Delta_{\theta,ij}^2
 \le C\{\Delta_{\theta,ij}^2+|r_{ij}^\theta|\}.
\]
To bound the squared Taylor remainder, use
$|\Delta_{\theta,ij}|\le C(\Delta_i+\Delta_j)$ and the fourth-moment bound in
Lemma~\ref{lem: local likelihood root} to obtain
\begin{align}
\label{eq: dyad logit fourth moment}
 \frac1{m_n}\sum_{i<j}\Delta_{\theta,ij}^4
 \le\frac C{m_n}\sum_{i<j}(\Delta_i^4+\Delta_j^4)
 =\frac{C(n-1)}{m_n}\sum_i\Delta_i^4
 =O_p(s_n^{-2}).
\end{align}
Since $|\ell_{ij}^\theta|\le C(\Delta_i+\Delta_j)$ and
$|r_{ij}^\theta|\le\|\bm\Delta_{z,i}\|_2\|\bm\Delta_{z,j}\|_2$,
the same nodewise summation and Cauchy--Schwarz give
\begin{align}
\label{eq: dyad linear remainder moments}
 \frac1{m_n}\sum_{i<j}(\ell_{ij}^\theta)^2&=O_p(s_n^{-1}),\\
\label{eq: dyad quadratic remainder moments}
 \begin{split}
 \frac1{m_n}\sum_{i<j}(r_{ij}^\theta)^2
 &\le\frac C{n^2}\|\bm\Delta_Z\|_F^4=O_p(s_n^{-2}),\\
 \frac1{m_n}\sum_{i<j}(r_{ij}^\theta)^4
 &\le\max_i\|\bm\Delta_{z,i}\|_2^4
       \frac1{m_n}\sum_{i<j}(r_{ij}^\theta)^2
 =O_p(\delta_{n,k}^4s_n^{-2}).
 \end{split}
\end{align}
Here $\|\bm\Delta_Z\|_F^2=\sum_i\|\bm\Delta_{z,i}\|_2^2
\le\sum_i\Delta_i^2=O_p(n/s_n)$ gives the second-moment rate in
\eqref{eq: dyad quadratic remainder moments}; its fourth-moment rate also uses
$\max_i\|\bm\Delta_{z,i}\|_2\le\max_i\Delta_i=O_p(\delta_{n,k})$.
These inequalities are deterministic before the stated stochastic rates
are substituted, so they also apply to any other aligned fit with the
same nodewise moment bounds. 

Subtracting the $W_{ij}^*$-weighted row mean does not increase the
weighted second moment. Together with $W_{ij}^*\le CK_n/n$, this gives
\begin{align}
\label{eq: weighted row bound}
 \sum_j\mathscr N_W(g)_{ij}^2
 \le\frac{CK_n}{n}\sum_jW_{ij}^*g_{ij}^2.
\end{align}
Summing \eqref{eq: weighted row bound} and using $W_{ij}^*\le CK_n/n$
gives $\|\mathscr N_W(g)\|_F^2\le CK_n^2n^{-2}\sum_{i,j}g_{ij}^2$.
For $g=r^P-\ell^P$, the Taylor bound and
\eqref{eq: dyad logit fourth moment}--\eqref{eq: dyad quadratic remainder moments}
therefore give
\[
 \|\mathscr N_W(r^P-\ell^P)\|_F^2
 \le\frac{CK_n^2}{n^2}\sum_{i<j}
 \{\Delta_{\theta,ij}^4+(r_{ij}^\theta)^2\}
 =O_p(K_n^2/s_n^2).
\]
Moreover,
\eqref{eq: exact weighted normalization} gives the exact denominator identity
\begin{align}
\label{eq: exact W denominator remainder}
 \widehat W_{ij}-W_{ij}^*-\mathscr N_W(\ell^P)_{ij}
 =\frac{\mathscr N_W(r^P-\ell^P)_{ij}
 -\bar r_i^P\mathscr N_W(\ell^P)_{ij}}
 {1+\bar r_i^P}.
\end{align}
Let $v_i=\sum_jW_{ij}^*\{\Delta_{\theta,ij}^2+(r_{ij}^\theta)^2\}$.
Weighted Cauchy--Schwarz and $|r_{ij}^P|\le C|\Delta_{\theta,ij}|$
give $|\bar r_i^P|^2\le Cv_i$.
Also, \eqref{eq: weighted row bound} and
$\ell_{ij}^P=(1-p_{ij}^*)(\Delta_{\theta,ij}-r_{ij}^\theta)$ give
$\sum_j\mathscr N_W(\ell^P)_{ij}^2\le C(K_n/n)v_i$. Hence
\begin{align*}
 \sum_{i,j}\{\bar r_i^P\mathscr N_W(\ell^P)_{ij}\}^2
 \le\frac{CK_n}{n}\sum_i v_i^2\le\frac{CK_n}{n}\sum_{i,j}W_{ij}^*
 \{\Delta_{\theta,ij}^4+(r_{ij}^\theta)^4\}
 =O_p(K_n^2/s_n^2),
\end{align*}
where Jensen's inequality gives the second inequality, and
\eqref{eq: dyad logit fourth moment} and \eqref{eq: dyad quadratic remainder moments}
give the final order.
Since $\max_i|\bar r_i^P|=o_p(1)$, with probability tending to one
$\min_i(1+\bar r_i^P)\ge1/2$, so division by $1+\bar r_i^P$ in
\eqref{eq: exact W denominator remainder} preserves the $O_p(K_n/s_n)$
Frobenius norm bound for the numerator.
Since $\mathscr N_W(\ell^P)=D_\nu\bm W(\bm\nu^*)[\widehat{\bm h}]$, this proves
\eqref{eq: supplementary W first expansion}.

For a general direction $\bm h$ with $\psi(\bm h)\in\mathcal T_n$, put
$g_{ij}=(1-p_{ij}^*)\dot\theta_{ij}^*[\bm h]$.
Similarly, $\mathscr N_W(g)=D_\nu\bm W(\bm\nu^*)[\bm h]$.
Applying \eqref{eq: weighted row bound} and summing over $i$ gives
\begin{align*}
 \|D_\nu\bm W(\bm\nu^*)[\bm h]\|_F^2
 \le\frac{CK_n}{n}\sum_{i,j}W_{ij}^*g_{ij}^2\le\frac{CK_n^2}{n^2}\sum_{i<j}\dot\theta_{ij}^*[\bm h]^2
 \le\frac{CK_n^2}{n}\|\psi(\bm h)\|_2^2,
\end{align*}
where the second inequality uses $W_{ij}^*\le CK_n/n$ and $1-p_{ij}^*\le1$,
and the last uses the upper bound in \eqref{eq: offdiag derivative geometry}
of Lemma~\ref{lem: primitive network Fisher}.
Taking square roots proves \eqref{eq: derivative operator bounds}.

For the response expansion, regard \eqref{eq: full response score}
as a function $\mathscr Q_{\bm\zeta}(\bm W,\bm Z)$, and define its residual by
$\mathscr R_{\bm\zeta}(\bm W,\bm Z)=(\bm I_n-\gamma\bm W)\bm y-\bm X\bm\beta-\bm Z\bm\eta$.
On the product of the fitted--true segments, all of
$\bm y,\bm W\bm y,\mathscr R_{\bm\zeta}(\bm W,\bm Z)$ have norms $O_p(\sqrt n)$ and
$\|\bm Z\|_2=O(\sqrt n)$. For directions $(\bm E_j,\bm F_j)$ in
$(\bm W,\bm Z)$, put
$\bm u_j=\bm E_j\bm y$ and
$\bm a_j=-\gamma\bm u_j-\bm F_j\bm\eta$.
In the following display, $D_{(W,Z)}q_a$ and $D_{(W,Z)}^2q_a$ denote the $a$-th components
of $D_{(W,Z)}\mathscr Q_{\bm\zeta}(\bm W,\bm Z)[\bm E_1,\bm F_1]$ and
$D_{(W,Z)}^2\mathscr Q_{\bm\zeta}(\bm W,\bm Z)[(\bm E_1,\bm F_1),(\bm E_2,\bm F_2)]$.
With $\bm\zeta$ fixed, the first and second derivatives are
\begin{equation}\label{eq: response input first derivatives}
\begin{aligned}
 D_{(W,Z)}q_\beta&=\tau\bm X^\top\bm a_1/n,\\
 D_{(W,Z)}q_\eta&=\tau(\bm F_1^\top\mathscr R_{\bm\zeta}(\bm W,\bm Z)+\bm Z^\top\bm a_1)/n,\\
 D_{(W,Z)}q_\gamma&=-\frac1n\tr\{D_W\bm Q_\gamma(\bm W)[\bm E_1]\}
 +\frac{\tau}{n}\{\bm u_1^\top\mathscr R_{\bm\zeta}(\bm W,\bm Z)+(\bm W\bm y)^\top\bm a_1\},\\
 D_{(W,Z)}q_\tau&=-\mathscr R_{\bm\zeta}(\bm W,\bm Z)^\top\bm a_1/n,
\end{aligned}
\end{equation}
\begin{equation}\label{eq: response input second derivatives}
\begin{aligned}
 D_{(W,Z)}^2q_\beta&=0,\\
 D_{(W,Z)}^2q_\eta&=\tau(\bm F_1^\top\bm a_2+\bm F_2^\top\bm a_1)/n,\\
 D_{(W,Z)}^2q_\gamma&=-\frac1n\tr\{D_W^2\bm Q_\gamma(\bm W)[\bm E_1,\bm E_2]\}
 +\frac{\tau}{n}(\bm u_1^\top\bm a_2+\bm u_2^\top\bm a_1),\\
 D_{(W,Z)}^2q_\tau&=-\bm a_1^\top\bm a_2/n.
\end{aligned}
\end{equation}
Write $d_j=\|\bm E_j\|_F+n^{-1/2}\|\bm F_j\|_F$.
Since $\|\bm u_j\|_2+\|\bm a_j\|_2=O_p(\sqrt n)d_j$,
Cauchy--Schwarz bounds the terms outside the traces in
\eqref{eq: response input first derivatives} by $O_p(1)d_1$
and those in \eqref{eq: response input second derivatives} by $O_p(1)d_1d_2$.
For the trace term put $\bm T_\gamma(\bm W)=(\bm I_n-\gamma\bm W)^{-1}$.
Differentiating the inverse and using $\bm I_n+\gamma\bm W\bm T_\gamma(\bm W)=\bm T_\gamma(\bm W)$ gives
\begin{equation*}\begin{aligned}
D_W\bm Q_\gamma(\bm W)[\bm E]&=\bm T_\gamma(\bm W)\bm E\bm T_\gamma(\bm W),\\
 D_W^2\bm Q_\gamma(\bm W)[\bm E_1,\bm E_2]
 &=\gamma\bm T_\gamma(\bm W)(\bm E_1\bm T_\gamma(\bm W)\bm E_2+
\bm E_2\bm T_\gamma(\bm W)\bm E_1)\bm T_\gamma(\bm W).
\end{aligned}
\end{equation*}
Cyclic invariance of the trace and \eqref{eq: stochastic powers resolvents} imply
\begin{equation*}\begin{aligned}
n^{-1}|\tr D_W\bm Q_\gamma(\bm W)[\bm E]|&\le CK_n n^{-1/2}\|\bm E\|_F,\\
 n^{-1}|\tr D_W^2\bm Q_\gamma(\bm W)[\bm E_1,\bm E_2]|
 &\le CK_n^{3/2}n^{-1}\|\bm E_1\|_F\|\bm E_2\|_F.
\end{aligned}
\end{equation*}
Indeed the first trace is at most $\|\bm T_\gamma(\bm W)^2\|_F\|\bm E\|_F$,
and each second-order trace is at most
$\|\bm T_\gamma(\bm W)\|_2^3\|\bm E_1\|_F\|\bm E_2\|_F$.
These bounds hold along the entire transition segment.
Since $K_n/\sqrt n$ and $K_n^{3/2}/n$ are bounded by
\eqref{eq: basic growth consequences}, the trace derivatives satisfy
the same first- and second-order bounds as the other terms.
Consequently, uniformly on these segments,
\begin{equation}\label{eq: response Frechet bounds}
\begin{aligned}
 \|D_{(W,Z)}\mathscr Q_{\bm\zeta}(\bm W,\bm Z)[\bm E_1,\bm F_1]\|_2&\le O_p(1)d_1,\\
 \|D_{(W,Z)}^2\mathscr Q_{\bm\zeta}(\bm W,\bm Z)[(\bm E_1,\bm F_1),(\bm E_2,\bm F_2)]\|_2
 &\le O_p(1)d_1d_2.
\end{aligned}
\end{equation}
Put $\Delta\bm W=\widehat{\bm W}-\bm W^*$ and
$\bm R_W=\Delta\bm W-D_\nu\bm W(\bm\nu^*)[\widehat{\bm h}]$.
Since $\bm q_y(\bm\zeta^*;\bm\nu)
=\mathscr Q_{\bm\zeta^*}(\bm W(\bm\nu),\bm Z)$ and the $\bm Z$-component
of $\widehat{\bm h}$ is $\bm\Delta_Z$, the chain rule gives
\[
 D_\nu\bm q_y(\bm\zeta^*;\bm\nu^*)[\widehat{\bm h}]
 =D_{(W,Z)}\mathscr Q_{\bm\zeta^*}(\bm W^*,\bm Z^*)
 [D_\nu\bm W(\bm\nu^*)[\widehat{\bm h}],\bm\Delta_Z].
\]
Expand $\mathscr Q_{\bm\zeta^*}(\widehat{\bm W},\widehat{\bm Z})$
about $(\bm W^*,\bm Z^*)$ by Taylor's theorem. The linear term is
\[
 D_{(W,Z)}\mathscr Q_{\bm\zeta^*}(\bm W^*,\bm Z^*)[\Delta\bm W,\bm\Delta_Z].
\]
Subtracting the displayed chain-rule identity therefore yields
\begin{align*}
 \bm q_y(\bm\zeta^*;\widehat{\bm\nu})-\bm q_y(\bm\zeta^*;\bm\nu^*)
 -D_\nu\bm q_y(\bm\zeta^*;\bm\nu^*)[\widehat{\bm h}]=D_{(W,Z)}\mathscr Q_{\bm\zeta^*}(\bm W^*,\bm Z^*)[\bm R_W,\bm0]
 +\bm R_Q,
\end{align*}
where the Taylor remainder satisfies
$\|\bm R_Q\|_2\le O_p(1)(\|\Delta\bm W\|_F+
 n^{-1/2}\|\bm\Delta_Z\|_F)^2$ by \eqref{eq: response Frechet bounds}.
Using $\|\bm R_W\|_F=O_p(K_n/s_n)$ from
\eqref{eq: supplementary W first expansion}, the norm of the right-hand side is at most
\[
 O_p(1)\{K_n/s_n+(\|\Delta\bm W\|_F+
 n^{-1/2}\|\bm\Delta_Z\|_F)^2\}=O_p(K_n^2/s_n).
\]
This proves \eqref{eq: supplementary response composition true}.
Finally, take $\bm h=(\bm h_\alpha,\bm E_Z)$ with
$\psi(\bm h)\in\mathcal T_n$ and $\|\psi(\bm h)\|_2=1$.
Then $\|\bm E_Z\|_F^2\le\|\bm h_\alpha\|_2^2+\|\bm E_Z\|_F^2=1$,
and \eqref{eq: derivative operator bounds} gives
$\|D_\nu\bm W(\bm\nu^*)[\bm h]\|_F\le CK_n/\sqrt n$.
The chain rule and the first bound in \eqref{eq: response Frechet bounds} yield
\begin{align*}
 \|D_\nu\bm q_y(\bm\zeta^*;\bm\nu^*)[\bm h]\|_2
 &=\|D_{(W,Z)}\mathscr Q_{\bm\zeta^*}(\bm W^*,\bm Z^*)
 [D_\nu\bm W(\bm\nu^*)[\bm h],\bm E_Z]\|_2\\
 &\le O_p(1)\{\|D_\nu\bm W(\bm\nu^*)[\bm h]\|_F
 +n^{-1/2}\|\bm E_Z\|_F\}\\
 &\le O_p(1)(CK_n/\sqrt n+n^{-1/2})=O_p(K_n/\sqrt n),
\end{align*}
where $K_n\ge1$ and the same $O_p(1)$ factor applies to all such directions.
Each coordinate is bounded in absolute value by this vector norm, so taking
the supremum over $\bm h$ and the maximum over the score coordinates proves \eqref{eq: response network derivative bound}.
\hfill$\blacksquare$

\begin{lemma}\label{lem: supplementary Riesz energy}
Suppose Assumptions~\ref{assump: network}--\ref{assump: response errors} hold. For each $a\in[p+r+2]$, there is a unique direction $\bm v_a$ with $\psi(\bm v_a)\in\mathcal T_n$ satisfying, for every $\bm h$ with $\psi(\bm h)\in\mathcal T_n$,
\begin{align}
\label{eq: supplementary Riesz equation}
 \psi(\bm v_a)^\top\mathsf F_*\psi(\bm h)=D_\nu q_{y,a}(\bm\zeta^*;\bm\nu^*)[\bm h].
\end{align}
The direction satisfies
\begin{equation}\label{eq: supplementary Riesz direction bounds}
 \psi(\bm v_a)^\top\mathsf F_*\psi(\bm v_a)=O_p(K_n^2/(ns_n)),\quad
 \|\psi(\bm v_a)\|_2=O_p(K_n/(\sqrt n s_n)).
\end{equation}
Moreover,
\begin{equation}\label{eq: supplementary target functional expansion}
\begin{gathered}
 \psi(\bm v_a)^\top\bm g_A=O_p(K_n/\sqrt{ns_n}),\quad D_\nu q_{y,a}(\bm\zeta^*;\bm\nu^*)[\widehat{\bm h}]
  =\psi(\bm v_a)^\top\bm g_A+O_p(K_n^2/s_n).
\end{gathered}
\end{equation}
\end{lemma}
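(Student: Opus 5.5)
The argument runs in four steps: construct $\bm v_a$ by the Riesz representation theorem, bound it by combining the derivative bound with the Fisher lower bound, control the linear score term by its variance, and obtain the expansion from the fixed-point identity of Lemma~\ref{lem: local likelihood root}.

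\emph{Existence and bounds.} By Lemma~\ref{lem: primitive network Fisher}, $\mathsf F_*$ is self-adjoint on $\mathcal T_n$ with $cs_n\bm I_{\mathcal T_n}\preceq\mathsf F_*$, so $\langle\bm u,\bm w\rangle_{\mathsf F}=\bm u^\top\mathsf F_*\bm w$ is an inner product on the finite-dimensional space $\mathcal T_n$. The map $\bm h\mapsto D_\nu q_{y,a}(\bm\zeta^*;\bm\nu^*)[\bm h]$ is linear. Setting $\psi(\bm v_a)=\mathsf F_*^{-1}\bm\ell_a$, where $\bm\ell_a\in\mathcal T_n$ represents this functional in the Euclidean inner product, gives existence and uniqueness of the solution of \eqref{eq: supplementary Riesz equation}. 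Write $\bm u_a=\psi(\bm v_a)$. By \eqref{eq: response network derivative bound}, $\|\bm\ell_a\|_2=O_p(K_n/\sqrt n)$. Then
\[
\bm u_a^\top\mathsf F_*\bm u_a=\bm\ell_a^\top\mathsf F_*^{-1}\bm\ell_a\le\|\bm\ell_a\|_2^2/(cs_n)=O_p\bigl(K_n^2/(ns_n)\bigr),
\]
and $\|\bm u_a\|_2\le\|\bm\ell_a\|_2/(cs_n)$. Both bounds in \eqref{eq: supplementary Riesz direction bounds} follow.

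\emph{Linear term.} The vector $\bm u_a$ depends on $\bm\epsilon$ but not on $\bm A$, while $\bm g_A$ depends only on $\bm A$, and the two are independent. Conditionally on $\bm\epsilon$, the proof of Lemma~\ref{lem: local likelihood root} shows $\operatorname{Cov}_A(\bm g_A)=\bm\Pi_{\mathcal T_n}\mathsf F_{0,*}\bm\Pi_{\mathcal T_n}$. Hence
\[
\operatorname{Var}_A(\bm u_a^\top\bm g_A)=\bm u_a^\top\mathsf F_*\bm u_a=O_p\bigl(K_n^2/(ns_n)\bigr).
\]
A conditional Chebyshev argument then gives $\bm u_a^\top\bm g_A=O_p(K_n/\sqrt{ns_n})$.

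\emph{Expansion.} Write $\bm\delta=\psi(\widehat{\bm h})\in\mathcal T_n$. By \eqref{eq: supplementary Riesz equation}, $D_\nu q_{y,a}[\widehat{\bm h}]=\bm u_a^\top\mathsf F_*\bm\delta$. Lemma~\ref{lem: local likelihood root} gives the fixed-point identity $\mathsf F_A(\widehat{\bm\nu},\bm\nu^*)\bm\delta=\bm g_A$. Since $\bm u_a\in\mathcal T_n$,
\[
\bm u_a^\top\mathsf F_*\bm\delta-\bm u_a^\top\bm g_A=-\bm u_a^\top\{\mathsf F_A(\widehat{\bm\nu},\bm\nu^*)-\mathsf F_{0,*}\}\bm\delta .
\]
Split this difference into the noise part $\bm N_A$ from \eqref{eq: exact Hessian noise} and the population variation $\mathsf F_0(\widehat{\bm\nu},\bm\nu^*)-\mathsf F_{0,*}$.

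The naive bound for the noise part is $\|\bm u_a\|_2\|\bm A-\bm P^*\|_2\|\bm\delta\|_2=O_p\bigl(K_n/(\sqrt ns_n)\cdot\sqrt{t_n}\cdot\sqrt{n/s_n}\bigr)=O_p(K_n^{3/2}/s_n)$, using $t_n=K_ns_n$. This is within $O_p(K_n^2/s_n)$ since $K_n\ge1$.

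The population variation is the main obstacle. Using \eqref{eq: population Hessian variation} naively gives $\|\bm u_a\|_2\,t_n\delta_{n,k}\|\bm\delta\|_2=O_p(K_n^2\delta_{n,k}\sqrt{n/s_n}\,/\sqrt{s_n})$, which loses a factor of $\sqrt n\delta_{n,k}$. I would avoid the operator-norm bound and expand the quadratic form dyadwise instead. Using \eqref{eq: exact Hessian identity}, the form $\bm u^\top\{\bm H_0(\bm\nu)-\bm H_0(\bm\nu^*)\}\bm\delta$ splits into two kinds of terms:
\begin{itemize}
\item sums $\sum_{i<j}(w_{ij}-w_{ij}^*)\dot\theta_{ij}[\bm u]\dot\theta_{ij}[\bm\delta]$ together with the $\bm\phi$-difference terms;
\item $(p_{ij}-p^*_{ij})$ cross terms.
\end{itemize}
Each of these is bounded by $w_{+,n}\sum_{i<j}|\Delta_{\theta,ij}|\,|\dot\theta_{ij}[\bm u]|\,\bigl(|\dot\theta_{ij}[\bm\delta]|+\|\bm\delta_i\|\|\bm\delta_j\|\bigr)$. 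Cauchy--Schwarz with the $\ell^2$ energy $\bigl(\sum_{i<j}\dot\theta[\bm u]^2\bigr)^{1/2}\le C\sqrt n\|\bm u_a\|_2$ and the fourth-moment bounds \eqref{eq: dyad logit fourth moment} should then give
\[
w_{+,n}\sqrt n\|\bm u_a\|_2\,(n^2s_n^{-2})^{1/2}=O_p\bigl(K_n^2/s_n\bigr).
\]
This matches the required order and completes \eqref{eq: supplementary target functional expansion}. If this dyadwise route falls short, I would instead compare with the linear term $\bm\delta^{(1)}$ and use $\|\bm\Delta_F\|_2$ from that proof, weighted in the $\mathsf F_*$-norm.
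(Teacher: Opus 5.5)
Your proposal is correct in outline and is essentially the paper's proof. Three of your steps coincide with the paper: the Riesz construction $\psi(\bm v_a)=\mathsf F_*^{-1}\bm\Pi_{\mathcal T_n}\nabla_\psi q_{y,a}(\bm\zeta^*;\bm\nu^*)$, the two bounds obtained from $\|\bm\Pi_{\mathcal T_n}\nabla_\psi q_{y,a}(\bm\zeta^*;\bm\nu^*)\|_2=O_p(K_n/\sqrt n)$ and $\mathsf F_*\succeq cs_n\bm I_{\mathcal T_n}$, and the conditional Chebyshev step. For the expansion, your fixed-point identity is the integrated form of the score equation $D_\nu L_A(\widehat{\bm\nu})[\bm v_a]=0$, which the paper expands at the endpoint. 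Your noise part $-\bm u_a^\top\bm N_A\bm\delta$ is exactly the paper's term $\tr\{\bm V_{Z,a}^\top(\bm A-\bm P^*)\bm\Delta_Z\}$. With $\bm\nu_v=\bm\nu^*+v(\widehat{\bm\nu}-\bm\nu^*)$, the integral $\int_0^1\bm H_0(\bm\nu_v)\,dv\,\bm\delta$ is a difference of expected gradients, so your population-variation term equals, in closed form,
\[
 \bm u_a^\top\{\mathsf F_0(\widehat{\bm\nu},\bm\nu^*)-\mathsf F_{0,*}\}\bm\delta
 =\sum_{i<j}\bigl\{q_{a,ij}R^P_{ij}+r^v_{a,ij}(\widehat p_{ij}-p_{ij}^*)\bigr\}.
\]
Here $q_{a,ij}=\dot\theta_{ij}^*[\bm v_a]$, $R^P_{ij}=\widehat p_{ij}-p_{ij}^*-w_{ij}^*\dot\theta_{ij}^*[\widehat{\bm h}]$ and $r^v_{a,ij}=\bm v_{a,z_i}^\top\bm\Delta_{z,j}+\bm\Delta_{z,i}^\top\bm v_{a,z_j}$. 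Evaluating at the endpoint in this way spares you expanding $w_{ij}(\bm\nu_v)-w_{ij}^*$ along the segment.

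One step does not hold as written. Your template $w_{+,n}\sum_{i<j}|\Delta_{\theta,ij}|\,|\dot\theta_{ij}[\bm u_a]|(\cdots)$ does not dominate every piece.
\begin{itemize}
\item Some pieces contain latent cross products of $\bm u_a$ with $\bm\delta$: those from $\bm\phi_{ij}(\bm\nu_v)-\bm\phi_{ij}^*$ in the $\bm u$ slot, and those from the block $\{\bm P(\bm\nu_v)-\bm P^*\}\otimes\diag(0,\bm I_r)$. These are not controlled by $|\dot\theta_{ij}[\bm u_a]|$ at all. Bound them instead by pairing $(\sum_{i<j}(r^v_{a,ij})^2)^{1/2}\le\sqrt2\,\|\bm u_a\|_2\|\bm\delta\|_2=O_p(K_n/s_n^{3/2})$ with the $\ell^2$ size of the logit errors, $(\sum_{i<j}\Delta_{\theta,ij}^2)^{1/2}=O_p(n/\sqrt{s_n})$. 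This gives $w_{+,n}nK_n/s_n^2=K_n^2/s_n$, which is how the paper bounds $\sum_{i<j}r^v_{a,ij}(\widehat p_{ij}-p_{ij}^*)$.
\item The piece $w_{ij}^*\dot\theta_{ij}^*[\bm u_a]\,\bm\Delta_{z,i}^\top\bm\Delta_{z,j}$ carries no factor $|\Delta_{\theta,ij}|$. It is still fine with your pairing, because $(\sum_{i<j}\|\bm\delta_i\|_2^2\|\bm\delta_j\|_2^2)^{1/2}\le\|\bm\delta\|_2^2=O_p(n/s_n)$.
\end{itemize}

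Finally, drop the fallback. Since $\bm u_a^\top\mathsf F_*(\bm\delta-\bm\delta^{(1)})=-\bm u_a^\top\{\mathsf F_A(\widehat{\bm\nu},\bm\nu^*)-\mathsf F_{0,*}\}\bm\delta$, bounding it through $\|\bm\Delta_F\|_2=O_p(K_n\delta_{n,k})$ is the operator-norm bound again. It gives only $O_p(K_n^2\delta_{n,k}/\sqrt{s_n})$, which is too large by the factor $\delta_{n,k}\sqrt{s_n}\ge\sqrt{K_n\log n}$. The same correction applies to your naive product: it equals $K_n^2\delta_{n,k}/\sqrt{s_n}$, not $K_n^2\delta_{n,k}\sqrt n/s_n$.
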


\noindent\textbf{Proof of Lemma~\ref{lem: supplementary Riesz energy}.}
Fix $a\in[p+r+2]$. With $\bm\zeta^*$, $\bm y$ and $\bm X$ held fixed,
define the gradient of $q_{y,a}$ in the node ordering of $\psi$ by
\[
 \nabla_\psi q_{y,a}(\bm\zeta^*;\bm\nu^*)
 :=\left.\psi\left(
 \left(\frac{\partial q_{y,a}(\bm\zeta^*;\bm\nu)}{\partial\alpha_i}\right)_{i\in[n]},
 \left(\frac{\partial q_{y,a}(\bm\zeta^*;\bm\nu)}{\partial z_{ik}}\right)_{i\in[n],\,k\in[r]}
 \right)\right|_{\bm\nu=\bm\nu^*}.
\]
Define its orthogonal projection onto $\mathcal T_n$ by $\bm g_a:=\bm\Pi_{\mathcal T_n} \nabla_\psi q_{y,a}(\bm\zeta^*;\bm\nu^*)\in\mathcal T_n$.
For every $\psi(\bm h)\in\mathcal T_n$,
$\bm\Pi_{\mathcal T_n}\psi(\bm h)=\psi(\bm h)$, so the chain rule and
symmetry of $\bm\Pi_{\mathcal T_n}$ give
\[
 D_\nu q_{y,a}(\bm\zeta^*;\bm\nu^*)[\bm h]
 =\{\nabla_\psi q_{y,a}(\bm\zeta^*;\bm\nu^*)\}^\top\psi(\bm h)
 =\bm g_a^\top\psi(\bm h).
\]
Since $\bm g_a\in\mathcal T_n$, equation~\eqref{eq: response network derivative bound}
gives $\|\bm g_a\|_2=O_p(K_n/\sqrt n)$. The
information operator $\mathsf F_*$ has minimum eigenvalue at least $cs_n$.
Hence the required direction is defined by
$\psi(\bm v_a)=\mathsf F_*^{-1}\bm g_a$.
Substitution proves \eqref{eq: supplementary Riesz equation}, and positive
definiteness of $\mathsf F_*$ makes this direction unique.
Moreover,
\begin{align*}
 \psi(\bm v_a)^\top\mathsf F_*\psi(\bm v_a)
 &=\bm g_a^\top\mathsf F_*^{-1}\bm g_a
 \le C\|\bm g_a\|_2^2/s_n=O_p(K_n^2/(ns_n)),\\
 \|\psi(\bm v_a)\|_2&\le C\|\bm g_a\|_2/s_n
 =O_p(K_n/(\sqrt n s_n)).
\end{align*}
This proves \eqref{eq: supplementary Riesz direction bounds}.
Conditional on $\bm y$, the direction is fixed and the edges remain independent. Therefore, by \eqref{eq: supplementary Riesz direction bounds},
\begin{align*}
 \operatorname{Var}\{\psi(\bm v_a)^\top\bm g_A\mid\bm y\}
 =\sum_{i<j}w_{ij}^*
       \{\dot\theta_{ij}^*[\bm v_a]\}^2
 =\psi(\bm v_a)^\top\mathsf F_*\psi(\bm v_a)
 =O_p(K_n^2/(ns_n)).
\end{align*}
Since $\mathbb E\{\psi(\bm v_a)^\top\bm g_A\mid\bm y\}=0$,
conditional Chebyshev gives, for $L>0$,
\[
 \Pr\!\left\{\left|\psi(\bm v_a)^\top\bm g_A\right|
 >\frac{LK_n}{\sqrt{ns_n}}\,\middle|\,\bm y\right\}
 \le\min\!\left\{1,\frac{ns_n}{L^2K_n^2}
 \psi(\bm v_a)^\top\mathsf F_*\psi(\bm v_a)\right\}.
\]
For $M>0$, define the event on which the conditional variance is at most
$MK_n^2/(ns_n)$ by
\[
 \mathcal V_{a,n}(M)
 :=\left\{\psi(\bm v_a)^\top\mathsf F_*\psi(\bm v_a)
 \le\frac{MK_n^2}{ns_n}\right\}.
\]
This event is determined by $\bm y$.
Splitting the tail probability over $\mathcal V_{a,n}(M)$ and its complement,
and applying the conditional bound on $\mathcal V_{a,n}(M)$, gives
\[
\begin{aligned}
 &\Pr\!\left\{\left|\psi(\bm v_a)^\top\bm g_A\right|
 >\frac{LK_n}{\sqrt{ns_n}}\right\}\\
 &\quad\le\mathbb E\!\left[\mathbf 1_{\mathcal V_{a,n}(M)}
 \Pr\!\left\{\left|\psi(\bm v_a)^\top\bm g_A\right|
 >\frac{LK_n}{\sqrt{ns_n}}\,\middle|\,\bm y\right\}\right]
 +\Pr\{\mathcal V_{a,n}(M)^c\}\\
 &\quad\le\frac{M}{L^2}+\Pr\{\mathcal V_{a,n}(M)^c\}.
\end{aligned}
\]
By \eqref{eq: supplementary Riesz direction bounds},
$\lim_{M\to\infty}\limsup_{n\to\infty}\Pr\{\mathcal V_{a,n}(M)^c\}=0$.
Taking $\limsup_{n\to\infty}$ first, then letting $L\to\infty$ and $M\to\infty$
in the preceding bound, we obtain
\[
 \lim_{L\to\infty}\limsup_{n\to\infty}
 \Pr\!\left\{\left|\psi(\bm v_a)^\top\bm g_A\right|
 >\frac{LK_n}{\sqrt{ns_n}}\right\}=0,
\]
which proves the first assertion in \eqref{eq: supplementary target functional expansion}.

The estimator constructed in Lemma~\ref{lem: local likelihood root} has zero
projected score. Since $\psi(\bm v_a)\in\mathcal T_n$,
$D_\nu L_A(\widehat{\bm\nu})[\bm v_a]=0$.
Write $\bm V_{Z,a}$ for the latent component of $\bm v_a$, and put
$q_{a,ij}=\dot\theta_{ij}^*[\bm v_a]$ and
$r^v_{a,ij}=\bm v_{a,z_i}^\top\bm\Delta_{z,j}
+\bm\Delta_{z,i}^\top\bm v_{a,z_j}$.
Then $\dot\theta_{ij}(\widehat{\bm\nu})[\bm v_a]=q_{a,ij}+r^v_{a,ij}$.
With $\ell_{ij}^\theta$ and $r_{ij}^\theta$ as in
\eqref{eq: dyad linear remainder moments}--\eqref{eq: dyad quadratic remainder moments}, define
$R^P_{ij}=\widehat p_{ij}-p_{ij}^*-w_{ij}^*\ell_{ij}^\theta$.
On the predictor interval in \eqref{eq: hetero scales},
$|\expit''(u)|\le\expit'(u)\le w_{+,n}$.
Applying the Taylor expansion of $\expit(\widehat\theta_{ij})$ about
$\theta_{ij}^*$,
and using $\Delta_{\theta,ij}=\ell_{ij}^\theta+r_{ij}^\theta$, yields $R^P_{ij}=w_{ij}^*r_{ij}^\theta +\tfrac12\expit''(\xi_{ij})\Delta_{\theta,ij}^2$,
where $\xi_{ij}$ lies between $\theta_{ij}^*$ and $\widehat\theta_{ij}$.
This gives the first bound below; the second follows from the mean-value theorem:
\[
 |R^P_{ij}|\le Cw_{+,n}(|r_{ij}^\theta|+\Delta_{\theta,ij}^2),
 \quad |\widehat p_{ij}-p_{ij}^*|\le Cw_{+,n}|\Delta_{\theta,ij}|.
\]
For a dyad array $f$, abbreviate
$\|f\|_{\mathcal E,2}^2=m_n^{-1}\sum_{i<j}f_{ij}^2$.
The moment bounds in \eqref{eq: dyad logit moment transfer},
\eqref{eq: dyad logit fourth moment} and
\eqref{eq: dyad quadratic remainder moments} yield
\begin{equation}
\label{eq: target probability remainder norms}
 \|R^P\|_{\mathcal E,2}=O_p(w_{+,n}/s_n),\quad
 \|\widehat{\bm P}-\bm P^*\|_{\mathcal E,2}=O_p(w_{+,n}/\sqrt{s_n}).
\end{equation}
The weight bound $w_{ij}^*\ge w_{-,n}$, the Fisher identity in
\eqref{eq: network Fisher information}, and
\eqref{eq: supplementary Riesz direction bounds} give
$\|q_a\|_{\mathcal E,2}=O_p\{K_n/(ns_n)\}$.
Since $\bm V_{Z,a}$ is the latent component of $\bm v_a$,
\eqref{eq: supplementary Riesz direction bounds} gives $\|\bm V_{Z,a}\|_F^2\le\|\psi(\bm v_a)\|_2^2 =O_p\!\left(\frac{K_n^2}{ns_n^2}\right)$.
Combining this with Cauchy--Schwarz and
$\|\bm\Delta_Z\|_F^2=O_p(n/s_n)$ from Theorem~\ref{thm: first stage}, we obtain
\begin{equation}
\label{eq: mixed direction dyad bound}
 \|q_a\|_{\mathcal E,2}=O_p\{K_n/(ns_n)\},\quad
 \|r_a^v\|_{\mathcal E,2}^2
 \le Cn^{-2}\|\bm V_{Z,a}\|_F^2\|\bm\Delta_Z\|_F^2
 =O_p\{K_n^2/(n^2s_n^3)\}.
\end{equation}
For the first bound use
$w_{-,n}\|q_a\|_{\mathcal E,2}^2
\le m_n^{-1}\psi(\bm v_a)^\top\mathsf F_*\psi(\bm v_a)$ and $s_n=nw_{-,n}$.
Substituting $\dot\theta_{ij}(\widehat{\bm\nu})[\bm v_a]=q_{a,ij}+r^v_{a,ij}$
into the score equation $D_\nu L_A(\widehat{\bm\nu})[\bm v_a]=0$
and writing $a_{ij}-\widehat p_{ij}=(a_{ij}-p_{ij}^*)-(\widehat p_{ij}-p_{ij}^*)$
yields
\[
 0=\sum_{i<j}
 \{(a_{ij}-p_{ij}^*)-(\widehat p_{ij}-p_{ij}^*)\}
 (q_{a,ij}+r^v_{a,ij}).
\]
The two sums involving $q_{a,ij}$, which enter the preceding equation
with signs $+$ and $-$, respectively, satisfy
\begin{align*}
 \sum_{i<j}(a_{ij}-p_{ij}^*)q_{a,ij}
 &=\psi(\bm v_a)^\top\bm g_A,\\
 \sum_{i<j}(\widehat p_{ij}-p_{ij}^*)q_{a,ij}
 &=\psi(\bm v_a)^\top\mathsf F_*\psi(\widehat{\bm h})
 +\sum_{i<j}q_{a,ij}R^P_{ij}.
\end{align*}
Rearranging gives
\begin{equation*}\begin{aligned}
\psi(\bm v_a)^\top\mathsf F_*\psi(\widehat{\bm h})
 -\psi(\bm v_a)^\top\bm g_A
 ={}&\tr\{\bm V_{Z,a}^\top(\bm A-\bm P^*)\bm\Delta_Z\}\\
 &-\sum_{i<j}\{q_{a,ij}R^P_{ij}
 +r^v_{a,ij}(\widehat p_{ij}-p_{ij}^*)\}.
\end{aligned}
\end{equation*}
Symmetry gives the trace identity for the noise term, whose absolute
value is bounded by
\[
 \|\bm V_{Z,a}\|_F\|\bm A-\bm P^*\|_2\|\bm\Delta_Z\|_F
 =O_p\{(K_n/(\sqrt n s_n))\sqrt{K_ns_n}\sqrt{n/s_n}\}
 =O_p(K_n^{3/2}/s_n).
\]
The other two terms are at most
$m_n\|q_a\|_{\mathcal E,2}\|R^P\|_{\mathcal E,2}$ and
$m_n\|r_a^v\|_{\mathcal E,2}\|\widehat{\bm P}-\bm P^*\|_{\mathcal E,2}$.
Both are $O_p(K_n^2/s_n)$ by
\eqref{eq: target probability remainder norms}--\eqref{eq: mixed direction dyad bound}
and $w_{+,n}/w_{-,n}=K_n$.
Applying \eqref{eq: supplementary Riesz equation} with $\bm h=\widehat{\bm h}$
therefore proves the second assertion in \eqref{eq: supplementary target functional expansion}.
\hfill$\blacksquare$

\begin{lemma}\label{lem: joint covariance}
Suppose Assumptions~\ref{assump: network}--\ref{assump: response errors} hold.
The residual moment estimates defined with \eqref{eq: empirical error distribution}
and the aligned covariance estimator in \eqref{eq: joint feasible covariance} satisfy
\begin{equation}
\label{eq: joint covariance consistency}
 \widehat v-\sigma^{*2}=o_p(1),\quad
 \max_{j=3,4}|\widehat\mu_j-\mu_j|=o_p(1),\quad
 \|\widehat{\bm V}_n-\bm V_n\|_2=o_p(1).
\end{equation}
If \eqref{eq: response nondegeneracy} also holds, then
$\Pr\{\widehat{\bm V}_n\succ0\}\longrightarrow1$.
\end{lemma}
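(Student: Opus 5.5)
The plan has three stages: show the fitted residuals are uniformly close to the true errors, deduce consistency of their moments, and then show every ingredient of $\widehat{\bm V}_n$ converges to its population counterpart. The basic decomposition is
\[
\widehat{\bm\epsilon}=\bm\epsilon-\bm X(\widehat{\bm\beta}-\bm\beta^*)-\widehat{\bm Z}\widehat{\bm\eta}+\bm Z^*\bm\eta^*-(\widehat\gamma\widehat{\bm W}-\gamma^*\bm W^*)\bm y .
\]
Lemma~\ref{lem: response profile uniform}, Theorem~\ref{thm: first stage} and Lemma~\ref{lem: response operator bounds} give $n^{-1/2}\|\widehat{\bm\epsilon}-\bm\epsilon\|_2=O_p(e_n)=o_p(1)$, where $e_n=K_n/\sqrt{s_n}$.

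For the moments, I will first get a uniform maximum bound on the residual errors.
\begin{itemize}
\item Since $\|\bm X\|_{2,\infty}\le C$, the rows of $\widehat{\bm Z}-\bm Z^*$ are $O_p(\delta_{n,k})$, and $\widehat{\bm W}$ and $\bm W^*$ are row stochastic, all terms except $\bm\epsilon$ are small.
\item The remaining obstacle is $\|(\widehat{\bm W}-\bm W^*)\bm y\|_\infty$. I will bound it by $\max_i\sum_j|\widehat W_{ij}-W_{ij}^*|\,\|\bm y\|_\infty$, with $\|\bm y\|_\infty=O_p(\sqrt{\log n})$ from sub-Gaussianity and $\|\bm S^{*-1}\|_\infty\le C$.
\item The entrywise ratio bound $|r^P_{ij}|\le C\Delta_{\theta,ij}$ then gives $\max_i\sum_j|\widehat W_{ij}-W^*_{ij}|=O_p(\delta_{n,k})$.
\item Hence $\|\widehat{\bm\epsilon}-\bm\epsilon\|_\infty=O_p(\delta_{n,k}\sqrt{\log n}+e_n)$. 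If this is not $o_p(1)$, I will switch to the $L^2$ route below.
\end{itemize}
In either case the strategy is the same. I will write $n^{-1}\sum_i(\widehat\epsilon_i^j-\epsilon_i^j)$ through the factorization $a^j-b^j=(a-b)\sum_m a^m b^{j-1-m}$ and apply Cauchy--Schwarz. This needs $n^{-1}\|\widehat{\bm\epsilon}-\bm\epsilon\|_2^2=o_p(1)$ together with $n^{-1}\sum_i\widehat\epsilon_i^{6}=O_p(1)$. The sixth-moment bound follows from $\max_i|\widehat\epsilon_i-\epsilon_i|=O_p(1)$ and the bounded eighth moment of $\epsilon_i$ already established. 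The law of large numbers for the i.i.d.\ triangular array, justified by the bounded eighth moments, gives $n^{-1}\sum_i\epsilon_i^j-\mathbb E\epsilon_i^j=o_p(1)$ for $j\le4$. Consequently $\overline{\widehat\epsilon}=o_p(1)$, $\widehat v-\sigma^{*2}=o_p(1)$, and, using $\widehat\tau\to_p\tau^*$ in the scaling factor, $\widehat\mu_3-\mu_3=o_p(1)$ and $\widehat\mu_4-\mu_4=o_p(1)$. Since $\widehat v$ is bounded away from zero with probability tending to one, the degenerate branch of $\widehat F$ has probability tending to zero.

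For $\widehat{\bm V}_n$, Lemma~\ref{lem: fitted response Jacobian} applied with $\widetilde{\bm\zeta}=\widehat{\bm\zeta}$ gives $\|\widehat{\mathsf J}_{y,n}-\mathsf J_{y,n}\|_2=o_p(1)$. Lemma~\ref{lem: response conditioning} then gives invertibility with $\|\widehat{\mathsf J}_{y,n}^{-1}-\mathsf J_{y,n}^{-1}\|_2=o_p(1)$. It remains to show $\|\widehat{\bm\Xi}_n-\bm\Xi_n\|_2=o_p(1)$, which I will do entrywise using formula \eqref{eq: feasible score covariance} and its population analogue from \eqref{eq: linear quadratic covariance identities}. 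Let $\bm c_n=\bm0$, $\bm\ell_{n,\beta}=\tau^*\bm X$, $\bm\ell_{n,\eta}=\tau^*\bm Z^*$, $\bm\ell_{n,\gamma}=\tau^*\bm W^*\bm\mu_y^*$, $\bm B_{n,\tau}=-\bm I/2$ and $\bm B_{n,\gamma}=\tau^*(\bm Q+\bm Q^\top)/2$. The terms are handled as follows.
\begin{itemize}
\item The linear terms satisfy $n^{-1}\|\widehat{\bm\ell}_{n,a}-\bm\ell_{n,a}\|_2^2=o_p(1)$. This uses $\widehat{\bm Z}$, $\widehat\tau$, $\|\widehat{\bm W}-\bm W^*\|_F$, and $\widehat{\bm\mu}_y-\bm\mu_y^*=o_p(\sqrt n)$ in $\ell_2$, the last obtained via the resolvent bounds in Lemma~\ref{lem: response operator bounds}.
\item Every quadratic term involving $\bm B_{n,\gamma}$ is $O_p(K_n/n)=o_p(1)$ after division by $n$, because $\|\widehat{\bm W}\widehat{\bm S}^{-1}\|_F\le C\sqrt{K_n}$. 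The same holds for its diagonal sums and its cross terms with $\widehat{\bm\ell}$, by Cauchy--Schwarz. The corresponding population terms are negligible for the same reason.
\item The $(\tau,\tau)$ entry tends to $\mu_4-\sigma^{*4}$, and the $(\tau,\cdot)$ cross entries involve $\mu_3 n^{-1}\bm 1^\top\bm\ell$. Both converge by moment consistency.
\end{itemize}
Combining these via the sandwich formula gives $\|\widehat{\bm V}_n-\bm V_n\|_2=o_p(1)$, including the alignment by $\widehat{\bm O}$, which enters consistently on both sides. Finally, under \eqref{eq: response nondegeneracy}, Lemma~\ref{lem: response conditioning} gives $\lambda_{\min}(\bm V_n)\ge c$, so Weyl's inequality yields $\Pr(\widehat{\bm V}_n\succ0)\to1$.

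The main obstacle I expect is the control of higher residual moments, specifically the $\widehat\mu_4$ step, because it needs more than $\ell_2$ closeness of the residuals. I would first try the maximum-norm bound above. If that fails, I would use a truncation argument together with the sixth-moment bound.
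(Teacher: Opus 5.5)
Your proposal is correct. Its covariance and Jacobian parts are essentially the paper's argument: Lemma~\ref{lem: fitted response Jacobian} at $\widetilde{\bm\zeta}=\widehat{\bm\zeta}$ for $\widehat{\mathsf J}_{y,n}$; comparison of the explicit linear--quadratic covariance formulas, where every $\widehat{\bm B}_{n,\gamma}$ contribution is negligible because $\|\widehat{\bm W}\widehat{\bm S}^{-1}\|_F^2\le CK_n$ (the paper collects these into a remainder rather than going entrywise); the sandwich bound; and Weyl's inequality.

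The genuine difference is the residual-moment step. The paper never uses a sup-norm. It proves only $a_n^W=\max_i\sum_j|\widehat W_{ij}-W^*_{ij}|\le C\max_i\Delta_i=o_p(1)$ and uses $\sup_i\mathbb E_\epsilon|y_i|^4\le C$. Independence of $\bm\epsilon$ from the network, with conditional Minkowski and Markov, then gives $\|(\widehat{\bm W}-\bm W^*)\bm y\|_{n,4}=o_p(1)$ in the empirical $L^4$ norm. It finishes with H\"older, using only fourth moments. You instead bound $\|\widehat{\bm\epsilon}-\bm\epsilon\|_\infty$ deterministically through $a_n^W\|\bm y\|_\infty$ and the sub-Gaussian maximal bound $\|\bm\epsilon\|_\infty=O_p(\sqrt{\log n})$. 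This avoids the conditioning argument, but it leans harder on the tail assumption and needs a rate for $a_n^W$, not just $o_p(1)$.

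You did not carry out the resulting growth check, and it is essential. Your ``$L^2$ fallback'' is not independent of it, because its sixth-moment bound needs $\max_i|\widehat\epsilon_i-\epsilon_i|=O_p(1)$, which is the sup-norm bound again.

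The correct rate is $O_p\{(\delta_{n,k}+e_n)\sqrt{\log n}\}$, since the term $\Delta\gamma\,\widehat{\bm W}\bm y$ also carries $\|\bm y\|_\infty$. It is $o_p(1)$: $\delta_{n,k}^4\log^2n\le K_n^2\delta_{n,k}^4s_n\cdot\log^2n/s_n\to0$ by \eqref{eq: hetero growth}, and $e_n^2\log n=K_n^2\log n/s_n\to0$. Both steps use $\log^2 n/s_n\to0$ from \eqref{eq: basic growth consequences}. Once $\|\bm d\|_\infty=o_p(1)$ with $\bm d=\widehat{\bm\epsilon}-\bm\epsilon$, you need neither Cauchy--Schwarz nor truncation, since $|n^{-1}\sum_i(\widehat\epsilon_i^j-\epsilon_i^j)|\le j\|\bm d\|_\infty\, n^{-1}\sum_i(|\epsilon_i|+\|\bm d\|_\infty)^{j-1}$.

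One smaller repair concerns $\widehat{\bm\ell}_{n,\gamma}=\widehat\tau\widehat{\bm W}\widehat{\bm\mu}_y$. Knowing only $\|\widehat{\bm\mu}_y-\bm\mu_y^*\|_2=o_p(\sqrt n)$ is not enough, because $\|\widehat{\bm W}\|_2$ can be of order $\sqrt{K_n}$. You have two options:
\begin{itemize}
\item Keep the explicit rate $O_p(\sqrt{nK_n}\,e_n)$ obtained from $\|\widehat{\bm S}^{-1}\|_2\le C\sqrt{K_n}$. After multiplying by $\sqrt{K_n}$ this is $O_p(\sqrt n\,K_n^2/\sqrt{s_n})=o_p(\sqrt n)$, because $K_n^4/s_n\to0$.
\item Work in $\|\cdot\|_\infty$, using $\|\widehat{\bm S}^{-1}\|_\infty\le C$ and row stochasticity of $\widehat{\bm W}$, as the paper does.
\end{itemize}
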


\noindent\textbf{Proof of Lemma~\ref{lem: joint covariance}.}
All latent quantities in the proof use the aligned coordinates. The fitted product
$\widehat{\bm Z}\widehat{\bm\eta}$, and hence the structural residual and its
empirical distribution, are invariant to this alignment. Lemma~\ref{lem: response profile uniform}
gives $\|\widehat{\bm\zeta}-\bm\zeta^*\|_2=O_p(K_n/\sqrt{s_n})=o_p(1)$.
For $\bm u\in\mathbb R^n$, write
$\|\bm u\|_{n,4}=(n^{-1}\sum_i|u_i|^4)^{1/4}$.
The exact normalization identity \eqref{eq: exact weighted normalization}
and $|\Delta_{\theta,ij}|\le C(\Delta_i+\Delta_j)$ imply
\begin{align}
\label{eq: weight row consistency}
 a_n^W:=\max_i\sum_j|\widehat W_{ij}-W_{ij}^*|
 \le C\max_i\Delta_i=o_p(1).
\end{align}
Indeed the relative probability errors are uniformly at most
$C\max_i\Delta_i$, and the denominator in that identity is bounded away
from zero.
The proof of Lemma~\ref{lem: response score representation} establishes
$\|\bm\mu_y^*\|_\infty\le C$ and
$\mathbb E_\epsilon|\epsilon_i|^4=\mu_4\le C$; its bound
\eqref{eq: stochastic powers resolvents} gives $\|\bm S^{*-1}\|_\infty\le C$.
Applying Minkowski's inequality to
$\bm y=\bm\mu_y^*+\bm S^{*-1}\bm\epsilon$ therefore yields
\[
 \sup_i(\mathbb E_\epsilon|y_i|^4)^{1/4}
 \le\|\bm\mu_y^*\|_\infty
 +\|\bm S^{*-1}\|_\infty\sup_j(\mathbb E_\epsilon|\epsilon_j|^4)^{1/4}
 \le C.
\]
Conditional on the network, a second application of Minkowski's inequality
therefore gives
\[
 \mathbb E_\epsilon\|(\widehat{\bm W}-\bm W^*)\bm y\|_{n,4}^4
 \le C(a_n^W)^4,\quad
 \mathbb E_\epsilon\|\widehat{\bm W}\bm y\|_{n,4}^4\le C.
\]
Conditional Markov's inequality yields $o_p(1)$ and $O_p(1)$ for the
respective fourth norms.
The exact residual identity is
\[
 \widehat{\bm\epsilon}-\bm\epsilon
 =-\bm X\Delta\bm\beta-\widehat{\bm Z}\Delta\bm\eta
 -\bm\Delta_Z\bm\eta^*-\Delta\gamma\widehat{\bm W}\bm y
 -\gamma^*(\widehat{\bm W}-\bm W^*)\bm y.
\]
Here $\Delta\bm\beta,\Delta\bm\eta,\Delta\gamma$ denote fitted minus true
values. Their consistency, bounded design rows and $\max_i\Delta_i=o_p(1)$
give $\|\widehat{\bm\epsilon}-\bm\epsilon\|_{n,4}=o_p(1)$.
Put $d_i=\widehat\epsilon_i-\epsilon_i$. For $1\le j\le4$, the factorization
of $x^j-y^j$ and $|\widehat\epsilon_i|\le|\epsilon_i|+|d_i|$ give
\[
 |\widehat\epsilon_i^j-\epsilon_i^j|
 \le j|d_i|(|\epsilon_i|+|d_i|)^{j-1}.
\]
Since $\mathbb E_\epsilon\|\bm\epsilon\|_{n,4}^4 =\frac1n\sum_i\mathbb E_\epsilon|\epsilon_i|^4=\mu_4\le C$,
Markov's inequality gives $\|\bm\epsilon\|_{n,4}=O_p(1)$.
H\"older's inequality now yields
\[
 \left|\frac1n\sum_i(\widehat\epsilon_i^j-\epsilon_i^j)\right|
 \le C\|\widehat{\bm\epsilon}-\bm\epsilon\|_{n,4}
 (\|\bm\epsilon\|_{n,4}+\|\widehat{\bm\epsilon}-\bm\epsilon\|_{n,4})^{j-1}=o_p(1).
\]
The bounded eighth moments and independence give
$n^{-1}\sum_i\epsilon_i^j-\mathbb E_\epsilon\epsilon_i^j=O_p(n^{-1/2})$.
Thus $\overline{\widehat\epsilon}=o_p(1)$ and
$\widehat v-\sigma^{*2}=o_p(1)$. Since $\sigma^{*2}\ge c>0$ by
Assumption~\ref{assump: response}, $\Pr(\widehat v=0)\to0$.
Combining $\widehat\tau^{-1}-\sigma^{*2}=o_p(1)$ with centering and rescaling
as in \eqref{eq: empirical error distribution} proves consistency of
$\widehat\mu_3$ and $\widehat\mu_4$.

The fitted mean can be controlled by the same row bound. Specifically,
\begin{equation}\label{eq: fitted mean row consistency}
\begin{aligned}
&\widehat{\bm\mu}_y-\bm\mu_y^*
 =\widehat{\bm S}^{-1}\{\bm X\Delta\bm\beta+
 \widehat{\bm Z}\Delta\bm\eta+\bm\Delta_Z\bm\eta^*
 +(\bm S^*-\widehat{\bm S})\bm\mu_y^*\},\\
 &\|\widehat{\bm\mu}_y-\bm\mu_y^*\|_\infty+
 \|\widehat{\bm W}\widehat{\bm\mu}_y-\bm W^*\bm\mu_y^*\|_\infty
 =o_p(1).
\end{aligned}
\end{equation}
The first line follows by multiplying by $\widehat{\bm S}$.
For the second, \eqref{eq: response operator bounds} with $t=1$ and
$\gamma=\widehat\gamma\in\Gamma_\gamma$ gives
$\|\widehat{\bm S}^{-1}\|_\infty\le C$; Lemma~\ref{lem: response operator bounds}
also gives $\|\bm\mu_y^*\|_\infty\le C$.
Since $|\gamma^*|\le1-\delta_\gamma$ by Assumption~\ref{assump: response} and
$\bm S^*-\widehat{\bm S}
=\Delta\gamma\widehat{\bm W}+\gamma^*(\widehat{\bm W}-\bm W^*)$,
\[
 \|(\bm S^*-\widehat{\bm S})\bm\mu_y^*\|_\infty
 \le (|\Delta\gamma|+|\gamma^*|a_n^W)\|\bm\mu_y^*\|_\infty
 \le C(|\Delta\gamma|+a_n^W)=o_p(1),
\]
where the last equality uses \eqref{eq: supplementary profile estimator rate}
and \eqref{eq: weight row consistency}.
The other three terms inside the braces are $o_p(1)$ in maximum norm by
\eqref{eq: supplementary profile estimator rate}, the maximum node error bound
in \eqref{eq: first stage recovery}, and the design bound in
Assumption~\ref{assump: response}.
Finally, row stochasticity of $\widehat{\bm W}$ gives
\[
 \|\widehat{\bm W}\widehat{\bm\mu}_y-\bm W^*\bm\mu_y^*\|_\infty
 \le\|\widehat{\bm\mu}_y-\bm\mu_y^*\|_\infty
 +a_n^W\|\bm\mu_y^*\|_\infty=o_p(1).
\]
This also gives $\|\widehat{\bm\mu}_y\|_\infty=O_p(1)$.

It remains to prove consistency of the full-score covariance and Jacobian.
Put $\widehat{\bm Q}=\widehat{\bm W}\widehat{\bm S}^{-1}$.
Condition on the observed data, so that the fitted quantities are fixed
and only $\bm e$ is random. By construction,
$\mathbb E_{\bm e}e_i=0$ and $\mathbb E_{\bm e}e_i^2=\widehat\sigma^2$;
the moment consistency proved above gives
$\widehat\sigma^2,\widehat\mu_3,\widehat\mu_4=O_p(1)$.
A conditional covariance calculation analogous to
\eqref{eq: linear quadratic covariance identities}, with
$(\sigma^{*2},\mu_3,\mu_4)$ replaced by
$(\widehat\sigma^2,\widehat\mu_3,\widehat\mu_4)$, followed by
Cauchy--Schwarz, yields, for any fixed $\bm u\in\mathbb R^n$ and symmetric
$\bm B\in\mathbb R^{n\times n}$,
\begin{align*}
 \operatorname{Var}_{\bm e}(\bm u^\top\bm e+\bm e^\top\bm B\bm e)
 \le C_n(\|\bm u\|_2^2+\|\bm B\|_F^2),\quad C_n&=\widehat\sigma^2+|\widehat\mu_3|+\widehat\mu_4
 +2\widehat\sigma^4=O_p(1).
\end{align*}
For a nonsymmetric matrix, its symmetric part gives the same
quadratic form and has no larger Frobenius norm.
To bound the full score, put
$\widehat{\bm C}=(\bm X,\widehat{\bm Z},
\widehat{\bm W}\widehat{\bm\mu}_y)$.
Direct substitution of the artificial response into
\eqref{eq: full response score} gives
\[
 \sqrt n\widehat{\bm q}_y^{(\bm e)}
 =\frac1{\sqrt n}\begin{pmatrix}
 \widehat\tau\widehat{\bm C}^{\top}\bm e\\
 -\tfrac12\sum_i(e_i^2-\widehat\sigma^2)
 \end{pmatrix}+\widehat{\bm r}_{q,n}^{(\bm e)},\qquad
 (\widehat{\bm r}_{q,n}^{(\bm e)})_\gamma
 =\frac{\widehat\tau}{\sqrt n}
 \{\bm e^\top\widehat{\bm Q}\bm e-\widehat\sigma^2\tr\widehat{\bm Q}\},
\]
with all other remainder coordinates zero. Since $\widehat\tau=O_p(1)$ by
\eqref{eq: supplementary profile estimator rate} and
$\|\widehat{\bm Q}\|_F^2\le CK_n$ by
\eqref{eq: response operator bounds} with $t=1$ and $\gamma=\widehat\gamma$,
the conditional variance bound gives
\[
 \mathbb E_{\bm e}\|\widehat{\bm r}_{q,n}^{(\bm e)}\|_2^2
 \le\frac{C_n\widehat\tau^2}{n}\|\widehat{\bm Q}\|_F^2
 =O_p(K_n/n)=o_p(1).
\]
The bounded design rows, latent recovery and
\eqref{eq: fitted mean row consistency} imply
$\max_i\|\widehat{\bm C}_{i\cdot}\|_2=O_p(1)$.
Consequently, with the observed data held fixed,
\[
 \mathbb E_{\bm e}\left\|
 \frac1{\sqrt n}\begin{pmatrix}
 \widehat\tau\widehat{\bm C}^{\top}\bm e\\
 -\tfrac12\sum_i(e_i^2-\widehat\sigma^2)
 \end{pmatrix}\right\|_2^2
 =\frac{\widehat\tau^2\widehat\sigma^2}{n}\|\widehat{\bm C}\|_F^2
 +\frac{\widehat\mu_4-\widehat\sigma^4}{4}=O_p(1).
\]
Thus $n\mathbb E_{\bm e}\|\widehat{\bm q}_y^{(\bm e)}\|_2^2=O_p(1)$.
Since $\mathbb E_{\bm e}\bigl(\sqrt n\widehat{\bm q}_y^{(\bm e)}
 -\widehat{\bm r}_{q,n}^{(\bm e)}\bigr)=\bm0$ and $\mathbb E_{\bm e}\widehat{\bm r}_{q,n}^{(\bm e)}=\bm0$,
Cauchy--Schwarz gives
\begin{align*}
 \left\|\widehat{\bm\Xi}_n
 -\operatorname{Var}_{\bm e}\bigl(\sqrt n\widehat{\bm q}_y^{(\bm e)}
 -\widehat{\bm r}_{q,n}^{(\bm e)}\bigr)\right\|_2
 &\le\mathbb E_{\bm e}\|\widehat{\bm r}_{q,n}^{(\bm e)}\|_2^2\\
 &\quad+2\left(\mathbb E_{\bm e}\|\sqrt n\widehat{\bm q}_y^{(\bm e)}
 -\widehat{\bm r}_{q,n}^{(\bm e)}\|_2^2\right)^{1/2}
 \left(\mathbb E_{\bm e}\|\widehat{\bm r}_{q,n}^{(\bm e)}\|_2^2\right)^{1/2}
 =o_p(1).
\end{align*}
Moreover,
\[
 \operatorname{Var}_{\bm e}\bigl(\sqrt n\widehat{\bm q}_y^{(\bm e)}
 -\widehat{\bm r}_{q,n}^{(\bm e)}\bigr)
 =\begin{pmatrix}
 \widehat\tau^2\widehat\sigma^2\widehat{\bm C}^{\top}\widehat{\bm C}/n
 &-\widehat\tau\widehat\mu_3\widehat{\bm C}^{\top}\bm1_n/(2n)\\
 -\widehat\tau\widehat\mu_3\bm1_n^\top\widehat{\bm C}/(2n)
 &(\widehat\mu_4-\widehat\sigma^4)/4
 \end{pmatrix}.
\]
Similarly, applying the same Cauchy--Schwarz covariance estimate to
\eqref{eq: reduced score representation}, with $\bm\Xi_n$ defined in
\eqref{eq: response information definitions}, gives
\[
\bm\Xi_n
 =\operatorname{Var}_\epsilon\left\{
 \frac1{\sqrt n}\begin{pmatrix}
 \tau^*\bm C_n^\top\bm\epsilon\\
 -\tfrac12\sum_i(\epsilon_i^2-\sigma^{*2})
 \end{pmatrix}\right\}+o(1) 
 =\begin{pmatrix}
 \tau^{*2}\sigma^{*2}\bm G_n
 &-\tau^*\mu_3\bar{\bm c}_n/2\\
 -\tau^*\mu_3\bar{\bm c}_n^\top/2
 &(\mu_4-\sigma^{*4})/4
 \end{pmatrix}+o(1).
\]
By \eqref{eq: first stage recovery} and \eqref{eq: fitted mean row consistency},
$n^{-1/2}\|\widehat{\bm C}-\bm C_n\|_F=o_p(1)$.
Comparing the two covariance matrices and using moment consistency and
\eqref{eq: supplementary profile estimator rate} proves
\begin{align}
\label{eq: fitted full covariance consistency}
 \|\widehat{\bm\Xi}_n-\bm\Xi_n\|_2=o_p(1).
\end{align}

For the Jacobian, consistency of $\widehat{\bm\zeta}$ and
\eqref{eq: supplementary fitted Jacobian consistency} directly give $\|\widehat{\mathsf J}_{y,n}-\mathsf J_{y,n}\|_2=o_p(1)$. Since $\|\mathsf J_{y,n}^{-1}\|_2\le C$ by Lemma~\ref{lem: response conditioning},
$\widehat{\mathsf J}_{y,n}$ is invertible with probability tending to one and
$\|\widehat{\mathsf J}_{y,n}^{-1}-\mathsf J_{y,n}^{-1}\|_2=o_p(1)$.
Lemma~\ref{lem: response conditioning} and
\eqref{eq: fitted full covariance consistency} then give $\|\bm\Xi_n\|_2\le C$, $\|\widehat{\bm\Xi}_n\|_2+\|\widehat{\mathsf J}_{y,n}^{-1}\|_2=O_p(1)$.
The triangle inequality applied to \eqref{eq: joint feasible covariance}
and \eqref{eq: joint response covariance} gives
\begin{align*}
 \|\widehat{\bm V}_n-\bm V_n\|_2
 &\le\|\widehat{\mathsf J}_{y,n}^{-1}-\mathsf J_{y,n}^{-1}\|_2
 \|\widehat{\bm\Xi}_n\|_2\|\widehat{\mathsf J}_{y,n}^{-1}\|_2+\|\mathsf J_{y,n}^{-1}\|_2
 \|\widehat{\bm\Xi}_n-\bm\Xi_n\|_2
 \|\widehat{\mathsf J}_{y,n}^{-1}\|_2\\
 &\quad+\|\mathsf J_{y,n}^{-1}\|_2\|\bm\Xi_n\|_2
 \|\widehat{\mathsf J}_{y,n}^{-1}-\mathsf J_{y,n}^{-1}\|_2=o_p(1),
\end{align*}
which proves \eqref{eq: joint covariance consistency}.
Under \eqref{eq: response nondegeneracy}, Lemma~\ref{lem: response conditioning}
gives $\lambda_{\min}(\bm V_n)\ge c>0$; hence
\eqref{eq: joint covariance consistency} implies
$\Pr\{\widehat{\bm V}_n\succ0\}\longrightarrow1$.
\hfill$\blacksquare$

\putbib[NLDA]

\end{bibunit}
\end{document}